 \nonstopmode 
\documentclass{Paper}
%
 
\newif\ifcolour \colourfalse
\newif\ifshort \shortfalse

\newrgbcolor{col}{1 0.5 0.5}

\input{macros}
\typeout{Document Style `Xmacros' <20 Sep 08>.}


\def \scsc#1{\mbox{\scriptsize \sc #1}}

\def \xsind{\mbox{\scriptsize {\bf x}{\sc s}}}
\def \xhind{\mbox{\scriptsize {\bf x}{\sc h}}}
\def \xind{\mbox{\scriptsize {\bf x}}}
\def \sind{\mbox{\scriptsize {\sc s}}}
\def \hind{\mbox{\scriptsize {\sc h}}}

\def \slangle {\raise.5\point\hbox{\small $\langle$}}
\def \srangle {\raise.5\point\hbox{\small $\rangle$}}

\def \stackrel#1#2{\setbox155=\hbox{$#1$}\setbox156=\hbox{$#2$}%
\mathrel{\copy156\kern-.5\wd156\kern-.5\wd155\raise4.5\point\hbox{\copy155}\kern-.5\wd155\kern.5\wd156}}

\newfont{\itbf}{ptmbo at 10.25pt}
\newfont{\itsf}{phvro at 10.25pt}
\newfont{\sfbf}{phvb at 10.25pt}
\newfont{\oldmath}{cmsy10 at 11pt}
\newfont{\largeoldmath}{cmsy10 at 15pt}

\def \mysl#1{\mbox{\rm\textsl{#1}}\kern0\point}

\def \tinyfont{\fontsize{6pt}{6.6pt}\selectfont}
\def \emptyline	{\vskip 10\point plus 5\point minus 4\point \noindent}


\makeatletter 

\newif\if@om
\def \omfalse{\let\if@om\iffalse}
\def \omtrue{\let\if@om\iftrue}

\newif\if@lom
\def \lomfalse{\let\if@lom\iffalse}
\def \lomtrue{\let\if@lom\iftrue}

\newif\if@Xom
\def \Xomfalse{\let\if@Xom\iffalse}
\def \Xomtrue{\let\if@Xom\iftrue}
\Xomtrue

\newskip \adjustablepoint \adjustablepoint .9\point

\usepackage{pstricks}
\usepackage{pst-node}
\usepackage{pdftricks}
\usepackage{graphicx} 
\usepackage{qsymbols}
\usepackage{inference}
\usepackage{relsize}

\Comment{
}

\def \Cont[#1]{\mbox{\small 
\sf C}[\ssk{#1}\ssk]}
\def \ContH[#1]{\mbox{\small 
\sf C}_{\scsc{h}}[\ssk{#1}\ssk]}

\def \typeout#1{\@latex@warning{#1}}

\def \Exists {\exists\,}

\def \ifnextchar{\@ifnextchar}

\def \qskp {\hspace*{.25\point}}
\def \hskp {\hspace*{.5\point}}
\def \skp {\hspace*{1\point}}
\def \psk {\hspace*{1\point}}
\def \skipper {\hspace*{1.5\point}}
\def \Bogskipleft	{~\kern-1\point~}
\def \Bogskipright	{\ \kern-1\point\ }
\def \skipperleft	{\hspace*{2\point}}
\def \skipperright {\hspace*{2\point}}

\def \dquad {\quad \quad}

\def \qquad {\quad \quad \quad \quad}

\def \BoUndnccurve[#1]#2#3{\psset{linecolor=red}%
\nccurve[#1]{#2}{#3}\psset{linecolor=ruleyellow}}


\def \Rm#1{\textrm{\Black #1}}
\def \It#1{\mbox{\rmfamily \it \Itcol #1\hspace*{1\point}}}

\def \Sc#1{\textsc{\Purple #1}}

\def \mynewrgbcolor#1#2{%
\ifmycolour\newrgbcolor{#1}{#2}\else \newrgbcolor{#1}{0 0 0}\fi}

\mynewrgbcolor{ltermcol}{.8 0 .8}
\mynewrgbcolor{termcol}{1 0 0}
\mynewrgbcolor{typecol}{0 .50 0}
\mynewrgbcolor{rulegreencol}{.1 1 .1}
\mynewrgbcolor{DarkGreencol}{0.1 .5 0.1}
\mynewrgbcolor{Greencol}{0.1 .5 0.1}
\mynewrgbcolor{Yellowcol}{.9 .9 0}
\mynewrgbcolor{Itcol}{.9 .5 0}
\mynewrgbcolor{Blackcol}{0 0 0}
\mynewrgbcolor{Bluecol}{0.35 0.35 1}
\mynewrgbcolor{Orangecol}{1 0.35 0.25}
\mynewrgbcolor{Purplecol}{.75 0 .75}
\mynewrgbcolor{Whitecol}{1 1 1}
\mynewrgbcolor{Greencol}{0 .25 0}
\mynewrgbcolor{yellowcol}{1 1 0}
\mynewrgbcolor{orangecol}{1 0.75 0.5}
\mynewrgbcolor{bluecol}{0 0 1}
\mynewrgbcolor{Redcol}{1 0.05 0.05}
\mynewrgbcolor{lightgreencol}{0.75 1 0.75}
\mynewrgbcolor{lightyellowcol}{1 1 .75}
\mynewrgbcolor{lightbluecol}{.75 .75 1}
\mynewrgbcolor{lightredcol}{1 .5 .5}
\mynewrgbcolor{purplecol}{.9 0.1 .9}
\mynewrgbcolor{whitecol}{1 1 1}
\mynewrgbcolor{blackcol}{0 0 0}
\mynewrgbcolor{neutralcol}{0 0 0}

\newrgbcolor{Redcol}{1 0.05 0.05}
\newrgbcolor{Bluecol}{0.05 0.05 1}

\mynewrgbcolor{ruleyellow}{1 1 .5}
\mynewrgbcolor{rulegreen}{.1 1 .1}

\mynewrgbcolor{foreground}{0 0 0}
\mynewrgbcolor{background}{1 1 1}

\def \ltermcolour{\ifmycolour\ltermcol\fi}
\def \termcolour{\ifmycolour\termcol\fi}
\def \Yellow{\ifmycolour\Yellowcol\fi}
\def \typecolour{\ifmycolour\typecol\fi}

\def \Yellow{\ifmycolour\Yellowcol\fi}
\def \Black{\ifmycolour\Blackcol\fi}

\def \Purple{\ifmycolour\Purplecol\fi}

\def \orange{\ifmycolour\orangecol\fi}
\def \blue{\ifmycolour\bluecol\fi}
\def \Red{\ifmycolour\Redcol\fi}

\def \black{\ifmycolour\blackcol\fi}

\def \specialcol{\termcolour}

\def \mysemcolour {\mynewrgbcolor{thiscol}{1 0 1}\thiscol}
\ifmycolour \mynewrgbcolor{semcolour}{.9 0 0}\else \mynewrgbcolor{semcolour}{0 0 0}\fi

\psset{linecolor=ruleyellow,linewidth=0.3\point,arrows=->,nodesep=0.05}
\def \PSFrame(#1,#2){\mbox{\psframe[unit=\unitlength,linewidth=.5\point,framearc=.5
](#1,#2)}}
\def \PSBox(#1,#2){\mbox{\psframe[unit=\unitlength,linewidth=.5\point,framearc=0
](#1,#2)}}

\def \Pr[#1]{~[\,#1\,]}
\def \ele {\mathbin{\in}}
\def \notele {\mathbin{\not\nobreak\in}}

\def \subssymb {(\tvar \mapsto C)}
\def \subsone#1{\subssymb\,(#1)}
\def \subs{\@ifnextchar\bgroup{\subsone}{\subssymb}}

\def \unifytwo#1#2{\setbox33=\hbox{$#1$}\setbox34=\hbox{$#2$}
\ifdim\wd33<12\point\ifdim\wd34<12\point {\It {unify}}\ #1\ #2
\else {\It {unify}}\ #1\ (#2) \fi
\else \ifdim\wd34<12\point {\It {unify}\ (#1)\ #2}
\else {\It {unify}}\ (#1)\ (#2) \fi \fi }


\def \unify{\@ifnextchar{\bgroup}{\unifytwo}{\It {unify}}}

\def \eqA	{\mathrel{{\sim}\kern-2\point_{\cal A}}}
\def \eqAmu	{\mathrel{{\sim}\kern-2\point_{{\cal A}_{`m}}}}
\def \eqAx	{\mathrel{{\sim}\kern-2\point_{{\cal A}_{\xhind}}}}
\def \eqAxs	{\mathrel{{\sim}\kern-2\point_{{\cal A}_{\xsind}}}}
\def \equ	{\mathrel{\sim}}


\def \seqr {{\typecolour \leq}}

\def \seqU {\mathbin{\typecolour \leq_{\cup}}}
\def \seqI {\mathbin{\typecolour \leq_{\cap}}}

\def \seq	{\mathop{\seqr}}

\def \seqsr {{\leq_{\scsc{s}}}}
\def \seqs	{\skipperleft\seqsr\skipperright}

\def \Union {\mathbin{\cup}}

\def \And	{\mathrel{\&}}
\def \Or	{\mathrel{\vee}}
\def \bottom	{{\perp}}

\def \Conj	{\mathord{\wedge}}

\def \longarrow#1{\psset{unit=1\point,linewidth=0.35\point}%
\psline{cc-cc}(0,0)(#1,0)%
\hspace*{#1}%
\pscurve{cc-cc}(0,0)(-1.1,.4)(-2.5,2.2)%
\pscurve{cc-cc}(0,0)(-1.1,-.4)(-2.5,-2.2)}

\def \Rel#1#2{\setbox155=\hbox{\scriptsize $#1$}\setbox156=\hbox{\scriptsize $#2$}%
\,\raise5.5pt\copy155\kern-1\wd155%
\raise3pt\hbox{\longarrow{1\wd155}}
\kern-.5\wd155\kern-.5\wd156\raise-2pt\copy156\kern-.5\wd156\kern.5\wd155\,}

\def \dots	{\mbox	{$\cdots$}}
\def \Skip	{\hspace*{2\point}}
\def \Quad {\hspace*{1cm}}

\def \pCsymb{\It{pC}}
\def \pCarg#1{\pCsymb\psk{\termcolour #1}}
\def \pC{\@ifnextchar\bgroup{\pCarg}{\pCsymb}}

\def \pCgsymb{\It{pC}\kern-1pt_{\mbox{\mysemcolour \scriptsize \sc g}}}
\def \pCgarg#1{\pCgsymb\,({\termcolour #1})}
\def \pCg{\@ifnextchar\bgroup{\pCgarg}{\pCgsymb}}

\def \SetAppBr(#1){{\cal A}(#1)}
\def \SetApp{\@ifnextchar({\SetAppBr}{{\cal A}}}
\def \AppMu	{{\termcolour {\cal A}_{`m}}}
\def \SetAppMuBr(#1){\AppMu(#1)}
\def \SetAppMu{\@ifnextchar({\SetAppMuBr}{\AppMu}}
\def \AppS	{{\termcolour \cal A}_{\sind}}
\def \SetAppSBr(#1){\AppS(#1)}
\def \SetAppS{\@ifnextchar({\SetAppSBr}{\AppS}}
\def \AppXS	{{\termcolour \cal A}_{\xsind}}
\def \SetAppXSBr(#1){\AppXS(#1)}
\def \SetAppXS{\@ifnextchar({\SetAppSBr}{\AppXS}}
\def \AppX	{{\termcolour {\cal A}_{\xhind}}}
\def \SetAppXBr(#1){\AppX(#1)}
\def \SetAppX{\@ifnextchar({\SetAppXBr}{\AppX}}

\def \AppS	{{\termcolour \cal A}_{\sind}}

\def \bred	{\mathrel{\Red \bredr}}

\def \bredr {\mbox	{\Red $\rightarrow_{`b}$}}

\def \dbredr	{\arrow \kern-7\point\bredr}
\def \dbred {\mathrel{\dbredr}}
\def \eqb	{=_{`b}}
\def \eqbmu	{=_{`b\kern-.5\point`m}}
\def \eqbmx	{=_{`b\kern-.5\point`m\xsind}}
\def \LC	{\Sc{lc}}

\def \diverges {\,{\raise1\point\hbox{\Red \small $\uparrow$}}}
\def \converges {\,{\raise1\point\hbox{\Red \small $\downarrow$}}}
\def \divergesPi {\,{\raise1\point\hbox{\Red \small $\uparrow_{`p}$}}}
\def \convergesbeta {\,{\raise1\point\hbox{\Red \small $\downarrow_{`b}$}}}
\def \convergesbmu {\,{\raise1\point\hbox{\Red \small $\downarrow_{`b`m}$}}}
\def \divergesbeta {\,{\raise1\point\hbox{\Red \small $\uparrow_{`b}$}}}
\def \divergesxs {\,{\raise1\point\hbox{\Red \small $\uparrow$}_{\kern-.5\point\xsind}}}
\def \divergess {\,{\raise1\point\hbox{\Red \small $\uparrow$}_{\kern-.5\point\sind}}}
\def \divergesh {\,{\raise1\point\hbox{\Red \small $\uparrow$}_{\kern-.5\point\hind}}}
\def \divergesxh {\,{\raise1\point\hbox{\Red \small $\uparrow$}_{\kern-.5\point\xhind}}}
\def \divergesS {\,{\raise1\point\hbox{\Red \small $\uparrow_{\sind}$}}}
\def \convergesS {\,{\raise1\point\hbox{\Red \small $\downarrow_{\sind}$}}}
\def \convergesxs {\,{\raise1\point\hbox{\Red \small $\downarrow_{\xsind}$}}}

\newcommand{\indexot}[2]{{#1}\skp{\in}\skp{#2}}

\def \iotn {\indexot{i}{\n}}

\def \Top {\mbox{\typecolour $\top$}}
\def \Bottom {\mbox{\typecolour $\perp$}}
\def \Types	{{\Red \cal T}}
\def \IntTypes	{{\Red {\cal T}_{\cap}}}
\def \UnTypes	{{\Red {\cal T}_{\cup}}}

\def \Tproper	{\Types\kern-3\point_{\textit{\scriptsize \Yellow p}}}
\def \Tstrict	{\Types_{\textit{\scriptsize s}}}

\def \dom(#1){\textit{dom}\,(#1)}

\def \m	{\underline{m}}
\def \n	{\underline{n}}

\def \arrow	{\mathord{\rightarrow}}
\def \prod	{\mathord{\times}}

\def \arr {\arrow}
\def \Arrow	{{\Rightarrow}}
\def \tvar	{\varphi}

\def \intsymb	{\mbox {\small $\cap$}} 
\def \inter {\mathord{\hskp{\intsymb}\hskp}}

\def \intTwo#1(#2){\intsymb_{\ul{#1}}\qskp(#2)}
\def \intOne#1{\@ifnextchar({\intTwo{#1}}{\intsymb_{\ul{#1}}\qskp}}
\def \int{\@ifnextchar\bgroup{\intOne}{\inter}}

\def \unsymb	{{\cup}} 
\def \union {\mathord{\hskp\type {\unsymb} \hskp}}
\def \compunion	{\setbox172=\hbox{$\cup$}%
\copy172\kern-.75\wd172^c\kern.25\wd172}

\def \unTwo#1(#2){\unsymb\kern-.5\point_{\ul{#1}}\qskp({#2})}
\def \unOne#1{\@ifnextchar({\unTwo{#1}}{\unsymb\kern-.5\point_{\ul{#1}}\qskp}}
\def \un{\@ifnextchar\bgroup{\unOne}{\union}}

\def \BuildInt({\@ifnextchar{n}{\BuildIntn(}{\@ifnextchar{m}{\BuildIntm(}{\BuildIntAux(}}}
\def \BuildIntAux(#1)#2{\int{#1}{#2}_l}
\def \BuildIntn(#1)#2{\int{n}{#2}_i}
\def \BuildIntm(#1)#2{\int{m}{#2}_j}
\def \BuildUn({\@ifnextchar{n}{\BuildUnn(}{\@ifnextchar{m}{\BuildUnm(}{\BuildUnA(}}}
\def \BuildUnA(#1)#2{\un{#1}{#2}_l}
\def \BuildUnn(#1)#2{\un{n}{#2}_i}
\def \BuildUnm(#1)#2{\un{m}{#2}_j}

\def \AoI#1{\BuildInt(#1){\type A }}
\def \AoU#1{\BuildUn(#1){\type A }}

\def \AoIn {\AoI{n}}

\def \AoUn {\AoU{n}}
\def \AoInfull {A_1 \inter \dots \inter A_n}
\def \AoUnfull {A_1 \union \dots \union A_n}

\def \BoI#1{\BuildInt(#1){B}}
\def \BoU#1{\BuildUn(#1){B}}

\def \CoI#1{\BuildInt(#1){C}}

\def \CoU#1{\BuildUn(#1){C}}

\def \CoIn {\CoI{n}}

\def \CoUn {\CoU{n}}

\def \DoI#1{\BuildInt(#1){D}}

\def \Dot#1{D_i\,(\forall i \ele #1)}

\def \DoIn {\DoI{n}}

\def \Dotn {\Dot{n}}

\def \DeloU#1{\un{#1}`D_i}

\def \DeloUn {\DeloU{n}}

\def \GUnd_#1{`G\kern-1.5pt_{#1}}
\def \GCom,{`G\kern-1.5pt,}
\def \G{\@ifnextchar_{\GUnd}{\@ifnextchar,}{\GCom}{`G}}
\def \Ga_#1{`G'\kern-1.5pt_{#1}}

\def \GamoI#1{{\int{#1}\G_i}}

\def \GamoIn {{\GamoI{n}}}

\def \GamoIn {{\GamoI{n}}}

\def \intI	{{\Red \intsymb \textsl{I}}}
\def \intE	{{\Red \intsymb \textsl{E}}}

\def \unL	{{\Red \unsymb \textsl{L}}}

\def \unR	{{\Red \unsymb \textsl{R}}}
\def \unE	{{\Red \unsymb \textsl{E}}}
\def \Ax {{\Red \mbox{\sl Ax}}}
\def \arrI	{{\Red \arrow \mbox{\sl I}}}
\def \arrE	{{\Red \arrow \mbox{\sl E}}}

\def \D {{\orange \mathcal D}}
\def \dcol {\,{::}\,}

\def \arrL	{{\arr}\textsl{L}}
\def \arrR	{{\arr}\textsl{R}}
\def \ArrL	{\Arr\textsl{L}}
\def \ArrR	{\Arr\textsl{R}}
\def \intL	{\intsymb\textsl{L}}

\def \intR	{\intsymb\textsl{R}}


\def \Weak	{\textsl{W}}

\def \Except	{{\setminus}}
\def \Iff	{\mathbin{\,{\Leftarrow}\kern-6\point{\Rightarrow}\,}}
\def \Thenone#1{\mathrel{{\Rightarrow}\,(#1)}}
\def \Then{\@ifnextchar{\bgroup}{\Thenone}{\mathrel{\Rightarrow}}}

\def \Implyone#1{\mathrel{\Rightarrow\,(#1)}}
\def \Imply{\@ifnextchar{\bgroup}{\Implyone}{\mathrel{\Rightarrow}}}

\def \X {{\Red \mbox{$\mathcal{X}$}}}
\def \Xi {{\Red{\mathcal X}^{\textit{\footnotesize i}}}} 

\def \diagX {{\Red ^d \kern-.5\point \mathcal X}} 
\def \copyX {{\Red ^{\scriptsize \copyright} \kern-.5\point \mathcal X}} 

\setbox134=\hbox{\leavevmode\raise0.5\point\hbox{${\mysemcolour	\lceil}\kern-3\point{\mysemcolour \lceil}\kern-.5pt$}}
\setbox135=\hbox{\raise0.5\point\hbox{$\kern-.5pt{\mysemcolour	\rfloor}\kern-3\point{\mysemcolour	\rfloor}$}}

\def \LeftTop#1{{\hspace{1.5\point}
\setbox66=\hbox{$#1$}%
\ifdim\ht66>5\point \setlength{\unitlength}{.1\point} 
	\psset{unit=.125\ht66,linewidth=0.4\point,linecolor=black}%
\else \setlength{\unitlength}{.2\point} 
	\psset{unit=.8\point,linewidth=0.4\point,linecolor=black}%
\fi
\ifmycolour\psset{linecolor=semcolour}\fi
\setbox67=\hbox{\put(-3,2)%
{\psline{c-c}(0,0)(3,0)%
 \psline{cc-cc}(0,0)(0,-8)%
 \psline{cc-cc}(1.5,0)(1.5,-6)%
}}%
\raise7.25\point\box67
\kern2.4\point}}

\def \RightBot{{\kern2.75\point
\psset{unit=1\point,linewidth=0.4\point,linecolor=black}%
\ifmycolour\psset{linecolor=semcolour}\fi
\hbox{\put(0,-2)%
{\psline{c-c}(0,0)(-3,0)%
 \psline{cc-cc}(-1.35,0)(-1.35,6)%
 \psline{cc-cc}(0,0)(0,8)%
}}\kern2\point}}

\def \Rightbot{{\kern3.25\point
\psset{unit=1\point,linewidth=0.4\point,linecolor=black}%
\ifmycolour\psset{linecolor=semcolour}\fi
\hbox{\put(0,-2)%
{\psline{c-c}(0,0)(-3,0)%
\psline{cc-cc}(-1.35,0)(-1.35,5)%
\psline{cc-cc}(0,0)(0,5)%
}}\kern1\point}}

\def \OldSem[#1]{{\semcolour [\kern-1.75\point[}\kern.6\point \lterm{#1} \kern.6\point{\semcolour ]\kern-1.75\point]} }

\def \x {\X}
\def \semEl#1{\LeftTop{#1}{#1}\RightBot}
\def \semThree#1#2#3{\setbox136=\hbox{{\scriptsize \Yellow $#2$}}\setbox137=\hbox{#3}%
	(#1)_{\copy136}\kern-\wd136%
	\raise5\point\box137}
\def \SemThree#1#2#3{%
	\setbox136=\hbox{{\scriptsize \termcolour ${#2}$}}%
	\setbox137=\hbox{#3}%
	\semEl{#1}_{\kern-1\point\termcolour \copy136}
	\ifdim\wd136>1pt\kern-\wd136\raise5\point\copy137
	\ifdim\wd136>\wd137 \kern-\wd137\kern\wd136\fi
	\fi}
\def \SemTwo#1#2{\@ifnextchar{\bgroup}{\SemThree{#1}{#2}}{\semEl{#1}\kern-1\point_{#2}}}
\def \Sem#1{\@ifnextchar{\bgroup}{\SemTwo{#1}}{\semEl{#1}}}
\def \SemL#1#2{\SemThree{{\ltermcolour #1}}{#2}{\mbox{\mysemcolour	\scriptsize $\lambda$}}}
\def \semL [#1] #2 {\SemL{#1}{#2}}

\def \SemGr#1{\Sem{#1}\kern-1\point_{\mbox{\scriptsize \sc g}}}

\def \ftn{\mbox{\Red \textsc{\scriptsize n}}}
\def \ftv{\mbox{\Red \textsc{\scriptsize v}}}

\def \SemN#1{\semEl{#1}\kern-1\point_{\ftn}}
\def \SemV#1{\semEl{#1}\kern-1\point_{\ftv}}

\def \SemN#1{\semEl{#1}\kern-1\point_{\ftn}}
\def \SemV#1{\semEl{#1}\kern-1\point_{\ftv}}


 \def\NF#1{\mbox{\itbf #1}}

 \def \LmuNFa'{\NF{N}\skp'}
 \def \LmuNF{\@ifnextchar{'}{\LmuNFa}{\NF{N}}}

 \def \LmuHNFa'{\NF{H}\skp'}
 \def \LmuHNF {\@ifnextchar{'}{\LmuHNFa}{\NF{H}}}

 \def \Id<#1>{#1}
 \def \VarBrTerm<(#1)>{#1}
 \def \VarTermAux<#1>{\@ifnextchar({\VarBrTerm<#1>}{\Id<#1>}}
 \def \VarTerm<#1>{\if@om {\global \omfalse} \VarTermAux<#1> {\global \omtrue} \else \Id<#1>\fi}

 \def \LTerm<#1>{\LTermAux<#1>}

 \def\RemBr<(#1)>{\LTermAux<#1>}

 \def \NoBrLTermAux<{\@ifnextchar({\RemBr<}{\LTermAux<}} %
 \def \LTermAux<%
{\ltermcolour 
	 \@ifnextchar l{\LTermAbs<}%
	{\@ifnextchar a{\LTermapp<}%
	{\@ifnextchar X{\LTermXpp<}%
	{\@ifnextchar d{\LTermDpp<}%
	{\@ifnextchar e{\LTermEpp<}%
	{\@ifnextchar t{\LTermTpp<}%
	{\@ifnextchar v{\LTermVar<}%
	{\@ifnextchar S{\LTermSub<}%
	{\@ifnextchar s{\LTermsub<}%
	{\@ifnextchar ({\LTermBr<}%
	{\@ifnextchar X{\LTermMultiSub<}%
	{\@ifnextchar C{\LTermContSub<}%
	{\@ifnextchar c{\LTermContsub<}%
	{\@ifnextchar n{\LTermName<}%
	{\@ifnextchar m{\LTermMu<}%
	{\@ifnextchar B{\LTermBot<}%
		{\VarTerm<}%
}}}}}}}}}}}}}}}}

 \def \LTermapp<a #1 #2>{\LTermAux<#1> \LTermAux<#2>}
 \def \LTermXpp<X #1 #2>{\LTermAux<#1> \LTermAux<#2>}
 \def \LTermDpp<d #1 #2>{\LTermAux<#1> \LTermAux<#2>}
 \def \LTermEpp<e #1 #2>{\LTermAux<#1> \LTermAux<#2>}
 \def \LTermTpp<t #1 #2>{\LTermAux<#1> \LTermAux<#2>}
 \def \LTermAbs<l #1 . #2>{`l#1 . \LTermAux<#2> }
 \def \LTermBr<(#1)>{(\LTermAux<#1>)}
 \def \LTermVar<v #1>{#1}
 \def \LTermSub<S #1 #2 := #3>{\LTermAux<#1> \hskp \slangle{#2}{\,:=\,}\LTermAux<#3>\srangle } 
\def \LTermsub<s #1 #2 := #3>{\LTermAux<#1> \exsub #2 := {\LTermAux<#3>} }
 \def \LTermContSub<C #1 #2 := #3 . #4>{\LTermAux<#1> \hskp \slangle{#2}{\,:=\,}\LTermAux<#3>{`.}#4\srangle }
 \def \LTermContsub<c #1 #2 := #3 . #4>{\LTermAux<#1> \hskp \slangle{#2}{\,:=\,}\LTermAux<#3>{`.}#4\srangle }
 \def \LTermName<n #1 #2>{ [#1] \LTermAux<#2> }
 \def \LTermMu<m #1 . #2>{ `m #1 . \LTermAux<#2> }
 \def \LTermBot<B>{\bot}


\newbox\@indexbox

\newcounter{myindbeta} \setcounter{myindbeta}{0}
\def \newmybeta{\ifnum\themyindbeta=-1$b'$\else \ifnum\themyindbeta=0{$`b$}\else${`b}_{\themyindbeta}$\fi\fi\stepcounter{myindbeta}}
\newcounter{myindgamma} \setcounter{myindgamma}{0}
\def \newmygamma{\ifnum\themyindgamma=-1$`g'$\else \ifnum\themyindgamma=0{$`g$}\else${`g}_{\themyindgamma}$\fi\fi\stepcounter{myindgamma}}
\newcounter{myinddelta} \setcounter{myinddelta}{0}
\def \newmydelta{\ifnum\themyinddelta=-1$`d'$\else \ifnum\themyinddelta=0{$`d$}\else${`d}_{\themyinddelta}$\fi\fi\stepcounter{myinddelta}}
\newcounter{myindrho} \setcounter{myindrho}{0}
\def \newmyrho{\ifnum\themyindrho=-1$`r'$\else \ifnum\themyindrho=0{$`r$}\else${`r}_{\themyindrho}$\fi\fi\stepcounter{myindrho}}

 \def \resetLvariables{%
\setcounter{myindbeta}{0}%
\setcounter{myindgamma}{0}%
\setcounter{myinddelta}{0}%
\setcounter{myindrho}{0}%
}

 \def \XHLSem[#1] #2 {\setbox199=\hbox{$#2$}%
{\Sem{\LTerm<#1>}}\kern-1\point\raise5\point\hbox{\tiny $\cal X$}\kern-6\point_{#2} 
\kern2\point}

 \def \SHTerm [#1] #2 {\XHLSem{\LTerm<#1>} {#2} }

 \def \XHLTerm [#1] #2 {\if@Xom \Xomfalse \XHLTermAux[#1] #2 \Xomtrue \resetLvariables 
 \else \XHLTermAux[#1] #2 \fi}

 \def \XHLTermAux[
	{\@ifnextchar L{\XHLTermAbs[}
	{\@ifnextchar A{\XHLTermApp[}
	{\@ifnextchar D{\XHLTermDpp[}
	{\@ifnextchar V{\XHLTermVar[}
	{\@ifnextchar S{\XHLTermSub[}
	{\@ifnextchar T{\XHLTermTub[}
	{\@ifnextchar({\XHLTermBr[}
	{\@ifnextchar<{\XHLTermLTerm[}
		{\XHLSem[}
}}}}}}}}

 \def \XHLTermApp[A #1 #2] #3 {
\@ifnextchar<{\XHLTermAppPar [A {#1} {#2}] {#3} }{\XHLTermAppNoPar [A {#1} {#2}] {#3} }}

 \def \XHLTermAppNoPar[A #1 #2] #3 {
\cut { \XHLTermAux [#1] `g } `g + x {\imp { \XHLTermAux [#2] `b } `b [x] z \caps<z,#3> } }

 \def \XHLTermAppPar[A #1 #2] #3 <#4>{P
\cut { \XHLTermAux [#1] `g } `g + x {\imp { \XHLTermAux [#2] {#4} } {#4} [x] z \caps<z,#3> } }

 \def \XHLTermDpp[D #1 #2] #3 {
\cutdl { \XHLTermAux [#1] `g } `g + x { \imp { \XHLTermAux [#2] `b } `b [x] z \caps<z,#3> } }

 \def \XHLTermAbs[L #1 . #2] #3 {
\@ifnextchar<{\XHLTermAbsPar [L {#1} . {#2}] {#3} }{\XHLTermAbsNoPar [L {#1} . {#2}] {#3} }}

 \def \XHLTermAbsNoPar[L #1 . #2] #3 {
\exp #1 {\XHLTermAux [#2] `d } `d . #3 }

 \def \XHLTermAbsPar[L #1 . #2] #3 <#4>{
\exp #1 {\XHLTermAux [#2] #4 } #4 . #3 }

 \def \XHLTermVar[V #1] #2 {
 \caps<#1,#2> }

 \def \XHLTermSub[S #1 #2 := #3] #4 {
 \cut {\XHLTermAux [#3] `b } `b + {#2} {\XHLTermAux [#1] #4 } }

 \def \XHLTermBr[(#1)] #2 {
\XHLTermAux [#1] #2 }

 \def \XHLTermLTerm[<#1>] #2 {
\XHLSem [#1] #2 }


\def \Neg#1{\overline{#1}} 
\def \NegBr(#1){({#1)^o}}
\def \NegNoBr#1{{{#1}^o}}
\def \Neg{\@ifnextchar({\NegBr}{\NegNoBr}}

\def \SemGN#1{\slangle{#1}\srangle\kern-1\point_{\ftn}^{l}}
\def \SemDN#1{\slangle{#1}\srangle\kern-1\point_{\ftn}^{r}}
\def \SemGV#1{\slangle{#1}\srangle\kern-1\point_{\ftv}^{l}}
\def \SemDV#1{\slangle{#1}\srangle\kern-1\point_{\ftv}^{r}}

\def \SemLx[#1] #2 {\SemThree{#1}{#2}{\mbox{\mysemcolour	\tiny $`l$\sc x}}}

\def \LmmtSemN#1{\semEl{#1}\kern-1\point_{\ftn}}
\def \LmmtSemV#1{\semEl{#1}\kern-1\point_{\ftv}}

\newcounter{sind}\setcounter{sind}0
\newcounter{tind}\setcounter{tind}0
\def \newsigma{\if@om\omfalse`s\setcounter{sind}0\setcounter{tind}0\else\stepcounter{sind}`s_{\thesind}\fi}
\def \newtau{\ifnum\thetind=0\stepcounter{tind}`t\else`t_{\thetind}\fi}

\def \lefttop {{\leavevmode \kern2\adjustablepoint
\psset{unit=.9\adjustablepoint,linewidth=0.4\adjustablepoint}
\ifmycolour\psset{linecolor=semcolour}\fi
\hbox{\psline{c-c}(0,9)(3,9)\psline{cc-cc}(1.5,9)(1.5,-2)\psline{cc-cc}(0,9)(0,-2)\psline{c-c}(0,-2)(3,-2)}\kern3.5\adjustablepoint}}

\def \righttop{{\kern2\adjustablepoint
\psset{unit=.9\adjustablepoint,linewidth=0.4\adjustablepoint}
\ifmycolour\psset{linecolor=semcolour}\fi
\hbox{\psline{c-c}(1.5,9)(-1.5,9)\psline{cc-cc}(0,9)(0,2)\psline{cc-cc}(1.5,9)(1.5,2)}\kern1\adjustablepoint}}

\def \leftsem {\leavevmode \kern1\point
\psset{unit=.9\point,linewidth=0.4\point,linecolor=black}%
\ifmycolour\psset{linecolor=semcolour}\fi
\hbox{\psline{c-c}(0,9)(3,9)\psline{cc-cc}(1.5,9)(1.5,-2)\psline{cc-cc}(0,9)(0,-2)\psline{c-c}(0,-2)(3,-2)}\kern3.5\point}

\def \rightsem{\kern2\point
\psset{unit=.9\point,linewidth=0.4\point,linecolor=black}%
\ifmycolour\psset{linecolor=semcolour}\fi
\hbox{\psline{c-c}(1.5,9)(-1.5,9)\psline{cc-cc}(0,9)(0,-2)\psline{cc-cc}(1.5,9)(1.5,-2)\psline{c-c}(1.5,-2)(-1.5,-2)}\kern1\point}

\def \ContSub #1 #2 := #3 . #4 { {#1} \hskp \slangle{#2}{\,:=\,}{#3}{`.}{#4}\srangle }
\def \Sub #1 #2 := #3 { {#1} \hskp \slangle{#2}{\,:=\,}{#3}\srangle }
\def \VSub #1 #2 := #3 {{#1} `<\Vec{{#2}{\,:=\,}{#3}}`> }
\def \exsub #1 := #2 {\lterm{ `<{#1}{\,:=\,}{#2}`> }}
\def \imsub #1/#2 {[#1/#2]}
\def \clo#1#2#3{\Sub {#1} {#2} := {#3} }

\def \Lx {\mbox{$\Red `l${\bf\Red x}}}

\def \LK{\textsc{lk}}

\setbox139=\hbox{{\Yellow \raise0\point\hbox{\rotatebox{20}{$\downarrow$}}\kern-2.5pt\raise1.72\point\hbox{\rotatebox{-20}{$\downarrow$}}}}
 \def \eqX	{\stackrel{\mbox{\scriptsize $\cal X$}}{=}}
\def \eqXCBN	{\mathbin{\eqX\kern-2pt_{\ftn}}}
\def \eqXCBV	{\mathbin{\eqX\kern-2pt_{\ftv}}}

\def \crXCBN	{\mathbin{\downarrow}\kern-.5\point_{\ftn}}
\def \crXCBV	{\mathbin{\downarrow}\kern-.5\point_{\ftv}}

\def \openL{`L\kern-1\point^{\mbox{\scriptsize \sc o}}}
\def \closedL{`L\kern-1\point^{\mbox{\scriptsize \sc c}}}
\def \BothTerm{\mathrel{\stackrel{\raise 1.5\point \hbox{\tiny{$\bullet$}}}{\approx}}}

\def \mutilde	{\tilde{`m}}
\def \mt	{\mutilde}

\def \lmmt	{{\Red \overline{`l}`m\mutilde}}

\def \lmu	{{\Red `l`m}}

\def \ML	{{\Red \textsc{ml}}}
\def \redLK {\mathrel{{\rightarrow}\kern-1\point_{\cal LK}}}
\def \rtcredLK {\mathrel{{\rightarrow}\kern-1pt^*\kern-3\point_{\scriptLK}}}
\def \eqX	{\stackrel{{\cal X}}{ = }}
\def \redX {\mathrel{{\rightarrow}\kern-2\point_{\cal X}}}

\def \rtcredX {\mathrel{{\rightarrow}\kern-1pt^*\kern-3\point_{\cal X}}}
\def\scriptLK{\mbox{\tiny $\cal LK$}}
\def \red {\mathrel{\Red \arrow}}

\def \rtcred {\mathrel{\Red \arrow\kern-1\point^*}}
\def \dred {\Red \rightarrow\kern-.75em\rightarrow}



\def \redCBV {\mathrel{\Red \red_{\ftv}}}
\def \redCBN {\mathrel{\Red \red_{\ftn}}}

\def \redlmu {\mathrel{\arrow\kern-1\point_{`l`m}}}

 \def \eqb	{\mathrel{=_{`b}}}

 \def \redbmu {\mathrel{\Red \arrow_{`b\kern-.5\point`m}}}
 \def \rtcredbmu {\mathrel{\Red \arrow^{\ast}_{`b\kern-.5\point`m}}}
 \def \redlazy {\mathrel{\Red \arrow_{\kern-.5pt\mbox{\scriptsize \sc l}}}}
 \def \redom {\mathrel{\Red \arrow_{\kern-.5pt\mbox{\scriptsize \sc om}}}}
 \def \rtcredlazy {\mathrel{\Red \arrow^{\ast}_{\kern-.5pt\mbox{\scriptsize \sc l}}}}
 \def \reds {\mathrel{\Red \arrow_{\kern-.5\point\sind}}}
 \def \redh {\mathrel{\Red \arrow_{\kern-.5\point\hind}}}
 \def \rtcredh {\mathrel{\Red \arrow^{*}_{\kern-.5\point\mbox{\hind}}}}
 \def \rtcredsub {\mathrel{\Red \arrow^{*}_{\kern-.5\point\mbox{\scriptsize {\rm :=}}}}}
 \def \redxh {\mathrel{\Red \arrow_{\kern-.5\point\mbox{\xhind}}}}
 \def \tcredxh {\mathrel{\Red \arrow_{\kern-.5\point\xhind}}}
 \def \rtcredxh {\mathrel{\Red \arrow^{\ast}_{\kern-.5\point\xhind}}}
 \def \eqxh {\mathrel{\Red =_{\kern-.5\point\xhind}}}
 \def \redx {\mathrel{\Red \arrow_{\kern-.5\point\mbox{\scriptsize {\bf x}}}}}
 
 \def \redxsub {\mathrel{\Red \arrow_{\kern-.5\point\mbox{\scriptsize {\rm :=}}}}}
 \def \tcredx {\mathrel{\Red \arrow^{+}_{\kern-.5\point\mbox{\scriptsize {\bf x}}}}}
 \def \rtcredx {\mathrel{\Red \arrow^{*}_{\kern-.5\point\mbox{\scriptsize {\bf x}}}}}
 \def \rtcredxsub {\mathrel{\Red \arrow^{*}_{\kern-.5\point\mbox{\scriptsize {\rm :=}}}}}
 \def \redxV {\mathrel{\Red \arrow_{\kern-.5\point\mbox{\scriptsize {\bf x}{\sc v}}}}}
 \def \redxl {\mathrel{\Red \arrow_{\kern-.5\point\mbox{\scriptsize {\bf x}{\sc l}}}}}
 \def \redxs {\mathrel{\Red \arrow_{\kern-.5\point\xsind}}}
 \def \eqxs {\mathrel{\Red =_{\kern-.5\point\xsind}}}
 \def \crxs {\mathbin{\downarrow_{\kern-.5\point\xsind}}}

 \def \rtcredxs {\mathrel{\Red \arrow^{*}_{\kern-.5\point\xsind}}}
 \def \redxsa {\mathrel{\Red \arrow'_{\kern-.5\point\xsind}}}
 \def \tcreds {\mathrel{\Red \arrow^{+}_{\kern-.5\point\sind}}}
 \def \tcredl {\mathrel{\Red \arrow^{+}_{\kern-.5\point\mbox{\scriptsize {\sc l}}}}}
 \def \tcredxl {\mathrel{\Red \arrow^{+}_{\kern-.5\point\mbox{\scriptsize {\bf x}{\sc l}}}}}
 \def \tcredxs {\mathrel{\Red \arrow^{+}_{\kern-.5\point\xsind}}}
 \def \rtcreds {\mathrel{\Red \arrow^{\ast}_{\kern-.5\point\sind}}}
 \def \rtcredl {\mathrel{\Red \arrow^{\ast}_{\kern-.5\point\mbox{\scriptsize {\sc l}}}}}
 \def \rtcredx {\mathrel{\Red \arrow^{\ast}_{\kern-.5\point\mbox{\scriptsize {\bf x}}}}}
 \def \rtcredxl {\mathrel{\Red \arrow^{\ast}_{\kern-.5\point\mbox{\scriptsize {\bf x}{\sc l}}}}}
 \def \rtcredxs {\mathrel{\Red \arrow^{\ast}_{\kern-.5\point\xsind}}}
\def \redes {\mathrel{\Red \arrow_{\kern-.5\point\mbox{\scriptsize \sc es}}}}
 \def \dreds {\mathrel{\Red \arrow\kern-8\point\arrow_{\kern-.5\point\sind}}}

\def \Pair<#1,#2>{`<{#1},{#2}`>}

\def \Group<#1>{\slangle{#1}\srangle}

\def \BP#1{\bp(#1)}
\def \BS#1{\bs(#1)}
\def \bp(#1){{\typecolour \It{bp}({\termcolour #1})}}
\def \bs(#1){{\typecolour \It{bs}({\termcolour #1})}}
\def \BV#1{\textit{bv}(#1)}
\def \bv(#1){\BV{#1}}

\def \FC#1{\fc(#1)}
\def \FP#1{\fp({#1})}
\def \FS#1{{\fs({#1})}}
\def \fp(#1){\textrm{\it fp}({\termcolour #1})}
\def \fs(#1){\textrm{\it fs}\skp({\termcolour #1})}
\def \fc(#1){\textrm{\it fc}\skp({\termcolour #1})}
\def \FV#1{\textrm{\it fv}\skp({\termcolour #1})}
\def \fv(#1){\FV{#1}}
\def \HN#1{\textrm{\it hn}\skp({\termcolour #1})}
\def \hn(#1){\HN{#1}}
\def \HV#1{\textrm{\it hv}\skp({\termcolour #1})}
\def \hv(#1){\HV{#1}}

\def \Mid
{\hspace*{2.5pt}{\black \mid}\hspace*{2.5pt}}

\def \Langle{\kern1\adjustablepoint
\psset{unit=1\adjustablepoint,linewidth=0.4\adjustablepoint,linecolor=ltermcol}%
\put(0,2.5){\psline{c-c}(0,0)(1.7,4.9)%
\pscurve{c-c}(1.7,4.9)(.7,0)(1.7,-4.9)
\psline{c-c}(0,0)(1.7,-4.9)
}\kern3\adjustablepoint}
\def \Rangle{\psset{unit=1\adjustablepoint,linewidth=0.4\adjustablepoint,linecolor=ltermcol}%
\kern3\adjustablepoint
\put(0,2.5){\psline{c-c}(0,0)(-1.7,4.9)%
\pscurve{c-c}(-1.7,4.9)(-.8,0)(-1.7,-4.9)+
\psline{c-c}(0,0)(-1.7,-4.9)
}\kern\adjustablepoint}
\def \PiCell#1#2{\setbox171=\hbox{$#1$}\setbox172=\hbox{$#2$}
{\ltermcolour
\Langle 
	\ifdim\wd171>24\point\skp{\box171}\skp\else{\box171}\fi
\kern1pt{\mid}\kern1pt
	\ifdim\wd172>24\point\skp{\box172}\skp\else{\box172}\fi
\Rangle 
}}
\def \Cell#1#2{\slangle#1\skp{\mid}\skp#2\srangle}
\def \cell<#1|#2>{\Cell{#1}{#2}}

\def \repl #1[#2/#3]{\lterm{#1}{\skp\Black [}\lterm{#2}{\Black /}\lterm{#3}{\Black ]} }
\def \term#1 {{\termcolour #1}}
\def \ass#1:#2 {{\termcolour #1}{\black :}{\typecol #2} }
\def \lass#1:#2 {\lterm {#1}{\black :}{\typecol #2} }

\def \type #1 {{\typecolour #1}}
\def \lterm#1{{\ltermcolour #1}}
\def \lambdaterm #1 {{\ltermcolour #1}}
\def \Rule(#1){{\blue (#1) }}

\def \Stat#1:#2{{\ltermcolour #1}{:}{\typecolour #2}}
\def \stat#1#2{{\ltermcolour #1}{:}{\typecolour #2}}
\def \lstat#1#2{{\ltermcolour #1}\,{:\,}{\typecolour #2}}

\def \turn	{{\black {\vdash}}}
\def \Turn {\mathrel{\turn}}
\def \TurnSimple {\mathrel{\turn}}
\def \TurnD {\mathrel{\turn\kern-3\point_{\mbox{\tinyfont $\textit{\black D}$}}}}
\def \TurnI {\mathrel{\turn\kern-3\point_{\mbox{\tinyfont $\black \cap$}}}}
\def \TurnL {\mathrel{\turn\kern-.5\point_{`l}}}
\def \TurnLK {\mathrel{\Yellow \turn\kern-2\point_{\ftsc{lk}}}}
\def \TurnLKarr {\mathrel{\Yellow \turn\kern-2\point_{\ftsc{lk}({\rightarrow})}}}
\def \TurnR {\mathrel{\Yellow \turn\kern-2\point_{\ftsc{r}}}}
\def \TurnS {\mathrel{\turn\kern-2\point_{\ftsc{s}}}}
\def \TurnLN {\mathrel{\Yellow \turn\kern-2\point_{\ftsc{ln}}}}
\def \TurnLV {\mathrel{\Yellow \turn\kern-2\point_{\ftsc{lv}}}}

\def \TurnLmmt {\mathrel{\black \turn_{\sclmmt}}}
\def \TurnLmmtSN {\mathrel{\turn_{\sclmmt}^{\ftsc{sn}}}}
\def \TurnLmmtN {\mathrel{\turn\kern-2\point_{\black \ftsc{n}}}}
\def \TurnLmmtV {\mathrel{\turn\kern-2\point_{\black \ftsc{v}}}}
\def \TurnLmmtmin {\mathrel{\turn\kern-2\point_{\black \ftsc{m}}}}
\def \TurnLmmtmin {\mathrel{\turn}}
\def \TurnLmux {\mathrel{\turn\kern-\point_{\xind}}}
\def \TurnLmu {\mathrel{\turn\kern-\point_{`l`m}}}
\def \TurnLmuIU {\mathrel{\,\turn\kern-2\point_{`l`m}^{\,\raise 1\point\hbox{\tinyfont $\cap\cup$}}\,}}
\def \TurnLx {\mathrel{\turn_{`l\mbox{\scriptsize \sf x}}}}
\def \TurnMCL {\mathrel{\Yellow \turn\kern-2\point_{\ftsc{mcl}}}}
\def \TurnN {\mathrel{\Yellow \turn\kern-2\point_{\ftsc{n}}}}
\def \TurnNf {\mathrel{\Yellow \turn\kern-2\point_{\ftrm{n}}}}
\def \TurnV {\mathrel{\Yellow \turn\kern-2\point_{\ftsc{v}}}}
\def \TurnVf {\mathrel{\Yellow \turn\kern-2\point_{\ftrm{v}}}}
\def \TurnX {\mathrel{\turn\kern-2\point_{\cal X}}}
\def \TurnP {\mathrel{\turn\kern-3\point_p}}
\def \TurnM {\mathrel{\turn\kern-4\point_{\mbox{\tinyfont $\cal M$}}}}
\def \TurnMc {\mathrel{\turn\kern-4\point_{\mbox{\tinyfont $\cal M^{\tinysc{c}}$}}}}
\def \TurnMe {\mathrel{\turn\kern-4\point_{\mbox{\tinyfont $\cal M^{\tinysc{e}}$}}}}
\def \TurnMIU {\mathrel{\turn\kern-4\point_{\mbox{\tinyfont $\cal M^{\cap\cup}$}}}}
\def \TurnMI {\mathrel{\turn\kern-4\point_{\mbox{\tinyfont $\cal M^{\cap}$}}}}

\def \sclmmt {\mbox{
\tiny $\overline{`l}`m\mt$}}

\def \Der#1#2{\@ifnextchar{\bgroup}{\DerThree{#1}{#2}}{\DerTwo{#1}{#2}}}
\def \DerTwo#1#2{{\typecolour #1 \Turn #2}}
\def \DerThree#1#2#3{{\typecolour #2 \Turn #3}}
\def \Stoup#1#2{\lstat{#1}{#2}}
\def \StoupR#1#2{\lstat{#1}{#2} \Mid}
\def \StoupR#1#2{\framebox{$#1\,{:}\,#2$}~}
\def \StoupR#1#2{\StoupFrame{\lstat{#1}{#2}}}

\def \Stoup#1#2{\lstat{#1}{#2}}

\def \StoupR#1#2{\lstat{#1}{#2}\Skip\Mid}
\def \DerLmuThree#1#2#3{#1 \TurnLmu \Stoup{#2}{#3} }
\def \DerLmuFour#1#2#3#4{#1 \TurnLmu \StoupR{#2}{#3} #4}
\def \DerLmu#1#2#3{\@ifnextchar{\bgroup}{\DerLmuFour{#1}{#2}{#3}}{\DerLmuThree{#1}{#2}{#3}}}
\def \derLmuFour #1 |- #2 : #3 | #4 {\type{#1} \TurnLmu \Stoup{#2}{#3} \Mid \type{#4} }
\def \derLmuNormal #1 |- #2 : #3 {\@ifnextchar|{\derLmuFour #1 |- #2 : #3 }{#1 \TurnLmu \Stoup{#2}{#3} }}
\def \derLmuCmd #1 |- #2 | #3 {\type{#1} \TurnLmu \lterm{#2} \Mid \type{#3} }
\def \derLmuBasis #1 |- #2 {\@ifnextchar:{\derLmuNormal #1 |- #2 }{\derLmuCmd #1 |- #2 }}
\def \derLmu {\@ifnextchar|{\derLmuBasis {\emptyset} }{\derLmuBasis }}

\def \derLmuIU #1 {\def \TurnLmu {\TurnLmuIU}\derLmu {#1} }
\def \derLmux #1 {\def \TurnLmu {\TurnLmux}\derLmu {#1} }
\def \derLmuP  #1 {\def \TurnLmu {\skipper\TurnP\skipper}\derLmu {#1} }
\def \derLmuS  #1 {\def \TurnLmu {\TurnS}\derLmu {#1} }

\def \derMCLFour #1 |- #2 | #3 {#1 \TurnMCL {#2} \mid {#3} }
\def \derMCLNormal #1 |- #2 {\@ifnextchar|{\derMCLFour #1 |- #2 }{#1 \TurnMCL {#2} }}
\def \derdeMCL #1 |- | #2 {#1 \TurnMCL \mid {#2} }
\def \derMCL #1 |- {\@ifnextchar|{\derdeMCL #1 |- }{\derMCLNormal {#1} |- }}
\def \DerLmmt#1#2{\@ifnextchar{\bgroup}{\DerLmmtThree{#1}{#2}}{\DerLmmtTwo{#1}{#2}}}
\def \DerLmmtTwo#1#2{{\typecolour #1 \TurnLmmt #2}}
\def \DerLmmtThree#1#2#3{#1~{:}~{\typecolour #2 \TurnLmmt #3}}

\def \derLmmtTwo #1 |- #2 {{\typecolour {#1} \TurnLmmt {#2}}}
\def \derLmmtCommand #1 : #2 |- #3 {{\ltermcolour {#1}}~{\black :}~{\typecolour {#2} \TurnLmmt {#3}}}
\def \derLmmtTerm #1 |- #2 : #3 | #4 {{\typecolour {#1} \TurnLmmt {\ltermcolour #2}\skipper{\black :}\skipper{#3} ~{\black \mid}~ {#4} }}
\def \derLmmtContext #1 | #2 : #3 |- #4 {{\typecolour {#1}~{\black \mid}~{\ltermcolour #2}\skipper{\black :}\skipper{#3} \TurnLmmt {#4} }}
\def \derLmmtActivated #1 |{\@ifnextchar-{\derLmmtTerm {#1} |}{\derLmmtContext {#1} | }}
\def \derLmmtNamedDer #1 :: {{#1} \dcol \derLmmt }
\def \derLmmtColon #1 :{\@ifnextchar:{\derLmmtNamedDer #1 :}{\derLmmtCommand {#1} : }}
\def \derLmmt #1 {\@ifnextchar:{\derLmmtColon {#1} }{\derLmmtActivated {#1} }}
\def \derLmmtV #1 {\def \TurnLmmt{\TurnLmmtV}\derLmmt {#1} }
\def \derLmmtN #1 {\def \TurnLmmt{\TurnLmmtN}\derLmmt {#1} }
\def \derLmmtSN #1 {\def \TurnLmmt{\TurnLmmtSN}\derLmmt {#1} }
\def \derLmmtmin #1 {\def \TurnLmmt{\TurnLmmtmin} \derLmmt {#1} }
\def \derLmmtM #1 {\def \TurnLmmt{\TurnM} \derLmmt {#1} }
\def \derLmmtMc #1 {\def \TurnLmmt{\TurnMc} \derLmmt {#1} }

\def \derL #1 |- #2 : #3 {{\typecolour #1 \TurnL {\ltermcolour #2}\,{\black :}\,#3}}
\def \derLx #1 |- #2 : #3 {{\typecolour #1 \TurnLx {\ltermcolour #2}\,{\black :}\,#3}}
\def \derI #1 |- #2 : #3 {{\typecolour #1 \TurnI {\ltermcolour #2}\,{\black :}\,#3}}
\def \der #1 |- #2 : #3 {{\typecolour #1 \TurnSimple {\ltermcolour #2}\,{\black :}\,#3}}

\def \DerX#1#2#3{\PDer{#1}{#2}{#3}}

\def \derLK #1 |- #2 {{\typecolour #1}	\TurnLK	{\typecolour #2}}
\def \derLKarr #1 |- #2 {{\typecolour #1}	\TurnLKarr	{\typecolour #2}}
\def \deriv #1 |- #2 {{\typecolour #1}	\Turn	{\typecolour #2}}

\setbox161=\hbox{$\cdot$}
\setbox162=\hbox{{\neutral ~\raise-2.5\point\copy161\kern-\wd161\raise2.5\point\copy161\kern.5\point \raise0\point\copy161~}}
\def \DP {\copy162}
\def \DPa {{\neutral \raise-.7\point\hbox{~\large :~}}}

\def \PDer#1#2#3{{\termcolour #1} \DP {\typecolour #2} \Turn {\typecolour #3}}
\def \pDer#1:#2 |- #3 {{\termcolour #1} \DP {\typecolour {#2} \Turn {#3}}}
\def \InfPDerFour#1#2#3#4{ \InfBox{#1}{ \PDer{#2}{#3}{#4} } }
\def \InfPDer#1#2#3{\@ifnextchar{\bgroup}{\InfPDerFour{#1}{#2}{#3}}{\InfPDerFour{}{#1}{#2}{#3}}}
\def \PDerN#1#2#3{\typecolour {\termcolour #1 }\DP {\typecolour #2} \TurnN {\typecolour #3}}

\def \derXaux #1 : #2 |- #3 {{\termcolour #1 } \DP {\typecolour #2}	\TurnX	{\typecolour #3}}
\def \derX #1 : {\@ifnextchar|{\derXaux {#1} : {} }{\derXaux {#1} : }}

\def \derXR #1 : #2 |- #3 {{\termcolour #1 } \DP {\typecolour #2}	\TurnR	{\typecolour #3}}
\def \derXRN #1 : #2 |- #3 {{\termcolour #1 } \DP {\typecolour #2}	\TurnLN	{\typecolour #3}}
\def \derXRV #1 : #2 |- #3 {{\termcolour #1 } \DP {\typecolour #2}	\TurnLV	{\typecolour #3}}
\def \derXN #1 : #2 |- #3 {{\termcolour #1 } \DP {\typecolour #2}	\TurnN	{\typecolour #3}}
\def \derXNf #1 : #2 |- #3 {{\termcolour #1 } \DP {\typecolour #2}	\TurnNf	{\typecolour #3}}
\def \derXV #1 : #2 |- #3 {{\termcolour #1 } \DP {\typecolour #2}	\TurnV	{\typecolour #3}}
\def \derXVf #1 : #2 |- #3 {{\termcolour #1 } \DP {\typecolour #2}	\TurnVf	{\typecolour #3}}
\def \PDerV#1#2#3{\typecolour {\termcolour #1 }\DP {\typecolour #2} \TurnV {\typecolour #3}}
\def \InfPDerFourN#1#2#3#4{ \InfBox{#1}{ \PDerN{#2}{#3}{#4} } }
\def \InfPDerN#1#2#3{\@ifnextchar{\bgroup}{\InfPDerFourN{#1}{#2}{#3}}{\InfPDerFour{}{#1}{#2}{#3}}}
\def \InfPDerFourV#1#2#3#4{ \InfBox{#1}{ \PDerV{#2}{#3}{#4} } }
\def \InfPDerV#1#2#3{\@ifnextchar{\bgroup}{\InfPDerFourV{#1}{#2}{#3}}{\InfPDerFour{}{#1}{#2}{#3}}}

\def \derPure #1 : #2 |- #3 {{\termcolour #1 } \DP {\typecolour #2}	\Turn_{\kern-2pt\ftsc{p}}	{\typecolour #3}}
\def \derPureN #1 : #2 |- #3 {{\termcolour #1 } \DP {\typecolour #2}	\Turn_{\kern-2pt\ftsc{pn}}	{\typecolour #3}}
\def \derPureV #1 : #2 |- #3 {{\termcolour #1 } \DP {\typecolour #2}	\Turn_{\kern-2pt\ftsc{pv}}	{\typecolour #3}}

\def \derDP #1 : #2 |- #3 {{\termcolour #1 } \DP {\typecolour #2}	\Turn	{\typecolour #3}}
\def \LogicDer #1 |- #2 { {\typecolour #1}	\Turn	{\typecolour #2}}

\def \InfDerX #1 :: #2 : #3 |- #4 {\InfPDer{#1}{#2}{#3}{#4}}
\def \InfDer #1 :: #2 |- #3 : #4 {\InfLDer{#1}{#2}{#3}{#4}}

\def \XIn#1#2{{\widehat{#1}}\hspace*{.75pt}#2}
\def \XOut#1#2{#1\hspace*{1pt}{\widehat{#2}}}

\def \Dagger	{\mathrel{\mbox{\oldmath\symbol{121}}}}
\def \RedDagger	{\mathrel{\raise-1\point\hbox{
 \largeoldmath \symbol{121}}}}
\def \DaggerA{{\Yellow \Dagger\kern-3\point_{\mbox{\scriptsize \sc a}}}}
\def \DaggerLog{\kern-1\point{\Yellow \Dagger\kern-1.5\point_{\mbox{\scriptsize \sc l}\kern-1.5\point}}}

\def \DaggerLeft{\mathop{\kern-1\point{\specialcol \raise2\point\hbox{\rotatebox{-30}{$\Dagger$}}}\kern-1\point}}
\def \DaggerRight{\mathop{\kern-1\point{\specialcol \raise0\point\hbox{\rotatebox{30}{$\Dagger$}}}\kern-1\point}}

\def \DaggerV{\mathop{\,\Yellow \Dagger_{\mbox{\scriptsize \sc v}}}}
\def \DaggerN{\mathop{\,\Yellow \Dagger_{\mbox{\scriptsize \sc n}}}}
\def \DaggerL{\DaggerLeft}
\def \DaggerR{\DaggerRight}

\def \expT#1#2#3#4{\XOut{\XIn{#1}{#2}}{#3}\kern.2mm{`.}\kern.2mm#4}
\def \medT#1#2#3#4#5{\setbox111=\hbox{$#1$}\setbox112=\hbox{$#5$}%
\ifdim\wd111<10pt%
	\ifdim\wd112<10pt%
		\XOut{\box111}{#2}\,[#3]\,\XIn{#4}{\box112}%
	\else
		\XOut{\box111}{#2}~[#3]~\XIn{#4}{\box112}
	\fi
\else
	\XOut{\box111}{#2}~[#3]~\XIn{#4}{\box112}
\fi}
\def \medTdl#1#2#3#4#5{{\termcolour %
	\XOut{(#1)}{#2}~[#3] }\quad \\ \hfill {\termcolour \XIn{#4}{(#5)} }
}

\def \cutT#1#2#3#4{\XOut{#1}{#2} \Dagger \XIn{#3}{#4}}
\def \ColcutT#1#2#3#4{\XOut{#1}{#2} \RedDagger \XIn{#3}{#4}}
\def \cutlog#1#2#3#4{\XOut{#1}{\orange #2}{\orange \Dagger \XIn{#3}{#4}}}
\def \cutA#1#2#3#4{\XOut{#1}{#2}\DaggerA \XIn{#3}{#4}}
\def \invLT#1#2#3{{#1}\kern.2mm{`.}\kern.2mm\XOut{#2}{#3}}
\def \invRT#1#2#3{\XIn{#1}{#2}\kern.2mm{`.}\kern.2mm{#3}}
\def \negL #1 . #2 #3 {\NegL{#1}{#2}{#3}}
\def \negR #1 #2 . #3 {\NegR{#1}{#2}{#3}}
\def \negLT#1#2#3{{#1}\kern.2mm{`.}\kern.2mm\XOut{#2}{#3}}
\def \negRT#1#2#3{\XIn{#1}{#2}\kern.2mm{`.}\kern.2mm{#3}}
\def \pairT#1#2#3#4#5{\Group<{#1}\widehat{#2},{#3}\widehat{#4}>\kern.2mm{`.}\kern.2mm{#5}}
\def \orLT#1#2#3#4#5{{#1}\kern.2mm{`.}\kern.2mm (\widehat{#2}{#3}\kern.2mm{+}\kern.2mm\widehat{#4}{#5}) }
\def \inLT#1#2#3#4{{#1}(\widehat{#2}\kern.2mm{+}\kern.2mm{#3})\kern.2mm{`.}\kern.2mm{#4}}
\def \inRT#1#2#3#4{{#1}\,({#2}\kern.2mm{+}\kern.2mm\widehat{#3})\kern.2mm{`.}\kern.2mm{#4}}
\def \orRPT#1#2#3#4{{#1}\,(\widehat{#2}\kern.3mm{\mid}\kern.3mm\widehat{#3})\kern.2mm{`.}\kern.2mm{#4}}

\def \newconRT#1#2#3#4{\Group<{#1},{#2}>{#3}\kern.2mm{`.}\kern.2mm{#4}}
\def \newconLT#1#2#3#4#5{#1\kern.2mm{`.}\kern.2mm\Group<{#2}\widehat{#3},{#4}\widehat{#5}>}

\def \CBV{\Sc{cbv}}
\def \CBN{\Sc{cbn}}

\def \Weak{\mysl{W}}

\omtrue
\def \InvLterm#1#2#3{{\termcolour \if@om \omfalse\invLT{#1}{#2}{#3}\omtrue
	\else	\skp(\invLT{#1}{#2}{#3})\fi}}
\def \InvRterm#1#2#3{{\termcolour \if@om \omfalse\invRT{#1}{#2}{#3}\omtrue
	\else	\skp(\invRT{#1}{#2}{#3})\fi}}
\def \NegL#1#2#3{{\termcolour \if@om \omfalse\negLT{#1}{#2}{#3}\omtrue
	\else	\skp(\negLT{#1}{#2}{#3})\fi}}
\def \NegR#1#2#3{{\termcolour \if@om \omfalse\negRT{#1}{#2}{#3}\omtrue
	\else	\skp(\negRT{#1}{#2}{#3})\fi}}
\def \NewConR#1#2#3#4{{\termcolour \if@om \omfalse\newconRT{#1}{#2}{#3}{#4}\omtrue	\else	(\newconRT{#1}{#2}{#3}{#4})\fi}}
\def \NewConL#1#2#3#4#5{{\termcolour \if@om \omfalse\newconLT{#1}{#2}{#3}{#4}{#5}\omtrue	\else	(\newconLT{#1}{#2}{#3}{#4}{#5})\fi}}

\def \nonconnectable{\mbox{\tiny $\Box$}}
\def \nonconnectable{\diamond}
\def \newconR < {\@ifnextchar{o}{\newconRR < }{\newconRL < }}
\def \newconRL < #1 , #2 > #3 . #4 {\NewConR{\widehat{#1}}{\nonconnectable}{#3}{#4}}
\def \newconRR < #1 , #2 > #3 . #4 {\NewConR{\nonconnectable}{\widehat{#2}}{#3}{#4}}

\def \newconL #1 . < #2 #3 , #4 #5 > {\NewConL{#1}{#2}{#3}{#4}{#5}}


\def \longharpoon#1{\psset{unit=.7\point,linewidth=0.4\point,linecolor=ltermcol}%
\harpoonlength#1\addtolength{\harpoonlength}{-1\point}%
\psline{cc-cc}(-1,1.25)(-\harpoonlength,1.25)%
\pscurve{cc-cc}(-0.7,1.25)(-1.5,1.44)(-2,1.8)(-2.5,2.3)(-3,3.1)
\vbox to 2\point{}%
}

\def \Vec#1{\setbox155=\hbox{$#1$}%
\leavevmode\copy155\raise\ht155\hbox{\longharpoon{\wd155}}}

\def \longarrow#1{\psset{unit=1\point,linewidth=0.35\point}%
\psline{cc-cc}(0,0)(#1,0)%
\hspace*{#1}%
\pscurve{cc-cc}(0,0)(-1.1,.4)(-2.5,2.2)%
\pscurve{cc-cc}(0,0)(-1.1,-.4)(-2.5,-2.2)}

\def \Rel#1#2{\setbox155=\hbox{\scriptsize $#1$}\setbox156=\hbox{$#2$}%
\mathrel{\raise5.5pt\copy155\kern-1\wd155%
\kern-.5\wd155\kern-.5\wd156\raise-2pt\copy156\kern-.5\wd156\kern.5\wd155}}

\def \PiOverline#1{%
 \ifmycolour\psset{unit=1\point,linewidth=0.4\point,linecolor=picolour}%
 \else\psset{unit=1\point,linewidth=0.35\point,linecolor=black}
 \fi
 \setbox155=\hbox{$#1$}\leavevmode
 \raise\ht155\hbox{\psline{c-c}(.05\wd155,1.25)(.95\wd155,1.25)}\box155%
}
\def \Underline#1{%
 \ifmycolour\psset{unit=1\point,linewidth=0.4\point,linecolor=yellow}%
 \else\psset{unit=1\point,linewidth=0.35\point,linecolor=black}
 \fi
 \setbox156=\hbox{${#1}$}\leavevmode
 \raise-\dp156\hbox{\psline{c-c}(.05\wd156,-0.75)(.95\wd156,-0.75)}\box156 }

\def \ul#1{\Underline{\mbox{\scriptsize $#1$}}}

\def \CH#1{\kern1\point\lefttop{\mbox{\Purple $#1$}}\righttop\kern1\point}
\def \LtoLMMT#1{\kern1\point\lefttop{#1}\righttop\kern1\point}
\def \CH#1{\raise1.5\point\hbox{\footnotesize ${\mid}\kern-1.5\point[$}%
#1
\raise1.5\point\hbox{\footnotesize $]\kern-1.5\point{\mid}_{`l}$}}

\def \CutGraph{\@ifnextchar[{\CutGraphN}{\CutGraphN[\GCut]}}

\def \CutGraphN[#1]#2#3#4#5{%
\begin{psmatrix}[rowsep=1ex,colsep=.05ex]
&&&[name=cutroot]&\\[3ex]
&&&[name=cut]{#1}&\\
&[name=P]#2&&&&[name=Q]#5& \\
&&[name=a]#3&&[name=x]#4\\&
\end{psmatrix}
 \ncline{cutroot}{cut}
\ncline{cut}{P}
\ncline{cut}{a}
\ncline{cut}{x}
\ncline{cut}{Q}
}

\def \V#1{\setbox71=\hbox{#1}\setbox72=\hbox{\v{}}%
\copy71\kern-.5\wd71\kern-.3\wd72\raise\ht71\hbox{\raise.2em\hbox{\lower\ht72\hbox{\copy72}}}\kern-.7\wd72\kern.5\wd71}

\setbox138=\hbox{$=$}
\setbox139=\hbox{{\scriptsize $\Delta$}}%
\setbox149=\hbox{\raise-1.5\point\copy138\kern-.5\wd138\kern-.5\wd139\raise4\point\hbox{\copy139}\kern-.5\wd139\kern.5\wd138}
\def \ByDef {\mathrel{\copy149}}

\def \Arr{{\Rightarrow}}

\def \Cut{\@ifnextchar{\bgroup}{\CutTwo}{{\mysl{\Red cut}}}}

\def \Exp	{\@ifnextchar{\bgroup}{\ExpT}{{\mysl{exp}}}}
\def \Imp	{\@ifnextchar{\bgroup}{\MedT}{{\mysl{imp}}}}
\def \Med	{\@ifnextchar{\bgroup}{\MedT}{{\mysl{med}}}}
\def \Cap	{\@ifnextchar{\bgroup}{\CapT}{{\mysl{cap}}}}

\def \Ins {\mysl{exp-imp}}

\def \actR	{{\Yellow \DaggerR}\mysl{a}}

\def \deactR	{{\Yellow \DaggerR}\mysl{d}}

\def \renR	{{\termcolour {\DaggerR}\mysl{ren}}}

\def \gcR {{\DaggerR}\mysl{gc}}

\def \Ri {{\DaggerR}\mysl{cap}}
\def \Rii {{\DaggerR}\mysl{exp}}
\def \Riii {{\DaggerR}\mysl{imp-outs}}
\def \Riv {{\DaggerR}\mysl{imp-ins}}
\def \Rv {{\DaggerR}\mysl{cut}}

\def \actL	{\mysl{a}{\Yellow \DaggerL}}

\def \deactL	{\mysl{d}{\Yellow \DaggerL}}

\def \renL	{{\termcolour \mysl{ren}{\DaggerL}}}

\def \gcL {\mysl{gc}{\DaggerL}}

\def \Li {\mysl{cap}{\DaggerL}}
\def \Lii {\mysl{exp-outs}{\DaggerL}}
\def \Liii {\mysl{exp-ins}{\DaggerL}}
\def \Liv {\mysl{imp}{\DaggerL}}
\def \Lv {\mysl{cut}{\DaggerL}}

\def \CutL#1#2{\@ifnextchar{\bgroup}{\CutLeft{#1}{#2}}{\widehat{#1} \DaggerL \widehat{#2}}}
\def \CutR#1#2{\@ifnextchar{\bgroup}{\CutRight{#1}{#2}}{\widehat{#1} \DaggerR \widehat{#2}}}
\def \CutTwo#1#2{\@ifnextchar{\bgroup}{\CutT{#1}{#2}}{\widehat{#1} \Dagger \widehat{#2}}}
\def \ColCut#1#2{\@ifnextchar{\bgroup}{\ColCutT{#1}{#2}}{\widehat{#1} \Dagger \widehat{#2}}}
\def \Cutdl#1#2#3#4{\XOut{#1}{#2} \Dagger\\ \tab \XIn{#3}{#4}}
\def \CutDl#1#2#3#4{\XOut{#1}{#2} \Dagger\\ \tab\tab \XIn{#3}{#4}}

\def \notin {\,{\not\in}\,}

\def \TrTwo#1#2{\Sem{\ltermcolour #1}{\termcolour #2}^{\sclmmt}}
\def \Tran#1{\@ifnextchar{\bgroup}{\TrTwo{#1}}{\TrTwo{#1}{}}}

\def \CapRule#1#2#3#4#5{ 
 (\Cap): &
 \Inf	{ \PDer{ \Cap{#1}{#2} }{#4,\stat{#1}{#3} }{\stat{#2}{#3}, {#5} } }
}

\def \DagRule#1#2#3#4#5#6#7{ 
 (\Cut):&
 \Inf	{ \PDer{#1}{#6}{ \stat{#2}{#5}, {#7} } \quad \PDer{#4}{{#6}, \stat{#3}{#5} }{#7} }
	{ \PDer{ \Cut{#1}{#2}{#3}{#4} }{#6}{#7} }
}

\def \MedRule#1#2#3#4#5#6#7#8#9{%
 (\Imp):&
 \Inf	{ \PDer{#1}{#8}{\stat{#2}{#6}, #9} \quad \PDer{#5}{ #8,\stat{#4}{#7} }{#9} }
	{ \PDer{ \Med{#1}{#2}{#3}{#4}{#5} }{#8, \stat{#3}{{#6}\arr{#7}} }{#9}}
}

\def \ImpRule#1#2#3#4#5#6#7#8#9{%
 (\Imp):&
 \Inf	{ \PDer{#1}{#8}{\stat{#2}{#6}, #9} \quad \PDer{#5}{ #8,\stat{#4}{#7} }{#9} }
	{ \PDer{ \Med{#1}{#2}{#3}{#4}{#5} }{#8, \stat{#3}{{#6}\arr{#7}} }{#9}}
}

\def \ExpRule#1#2#3#4#5#6#7#8{ 
 (\Exp):&
 \Inf	{ \PDer{#2}{ {#7},\stat{#1}{#5} }{\stat{#3}{#6}, #8} }
	{ \PDer{ \Exp{#1}{#2}{#3}{#4} }{#7}{ \stat{#4}{#5\arr{#6}}, #8} }
}

\newdimen \GeneWidth \newdimen \GeneW \newdimen \HGeneW
\newdimen \GeneHeight \newdimen \GeneH \newdimen \HGeneH
\newcount \PictW \newcount \HPictW
\newcount \PictH \newcount \HPictH
\newcount \BoxH \newcount \HBoxH
\newcount \BoxW \newcount \HBoxW
\newcount \HOffset \newcount \VOffset

\def \StoupFrameR#1{\setbox159=\hbox{$#1$}%
\PictW\wd159\divide\PictW\unitlength\advance\PictW3%
\PictH\ht159`mltiply\PictH18\divide\PictH10\divide\PictH\unitlength
\,\raise-4\unitlength\hbox{\PSFrame(\PictW,\PictH)}%
\kern1\point\box159\,,}

\def \StoupFrameL#1{\setbox159=\hbox{$#1$}%
\PictW\wd159\divide\PictW\unitlength\advance\PictW3%
\PictH\ht159`mltiply\PictH18\divide\PictH10\divide\PictH\unitlength
,\,\raise-4\unitlength\hbox{\PSFrame(\PictW,\PictH)}%
\kern1\point\box159\,}

\def \Gene#1{\unitlength.09\point\setbox140=\hbox{$#1$}%
\GeneHeight\ht140\advance\GeneHeight55\unitlength%
\GeneWidth\wd140\advance\GeneWidth50\unitlength%
\PictW\GeneWidth\divide\PictW\unitlength%
\PictH\GeneHeight\divide\PictH\unitlength%
\ifdim\wd140>10\point
\begin{picture}(\PictW,\PictH)%
 \put(0,0)	{\makebox(\PictW,\PictH){\box140}}%
 \put(0,0)	{\PSFrame(\PictW,\PictH)}%
\end{picture}
\else \box140
\fi}

\def \InvGene#1{\unitlength.09\point%
\setbox140=\hbox{$#1$}%
\GeneHeight\ht140\advance\GeneHeight60\unitlength
\GeneWidth\wd140\advance\GeneWidth50\unitlength
\PictW\GeneWidth\divide\PictW\unitlength
\PictH\GeneHeight\divide\PictH\unitlength
\begin{picture}(\PictW,\PictH)
 \put(0,0)	{\makebox(\PictW,\PictH){\box140}}
 \put(0,0)	{\PSFrame(0,0)}
\end{picture}}

\newcommand{\DiagCap}[2]{\unitlength.09\point%
\begin{picture}(360,120)(-180,-55)
	\put(-180,0)	{\Yellow \vector(1,0){140}}
	\put(-110,50)	{\makebox(0,0){\small $\termcolour #1$}}
	\put(-40,-30)	{\PSBox(80,80)}
	\put(110,50)	{\makebox(0,0){\small $\termcolour #2$}}
	\put(40,0)	{\Yellow	\vector(1,0){140}}
 \end{picture}}
\def \Diagcaps<#1,#2>{\DiagCap{#1}{#2}}

\newcommand{\DiagExp}[4]{\unitlength.09\point%
\setbox143=\hbox{{\termcolour \small $#2$}}%
\GeneWidth\wd143
	\advance\GeneWidth320\unitlength
	\BoxW\GeneWidth \divide\BoxW\unitlength \HBoxW\BoxW \divide\HBoxW2%
	\advance\GeneWidth200\unitlength \PictW\GeneWidth \divide\PictW\unitlength \HPictW\PictW \divide\HPictW2%
\GeneHeight\ht143 \advance\GeneHeight 80\unitlength
	\BoxH\GeneHeight \divide\BoxH\unitlength \HBoxH\BoxH \divide\HBoxH2%
	\advance\GeneHeight70\unitlength
\PictH\GeneHeight \divide\PictH\unitlength \HPictH\PictH \divide\HPictH2%
\HOffset\wd143 \divide\HOffset\unitlength \divide\HOffset2%
\VOffset\HPictW \divide\VOffset\unitlength \divide\VOffset2 \advance\VOffset20%
 \def \ContentExp {\begin{picture}(\PictW,\PictH)(-\HPictW,-\HPictH)
 \put(-50,0){%
 \VOffset\ht143\divide\VOffset\unitlength\divide\VOffset2\advance\VOffset50%
 \makebox(0,0){\Gene{\begin{picture}(\BoxW,\BoxH)(-\HBoxW,20)
		\put(-\HOffset,\VOffset)	{
		\put(-160,-25) {\Yellow \vector(1,0){150}}
		\put(-85,22)	{\makebox(0,0){\termcolour \small $\widehat{#1}$}}
	}%
	\put(0,\VOffset)	{\makebox(0,0){\box143}}
		\put(\HOffset,\VOffset) {
		\put(10,-25)	{\Yellow \vector(1,0){150}}
		\put(75,20)	{\makebox(0,0){\termcolour \small $\widehat{#3}$}}
	}
 \end{picture}}}}
 \HOffset\HBoxW\advance\HOffset-40
 \put(\HOffset,-15)	{\Yellow \vector(1,0){130}}
 \advance\HOffset65
 \put(\HOffset,35) {\makebox(0,0){\termcolour \small $#4$}}
 \end{picture}}
\if@om \omfalse 
\ContentExp 
\omtrue \else \Gene{\ContentExp} \fi
}

\newdimen \MedW \newdimen \MedH
\newcommand{\DiagMed}[5]{\unitlength.09\point%
\setbox144=\hbox{{\termcolour \small $#1$}}%
\setbox145=\hbox{{\termcolour \small $#5$}}%
\setbox146=\hbox{\termcolour \small $#3$}%
\MedW \wd144 \advance \MedW \wd145 
\advance \MedW 480\unitlength
\ifdim \ht144 > \ht145 \MedH \ht144 \else \MedH \ht145 \fi \advance \MedH 70\unitlength
\BoxW\MedW\divide\BoxW\unitlength\HBoxW\BoxW\divide\HBoxW2%
\BoxH\MedH\divide\BoxH\unitlength\HBoxH\BoxH\divide\HBoxH2%
\GeneWidth \MedW \advance \GeneWidth 240\unitlength
\GeneHeight \MedH \advance \GeneHeight 70\unitlength
\PictW \GeneWidth \divide \PictW \unitlength \HPictW \PictW \divide \HPictW 2%
\PictH \GeneHeight \divide \PictH \unitlength \HPictH \PictH \divide \HPictH 2%
\def \ContentMed{\begin{picture}(\PictW,\PictH)(-\HPictW,-\HPictH)
 \put(75,0){%
	\makebox(0,0){%
	\VOffset \HPictH \advance \VOffset -3%
	\HOffset \MedW 
		\advance \HOffset \wd144
		\advance \HOffset -\wd145 
	\divide \HOffset \unitlength \divide \HOffset 2
 \VOffset \HBoxH \advance \VOffset 0
 \Gene{\begin{picture}(\BoxW,\BoxH)(-\HOffset,-\VOffset)
	\put(-220,0) {\makebox(0,0)[r]{\box144}}
	\put(-180,-20) {\Yellow \vector(1,0){120}}
	\put(-130,30)	{\makebox(0,0){\termcolour \small $\widehat{#2}$}}
	\put(0,0)	{\makebox(0,0){${\Yellow [~]}$}}
	\put(120,30)	{\makebox(0,0){\termcolour \small $\widehat{#4}$}}
	\put(60,-20)	{\Yellow \vector(1,0){120}}
	\put(220,0)	{\makebox(0,0)[l]{\box145}}
	\end{picture}}}}
 \HOffset\HBoxW\advance\HOffset120
 \put(-\HOffset,-15) {\Yellow \vector(1,0){140}}
 \advance\HOffset-70
 \put(-\HOffset,35) {\makebox(0,0){\termcolour $#3$}}
 \end{picture}}
\ContentMed 
}
\def \Diagmed #1 #2 [#3] #4 #5 {\DiagMed{#1}{#2}{#3}{#4}{#5}}
 
\newcommand{\DiagDMed}[5]{\unitlength.09\point%
\setbox144=\hbox{{\termcolour $#1$}}\setbox145=\hbox{{\termcolour $#5$}}\setbox146=\hbox{\termcolour $#3$}%
\ifdim \wd144 > \wd145 \MedW \wd144 \else \MedW \wd145 \fi 
\advance \MedW 300\unitlength
\MedH \ht144 \advance \MedH \ht145 \advance \MedH 100\unitlength
\BoxW\MedW\divide\BoxW\unitlength\HBoxW\BoxW\divide\HBoxW2%
\BoxH\MedH\divide\BoxH\unitlength\HBoxH\BoxH\divide\HBoxH2%
\GeneWidth \MedW \advance \GeneWidth 240\unitlength
\GeneHeight \MedH \advance \GeneHeight 70\unitlength
\PictW \GeneWidth \divide \PictW \unitlength \HPictW \PictW \divide \HPictW 2%
\PictH \GeneHeight \divide \PictH \unitlength \HPictH \PictH \divide \HPictH 2%
\def \ContentDMed{\begin{picture}(\PictW,\PictH)(-\HPictW,-\HPictH)%
 \put(80,0){%
	\makebox(0,0){%
	\VOffset \HPictH \divide \VOffset 2%
	\ifdim \wd144 > \wd145 \HOffset \wd144
	\else \HOffset \wd145 \fi
	\divide \HOffset \unitlength \advance \HOffset 50%
 \Gene{\begin{picture}(\BoxW,\BoxH)(-\HOffset,-\HBoxH)
	\put(0,\VOffset){%
	\put(0,-20) {\makebox(0,0)[r]{\box144}}
	\put(40,-30) {\Yellow \vector(1,0){120}}
	\put(95,18)	{\makebox(0,0){\termcolour $\widehat{#2}$}}
	}
	\put(-\HOffset,-\VOffset){%
	\put(80,0)	{\makebox(0,0){${\Yellow [~]}$}}
	\put(175,18)	{\makebox(0,0){\termcolour $\widehat{#4}$}}
	\put(125,-30)	{\Yellow \vector(1,0){120}}
	\put(250,0)	{\makebox(0,0)[l]{\box145}}
	}
	\end{picture}}}}%
 \HOffset\HBoxW\advance\HOffset120%
 \put(-\HOffset,-20) {\Yellow \vector(1,0){140}}%
 \advance\HOffset-70%
 \put(-\HOffset,30) {\makebox(0,0){\termcolour $#3$}}%
 \end{picture}}%
 \ContentDMed 
}

\newcommand{\DiagCutA}[5]{\unitlength.09\point%
\setbox141=\hbox{\small $#1$}%
\setbox142=\hbox{\small $#4$}%
\GeneWidth \wd141 \advance \GeneWidth \wd142 \advance \GeneWidth 400\unitlength
\ifdim \ht141 > \ht142 \GeneHeight \ht141 \else \GeneHeight \ht142 \fi \advance \GeneHeight 30\unitlength
\PictW \GeneWidth \divide \PictW \unitlength \HPictW \PictW \divide \HPictW 2%
\PictH \GeneHeight \divide \PictH \unitlength \HPictH \PictH \divide \HPictH 2 \advance \PictH 20%
\HOffset \GeneWidth
\ifdim \wd142 > \wd141 \advance \HOffset -\wd142 \advance \HOffset \wd141
\else \advance \HOffset \wd141 \advance \HOffset -\wd142 \fi
\divide \HOffset \unitlength \divide \HOffset 2
\VOffset \HPictH \advance \VOffset 0
\def \ContentCut{ \begin{picture}(\PictW,\PictH)(-\HOffset,-\VOffset)
	\put(-178,0)	{\makebox(0,0)[r]{\termcolour \box141}}
	\put(-140,-20)	{\Yellow \vector(1,0){160}}
	\put(-70,32)	{\makebox(0,0){\termcolour $\widehat{#2}$}}
	\put(0,-100)	{\makebox(0,0){\mbox{\Yellow \footnotesize \sc #5}}}
	\put(70,32)	{\makebox(0,0){\termcolour $\widehat{#3}$}}
	\put(0,-20)	{\Yellow \line(1,0){140}}
	\put(180,0)	{\makebox(0,0)[l]{\termcolour \box142}}
 \end{picture}}
\Gene{ \ContentCut }
}

\newcount\tempcount
\newcommand{\XNat}[1]{\tempcount#1\raise-4\point\hbox{\DiagCap{x}{`a}}\loop{\ifnum\tempcount>0%
\,[f]\,\raise-4\point\hbox{\DiagCap{x}{`a}}%
\advance\tempcount by-1 }\repeat}

\def \without{\mathord{\setminus}}
\def \typeofargs#1#2{{\It {typeof}}\ #1\ #2}
\def \typeof{\@ifnextchar\bgroup{\typeofargs}{\It {typeof}}}



\def \ftsc#1{\textsc{\scriptsize #1}}
\def \tinysc#1{\textsc{\tiny #1}}
\def \ftrm#1{\textrm{\scriptsize #1}}

\def \unifyContextsargs#1#2{\setbox31=\hbox{$#1$}\setbox32=\hbox{$#2$}
\ifdim\wd31<12\point\ifdim\wd32<12\point {\It {unifyCont}}\ #1\ #2
\else {\It {unifyCont}}\ #1\ (#2) \fi
\else \ifdim\wd32<12\point {\It {unifyCont}\ (#1)\ #2}
\else {\It {unifyCont}}\ (#1)\ (#2) \fi \fi }

\def \unifyContexts{\@ifnextchar\bgroup{\unifyContextsargs}{\It {unifyCont}}}

\def \GCut {\textsf{ Cut}}

\def\Unr#1{\textsl{Unrv}~#1}
\def \Unravel{\@ifnextchar\bgroup{\Unr}{\textsl{Unrv}}}


\def \LC	{\mbox{$\Red `l$-calculus}}

\def \DBs {\textsf{\small DB}}
\def \DBf#1#2#3{\DBs(#1,#2,#3)}
\def \DB {\@ifnextchar{\bgroup}{\DBf}{\DBs}}

\def \Pred[#1]{~[\skp#1\skp]}

\omtrue
\def \expT#1#2#3#4{{\typecolour \XOut{\XIn{#1}{#2}}{#3}\cdot #4}}
\def \medT#1#2#3#4#5{{\typecolour \XOut{#1}{#2}\,[#3]\,\XIn{#4}{#5}}}

\def \expT#1#2#3#4{\XOut{\XIn{#1}{#2}}{#3}\kern.2mm{`.}\kern.2mm#4}

\def \medT#1#2#3#4#5{\setbox111=\hbox{$#1$}\setbox112=\hbox{$#5$}%
\ifdim\wd111<10pt%
	\ifdim\wd112<10pt%
	\XOut{\box111}{#2}\,[#3]\,\XIn{#4}{\box112}%
	\else
	\XOut{\box111}{#2}~[#3]~\XIn{#4}{\box112}
	\fi
\else
	\XOut{\box111}{#2}~[#3]~\XIn{#4}{\box112}
\fi}

\def \cutV#1#2#3#4{\XOut{#1}{#2} \DaggerV \XIn{#3}{#4}}
\def \cutN#1#2#3#4{\XOut{#1}{#2} \DaggerN \XIn{#3}{#4}}
\def \cutleft#1#2#3#4{\XOut{#1}{#2} \DaggerL \XIn{#3}{#4}}
\def \cutright#1#2#3#4{\XOut{#1}{#2} \DaggerR \XIn{#3}{#4}}
\def \cutA#1#2#3#4{\XOut{#1}{#2}\DaggerA \XIn{#3}{#4}}
\def \negLT#1#2#3{{#1}\kern.2mm{`.}\kern.2mm\XOut{#2}{#3}}
\def \negRT#1#2#3{\XIn{#1}{#2}\kern.2mm{`.}\kern.2mm{#3}}

\def \orLT#1#2#3#4#5{{#1}\kern.2mm{`.}\kern.2mm\Group<\widehat{#2}{#3}\kern.3mm{\mid}\kern.3mm\widehat{#4}{#5}>}
\def \orRLT#1#2#3#4{{#1}\Group<\widehat{#2}\kern.2mm{\mid}\kern.2mm{#3}>\kern.2mm{`.}\kern.2mm{#4}}
\def \orRRT#1#2#3#4{{#1}\Group<{#2}\kern.2mm{\mid}\kern.2mm\widehat{#3}>\kern.2mm{`.}\kern.2mm{#4}}
\def \orRPT#1#2#3#4{{#1}\Group<\widehat{#2}\kern.2mm{\mid}\kern.2mm\widehat{#3}>\kern.2mm{`.}\kern.2mm{#4}}

\def \Caps(#1,#2){{\termcolour \slangle#1{`.}#2\srangle}}
\def \CapT#1#2{\Caps(#1,#2)}
\def \ExpT#1#2#3#4{{\termcolour \if@om \omfalse\expT{#1}{#2}{#3}{#4}\omtrue
	\else (\expT{#1}{#2}{#3}{#4}) \fi}}
\def \Exps#1#2#3.#4{{\termcolour \if@om \omfalse\expT{#1}{#2}{#3}{#4}\omtrue
	\else (\expT{#1}{#2}{#3}{#4})\fi}}
\def \MedT#1#2#3#4#5{{\termcolour \if@om \omfalse\medT{#1}{#2}{#3}{#4}{#5}\omtrue
	\else (\medT{#1}{#2}{#3}{#4}{#5})\fi}}
\def \Meddl#1#2#3#4#5{\if@om \omfalse\medTdl{#1}{#2}{#3}{#4}{#5}\omtrue
	\else (\medTdl{#1}{#2}{#3}{#4}{#5})\fi}
\def \Meds#1#2[#3]#4#5{{\termcolour \if@om \omfalse\medT{#1}{#2}{#3}{#4}{#5}\omtrue \else (\medT{#1}{#2}{#3}{#4}{#5})\fi}}
\def \CutT#1#2#3#4{{\termcolour \if@om \omfalse\cutT{#1}{#2}{#3}{#4}\omtrue
	\else	(\cutT{#1}{#2}{#3}{#4})\fi}}
\def \ColCutT#1#2#3#4{{\termcolour \if@om \omfalse\ColcutT{#1}{#2}{#3}{#4}\omtrue
	\else	(\ColcutT{#1}{#2}{#3}{#4})\fi}}
\def \CutLog#1#2#3#4{{\termcolour \if@om \omfalse\cutlog{#1}{#2}{#3}{#4}\omtrue
	\else	(\cutlog{#1}{#2}{#3}{#4})\fi}}
\def \CutLeft#1#2#3#4{{\termcolour \if@om \omfalse\cutleft{#1}{#2}{#3}{#4}\omtrue \else	(\cutleft{#1}{#2}{#3}{#4})\fi}}
\def \CutRight#1#2#3#4{{\termcolour \if@om \omfalse\cutright{#1}{#2}{#3}{#4}\omtrue \else	(\cutright{#1}{#2}{#3}{#4})\fi}}
\def \CutV#1#2#3#4{{\termcolour \if@om \omfalse\cutV{#1}{#2}{#3}{#4}\omtrue
	\else	(\cutV{#1}{#2}{#3}{#4} )\fi}}
\def \CutN#1#2#3#4{{\termcolour \if@om \omfalse\cutN{#1}{#2}{#3}{#4}\omtrue
	\else	(\cutN{#1}{#2}{#3}{#4} )\fi}}
\def \CutA#1#2#3#4{{\termcolour \if@om \omfalse\cutA{#1}{#2}{#3}{#4}\omtrue
	\else	(\cutA{#1}{#2}{#3}{#4})\fi}}

\def \NegL#1#2#3{{\termcolour \if@om \omfalse\negLT{#1}{#2}{#3}\omtrue
	\else	(\negLT{#1}{#2}{#3})\fi}}
\def \NegR#1#2#3{{\termcolour \if@om \omfalse\negRT{#1}{#2}{#3}\omtrue
	\else	(\negRT{#1}{#2}{#3})\fi}}
\def \AndL#1#2#3#4{{\termcolour \if@om \omfalse \andLPT{#1}{#2}{#3}{#4} \omtrue \else \skp (\andLPT{#1}{#2}{#3}{#4}) \skp \fi}}
\def \AndLL#1#2#3#4{{\termcolour \if@om \omfalse \andLLT{#1}{#2}{#3}{#4} \omtrue \else \skp (\andLLT{#1}{#2}{#3}{#4}) \skp\fi}}
\def \AndLR#1#2#3#4{{\termcolour \if@om \omfalse\andLRT{#1}{#2}{#3}{#4}\omtrue	\else	(\andLRT{#1}{#2}{#3}{#4})\fi}}
\def \AndLP#1#2#3#4{{\termcolour \if@om \omfalse\andLPT{#1}{#2}{#3}{#4}\omtrue	\else	(\andLPT{#1}{#2}{#3}{#4})\fi}}
\def \AndR#1#2#3#4#5{{\termcolour \if@om \omfalse\andRT{#1}{#2}{#3}{#4}{#5}\omtrue	\else	(\andRT{#1}{#2}{#3}{#4}{#5})\fi}}
\def \OrL#1#2#3#4#5{{\termcolour \if@om \omfalse\orLT{#1}{#2}{#3}{#4}{#5}\omtrue	\else	(\orLT{#1}{#2}{#3}{#4}{#5})\fi}}
\def \OrRL#1#2#3#4{{\termcolour \if@om \omfalse\orRLT{#1}{#2}{#3}{#4}\omtrue	\else	(\orRLT{#1}{#2}{#3}{#4})\fi}}
\def \OrRR#1#2#3#4{{\termcolour \if@om \omfalse\orRRT{#1}{#2}{#3}{#4}\omtrue	\else	(\orRRT{#1}{#2}{#3}{#4})\fi}}
\def \OrRP#1#2#3#4{{\termcolour \if@om \omfalse\orRPT{#1}{#2}{#3}{#4}\omtrue	\else	(\orRPT{#1}{#2}{#3}{#4})\fi}}

\def \NewConR#1#2#3#4{{\termcolour \if@om \omfalse\newconRT{#1}{#2}{#3}{#4}\omtrue	\else	(\newconRT{#1}{#2}{#3}{#4})\fi}}
\def \NewConL#1#2#3#4#5{{\termcolour \if@om \omfalse\newconLT{#1}{#2}{#3}{#4}{#5}\omtrue	\else	(\newconLT{#1}{#2}{#3}{#4}{#5})\fi}}

\def \caps<#1,#2>{\Cap{#1}{#2}}
\def \med #1 #2 [#3] #4 #5 {\Med{#1}{#2}{#3}{#4}{#5}}
\def \imp #1 #2 [#3] #4 #5 {\Med{#1}{#2}{#3}{#4}{#5}}
\def \impdl #1 #2 [#3] #4 #5 {\Meddl{#1}{#2}{#3}{#4}{#5}}
\def \exp #1 #2 #3 . #4 {\Exp{#1}{#2}{#3}{#4}}
\def \expl #1 #2 #3 #4 . #5 {{#1}:\Exp{#2}{#3}{#4}{#5}}
\def \cut #1 #2 + #3 #4 {\Cut{#1}{#2}{#3}{#4}}
\def \cutdl #1 #2 + #3 #4 {\Cutdl{#1}{#2}{#3}{#4}}
\def \cutDl #1 #2 + #3 #4 {\CutDl{#1}{#2}{#3}{#4}}
\def \cutLog #1 #2 + #3 #4 {\CutLog{#1}{#2}{#3}{#4}}
\def \cutL #1 #2 + #3 #4 {\CutL{#1}{#2}{#3}{#4}}
\def \cutR #1 #2 + #3 #4 {\CutR{#1}{#2}{#3}{#4}}
\def \colcut #1 #2 + #3 #4 {\ColCut{#1}{#2}{#3}{#4}}
\def \colcutR #1 #2 + #3 #4 {\ColCut{#1}{#2}{#3}{#4}}
\def \colcutL #1 #2 + #3 #4 {\ColCut{#1}{#2}{#3}{#4}}
\def \negL #1 . #2 #3 {\NegL{#1}{#2}{#3}}
\def \negR #1 #2 . #3 {\NegR{#1}{#2}{#3}}

\def \andLL #1 . < #2 , #3 > #4 {\AndLL{#1}{#2}{\nonconnectable}{#4}}
\def \andLP #1 . < #2 , #3 > #4 {\AndLP{#1}{#2}{#3}{#4}}
\def \andLR #1 . < #2 , #3 > #4 {\AndLR{#1}{\nonconnectable}{#3}{#4}}
\def \andLLaux #1 . < #2 , {\@ifnextchar{o}{\andLL #1 . < #2 , }{\andLP #1 . < #2 , }}
\def \andL #1 . < {\@ifnextchar{o}{\andLR {#1} . < }{\andLLaux {#1} . < }}
\def \andLL#1#2#3#4{{#1}\kern.2mm{`.}\kern.2mm\Group<\widehat{#2},{#3}>{#4}}
\def \andLR#1#2#3#4{{#1}\kern.2mm{`.}\kern.2mm\Group<{#2},\widehat{#3}>{#4}}
\def \andLLT#1#2#3#4{{#1}\kern.2mm{`.}\kern.2mm\Group<\widehat{#2},{#3}>{#4}}
\def \andLRT#1#2#3#4{{#1}\kern.2mm{`.}\kern.2mm\Group<{#2},\widehat{#3}>{#4}}
\def \andLPT#1#2#3#4{{#1}\kern.2mm{`.}\kern.2mm\Group<\widehat{#2},\widehat{#3}>{#4}}
\def \andRT#1#2#3#4#5{\Group<{#1}\widehat{#2},{#3}\widehat{#4}>\kern.2mm{`.}\kern.2mm{#5}}
\def \andRrule{\Conj\textit{R}}
\def \andRlog < #1 #2 , #3 #4 > . #5 {\AndR{#1}{#2}{#3}{#4}{#5}}
\def \andR{\@ifnextchar<{\andRlog}{\andRrule}}
\def \invL #1 . #2 #3 {\InvLterm{#1}{#2}{#3}}
\def \invR #1 #2 . #3 {\InvRterm{#1}{#2}{#3}}

\def \orR #1 < {\@ifnextchar{o}{\orRR {#1} < }{\orRLaux {#1} < }}
\def \orRLaux #1 < #2 | {\@ifnextchar{o}{\orRL #1 < #2 | }{\orRP #1 < #2 | }}
\def \orRL #1 < #2 | #3 > . #4 {\OrRL{#1}{#2}{\nonconnectable}{#4}}
\def \orRR #1 < #2 | #3 > . #4 {\OrRR{#1}{\nonconnectable}{#3}{#4}}
\def \orRP #1 < #2 | #3 > . #4 {\OrRP{#1}{#2}{#3}{#4}}
\def \projLT#1#2#3#4{{#1}\kern.2mm{`.}\kern.2mm\Group<\widehat{#2},{#3}>{#4}}
\def \projLR #1 . < #2 , #3 > #4 {\AndL{#1}{#2}{#3}{#4} }
\def \projL #1 . < #2 , #3 > #4 {\AndLL{#1}{#2}{\nonconnectable}{#4}}
\def \projLaux #1 . < #2 , {\@ifnextchar{o}{\projL {#1} . < {#2} , }{\projLR {#1} . < {#2} , }}
\def \projR #1 . < #2 , #3 > #4 {\AndLR{#1}{\nonconnectable}{#3}{#4}}
\def \proj #1 . < {\@ifnextchar{o}{\projR {#1} . < }{\projLaux {#1} . < }}
\def \pair < #1 #2 , #3 #4 > . #5 {\AndR{#1}{#2}{#3}{#4}{#5}}
\def \inP #1 < {\@ifnextchar{o}{\inR {#1} < }{\inLaux {#1} < }}
\def \inLaux #1 < #2 | {\@ifnextchar{o}{\inL #1 < #2 | }{\inRP #1 < #2 | }}
\def \inL #1 < #2 | #3 > . #4 {\OrRL{#1}{#2}{\nonconnectable}{#4}}
\def \inR #1 < #2 | #3 > . #4 {\OrRR{#1}{\nonconnectable}{#3}{#4}}
\def \inRP #1 < #2 | #3 > . #4 {\OrRP{#1}{#2}{#3}{#4}}
\def \choice #1 . < #2 #3 | #4 #5 > {\OrL{#1}{#2}{#3}{#4}{#5}}

\def \QED{%
 \unskip
 \hfill 
 \begingroup
 \unitlength\point
 \linethickness{.5\point}%
 \framebox(7,7){}%
 \endgroup
 \kern 10\point
}

 \newenvironment{AuxCases}[1]{\left \{ \def\arraystretch{1.1} \begin{array}{#1} }
 { \end{array} \right . }

\def \Natural{
{\sf I\kern-1\point N}}

\def \mybullet{\vspace*{2mm} ~ \kern-5mm $\bullet$}

\def \import{\textsl{import}}

\makeatother

\def \ellips(#1,#2){\put (0,0)	{\psellipse[unit=\unitlength, linewidth=1.25pt](#1,#2)}}

\newskip \myitemindent \myitemindent0mm

\def \myitem[#1]{\setbox55=\hbox{#1} \item[{#1}] \leavevmode $
\kern-\wd55 \kern5.25mm \kern-\myitemindent 
 \begin{array}[t]{@{}lclclclclcl} \kern-3mm
\kern\wd55 \kern\myitemindent}


\def \SemThree#1#2#3{%
	\setbox136=\hbox{{\scriptsize ${#2}$}}%
	\setbox137=\hbox{#3}%
	\semEl{#1}_{\kern-1\point \copy136}
	\ifdim\wd136>1pt\kern-\wd136\raise5\point\copy137
	\ifdim\wd136>\wd137 \kern-\wd137\kern\wd136\fi
	\fi}
\def \SemL#1#2{\SemThree{{#1}}{#2}{\mbox{\scriptsize $\lambda$}}}

\def\Red{}
\def \redX{\red}
\def\Stat#1#2{#1{:}#2}
\def\TurnX{\Turn}

\newcommand{\Long}[1]{} \newcommand{\Short}[1]{#1}
 \def\clearpage{}
 \def \correct{~ \\[-22\point]}

 \begin{document}

\bibliographystyle {plain}

\title {Reduction in {\X} does not agree with {I}ntersection and {U}nion {T}ypes}
\Short{\subtitle {Extended abstract} }
\author{Steffen van Bakel} 
\address{Department of Computing, Imperial College London, \\ 180 Queen's Gate, London SW7 2BZ, UK} 
\email{\tt svb@doc.ic.ac.uk}
\date{}
\maketitle

 \begin{abstract}
This paper defines intersection and union type assignment for the calculus $\X$, a substitution free language that enjoys the Curry-Howard correspondence with respect to Gentzen's sequent calculus for classical logic.
We show that this notion is 
closed for subject-expansion, and show that it needs to be restricted to satisfy subject-reduction as well, making it unsuitable to define a semantics.
 \end{abstract}


\section*{Introduction}
This paper will present a notion of intersection and union type assignment for the (untyped) calculus $\X$, that was first defined in \cite{Lengrand'03} and later extensively studied in \cite{vBLL'05}.
$\X$ is based on the sequent calculus \cite{Gentzen'35}, in contrast to the {\LC} \cite{Barendregt'84} which is related to natural deduction (see also \cite{Gentzen'35}); in $\X$, duality is ubiquitous, as for example call-by-name is dual to call-by-value (see also \cite{WadlerDual}), and as intersection will be shown to be dual to union in this paper. 
The advantage of using the sequent approach here is that now we can explore the duality of intersection and union fully, through which we can study and explain various anomalies of union type assignment \cite{Pierce91b,Barbanera-Dezani-Liguoro-IaC'95} and quantification \cite{Harper-Lillibridge'91,Milner-et.al'97}.

The type system defined here initially will be shown to be the natural one, in that intersection and union play their expected roles for \emph{witness expansion} (also called \emph{completeness}).
However, we show that \emph{witness reduction} (also called \emph{soundness}, the converse of completeness) no longer holds, and will reason that this is caused by the fact that both intersection and union lack a logical foundation: the obtained system is not Curry-Howard \cite{Hindley'84}, {\em i.e.}$\!$ accompanying syntax for the intersection and union type constructors is missing.
This problem also appears in other contexts, such as that of {\ML} with side-effects \cite{Harper-Lillibridge'91,Wright'95,Milner-et.al'97}, and that of using intersection and union types in an operational setting \cite{Davies-Pfenning'01,Dunfield-Pfenning'00}.
As here, also there the cause of the problem is that the type-assignment rules are not fully logical, making the context calls (which form part of the reduction in $\X$) unsafe.
As, in part, already has been observed in \cite{Herbelin'05} in the context of the calculus $\lmmt$ of \cite{Curien-Herbelin'00}, the problem is that the added rules are not logical.

The advantage of studying this problem in the context of the highly symmetric sequent calculi will be made clear: intersection and union are truly dual for these calculi, and the at the time surprising loss of soundness for the system with intersection and union types in \cite{Pierce91b,Barbanera-Dezani-Liguoro-IaC'95} becomes now natural and inevitable.
Also, we will show that it is not union alone that causes problems, but that the problem is much more profound: although both intersection and union might be seem to be related to the (logical) \emph{and} and \emph{or}, the fact that they are both \emph{not} logical destroys the soundness, both for a system based on intersection as for a system based on union.
This also explains why, for {\ML} with side-effects, quantification is no longer sound \cite{Harper-Lillibridge'91,Milner-et.al'97}: also the $(\forall I)$ and $(\forall E)$ rules of {\ML} are not logical. 

Intersection and union have been studied in the context of classical sequents in \cite{Bakel-ITRS'04,DGL-ITRS'04,Herbelin'05,DGL-LPAR'05,DGL-CDR'08}, and all these systems suffer from the same kind of problem with respect to reduction.
In this paper we will improve on those results by \Long{showing all proofs in detail, and }presenting two \Long{more expressive }systems that address the problems successfully.

\Long{
~\\

\noindent
The origin of $\X$ lies within the quest for a language designed to give a Curry-Howard-de Bruijn correspondence to the sequent calculus for Classical Logic.
The sequent calculus, introduced by Gentzen in \cite{Gentzen'35}, is a logical system in which the rules only introduce connectives (but on both sides of a sequent), in contrast to natural deduction which uses introduction and elimination rules.
The only way to eliminate a connective is to eliminate the whole formula in which it appears, via an application of the $(\Cut)$-rule.
A number of variants exist for Gentzen's calculus for classical logic {\LK};
the one we will consider in this paper, and that lies at the basis of the calculus \X, is the calculus known as $G_3$ \cite{Kleene'52}.

$G_3$ can be defined by:
{\def \stat#1#2 {#2}\def \Cap {\Ax} \def\Exp {\arrR}\def\Med {\arrL}
\def \PDer#1#2#3 { \Der {#2} {#3} }
 \[ \begin {array} {rlcrl}
\CapRule {y} {`a} {A} { \Gamma} { \Delta}
&\quad&
\DagRule {M} {`a} {x} {N} {A} { \Gamma} { \Delta}
\\[5mm]
\MedRule {M} {`a} {y} {x} {N} {A} {B} { \Gamma} { \Delta }
&\quad&
\ExpRule {x} {M} {`a} {`b} {A} {B} { \Gamma} { \Delta }
 \end {array} \]}%
It allows sequents of the form $\Der { A_1,\ldots,A_n }{ B_1,\ldots,B_m }$, where $A_1,\ldots,A_n$ is to be understood as $A_1 {\wedge}\ldots{\wedge} A_n$ and $B_1,\ldots,B_m$ is to be understood as $B_1 {\vee}\ldots{\vee} B_m$, and has implicit weakening and contraction.

Starting from different approaches in that area \cite{Curien-Herbelin'00,Urban'00}, in \cite{Lengrand'03} the calculus $\X$ was introduced, and shown to be equivalent to the $\lmmt$-calculus in terms of expressivity.
Using this correspondence, a strong normalization result is shown for $\lmmt$.
In fact, \cite{Lengrand'03} did not study any property of untyped $\X$, but focused only on its type aspects in connection with the sequent calculus.
\Comment{%
$\X$ is defined as a substitution-free calculus that is well equipped to describe the behaviour of functional programming languages at a very low level of granularity (see \cite{Lengrand'03,vBLL'05}).

As far as the Curry-Howard isomorphism is concerned, $\X$ stands out in that it is the first calculus to achieve that in full for a classical logic.
For example, in $\lmmt$, all provable propositions can be inhabited, but not all terms correspond to proofs, and in $`l`m$, not all proof contractions can be represented, since there reduction is confluent.

When studying $\X$ as an untyped language, soon the (unexpected) expressivity properties surfaced: it became apparent that $\X$ provides an excellent general purpose machine, very well suited to encode various calculi (for details, see \cite{vBLL'05}).
Amongst the calculi studied in that paper, the Calculus of Explicit Substitutions $\Lx$ \cite{Bloo-Rose'95} stands out.
In fact, a `subatomic' level was reached by decomposing explicit substitutions into smaller components.
Even more, the calculus $\X$ is actually symmetric~\cite{Barbanera-Berardi'96}; the `\emph{cut}', represented by $\Cut{P}{`a} {x}{Q}$ represents, in a sense, the parameter passing of explicit substitution of $P$ for $x$ in $Q$, but also that of calling the context $Q$ via $`a$ in $P$.
}

Perhaps the main feature of $\X$ is that it constitutes a \emph{variable} and \emph{substitution}-free method of computation.
Rather than having variables like $x$ representing places where terms can be inserted, in $\X$ the symbol $x$ represents a \emph{socket}, to which a term can be \emph{attached} via a plug $`a$. 
In fact, then main computational operation in $\X$ is \emph{renaming} in that reduction in $\X$ boils down to renaming of sockets and plugs, in addition to the restructuring of proofs.
\Comment{%
The definition of reduction on  $\X$ shows nicely how the interaction between sockets and plugs subtly and gently percolates through the terms. 

Although the origin of {\X} is a logic, and one could expect it to be close to \LC, it is in fact specified as a \emph{conditional term rewriting system}; the only non-standard aspect is that it treats \emph{three} different classes of variables (for plugs, sockets, and nets).
}
\vspace{3mm}

\noindent
In this paper we will treat $\X$ as a pure, untyped calculus, and ignore its logical origin in that we define a notion of sequent-style intersection type assignment on $\X$; intersection types are notorious for lacking a solid background in logic \cite{Hindley'84}.
We will see that, in view of the special nature of $\X$ as an input-output calculus, to achieve a natural notion of type assignment that is closed for expansion, we will also need to add union types.

The system we define in this paper is a conservative extension of the Intersection Type Assignment System for the {\LC} (see also \cite{Barendregt-Coppo-Dezani'83,Bakel-TCS'92,Bakel-TCS'95,Bakel-NDJFL'04}), in that lambda terms typeable in that system translate to $\X$-nets, while preserving the type.
As was the case for systems with intersection types for the {\LC} \cite{Barendregt-Coppo-Dezani'83,Bakel-TCS'95}, in order to get a notion of type assignment that is closed for $\eta$-reduction, we would need to introduce a $\leq$-relation on types which is contra-variant in the arrow; for simplicity, this is not part of the present system.
The system presented here is also a natural extension of the system considered in \cite{vBLL'05}, i.e.\ the basic implicative system for Classical Logic, but extended with intersection and union types and the type constants $\top$ and $\bottom$.
The main results of this paper are that this notion is closed for expansion, but needs to be restricted to become closed for reduction.

 \emptyline
This paper is constructed as follows.
Section \ref {sect:Xsyntax} presents the syntax and reduction system of the calculus $\X$.
In Section \ref {typing for X} we define the basic system of context assignment for $\X$; in Section \ref{X versus LC} we will embed the {\LC} into $\X$, and, in Section \ref{int for LC}, present an Intersection System for the \LC.
Then, in Section \ref {intun for X}, we will define a notion of type assignment on $\X$ that uses intersection and union types
 and show that type assignment in the strict system is preserved by the translation of the {\LC} into $\X$.

This is followed in Section \ref{WE} where we show that the system of Section~\ref{intun for X} is closed for witness expansion; as for witness-reduction, in Section \ref{problems} we will see that in quite a number of cases this collapses.
We will partially overcome these shortcomings in Section 
\ref{CBN} where we define a number of subsystems, restrictions of the system defined in Section \ref{intun for X}, that are either closed for Call-By-Name or Call-By-Value reduction.
Being restrictions, for these systems now witness-expansion collapses.
We draw our conclusions in Section \ref {Conclusions}. 
}

This paper corrects \cite{Bakel-ITRS'04}; the system now types all normal forms, and the subject-reduction problem is caught.


 \def\Peirce{
 \Inf	[\Exp]
	{ \Inf [\Imp]
		{ \Inf [\Exp]
			{ \Inf [\Cap]
				{ \DerX { \caps<y,`e> }{ \stat{y}{A} }{ \stat{`e}{A}, \stat{`h}{B} } }
			}
			{ \DerX { \Exp{y}{ \caps<y,`e> }{`h}{`a} }{ }{ \stat{`a}{A\arr B}, \stat{`e}{A} } }
		  \quad
		  \Inf	[\Cap]
			{ \DerX { \caps<w,`e> }{ \stat{w}{A} }{ \stat{`e}{A} } }
		}
		{ \DerX { \Imp{ \Exp{y}{ \caps<y,`e> }{`h}{`a} }{`a}{z}{w}{ \caps<w,`e> } }{ \stat{z}{(A \arr B)\arr A} }{ \stat{`e}{A} } }
	}
	{ \DerX {
 \Exp{z}{ \Imp{ \Exp{y}{ \caps<y,`e> }{`h}{`a} }{ `a }{ z }{ w }{
	  \caps<w,`e> } }{`e}{`g} }{ }{ \stat{`g}{((A \arr B)\arr A)\arr A} }
	}
}

\section{The calculus $\X$} \label{sect:Xsyntax}

In this section we will give the definition of the $\X$-calculus which has been proven to be a fine-grained implementation model for various well-known calculi \cite{vBLL'05}, like the $`l$-calculus \cite{Barendregt'84}, $\Lx$ \cite{Bloo-Rose'95}, $\lmu$ \cite{Parigot'92} and $\lmmt$ \cite{Herbelin'05}.
As discussed in the introduction, the calculus {\X} is inspired by the sequent calculus; the system we will consider in this section has only implication, and no structural rules.
$\X$ features two separate categories of `connectors', \emph {plugs} and \emph {sockets}, that act as input and output channels, respectively\Long{, and is defined without any notion of substitution or application}.
 \begin {definition}[Syntax]
The nets of the $\X$-calculus are defined by the following syntax, where $x,y$ range over the infinite set of {\em sockets}, $`a, `b$ over the infinite set of {\em plugs}.
\Comment{
 \[ \begin{array}{rrl@{\dquad}l}
P,Q &::=& \caps<x,`a> & \emph{capsule}
\\ & \mid & 
\exp y P `b . `a & \emph{export}
\\ & \mid & 
 \imp P `b [y] x Q & \emph{import}
\\ & \mid & 
\cut P `a + x Q & \emph{cut}
 \end{array} \]
}
 \[ \begin{array}{rcccccccc}
P,Q &::=& \caps<x,`a> & \mid & \exp y P `b . `a & \mid &  \imp P `b [y] x Q & \mid & \cut P `a + x Q 
\\
&& \emph{capsule} & & \emph{export} & & \emph{import} & & \emph{cut}
 \end{array} \]
 \end {definition}
The $\hat{`.}$ symbolises that the socket or plug underneath is bound in the net. 
The notion of bound and free connector is defined as usual, and we will identify nets that only differ in the names of bound connectors, as usual.
\Long{

 \begin {definition}\label{free and bound}
The \emph{free sockets} and \emph{free plugs} in a net are:

\[
 \begin {array}{llll}
 \fs (\caps<x,`a>) & = & \{x\} & \\
 \fs ( \exp x P `a . `b ) & = & \fs (P) \without \{x\} & \\
 \fs ( \med P `a [y] x Q ) & = & \fs (P) \Union \{y\} \Union (\fs (Q)\without \{x\}) & \\
 \fs ( \cut P `a + x Q ) & = & \fs (P) \Union (\fs (Q) \without \{x\}) &
\\[2mm]
 \end {array}
\
 \begin {array}{llll}
\fp ( \caps<x,`a> ) & = & \{`a\} &
\\
\fp ( \exp x P `a . `b ) & = & (\fp (P) \without \{`a\}) \Union \{`b\} &
\\
\fp ( \med P `a [y] x Q ) & = & (\fp (P) \without \{`a\}) \Union \fp (Q) &
\\
\fp ( \cut P `a + x Q ) & = & (\fp (P) \without \{`a\}) \Union \fp (Q) &
\\[2mm]
 \end {array}
 \]
}
We write $\fs (P)$ for the set of free sockets of $P$, and $\fp (P)$ for the set of free plugs of $P$;
a socket $x$ or plug $`a$ occurring in $P$ which is not free is called \emph{bound}\Long{, and we write $x`:\BS{P}$ and $`a`:\BP{P}$}.
We will write $x \in \fs (P,Q)$ for $x \in \fs (P) \And x \in \fs (Q)$.
\Long{ \end {definition}}
We adopt Barendregt's convention\Long{ (called convention on variables by Barendregt, but here it will be a convention on names),} in that free and bound names will be different.

The calculus, defined by the reduction rules below, explains in detail how cuts are propagated through nets to be eventually evaluated at the level of capsules, where the renaming takes place.
Reduction is defined by specifying both the interaction between well-connected basic syntactic structures, and how to deal with propagating active nodes to points in the net where they can interact. 

It is important to know when a connector is introduced\Long{, i.e.\ is connectable, i.e.\ is exposed and unique}; this will play a crucial role in the reduction rules. 
Informally, a net $P$, containing a socket $x$, introduces $x$ if $P$ is constructed from sub-nets which do not contain $x$ as free socket, so $x$ only occurs at the ``top level.'' 
\Long{
This means that $P$ is either an import with a middle connector $[x]$ or a capsule with left part $x$. 
Similarly, a net introduces a plug $`a$ if it is an export that ``creates'' $`a$ or a capsule with right part $`a$. 
}

 \begin {definition}[Introduction]
\begin {description}
 \item [$P$ introduces $x$]
Either $P = \imp Q `b [x] y R $ with $x \notin \FS{Q,R}$, or $P = \caps<x,`a>$.
 \item [$P$ introduces $`a$]
Either $P = \exp x Q `b . `a $ and $`a \notin \FP{Q}$, or $P = \caps<x,`a>$.
 \end {description}
 \end {definition}
 
The principal reduction rules are:

 \begin {definition}[Logical rules]
Let $`a$ and $x$ be introduced in, respectively, the left- and right-hand side of the main cuts below.
 \[ \begin {array}{rrcll}
 (\Cap): &
\cut \caps<y,`a> `a + x \caps<x,`b> &\redX& \caps<y,`b>
\\
 (\Exp): &
\cut { \exp y P `b . `a } `a + x \caps<x,`g> &\redX&
 \exp y P `b . `g 
\\
 (\Imp): &
\cut \caps<y,`a> `a + x { \imp Q `b [x] z R } &\redX&
 \imp Q `b [y] z R 
\\
 (\Ins): &
\cut { \exp y P `b . `a } `a + x { \imp Q `g [x] z R } 
 &\redX&
\left \{
\begin{array}{l}
 \cut Q `g + y { \cut P `b + z R } 
\\
 \cut { \cut Q `g + y P } `b + z R 
\end{array} \right. 
 \end {array} \]
 \end {definition}

\Long{
The first three logical rules above specify a renaming procedure, whereas the last rule specifies the basic computational step: it links the export of a function, available on the plug $`a$, to an adjacent import via the socket $x$.
The effect of the reduction will be that the exported function is placed in-between the two sub-terms of the import, acting as interface.
Notice that two cuts are created in the result, that can be grouped in two ways; these alternatives do not necessarily share all normal forms (reduction is non-confluent, so normal forms are not unique).

In $\X$ there are in fact two kinds of reduction, the one above, and the one which  defines  how to reduce a cut when one of its sub-nets does not introduce a connector mentioned in the cut. 
This will involve moving the cut inwards, towards a position where the connector \emph{is} introduced.
In case both connectors are not introduced, the above rules cannot be applied, this search can start in either direction, indicated by the tilting of the dagger, via the \emph{activation} of the cut.
}
\Short{If these rules cannot be applied, cuts need to be \emph{activated}:}

 \begin {definition}[Active cuts]
The syntax is extended with two {\em flagged} or {\em active} cuts:
 \[ P ::= \ldots \mid \cutL P_1 `a + x P_2 \mid \cutR P_1 `a + x P_2 \]

We define two cut-activation rules.
\[ \begin {array}{llcll}
 (\actL): & \cut P `a + x Q &\redX& \cutL P `a + x Q 
& \textit{if $P$ does not introduce $`a$}
\\
 (\actR): & \cut P `a + x Q &\redX& \cutR P `a + x Q 
&\textit{if $Q$ does not introduce $x$}
 \end {array} \] 
 \end {definition}

The next rules define how to move an activated dagger inwards.

 \begin {definition}[Propagation rules]
{\bf Left propagation}:
\[ \kern-1mm \begin {array}{rrcll}
 (\deactL): & \cutL \caps<y,`a> `a + x P &\redX& \cut \caps<y,`a> `a + x P 
\\
 (\Li): & \cutL \caps<y,`b> `a + x P &\redX& \caps<y,`b>, & `b \not= `a 
\\
 (\Lii): & \cutL { \exp y Q `b . `a } `a + x P &\redX&
\cut { \exp y { \cutL Q `a + x P } `b . `g } `g + x P , & `g\textit{ fresh} 
\\
 (\Liii): & \cutL { \exp y Q `b . `g } `a + x P &\redX&
\exp y { \cutL Q `a + x P } `b . `g , & `g \not= `a
\\
 (\Liv): & \cutL { \imp Q `b [z] y R } `a + x P &\redX&
\imp { \cutL Q `a + x P } `b [z] y { \cutL R `a + x P } 
\\
(\Lv): & \cutL { \cut Q `b + y R } `a + x P &\redX&
\cut { \cutL Q `a + x P } `b + y { \cutL R `a + x P } 
 \end {array} \]

\noindent {\bf Right propagation}:
\[ \begin {array}{rlcll}
 (\deactR): & \cutR P `a + x \caps<x,`b> &\redX& \cut  P `a + x \caps<x,`b> 
\\
 (\Ri): & \cutR P `a + x \caps<y,`b> &\redX& \caps<y,`b>, & y \not= x
\\
 (\Rii): & \cutR P `a + x { \exp y Q `b . `g } &\redX& \exp y { \cutR P `a + x Q } `b . `g 
\\
 (\Riii): & \cutR P `a + x { \imp Q `b [x] y R } &\redX&
\cut P `a + z { \imp { \cutR P `a + x Q } `b [z] y { \cutR P `a + x R } },
 & z\textit{ fresh} 
\\
 (\Riv): & \cutR P `a + x { \imp Q `b [z] y R } &\redX& 
\imp { \cutR P `a + x Q } `b [z] y { \cutR P `a + x R }, & z \not= x 
\\
 (\Rv): & \cutR P `a + x { \cut Q `b + y R } &\redX&
\cut { \cutR P `a + x Q } `b + y { \cutR P `a + x R } 
 \end {array} \]

We write $\redX$ for the (reflexive, transitive, compatible) reduction relation generated by the logical, propagation and activation rules.
 \end {definition}

\Comment{\emph{Summarising}, reduction brings all cuts down to logical cuts where both connectors are single and introduced, or elimination cuts that are cutting towards a capsule that does not contain the relevant connector, as in $\cutR P `a + x \caps<z,`b> $ or $\cutL \caps<z,`b> `a + x P $; performing the elimination cuts, via $(\Ri)$ or $(\Li)$, will then remove the term $P$.}

In \cite{Bakel-Lescanne'08}, two sub-reduction systems were introduced which explicitly favour one kind of activation whenever the above critical pair occurs:

 \begin {definition}[Call By Name and Call By Value]
We define Call By Name ({\CBN}) and Call By Value ({\CBV}) reduction by:
 \begin {itemize} \itemsep0pt
 \item
If a cut can be activated in two ways, {\CBV} only allows to activate it via $(\actL)$; we write $P \redCBV Q$ in that case.
This is obtained by replacing rule $(\actR)$ with:
 \[ \def\arraystretch{1} \begin {array}{llcll}
 (\actR): & \cut P `a + x Q & \redCBV & \cutR P `a + x Q , &
 \textit{if $P$ introduces $`a$ and $Q$ does not introduce $x$}.
\end {array} \]

 \item
{\CBN} can only activate such a cut via $(\actR)$; like above, we write $P \redCBN Q$.
Likewise, we can reformulate this as the reduction system obtained by replacing rule $(\actL)$ with:
 \[ \def\arraystretch{1} \begin {array}{llcll}
 (\actL): & \cut P `a + x Q & \redCBN & \cutL P `a + x Q , &
 \textit{if $P$ does not introduce $`a$ and $Q$ introduces $x$}.
\end {array} \]

 \item
As in \cite{Lengrand'03}, we split the two variants of $(\Ins)$ over the two notions of reduction:
 \[ \begin{array}{rcl}
\cut { \exp y P `b . `a } `a + x { \imp Q `g [x] z R } & \redCBV & \cut Q `g + y { \cut P `b + z R } 
\\
\cut { \exp y P `b . `a } `a + x { \imp Q `g [x] z R } & \redCBN & \cut{ \cut Q `g + y P } `b + z R
 \end{array} \]

 \end {itemize}
\Long{This way, we obtain two notions of reduction that are clearly confluent: all rules are left-linear and non-overlapping.}

 \end {definition}

Notice that the full reduction relation $\redX$ is not confluent; this comes in fact from the critical pair that activates a cut $\cut P `a + x Q $ in two ways. 
In fact, assuming $`a$ does not occur in $P$ and $x$ does not occur in $Q$, then $\cut P `a + x Q $ reduces to both $P$ and $Q$.
The first reduction takes place in $\CBV$, the second in $\CBN$.

\Long{
In \cite{Bakel-Lescanne'08,vBR'06} some basic properties are shown, which essentially show that the calculus is well-behaved, as well as the relation between $\X$ and a number of other calculi.
These results motivate the formulation of admissible rules:

 \begin {lemma}[Garbage Collection and Renaming \cite{vBR'06}]\label{renaming}
 \[ \begin {array}{rrcl@{\quad}l}
(\gcL): & \cutL P `a + x Q &\redX& P &\textrm{if }`a \not \in \FP{P}
\\ 
(\gcR): & \cutR P `a + x Q &\redX& Q &\textrm{if }x \not \in \FS{Q}
\\
(\renL): & \cut P `d + z \caps<z,`a> &\redX& P[`a/`d]
\\ 
(\renR): & \cut \caps<z,`a> `a + x + P &\redX& P[z/x]
\end {array} \]
 \end {lemma}
}

 \section{Typing for $\X$: from $G_3$ to \X} \label {types section} \label {typing for X}

$\X$ offers a natural presentation of the classical propositional calculus with implication, and can be seen as a variant of the $G_3$ system for {\LK} \cite{Kleene'52}.

\Long{We first define types and contexts.}

 \begin {definition}[Types and Contexts]\label{types}

 \begin {enumerate}

 \firstitem
The set of types is defined by the grammar \\
 $ \begin {array}{rcl}
A,B & ::= & \tvar \mid A \arrow B,
 \end {array} $
where $\tvar$ is a basic type of which there are infinitely many.

 \item
A {\em context of sockets} $`G$ is a mapping from sockets to types, denoted as a 
finite set of {\em statements} $\stat{x}{A}$, such that the {\em subject} of the statements ($x$) are distinct.
We write $`G_1,`G_2$ to mean the union of $`G_1$ and $`G_2$, provided $`G_1$ and $`G_2$ are compatible (if $`G_1$ contains $\stat{x}{A_1}$ and $`G_2$ contains $\stat{x}{A_2}$ then $A_1 = A_2$), and write $`G, \stat{x}{A}$ for $`G, \{\stat{x}{A}\}$.

 \item
Contexts of {\em plugs} $`D$ are defined in a similar way.

 \end {enumerate}

 \end {definition}

\Long{
The notion of type assignment on $\X$ that we present in this section is $G_3$, the basic implicative system for Classical Logic (Gentzen system LK) with implicit contraction and weakening as described above.
The Curry-Howard property is easily achieved by erasing all term-information.
When building witnesses for proofs, propositions receive names; those that appear in the left part of a sequent receive names like $x, y, z$, etc, and those that appear in the right part of a sequent receive names like $`a, `b, `g$, etc.
When in applying a rule a formula disappears from the sequent, the corresponding connector will get bound in the net that is constructed, and when a formula gets created, a different connector will be associated to it.
}

 \begin {definition}[Typing for $\X$]\label{Typing for X}

 \begin {enumerate}

 \firstitem
{\em Type judgements} are expressed via the ternary relation \\ $\DerX {P}{`G}{`D}$, where $`G$ is a context of {\em sockets} and $`D$ is a context of {\em plugs}, and $P$ is a net.
We say that $P$ is the {\em witness} of this judgement.

 \item
{\em Context assignment for} $\X$ is defined by the following rules:
 \[ \def\arraystretch{2.75}
 \begin {array}{rlcrl}
\\[-15mm]
 \CapRule{y}{`a}{A}{ `G}{ `D} 
&\quad&
 \ImpRule{P}{`a}{y}{x}{Q}{A}{B}{ `G}{ `D } 
\\
 \ExpRule{x}{P}{`a}{`b}{A}{B}{`G}{`D}
&\quad&
 \DagRule{P}{`a}{x}{Q}{A}{ `G}{ `D}
 \end {array}
 \]
We write $\derX  P : `G |- `D $ if there exists a derivation that has this judgement in the bottom line, and write $\derX \D :: P : `G |- `D $ if we want to name that derivation.

 \end {enumerate}
 \end {definition}

Notice that $`G$ and $`D$ carry the types of the free connectors in $P$, as unordered sets.
There is no notion of type for $P$ itself, instead the derivable statement shows how $P$ is connectable.

\Long{
 \begin{example}[A proof of Peirce's Law] \label{peirce example}
The following is a proof for Peirce's Law in {\LK}:
\[ 
{
 \def \Exp{\ArrR}
 \def \Imp{\ArrL}
 \def \Cap{\Ax}
 \def\arr{\Arrow}
 \def \stat#1#2{#2}
 \def \DerX#1#2#3{ \Der {#2}{#3} }
 \begin {array}{ccc}
\Peirce
 \end {array}
}
 \]
Inhabiting this proof in $\X$ gives the derivation:
\[ \Peirce \]
 \end{example}
}

The soundness result of simple type assignment with respect to reduction is stated as usual:

 \begin {theorem}[Witness reduction \cite{Bakel-Lescanne'08}] 
\label{Witness reduction}
If $\DerX {P}{`G}{`D}$, and $P \redX Q$, then $\DerX {Q}{`G}{`D}$.
 \end {theorem}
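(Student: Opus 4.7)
The plan is to prove this by induction on the reduction relation $\redX$, reducing first to the case of a single one-step reduction at the root by a standard compatibility argument, then doing a case analysis on which rewrite rule was applied. For the compatibility closure, a straightforward structural induction on $P$ suffices, since each of the four $\X$-constructors corresponds to exactly one typing rule and the induction hypothesis applies to the reduced subterm.

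For the base cases, the recipe is uniform: invert the typing derivation of the left-hand side to recover types for the subnets and the types of the connectors involved in the cut, then reconstruct a derivation of the right-hand side using exactly those types. For instance, for $(\Ins)$, if $\derX {\cut{\exp y P `b . `a}{`a} + x {\imp Q `g [x] z R}} : `G |- `D$, inversion on $(\Cut)$ gives some type $C$ for the cut connector, and further inversion on $(\Exp)$ and $(\Imp)$ forces $C \same A \arr B$ with $\derX P : `G, \stat{y}{A} |- \stat{`b}{B}, `D$, $\derX Q : `G |- \stat{`g}{A}, `D$, and $\derX R : `G, \stat{z}{B} |- `D$. Applying $(\Cut)$ twice then yields derivations of both the $\CBV$ and $\CBN$ reducts with the same contexts $`G$ and $`D$. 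The logical rules $(\Cap)$, $(\Exp)$, $(\Imp)$ are equally routine renamings, and the two activation rules are completely trivial since the active cuts share typing rules with the ordinary cut.

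The main obstacle lies in the propagation rules, particularly $(\Liv)$, $(\Lv)$, $(\Riii)$, $(\Rv)$, where the cut is duplicated across a branching construct. There we need to show that the same derivation for the ``outer'' net $P$ (or $Q$) can be used in both resulting sub-cuts. This requires implicitly a weakening property: that the typing contexts $`G$ and $`D$ may freely contain extra assumptions for the connectors introduced on the other side of the cut (for example, $\stat{z}{\cdot}$ and $\stat{`b}{\cdot}$ in the $(\Liv)$ case, which appear in the middle connector of the inner import). This is immediate from the set-based presentation of contexts in Definition~\ref{types}, together with the observation that none of the typing rules constrain the contexts to be minimal. One also has to handle the side condition on rules like $(\Lii)$ versus $(\Liii)$ by choosing the fresh name $`g$ of a suitable type matching $`a$ in the derivation.

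Finally, the reflexive and transitive closures follow by a trivial induction on the number of reduction steps, composing the one-step preservation results. The whole proof is bookkeeping; no sophisticated invariants are needed because simple types are, in contrast to the intersection and union extension studied later in the paper, fully logical and thus well-behaved under all of the $\X$ rewrite rules.
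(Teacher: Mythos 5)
Your proposal is the standard subject-reduction argument (induction on $\redX$ with inversion/generation on the typing derivation, plus admissible weakening for the cut-duplicating propagation rules), and it is correct in outline; note that this paper does not actually prove Theorem~\ref{Witness reduction} but imports it from \cite{Bakel-Lescanne'08}, so there is no in-paper proof to diverge from. The one point worth making explicit in a full write-up is that in the simply-typed system every occurrence of the cut connector carries a single type, which is exactly why the duplication cases $(\Liv)$, $(\Lv)$, $(\Riii)$, $(\Rv)$ go through here and fail once intersection and union are added.
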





 \section{The relation with the Lambda Calculus} \label{X versus LC}

The remainder of this paper will be dedicated to the definition of a notion of intersection type assignment on $\X$.
The definition will be such that it will be a natural extension of a system with intersection types for the \LC; we will start by briefly summarising the latter.
We assume the reader to be familiar with the {\LC} \cite{Barendregt'84}; we just recall the definition of lambda terms and $\beta$-contraction.

\Long{We will write $\n$ for $\{1, \ldots, n\}$, where $n \geq 0$.}

\Long{
 \begin{definition} [Lambda terms and $\beta$-contraction \cite{Barendregt'84}]
\label {lambda terms}

 \begin{enumerate}

 \firstitem \label {L-terms}
The set $\Lambda$ of {\em lambda terms} is defined by the syntax:
 \[ \begin{array}{rcl}
M,N &::=& x \mid `l x . M \mid MN
 \end{array} \]

 \item \label {red}
The reduction relation $\bred$ is defined as the contextual
(i.e.~compatible \cite{Barendregt'84}) closure of the rule:
 \[ \begin{array}{rcl}
( `l x . M ) N &\bred& M [ N /x ]
 \end{array} \]

The relation $\dbred$ is the reflexive and transitive closure of $\bred$, and the $\eqb$ is the equivalence relation generated by $\dbred$.

 \end{enumerate}

 \end{definition}
}
We can define the direct encoding of the {\LC} into $\X$:

 \begin{definition} [\cite{vBLL'05}] \label{lc to x}
The interpretation of lambda terms into terms of $\X$ via the plug $`a$, $\SemL M {`a}$, is defined by:
 \[ \begin{array}{rcll}
\SemL{x}{`a} &=& \caps<x,`a> \\
\SemL{`lx.M}{`a} &=& \exp x \SemL{M}{`b} `b . `a \\
\SemL{MN}{`a} &=& \cut \SemL{M}{`g} `g + x { \med \SemL{N}{`b} `b [x] y \caps<y,`a> }, & \textit{with $x$ fresh}
 \end{array} \]
\Long{
We can even represent substitution explicitly (so interpret $\Lx$ \cite{Bloo-Rose'95}), by adding
\[ \begin{array}{rcll}
\SemL{\clo{M}{x}{N} }{`a} &=& \cutR \SemL{N}{`g} `g + x \SemL{M}{`a} , & `g\textit{ fresh}
\end{array}\]
}
 \end{definition}
Notice that every sub-term of $\SemL{M}{`a}$ has exactly one free plug, which corresponds to the name of hole of the present context in which $M$ appears, i.e.~its continuation.

\Long{
 \begin{example} \label {example delta i}
In Figure~\ref{delta i to i} we show how the circuit $\SemL{ (`l x.xx)(`l y.y) }{`a}$ can be reduced, using the rules above.
\begin{figure}[t]
 \[ \begin{array}{lcl}
\SemL{ (`l x.xx)(`l y.y) }{`a} 
&=& 
\\
\cut \SemL{`l x.xx}{`b} `b + z { \med \SemL{`l y.y}{`g} `g [z] v \caps<v,`a> } 
&=& 
\\
\cut { \exp x \SemL{xx}{`e} `e . `b } `b + z {\med \SemL{`l y.y}{`g} `g [z] v \caps<v,`a>  } 
& \red & (\Ins) 
\\
\cut \SemL{`l y.y}{`g} `g + x { \cut \SemL{xx}{`e} `e + v \caps<v,`a>  } 
& = & 
\\
\cut \SemL{`l y.y}{`g} `g + x { \cut { \cut \caps<x,`g> `g + z { \med \caps<x,`d> `d [x] w \caps<w,`e> } } `e + v \caps<v,`a> } 
& \red & (\renL), (\renR), (\actR) 
\\
\cutR \SemL{`l y.y}{`g} `g + x { \med \caps<x,`d> `d [x] w \caps<w,`a>  }
& \red & \\
\multicolumn{3}{l}{
\cut \SemL{`l y.y}{`g} `g + z { \med { 
	\cutR \SemL{`l y.y}{`g} `g + x \caps<x,`d> 
 } `d [z] w {
	\cutR \SemL{`l y.y}{`g} `g + x \caps<w,`a> 
} } }
\\ & \red & 
(\deactR), (\Rii), (\renR)
\\
\cut \SemL{`l y.y}{`g} `g + z { \med \SemL{`l y.y}{`d} `d [z] w \caps<w,`a> }
& = & 
\\
\cut { \exp y \caps<y,`s> `s . `g } `g + z { \med \SemL{`l y.y}{`d} `d [z] w \caps<w,`a> }
& \red & (\Ins) 
\\
\cut \SemL{`l y.y}{`d} `d + y { \cut \caps<y,`s> `s + w \caps<w,`a>  }
& \red & (\Cap), (\renR) 
\\
\SemL{`l y.y}{`a} 
\\
 \end{array} \]
\caption{ \label{delta i to i} Reducing $\SemL{ (`l x.xx)(`l y.y) }{`a}$ to $\SemL{ `l y.y }{`a}$.}
\end{figure}
 \end{example}
}

As shown in \cite{vBLL'05}, the notion of Curry type assignment for the \LC, $\derL `G |- M : A $, is strongly linked to the one defined for $\X$.

 \begin{definition}[Curry type assignment for \LC]

The type assignment rules for the Curry type assignment system for the {\LC} are:
 \[ \begin{array}{ccccc}
 (\Ax):
	\Inf	{ \derL `G,x{:}A |- x : A }
 &\quad&
 (\arrI):
	\Inf	{ \derL `G, x{:}A |- M : B }
		{ \derL `G |- `lx . M : A\arr B }
 &\quad&
 (\arrE):
	\Inf	{ \derL `G |- M : A \arr B \quad \derL `G |- N : A }
		{ \derL `G |- MN : B }
 \end{array} \]

 \end{definition}

In \cite{vBLL'05}, the following relation is shown between {\LC} and $\X$:

 \begin{theorem} [\cite{vBLL'05}]
 \begin{enumerate}
 \firstitem If $M \bred N$, then $\SemL{M}{`a} \red \SemL{N}{`a}$.
 \item If $M \redCBN N$, then $\SemL{M}{`a} \redCBN \SemL{N}{`a}$.
 \item If $M \redCBV N$, then $\SemL{M}{`a} \redCBV \SemL{N}{`a}$.
 \item If $\derL `G |- M : A $, then $\derX  \SemL{M}{`a} : `G |- `a{:}A $.
 \end{enumerate}
 \end{theorem}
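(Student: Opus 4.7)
\medskip

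\noindent
\textbf{Proof proposal.} The four statements are of a standard simulation/preservation shape, and I would treat them together by induction on the derivation of the source-side judgement (either $M \bred N$, $M \redCBN N$, $M \redCBV N$, or $\derL `G |- M : A $).

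\emph{Parts (1)--(3): simulation of reduction.} The structural (compatible-closure) cases all follow straightforwardly by induction, because $\Sem{\cdot}{`a}$ is defined compositionally and each rule of $\redX$ is itself a compatible closure. The only interesting case is the head step, namely a $\beta$-contraction $(`lx.M)N \bred M[N/x]$. Unfolding the translation we get
\[
 \SemL{(`lx.M)N}{`a} \same \cut{\exp x \SemL{M}{`b} `b . `g}{`g}+{z}{\med \SemL{N}{`d} `d [z] y \caps<y,`a>},
\]
which by the logical rule $(\Ins)$ of $\X$ reduces in two steps (a cut on $`g{/}z$ followed by one on $`d{/}y$, or on $`b$) to a net that should be shown to equal $\SemL{M[N/x]}{`a}$. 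The core lemma I will need, and which I would state and prove separately, is a \emph{substitution lemma} of the form
\[
\cutR{\SemL{N}{`d}}{`d}+{x}{\SemL{M}{`a}} \dred \SemL{M[N/x]}{`a},
\]
proved by induction on the structure of $M$: the variable cases use the renaming rules $(\Cap)$/$(\renR)$ of Lemma on garbage collection and renaming, the abstraction case uses rule $(\Rii)$ to push the cut under an export, and the application case uses $(\Riii)$ together with the IH to distribute the cut over the import. Once this substitution lemma is in place, the head case follows by choosing the appropriate associativity of the $(\Ins)$ rule; for part (2) one picks the $\CBN$ variant of $(\Ins)$ that yields $\cut{\cut Q `g + y P}`b + z R $ and for part (3) the $\CBV$ variant yielding $\cut Q `g + y \cut P `b + z R $. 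The activation rules $(\actL)/(\actR)$ do not intervene because, in the $\beta$-case, the cuts produced by the translation always have at least one side introducing the cut connector.

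\emph{Part (4): preservation of typing.} This is by induction on the derivation $\derL `G |- M : A $. For $(\Ax)$, $\SemL{x}{`a} = \caps<x,`a>$ is typed by the capsule rule, giving $\derX \caps<x,`a> : `G, x{:}A |- `a{:}A $. For $(\arrI)$, the IH gives a derivation for $\SemL{M}{`b}$ in context $`G,x{:}A$ with $`b{:}B$ on the right, and the rule $(\Exp)$ of $\X$ builds the desired derivation for $\exp x \SemL{M}{`b} `b . `a $. For $(\arrE)$, the IHs furnish derivations for $\SemL{M}{`g}$ (typing $`g{:}A\arr B$) and $\SemL{N}{`b}$ (typing $`b{:}A$); combining an import (using a fresh capsule $\caps<y,`a>$ typed at $B$) with a cut via rule $(\Cut)$ yields the translation of $MN$ typed with $`a{:}B$.

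\emph{Expected obstacle.} The only non-routine piece is the substitution lemma used in parts (1)--(3). Care is needed because (a) the translation introduces a fresh intermediate socket $x$ and plug $`g$ in the application clause, and these must be chosen coherently with Barendregt's convention so as not to clash with captures in $N$; and (b) when $x$ occurs more than once in $M$, the cut on the $\X$ side must be propagated to each occurrence via $(\Riv)$ and $(\Rv)$, producing multiple copies of $\SemL{N}{`d}$ exactly where capture-avoiding substitution produces multiple copies of $N$. Verifying that the chain of propagation rules terminates precisely in $\SemL{M[N/x]}{`a}$, rather than in an $\equiv$-equivalent net modulo renaming of bound connectors, is where most of the bookkeeping lives.
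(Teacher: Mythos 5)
The paper does not prove this theorem itself --- it is imported from \cite{vBLL'05} --- but your proposal is essentially the standard (and correct) argument, and your part (4) follows exactly the same induction-on-derivations pattern that the paper later uses to prove the analogous preservation result for the intersection system ($\derI `G |- M : A $ implies $\derX \SemL{M}{`a} : `G |- `a{:}A $). The decomposition into a head $(\Ins)$ step plus a separately proved substitution lemma $\cutR \SemL{N}{`d} `d + x \SemL{M}{`a} \dred \SemL{M[N/x]}{`a}$ is the right key lemma, and your choice of the two associativity variants of $(\Ins)$ for {\CBN} versus {\CBV} matches the paper's Definition of the split reduction. One small imprecision: your remark that the activation rules ``do not intervene'' is true only for the head redex (where the freshness of the cut connectors guarantees both sides introduce them); inside the substitution lemma you do need $(\actR)$ before the propagation rules $(\Rii)$--$(\Rv)$ can fire whenever $\SemL{M}{`a}$ does not introduce $x$, and conversely the logical/renaming rules when it does --- this is exactly the case split you already gesture at, so it is a matter of exposition rather than a gap. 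You should also cover the case $x \notin \fv(M)$ explicitly (garbage collection via $(\gcR)$ or $(\Ri)$), alongside the single- and multiple-occurrence cases you mention.
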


 \section{Intersection Type Assignment for the Lambda Calculus} \label{int for LC}

The notion of intersection type assignment for $\X$ as defined in the next section is a conservative extension of the Intersection Type Assignment System of \cite {Barendregt-Coppo-Dezani'83}, in that we can translate lambda terms typeable in that system to $\X$ circuits while preserving types.
In this section, we will briefly discuss that system; we will modify it slightly, since we do not want to model extensionality.

The type assignment system presented here is based on the BCD-system defined by H.~Barendregt, M.~Coppo and M.~Dezani-Cianca\-glini in \cite{Barendregt-Coppo-Dezani'83}, in turn based on the system as presented in \cite{Coppo-Dezani-Venneri'81}.
The BCD-system treats the two type constructors `$\arrow$' and `$\inter$' the same, allowing, in particular, intersection to occur at the right of arrow types.
It also introduced a partial order relation `$\seq$' on types, adds the type assignment rule $(\seqr)$, and introduced a more general form of the rules concerning intersection.
We will deviate here slightly from that system.

 \begin{definition} [Intersection types, statements, and contexts] \label {intersection types}

 \begin {enumerate}

 \firstitem \label {Types}
Let $\Phi$ be a countable (infinite) set of type-variables, ranged over by $\tvar$.
$\Types$, the set of {\em intersection types}, ranged over by $A, B, \ldots$%
\footnote{In \cite{Barendregt-Coppo-Dezani'83}, Greek characters are used to represent types, and $`w$ is used for $\Top$; we use Greek characters for \emph{plugs}.}, is defined through:
\Short{$}\Long{\[} \begin {array}{rcl}
 A, B &::=& `v \mid \Top \mid (A \arr B) \mid (A \inter B)
 \end {array}. 
\Short{$}\Long{\]}
$\Top$ is pronounced ``\emph{top}''.

 \item A {\em statement} is an expression of the form $M : A $, with $M \in \Lambda$, and $A \in \Types$.
 $M$ is the {\em subject} and $A$ the {\em predicate} of $M : A $.

 \item A {\em context} $`G$ is a partial mapping from term variables to intersection types, and we write $x{:}A \ele `G$ if $`G\,x = A$, {i.e.} if $A$ is the type stored for $x$ in $`G$.
We will write $x \not\in `G$ if $`G$ is not defined on $x$, and $`G \Except x$ when we remove $x$ from the domain of $`G$.

\Comment{
 \item
For contexts $`G_1, \ldots, `G_n$, the context $`G_1 \inter \dots \inter `G_n$ is defined by: 
$`G_1 \inter \dots \inter `G_n \, x = A_1 \inter \dots \inter A_m$ if and only if $\{ A_1, \ldots, A_m \} = \{ A \mid \Exists \iotn [ `G_i\,x = A \}$ is the set of all statements about $x$ that occur in $`G_1 \bigcup \dots \bigcup `G_n$.
}

 \item
We write $`G \inter x{:}A$ for the context $`G \inter \{x{:}A\}$, i.e., the context defined by:
 \[ \begin{array}{rcll}
`G \inter x{:}A & = & `G \union \{x{:}A\}, & \textrm{if $x \not \in `G$} 
\\ & = & 
`G \Except x \Union \{x{:}A\inter B\}, & \textrm{if }x{:}B \in `G
 \end{array} \]
We will often write $`G, x{:}A$ for $`G \inter x{:}A$ when $x \not \in `G$.
In the notation of types, as usual, right-most outer-most brackets will be omitted.

 \end{enumerate}

 \end{definition}

We will consider a pre-order (i.e.\ reflexive and transitive relation) on types which takes into account the idem-potence, commutativity and associativity of the intersection type constructor, and defines $\top$ to be the maximal element.

\begin{definition} [Relations on types] \label {seq definition}
\Long{ \begin{enumerate}

 \firstitem \label {klgi} }
On $\Types$, the \Long{type inclusion }%
relation $\seq$ is defined as the smallest pre-order such that:
\Long{
 \[ \begin{array}{rcl}
A \seq \Top \dquad A \inter B \seq A,B \dquad C \seq A \And C \seq B \Then$ $C \seq A \inter B  \end {array} \]
}
\Short{ $ 
A \seq \Top, ~ A \inter B \seq A , ~ A \inter B \seq B$, and $ C \seq A \And C \seq B \Then$ $C \seq A \inter B $. 
}
 
\Long{\item}
The relation $\equ$ is defined by:
\Short{$A \seq B \seq A \Then A \equ B $ and $ A \equ C~\And~B \equ D \Then A \arr B \equ C \arr D.$}%
\Long{\[ \begin{array}{rcl} 
 A \seq B \seq A &\Then& A \equ B 
\\
 A \equ C~\And~B \equ D &\Then& A \arr B \equ C \arr D.
 \end{array} \] } 

\Long{ \end {enumerate} }

$\Types$ will be considered modulo $\equ$; then $\seq$ becomes a partial order.
 \end {definition}

We need to point out that the $\seq$ relation as defined in \cite{Barendregt-Coppo-Dezani'83} is slightly different.
It also contains the cases
$ ( A \arr B )\inter( A \arr C ) \seq A \arr (B \inter C) $,
$ C \seq A~\And~B \seq D \Then A \arr B \seq C \arr D $, and
 $  \Top \seq \Top\arr \Top$.
These were mainly added to obtain a system closed for $\eta$-reduction (see also \cite{Bakel-TCS'95}), which is not an issue in this paper.
\Long{Also, we could add these cases to Definition~\ref{context assignment}, but would then have to add a type assignment rule dealing explicitly with $\seq$.}

It is easy to show that both $(A \inter B) \inter C \sim B \inter (A \inter C)$ and $A \inter B \sim B \inter A$, so the type constructor $\inter$ is associative and commutative, and we will write $A \inter B \inter C$ rather than $(A \inter B) \inter C$.
We will write $\n$ for the set $\{1,\ldots,n\}$, and often write $\AoI{n}$ for $A_1 \inter \dots \inter A_n$, and consider $\Top$ to be the empty intersection: $\Top = \AoI{0}$.

 \begin {definition} \label {intersection type assignment} \label {BCD}
\emph{Type assignment} \Long{and \emph{derivations} are }%
\Short{is }defined by the following natural deduction system.
 \[ \begin{array}{c}
 \begin {array}{rlcrlcrl}
( \Ax): &
 \Inf	{ \derI `G,x{:}A |- x : A }
&\quad&
(\intI): &
 \Inf	[n \geq 0]
 	{ \derI `G |- M : A_j 
	 \quad
	 (\forall j \ele \n) 
	}
	{ \derI `G |- M : \AoI{n} }
&\quad&
(\intE): &
 \Inf	[j \ele \n]
	{ \derI `G |- M : \AoI{n} }
	{ \derI `G |- M : A_j }
\end{array}
\\[4mm]
\begin {array}{rlcrlcrl}
(\arrI): &
 \Inf	{ \derI  `G,x{:}A |- M : B }
	{ \derI  `G |- `lx.M : {A \arr B} }
&\quad&
(\arrE): &
 \Inf	{ \derI  `G |- M : {A\arr B} 
	  \quad
	  \derI  `G |- N : A 
	}
	{ \derI  `G |- MN : B }
\end{array}
\end{array}\]

\Long{We will write $\derI `G |- M : A $ for statements that are derived using these rules.}

 \end {definition}

Again, notice that the original definition contained also the rule $(\seqr)$, 
\Long{
\[
(\seqr): ~
 \Inf	[ A \seq B ]
 	{ \derI  `G |- M : A }
 	{ \derI  `G |- M : B }
\]
}%
added to be able to express contra-variance of the (original) $\seq$-relation over arrow types.
The system as set up here does not need this rule.

\Long{
It is easy to check that the type assignment system is closed under $\eqbr$.
First, that the system is closed under subject reduction can be illustrated by the following `Cut and Paste' proof:
Suppose that $\derI `G |- (`lx.M)N : B $.
Assume that $(\arrE)$ is applied last, so there exists $ A$ such that
 \[
\derI `G |- `l x.M : A\arrow B  \Rm{ and } \derI `G |- N : A .
 \]
Notice that $(\arrI)$ is the last step performed for the first result\Long{ and $(\intI)$ should be the last step for the latter}%
, so
 \[
 \derI `G,x{:}A |- M : B \Rm{ and } \derI `G |- N : A .
 \]
Then a derivation for $\derI `G |- M[N/x] : B $ can be obtained by replacing, in the derivation for $\derI `G,x{:}A |- M : D $, the sub-derivation $\derI `G,x{:}A |- x : A $ by the derivation for $\derI `G |- N : A $.

The problem to solve in a proof for closure under $ `b$-equality is then that of $ `b$-expansion:
 \[
\Rm{if }\derI `G |- M[N/x] : C , \Rm{ then }\derI `G |- ( `l x.M)N : C .
 \]
Assume that the term-variable $x$ occurs in $M$ and the term $N$ is a sub-term of $M[N/x]$, so $N$ is typed in the derivation for $\D \dcol \derI `G |- M[N/x] : C $, probably with several different types $\Dotn$.
A derivation for
$
 \derI `G,x{:}{\DoIn} |- M : C 
$
can be obtained by replacing, in $\D$, all derivations for $\derI `G |- N : D_i $ by the derivation for
 $
 \derI x{:}{\DoIn} |- x : D_i .
 $
Then, using $(\intI)$, we have $\derI `G |- N : {\DoIn} $, and, using $(\arrI)$, $\derI `G |-  `lx.M : \DoIn \arrow C $.
Then, using $(\arrE)$, the redex can be typed.

When the term-variable $x$ does not occur in $M$, the term $N$ is a not a sub-term of $M[N/x]$ and $\derI `G |- M[N/x] : C $ stands for $\derI `G |- M : C $.
In this case, the type $ \Top$ is used: since $x$ does not occur in $M$, by weakening $x{:} \Top$ can be assumed to appear in $`G$, and rule $(\arrI)$ gives $\derI `G |-  `lx.M : \Top\arr C $.
By $(\intI)$, $\derI `G |- N : {\Top} $, so, using $(\arrE)$, the redex can be typed.

}




\section{Intersection and Union Context Assignment for $\X$} \label {intun for X}

The notion of intersection context assignment on $\X$ that we will present in this section is a natural extension of the system considered in \cite{vBLL'05}, i.e.\ the basic implicative system for Classical Logic, but extended with intersection and union types and the type constants $\Top$ and $\Bottom$.
The system we present here is a correction of the system presented in \cite{Bakel-ITRS'04}.

\Long{
Note that, as with intersection type assignment for the {\LC}, the relation between the system we presented above and (classical) logic is lost.
In fact, although the rules for dealing with intersection and union types are similar to the rules for the logical connectors `\emph{and}' ($\And$) and `\emph{or}' ($\Or$), intersection and union have \emph{no} logical content: read as logical rules, $(\intR)$ and $(\unL)$ are only applicable when all sub-formulae are shown by proofs with \emph{identical structure} - they have the same witness \cite{Hindley'84}.

The system presented in this section is set up to be concise, and closed for reduction and expansion.
As discussed in the previous section, the intersection type constructor is needed for expansion in the {\LC}; in $\X$ we need intersection during expansion when, for example, considering the reduction step $(\Riii)$:
 \[  
\cutR P `a + x {\imp Q `g [x] y R } \red \cut P `a + v { \imp { \cutR P `a + x Q } `g [v] y { \cutR P `a + x R } } 
 \] 
Assume we have a derivation for the right-hand side, constructed like this (notice that, by Barendregt's convention, we can assume that $y$ is not free in $P$):
 \[ \kern-1cm
 \Inf	[\Cut]
	{ \InfBox{\D_1}
		{ \derX P : `G |-  `a{:}A\arr B,`D }
	  \kern-5mm
\raise.5\RuleH\hbox to 9.5cm{
	  \Inf	[\Imp]
		{ \Inf	[\Cut]
			{ \InfBox{\D_2}
				{ \derX P : `G |- `a{:}C,`D }
			  \quad
			  \InfBox{\D_3}
				{ \derX Q : `G,x{:}C |- `g{:}A,`D }
			}
			{ \derX \cutR P `a + x Q : `G |- `g{:}A,`D }
\raise2.5\RuleH\hbox to 25mm {\kern-22mm
		  \Inf	[\Cut]
			{ \InfBox{\D_4}
				{ \derX P : `G |- `a{:}D,`D }
			  \quad 
			  \InfBox{\D_5}
				{ \derX R : `G,y{:}B,x{:}D |- `D }
			}
			{ \derX \cutR P `a + x R : `G,y{:}B |- `D }
}
\multiput(-10,0)(0,7){7}{.}
		}
		{ \derX \imp { \cutR P `a + x Q } `g [v] y { \cutR P `a + x R } : `G,v{:}A\arr B |- `D }
}
\multiput(-20,0)(0,7){2}{.}
	}
	{ \derX \cut P `a + v { \imp { \cutR P `a + x Q } `g [v] y { \cutR P `a + x R } } : `G |- `D }
 \]
Then, in order to construct a derivation for the left-hand side, we need to combine the derivations for $P$; since this combines a number of output types on $`a$, as above for the \LC, the natural tool is intersection:
 \[ \kern-3cm
\Inf	
	{ 
	  \Inf	[\intR]
		{ \InfBox{\D_1}
			{ \derX P : `G |- `a{:}A\arrow B,`D }
		  \quad
		  \InfBox{\D_2}
			{ \derX P : `G |- `a{:}C,`D }
		  \quad
		  \InfBox{\D_4}
			{ \derX P : `G |- `a{:}D,`D }
		}
		{ \derX P : `G |- `a{:}(A\arr B)\inter C\inter D,`D }
\raise3.5\RuleH\hbox to 2cm {\kern-3.5cm
	  \Inf	[\Imp]
		{ \InfBox{\D_3}
			{ \derX Q : `G,x{:}C |- `g{:}A,`D }
		  \quad 
		  \InfBox{\D_5}
			{ \derX R : `G,y{:}B,x{:}D |- `D }
		}
		{ \derX \imp Q `g [x] y R : `G,x{:}(A\arr B)\inter C\inter D |- `D }
}
\multiput(-20,-1)(0,7){9}{.}
	}
	{ \derX \cutR P `a + x { \imp Q `g [x] y R } : `G |- `D }
 \]
Notice that we also naturally build an intersection type for the socket $x$.

Similarly, we need union, for example, when dealing with rule $(\Lii)$: 
\[
\cutL { \exp y P `e . `a } `a + x Q \red \cut { \exp y { \cutL P `a + x Q } `e . `g } `g + x Q \] 
with $`g$ fresh.
Assume we have a derivation for the right-hand side, shaped like
 \[ 
\Inf	[\Cut]
	{ 
	  \Inf	[\arrR]
		{ \Inf	[\Cut]
			{ \InfBox{\D_1}
				{ \derX P : `G,y{:}A |- `e{:}B,`a{:}C,`D } 
			  \dquad
			  \InfBox{\D_2}
				{ \derX Q : `G,x{:}C |- `D }
			}
			{ \derX \cutL P `a + x Q : `G,y{:}A |- `e{:}B,`D }
		}
		{ \derX \exp y { \cutL P `a + x Q } `e . `g : `G |- `g{:}A\arr B,`D }
	  \quad
	  \InfBox{\D_3}
		{ \derX Q : `G,x{:}A\arr B |- `D }
	}
	{ \derX \cut { \exp y { \cutL P `a + x Q } `e . `g } `g + x Q : `G |- `D }
 \]
When trying to build a derivation for the left-hand side, we need a way to gather the derivations for $Q$ using a new type constructor, dual to intersection; the natural choice is union:
 \[
\Inf	[\Cut]
	{ \Inf	[\arrR]
		{ \InfBox{\D_1}
			{ \derX P : `G,y{:}A |- `e{:}B,`a{:}C,`D } 
		}
		{ \derX \exp y P `e . `a : `G |- `a{:}C\union (A\arr B),`D } 
	  \quad
	  \Inf	[\unL]
		{ \InfBox{\D_2}
			{ \derX Q : `G,x{:}C |- `D }
		  \quad
		  \InfBox{\D_3}
			{ \derX Q : `G,x{:}A\arr B |- `D }
		}
		{ \derX Q : `G,x{:}C\union (A\arr B) |- `D }
	}
	{ \derX \cutL {\exp y P `e . `a } `a + x Q : `G |- `D }
 \]
Also here the building of the union type for $x$ (as in the logical rule for $\Or$) is mirrored by the automatic construction of a union type for the plug $`a$.

Using union and intersection is also supported by the normal view for statements like $`G~\Turn~`D$, in that the formulae in the context $`G$ are all necessary for the result, and not all the formulae in $`D$ necessarily follow from $`G$; in other words, the formulae in the context $`G$ are connected through the logical `and', whereas those in $`D$ are connected through the logical `or'.
And although `intersection' is not `and', and `union' is not `or', the link between these concepts is strong enough to justify our choice.

\Comment{
Initially a system was set up that allowed \emph{only} intersection types for sockets, and \emph{only} union types for plugs, but this soon proved to be too restrictive.
As can be seen by the examples above, intersection types are sometimes needed on plugs, as union types can be needed on sockets.
This corresponds in the {\LC} to intersection types being used to deal with multiple occurrences of variables with different types, but also to model that terms can have multiple types within a fixed context.

Essentially, that initial choice still stands: intersection types for sockets, and union types for plugs.
However, a union type like $A \union B$ for sockets is allowed, but only if both $A$ and $B$ can be justified (see rule $(\unL)$); similarly, an intersection type like $A \inter B$ for plugs is only allowed if both $A$ and $B$ can be justified (see rule $(\intR)$; this corresponds to rule $(\intI)$ in $\TurnI$).
}
}

\emptyline
The following definition of types is a natural extension of the notion of types of the previous section, by adding union as a type constructor.

 \begin{definition} [Intersection and Union Types, Contexts] \label {interunion types}

 \begin{enumerate}

 \firstitem \label {IntUn Types}
The set $\Types$ of {\em intersection-union types}, ranged over by $A, B, \ldots$ is defined by:
\Short{$}\Long{\[}
 \begin{array}{rcl}
 \Types &::=& \tvar \mid \Top \mid \Bottom \mid (\Types\arr \Types) \mid (\Types \inter \Types) \mid (\Types \union \Types)
 \end{array}
\Short{$.}\Long{\]}

The set $\Tproper$ is the set of {\em proper types}, defined by:
\Short{$}\Long{\[}
 \begin{array}{rcl}
\Tproper &::=& \tvar \mid (\Types \arr \Types)
 \end{array}
\Short{$.}\Long{\]}
\Comment{
the set $\IntTypes$ is the set of {\em intersection types}, defined by:
\Short{$}\Long{\[}
 \begin{array}{rcl}
\IntTypes &::=& \Tproper \mid \Top \mid (\IntTypes \inter \IntTypes)
 \end{array}
\Short{,$}\Long{\]}
and $\UnTypes$ is the set of {\em union types}, defined by:
\Short{$}\Long{\[}
 \begin{array}{rcl}
\UnTypes &::=& \Tproper \mid \Bottom \mid (\UnTypes \union \UnTypes)
 \end{array}
\Short{.$}\Long{\]}
}

 \item A \emph{context} $`G$ of sockets ($`D$ of plugs) is a partial mapping from sockets (plugs) to types in $\Types$, represented as a set of statements with only distinct connectors as subjects.
We write $x \ele `G$ ($x \ele `D$) if $x$ ($`a$) gets assigned a type by $`G$ ($`D$).


 \end{enumerate}

 \end{definition}

\Long{
We will omit unnecessary brackets in types; the type constructors `$\inter$' and `$\union$' will bind more strongly than `$\arr$', so $A \inter B\arr C\inter D$ stands for $((A \inter B)\arr (C \inter D))$, $A\arr B \inter C\arr D$ stands for $(A\arr ((B \inter C)\arr D))$, and $(A\arr B )\inter C\arr D$ stands for $(((A\arr B )\inter C)\arr D)$.
We will sometimes write the omitable brackets for readability.
}

We will consider a pre-order on types which takes into account the idempotence, commutativity and associativity of the intersection and union type constructors, and defines $\Top$ to be the maximal element, and $\Bottom$ to be the minimal.

\begin{definition} [Relations on types] 
\label {interunion leq definition}
\Long{
 \begin{enumerate}

 \firstitem 
}
The relation $\seqr$ is defined as the least pre-order on $\Types$ such that:
\Long{
 \[ \begin{array}{ccccc}
A \inter B \seq A,B &\quad& A \seq \Top &\quad& A \seq B \And A \seq C \Then A \seq B \inter C \\
A,B \seq A \union B && \Bottom \seq A && A \seq C \And B \seq C \Then A \union B \seq C 
 \end {array} \]
We \Long{will }write $A \seqI B$ for the relation on types defined by the first line, and $A \seqU B$ for the one defined by the second.

}
\Short{
 $A \seq A$, $A \inter B \seq A$, $A \inter B \seq B$, $C \seq A \And C \seq B \Then C \seq A \inter B$, $A \seq \Top$, and $A \seq A\union B$, $B \seq A\union B$, $A \seq C \And $ $ B \seq C \Then A \union B \seq C$, and $\Bottom \seq A$. 
}

\Long{ Notice that, since the relation defined here is a natural extension of that in Definition~\ref{seq definition}, we are free to use the same symbol.

  \item  
}
The equivalence relation $\equ$ on types is defined as before\Short{ .}%
\Long{, by: 
 \[ \begin{array}{rcl}
A \seq B \seq A &\Then& A \equ B
\\
 A \equ C~\And~B \equ D &\Then& A \arr B \equ C \arr D
 \end{array} \]

\end{enumerate}
 
}
 
We will consider types modulo $\equr$; then $\leq$ becomes a partial order.
 \end{definition}
Notice that we can show that $A \union (B\inter C) \seq (A \union B) \inter (A \union C)$, but cannot show the converse.

Remark, as mentioned above, that the relation is \emph{not} defined over arrow types, as in the system of \cite{Barendregt-Coppo-Dezani'83}.
More pointedly, we do not consider the type $A\arr (C\int (C\arr D))$ smaller than $(A\arr C)\int (A\arr C\arr D)$; the system would not be closed for the relation.

 \begin{definition}
 \begin{enumerate}

 \firstitem
For contexts of sockets $`G_1, \ldots, `G_n$, the context $`G_1 \inter \dots \inter `G_n$ is defined by:
\Short{\\ $}\Long{\[}
x{:}A_1 \inter \dots \inter A_m \in `G_1 \inter \dots \inter `G_n
\Short{$}\Long{\]}
if and only if $\{ x{:}A_1, \ldots, x{:}A_m \}$ is the set of all statements about $x$ that occur in $\bigcup_{n}`G_i$, where $\bigcup$ is set-union.
\Long{We write $`G \inter x{:}A$ for the context of sockets $`G \inter \{x{:}A\}$
\Comment{, i.e., the context defined by:

 \[ \begin{array}{rcll}
 `G \inter x{:}A & = & `G \cup \{x{:}A\}, & \textrm{if }x \not\in `G \\
	& = & `G \Except x \cup \{x{:}A\inter B\}, & \textrm{if }x{:}B \ele `G
 \end{array} \]
A}
; as before, we will write $`G,x{:}A$ for $`G \inter x{:}A$ when $x \not \in `G$, and 
 
}
\Short{
The notations $`G \inter x{:}A$ and $`G,x{:}A$ are defined as above; 
}
we will write $\GamoIn$ for $`G_1 \inter \dots \inter `G_n$.

 \item
For contexts of plugs, $`D_1, \ldots,`D_n$, the context $`D_1 \union \dots \union `D_n$%
\Short{
and the notions $`a{:}A \union `D $ and $`a{:}A, `D $ are 
}
\Long{ is }%
defined \Short{similarly.}%
\Long{
by:
\[ `a{:}A_1 \union \dots \union A_m \in `D_1 \union \dots \union `D_n\] if and only if $\{ `a{:}A_1, \ldots, `a{:}A_m \}$ is the set of all statements about $`a$ that occur in $\bigcup_{n}`D_i$.
We write $`a{:}A \union `D$ for the context of sockets $\{ `a{:}A \} \union `D$\Comment{, i.e., the context defined by:
 \[ \begin{array}{rcll}
`a{:}A \union `D & = & \{ `a{:}A \} \cup `D, & \textrm{if }`a \not \in `D \\
	& = & \{`a{:}A\union B\} \cup `D \Except `a, & \textrm{if }`a{:}B \ele `D
 \end{array} \]
A}, and write $`a{:}A,`D$ for $`a{:}A \union `D$ when $`a \not \in `D$, and write $\DeloUn$ for $`D_1 \union \dots \union `D_n$.
}

\Long{\item
As above, $`G_1,`G_2$ stands for the context of sockets that essentially represents their union, but assuming $`G_1$ and $`G_2$ are \emph{compatible}: if $x{:}A_1 \ele `G_1$, and $x{:}A_2\ele `G_2$, then $A_1 = A_2$.
We will call this union a \emph{compatible union}, written $\compunion$, and write $\compunion_{n}`G_i$ for $`G_1, \ldots, `G_n$.
Similar for $`D_1,`D_2$ and $\compunion_{n}`D_i$. 
}
 \end{enumerate}

 \end{definition}

\Long{As before, w}\Short{W}%
e will write $\AoIn$ for $\AoInfull$ (with each $A_i$ in $\UnTypes$), and $\Top$ (\emph{top}) for the empty intersection type, as well as $\AoUn$ for $\AoUnfull$ ($A_i$ in $\IntTypes$), and $\Bottom$ (\emph{bottom}) for the empty union.

We will now define a notion of intersection-union context assignment for $\X$.

 \begin{definition}[Intersection and Union Typing for $\X$] \label{intersection and union tas}\label{context assignment}

\Long{
 \begin{enumerate}
  \firstitem
\emph{Intersection type judgements} are expressed via a ternary relation $\derX P
: `G |- `D $, where $`G$ is a context of {\em sockets} and $`D$ is a context of {\em plugs}, and $P$ is a net.
We say that $P$ is the {\em witness} of this judgement.

 \item  
}
{\em Intersection and union context assignment for} $\X$ is defined by the following sequent style calculus:
\[ \def\arraystretch{2.5}
\begin{array}{rlcrl}
 (\Ax): &
\Inf	{ \derX \caps<y,`a> : `G,\stat{y}{A} |- `a{:}A,`D }
&&
 (\Cut): &
\Inf	
	{ \derX P : `G |- `a{:}A,`D \quad \derX Q : `G, \stat{x}{A} |- `D }
	{ \derX \cut P `a + x Q : `G |- `D }
\\
 (\arrL):&
\Inf	{ \derX P : `G |- `a{:}A,`D \quad \derX Q : `G,\stat{x}{B} |- `D }
	{ \derX \imp P `a [y] x Q : `G \inter \stat{y}{A\arr B} |- `D }
&&
 (\arrR): &
\Inf	{ \derX P : `G,\stat{x}{A} |- `a{:}B,`D }
	{ \derX \exp x P `a . `b : `G |- \stat{`b}{A\arr B \union `D} }
\\
(\intR): &
\Inf	[n \geq 0]
	{ \derX P : `G |- `a{:}A_j,`D
	 \quad (\forall j \ele \n)
	}
	{ \derX P : `G |- `a{:}\AoIn,`D }
&&
(\unL): &
\Inf	[n \geq 0]
	{ \derX P : `G,x{:}A_j |- `D
	 \quad (\forall j \ele \n)
	}
	{ \derX P : `G, x{:}\AoU{n} |- `D }
\\
(\intE): &
 \Inf	[j \ele \n]
	{ \derX P : `G |- `a{:}\AoI{n},`D }
	{ \derX P : `G |- `a{:}A_j,`D }
&&
(\unE): &
 \Inf	[j \ele \n]
	{ \derX P : `G,x{:}\AoU{n} |- `D }
	{ \derX P : `G,x{:}A_js |- `D }
\\
 \end{array} \]
\Comment{
 \[ \begin{array}{rlcrlcrl}
(\seqr): &
\Inf	[`G' \seq `G,`D \seq `D']
	{ \derX P : `G |- `D }
	{ \derX P : `G' |- `D' }
&\quad&
 \end{array} \]
}
NB: rule $(\Cut)$ is also used for the activated cuts.

\Long{
 \item
We will write $\derX P : `G |- `D $ if there exists a derivation that has this judgement in the bottom line, and write $\D \dcol \derX P : `G |- `D $ if we want to name that derivation.

 \end{enumerate}
 
}

 \end{definition}
As will be argued below, this notion of type assignment is too liberal to obtain preservation of types under conversion.
\Long{As argued above, t}\Short{T}%
he system is constructed to satisfy preservation of types under expansion (see Theorem \ref{witness expansion}), but we will see that it is not closed for reduction (Section \ref{problems}).
We will partly recover from this in 
\Long{sections \ref{CBN} and \ref{CBV}}%
\Short{Section \ref{CBN}}%
, where we define restrictions of the system above that satisfy preservation of types under, respectively, {\CBN} and {\CBV} reduction.
However, a natural consequence of these restrictions made is that the systems no longer will be closed for expansion.
\Long{
\Comment{
\emptyline 
The most important thing to notice is that, via the rules, \emph{intersection types} are built of contexts of sockets, and \emph{union types} are built for contexts of plugs, so all contexts are pure.
However, a union type can appear in a contexts of sockets, but only via rule $(\unL)$; similarly, an intersection type can appear in a contexts of plugs, but only via rule $(\intR)$.

Notice that the last two rules $(\Bottom)$ and $(\Top)$ are almost special cases of rules $(\unL)$ and $(\intR)$, but for the fact that the intersection (or union) over zero sets would give the empty set; since weakening should be an admissible rule, we need to state the zero cases as separate rules.
This could be avoided by forcing the rules to agree on the contexts, as in, for example,
 \[ \begin{array}{rlcrl}
(\unL)': & \Inf	[n \geq 0]
		{ \derX P : `G,x{:}A_1 |- `D
		 \quad \ldots \quad
		 \derX P : `G,x{:}A_n |- `D
		}
		{ \derX P : {`G,x{:}\AoUn} |- `D }
 \end{array} \]
 \[ \begin{array}{rlcrl}
(\intR)': & \Inf	[n \geq 0]
		{ \derX P : `G |- `a{:}A_1,`D
		 \quad \ldots \quad
		 \derX P : `G |- `a{:}A_n,`D
		}
		{ \derX P : `G |- {`a{:}\AoIn,`D} }
 \end{array} \]
We have chosen for the present multiplicative-style variant that allows for the combination of contexts since this is convenient when construction derivations where it avoids additional weakening steps.
In order to not have derivations littered with applications of the \emph{Weakening} rule, we allow rules to \emph{combine} the contexts of the subterms involved; this does not exclude the normal approach, since the contexts can be equal, whereby the alternative rules become derivable.
Whenever possible, we will treat the rules $(\Bottom)$ and $(\Top)$ as empty cases of $(\unL)$ and $(\intR)$, respectively, and will use a short-hand notation for $(\intR)$ and $(\unL)$:
 \[ \begin{array}{rlcrl}
(\unL): &
\Inf	[n \geq 0]
	{ \derX P : `G_i,x{:}A_i |- `D_i
	 \quad 
	 (\forall i \ele \n)
	}
	{ \derX P : {\GamoIn, x{:}\AoUn} |- {\DeloUn} }
&\quad&
(\intR): &
\Inf	[n \geq 0]
	{ \derX P : `G_i |- `a{:}A_i,`D_i
	 \quad 
	 (\forall i \ele \n)
	}
	{ \derX P : {\GamoIn} |- {`a{:}\AoIn,\DeloUn} }
 \end{array} \]
so will allow the counter be zero as well in these rules.

As mentioned above, the system has been set up such that intersections occur on the left, and unions on the right, i.e.~each $`G$ ($`D$) is `pure', so only contains types in $\UnTypes$ ($\IntTypes$).
However, this property is obviously violated by rule $(\unL)$ (rule $(\intR)$); but, as with the intersection introduction rule $(\intI)$ in the strict system of \cite{Bakel-TCS'92,Bakel-TCS'95,Bakel-NDJFL'04}, the pureness restriction put on the other rules prohibit that types constructed via those rules can be used arbitrarily.
In fact, if a derivation ends like
\[
\Inf	[\unL]
	{ \derX P : `G_i, x{:}A_i |- `D_i
	 \quad 
	 (\forall i \ele \n)
	}
	{ \derX P : {\GamoIn,x{:}\AoUn} |- {\DeloUn} }
\]
then only the rules $(\Cut)$, $(\arrL)$ and $(\arrR)$ can be used to construct a larger derivation with this.
In the first case, the type $\AoUn$ will disappear altogether; in the other, it will become part of an arrow type, which itself is either part of a union type in the context for plugs, or part of an intersection type in the context of sockets.
In both cases, if the original contexts were pure, then the resulting contexts will be as well, so the extension of contexts by allowing union types on the left and intersection types on the right is only `temporary'.
To highlight this, consider the following:

 \begin{remark}

 \begin{itemize}
 \firstitem Since in rule $(\Ax)$ the type $A$ is proper, even the contexts $`G \inter \stat{y}{A}$ and $`a{:}A \union `D$ are pure.

 \item In rule $(\Cut)$, $A \ele \Types$, so could be either an intersection or a union.
The restriction that $`G_1, `G_2,`D_1$, and $`D_2$ are pure, however, forces this $A$ to be the \emph{only} type that disturbs the purity.
The purity is disturbed possibly in the premisses (but not in the conclusion) on the left or on the right, but not on both; also, if $A \ele \Tproper$, all contexts are pure.

 \item In rule $(\arrL)$, the type $A$ might be an intersection, or $B$ a union; notice that the contexts in the conclusion are pure.

 \item In rule $(\arrR)$, the type $A$ might be a union, or $B$ an intersection; notice that the contexts in the conclusion are pure.

 \item In rules $(\intR)$ and $(\unL)$, the introduced type is the only one disturbing the purity of the contexts.

 \end{itemize}

So, if an \emph{impure} type is introduced, this can happen only one-at-a-time.
Also, any next step in the derivation will have to remove the impure type, either via $(\Cut)$, $(\arrL)$, or $(\arrR)$.

 \end{remark}
}

 \begin{definition} [Proper derivations]
We use the auxiliary notion $\derPure P : `G |- `D $ for derivations that do not end with either rule $(\unL)$, $(\Bottom)$, $(\intR)$ or $(\Top)$.
 \end{definition}

 \begin{lemma}[Weakening]
The following rule is admissible:
 \[ \begin{array}{rlcrl}
(\Weak): &
\Inf	{ \derX P : `G |- `D }
	{ \derX P : `G \inter x{:}A |- {`a{:}B \union `D} }
 \end{array} \]
 \end{lemma}

Notice that the rules
 \[ \begin{array}{rlcrl}
(\intL): &
 \Inf	{ \derX P : `G,x{:}A |- `D }
	{ \derX P : `G,x{:}A \inter B |- `D }
&\dquad&
(\unR): &
 \Inf	
	{ \derX P : `G |- `a{:}A,`D }
	{ \derX P : `G |- `a{:}A \union B,`D }
 \end{array} \]
are admissible by the Weakening result.

We can even show the following (standard) result:

 \begin{lemma}[Thinning] \label{thinning}
The following rules are admissible:
 \[ \begin{array}{rlcrl}
(\textit{Tl}): &
\Inf	[x \notele \FS{P} ]
	{ \derX P : `G |- `D }
	{ \derX P : `G \Except x |- `D }
&\quad&
(\textit{Tr}): &
\Inf	[`a \notele \FP{P} ]
	{ \derX P : `G |- `D }
	{ \derX P : `G |- `D \Except `a }
 \end{array} \]
 \end{lemma}
Applying these repeatedly gives that the following rule is admissible:
 \[ \begin{array}{rlcrl}
(\textit{T}): &
\Inf	{ \derX P : `G |- `D }
	{ \derX P : \{x{:}A \ele `G \mid x \ele \FS{P} \} |- \{`a{:}A \ele `D \mid `a \ele \FP{P} \} }
 \end{array} \]

As usual, we will now formulate a generation lemma, that links the syntax of nets to assignable statements; the property holds for all notions of context assignment defined in this paper.

 \begin{lemma}[Proper generation Lemma] \label{pure generation lemma} \label{pure Gen lemma}

 \begin{enumerate}

 \firstitem If $\derPure \caps<x,`a> : `G |- `D $, then there exists $A,B$ such that $x{:}A \in `G$, $`a{:}B \ele `D$, and $A \seq B$.

 \item If $\derPure \exp x P `a . `b : `G |- `D $, then there exists $`b{:}C \ele `D$ with $C \ele \Types$ and $A, B$ such that $\derX P : `G' ,x{:}A |- `a{:}B,`D' $ and $A\arr B \seqs C$.

 \item If $\derPure \imp { P } `a [x] y { Q } : `G |- `D $, then there exists $x{:}C \ele `G$ with $C \ele \Types$, and $A, B$ such that $\derX P : `G |- `a{:}A,`D $, $\derPure Q : `G,x{:}B |- `D $ and $C \seq A\arr B$.

 \item \label{generation cut}
 If $\derPure \cut P `a + x Q : `G |- `D $, then there exists $A$ such that either $\derPure P : `G |- `a{:}A,`D $ and $\derX Q : `G,x{:}A |- `D $ or $\derX P : `G |- `a{:}A,`D $ and $\derPure Q : `G,x{:}A |- `D $ or $\derPure P : `G |- `a{:}A,`D $ and $\derPure Q : `G,x{:}A |- `D $.

 \end{enumerate}
 \end{lemma}

 \begin{proof}
Straightforward. \QED
 \end{proof}

In fact, by Lemma~\ref{pure generation lemma}\ref{generation cut}, the derivation for a cut is shaped like either:
 \[
 \Inf	[\Cut]
	{ \Inf	[\intR]
		{ \InfBox{ \derPure P : `G |- `a{:}C_i,`D }
		  \quad (\forall i \ele \n)
		}
		{ \derX P : `G |- `a{:}\CoI{n},`D }
	  \dquad
	  \InfBox{ \derPure Q : `G,x{:}\CoI{n} |- `D }
	}
	{ \derPure \cut P `a + x Q : `G |- `D }
 \]
 \[
 \Inf	[\Cut]
	{ \InfBox{\derPure P : `G  |- `a{:}\CoU{n},`D } 
	  \dquad
	  \Inf	[\unL]
		{ \InfBox{ \derPure Q : `G,x{:}C_i |- `D }
		  \quad (\forall i \ele \n)
		}
		{ \derX Q : `G,x{:}\CoU{n} |- `D }
	}
	{ \derPure \cut P `a + x Q : `G |- `D }
 \]
 \[
 \Inf	[\Cut),(A\ele \Tproper]
	{ \InfBox{ \derPure P : `G  |- `a{:}C,`D }
	  \dquad
	  \InfBox{ \derPure Q : `G,x{:}C |- `D }
	}
	{ \derPure \cut P `a + x Q : `G |- `D }
 \]

\Comment{
 \begin{lemma}[Generation Lemma] \label{Gen Lemma}
If $\derX P : `G |- `D $, then either
 \begin{enumerate}
 \item $\derPure P : `G |- `D $, or
 \item the derivation ends with $(\unL)$ and there exists $y,n$ such that, for every $i \ele \n$ there exist $A_i, `G_i,`D_i$ such that $\derPure P : `G_i, y{:}A_i |- `D_i $, and $`G = \inter_{n}`G_i, y{:}\AoUn$, and $`D = \union_{n}`D_i$, or
 \item the derivation ends with $(\intR)$ and there exists $`a,n$ such that, for every $i \ele \n$ there exist $A_i, `G_i,`D_i$ such that $\derPure P : `G_i |- `a{:}A_i,`D_i $, and $`G = \inter_{n}`G_i$, and $`D = `a{:}\AoIn, \union_{n}`D_i$.
 \end{enumerate}
 \end{lemma}

 \begin{proof}
Easy. \QED
 \end{proof}
}

 \begin{example}
The following results are pure derivations; notice that, since $\Bottom$ and $\Top$ are proper types, they can be used on the right and on the left, respectively, without disturbing the purity of contexts.
 \[
 \Inf	[\arrR]
	{ \Inf	
		{ \Inf	[\arrL]
			{ \Inf	[\Bottom]
				{ \derX \caps<x,`a> : {x{:}\Bottom} |- {`a{:}\Bottom} }
			 \Inf	
				{ \derPure \caps<z,`b> : z{:}A |- `b{:}A }
			}
			{ \derPure \imp \caps<x,`a> `a [y] z \caps<z,`b> : x{:}\Bottom, y{:}\Bottom\arr A |- `b{:}A }
		}
		{ \derPure \exp x {\imp \caps<x,`a> `a [y] z \caps<z,`b> } `b . `g : y{:}\Bottom\arr A |- {`g{:}\Bottom\arr A} }
	}
	{ \derPure \exp y {\exp x {\imp \caps<x,`a> `a [y] z \caps<z,`b> } `b . `g } `g . `d  : {} |- {`d{:}(\Bottom\arr A)\arr \Bottom\arr A} }
\] 
\[
\Inf	[\arrR]
	{ \Inf	[\arrR]
		{ \Inf	
			{ \derPure \caps<x,`g> : {x{:}A, y{:}\Top} |- `g{:}A }
		}
		{ \derPure \exp y \caps<x,`g> `g . `b : x{:}A |- `b{:}\Top\arr A }
	}
	{ \derPure \exp x { \exp y \caps<x,`g> `g . `b } `b . `a : {\emptyset} |- `a{:}A\arr \Top\arr A }
 \]
Notice that these nets correspond to $\SemL{`l yx.yx}{`d}$ and $\SemL{`l xy.x}{`a}$, respectively.
 \end{example}

We now show that our notion of context assignment is closed for \emph{renaming cuts}:

 \begin{lemma} \label{closed for right renaming}
 \begin{enumerate}
 \firstitem \label{activated part}
If $\derX \cutL P `a + x \caps<x,`b> : `G |- `D $, then $\derX P[`b/`a] : `G |- `D $.
 \item \label{unactivated part}
If $\derX \cut P `a + x \caps<x,`b> : `G |- `D $, then $\derX P[`b/`a] : `G |- `D $.
 \end{enumerate}
 \end{lemma}

 \begin{proof}

Simultaneously by coinduction on reduction paths, where we focus on the structure of $P$.

 \begin{enumerate}
 \item 

We only show two interesting parts.

 \begin{widedescription}

\Comment{
 \item[$P= \caps<y,`a>$] Then $\cutL \caps<y,`a> `a + x \caps<x,`b> \red \cut  \caps<y,`a> `a + x \caps<x,`b> $.
Immediate, since active and inactive cuts are typed with the same rules.

 \item[$P= \caps<y,`g>$, with $`g \not= `a$]
Then $ \cutL \caps<y,`g> `a + x \caps<x,`b> \red \caps<y,`b> $.

 \begin{wideitemize}

\item[\ref{intersection case}] \correct 
 \[ 
 \Inf	
	{ \Inf	[\intI]
		{ \Inf	
			{ \derPure \caps<y,`g> : `G\inter y{:}A |- `a{:}C_i,`g{:}A\union `D }
		  \quad (\forall i \ele \n)
		}
		{ \derX \caps<y,`g> : `G\inter y{:}A |- `a{:}\CoI{n},`g{:}A\union `D }
	  \quad
	  \Inf{ \derX \caps<x,`b> : `G,x{:}{\CoIn} |- `b{:}C_j\union`D }
	}
	{ \derPure \cutL \caps<y,`g> `a + z \caps<x,`b> : `G\inter y{:}A |- `g{:}A\,\union\,`b{:}C_j\,\union\, `D }
 \]
 \[
\Inf	{ \derPure \caps<y,`b> : `G\inter y{:}A |- `g{:}A\,\union\,`b{:}C_j\,\union\, `D }
 \]

\item[\ref{union case}] \correct
 \[
\Inf	
	{ \Inf	
		{ \derXN \caps<y,`g> : `G\inter y{:}A |- `a{:}\CoUn,`g{:}A\union `D }
	  \quad
	  \Inf	[\unL]
		{ \Inf{ \derXN \caps<x,`b> : `G,x{:}C_i |- `b{:}C_i\union `D }
		  \quad (\forall i \ele \n)
		}
		{ \derXN \caps<x,`b> : `G,x{:}{\CoUn} |- `b{:}\CoUn\, \union `D }
	}
	{ \derXN \cutL \caps<y,`g> `a + x \caps<x,`b> : `G\inter y{:}A |- `g{:}A\,\union\,`b{:}\CoUn\,\union `D }
 \]
 \[
\Inf	
	{ \derXN \caps<y,`g> : `G\inter y{:}A |- `g{:}A\,\union\, `g{:}\CoUn\,\union `D }
 \]

\item[\ref{pure case}] \correct
 \[
\Inf	
	{ \Inf	
		{ \derXN \caps<y,`g> : `G\inter y{:}A |- `a{:}C,`g{:}A\union `D }
	  \quad
	  \Inf	{ \derXN \caps<x,`b> : `G,x{:}C |- `b{:}C\union `D }
	}
	{ \derXN \cutL \caps<y,`g> `a + x \caps<x,`b> : `G\inter y{:}A |- `g{:}A\,\union\,`b{:}C\,\union `D }
 \]
 \[
\Inf	
	{ \derXN \caps<y,`g> : `G\inter y{:}A |- `g{:}A\,\union\,`b{:}C\,\union `D }
 \]

 \end{wideitemize}

}

\item[$P = \exp y Q `d . `a $]
Then $\cutL {\exp y Q `d . `a } `a + x \caps<x,`b> \red \cut { \exp x {\cutL Q `a + x \caps<x,`b> } `d . `g } `g + x \caps<x,`b> $. 

 \begin{wideitemize}

\item[\ref{intersection case}] 
Suppose the derivation for $\exp y Q `d . `a $ ends with $(\intR)$:
 \[ \kern-5mm
 \Inf	
	{
	 \Inf	[\intR]
		{ \Inf	
			{ \InfBox{\D^i_1}
				{ \derX Q : `G,y{:}A_i |- `d{:}B_i,`a{:}A_i\arr B_i,`D } 
			}
			{ \derPure \exp y Q `d . `a : `G |- `a{:}A_i\arr B_i,`D } 
		  (\forall i \ele \n)
		}
		{ \derX \exp y Q `d . `a : `G |- `a{:}\int{n}(A_i\arr B_i),`D } 
	  ~
	  \Inf	{ \derPure \caps<x,`b> : `G,x{:}\int{n}(A_i\arr B_i) |- `b{:}A_j\arr B_j\,\union`D }
	}
	{ \derPure \cutL {\exp y Q `d . `a } `a + x \caps<x,`b> : `G |- `b{:}A_j\arr B_j\,\union`D }
 \]
By induction, we have $\derX Q[`g/`a] : `G,y{:}A_i |- `d{:}B_i,`g{:}A_i\arr B_i,`D $, for all $i \ele \n$, and we can construct:
 \[  \kern30mm
 \Inf	
	{
\multiput(10,-2)(0,7){2}{.}
\raise.5\RuleH\hbox to 4cm{\kern-35mm  
	 \Inf	[\intR]
		{ \Inf	
			{ \InfBox{\D^i_1}
				{ \derX Q[`g/`a] : `G,y{:}A_i |- `d{:}B_i,`g{:}A_i\arr B_i,`D } 
			}
			{ \derPure \exp y \,Q[`g/`a]\, `d . `g : `G |- `g{:}A_i\arr B_i,`D } 
		  (\forall i \ele \n)
		}
		{ \derX \exp y \,Q[`g/`a]\, `d . `g : `G |- `g{:}\int{n}(A_i\arr B_i),`D } 
}
	  \Inf	{ \derPure \caps<x,`b> : `G,x{:}\int{n}(A_i\arr B_i) |- `b{:}A_j\arr B_j\,\union`D }
	}
	{ \derPure \cut {\exp y \,Q[`g/`a]\, `d . `g } `g + x \caps<x,`b> : `G |- `b{:}A_j\arr B_j\,\union`D }
 \]
(Notice that, if $`a$ does not occur free in $Q$, we can add $`g{:}A_i\arr B_i$ by weakening.)

The proof finishes by part \ref{unactivated part} with $\derX  (\exp y \,Q[`g/`a]\, `d . `g )[`b/`g] : `G |- `b{:}A_j\arr B_j\,\union`D $ which we can abbreviate by $ \derX (\exp y Q `d . `a )[`b/`a] : `G |- `b{:}A_j\arr B_j\,\union`D $.

The proof is similar for the case that the right-hand side finished with $(\unL)$, or when both are pure.

 \end{wideitemize}

\item[$P = \cutL P_1 `g + y P_2 $]
In order to run the outermost cut in $ \cutL {\cutL P_1 `g + y P_2 } `a + x \caps<x,`b> $, first we need to propagate the innermost first.
For example, let $P_1 = \exp z P'_1 `r . `s $ (assuming $`s \not=`a$), then 
 \[ \begin{array}{ll}
\cutL { \cutL { \exp z P'_1 `r . `s } `g + y P_2 } `a + x \caps<x,`b> & \red 
\\
\cutL { \exp z { \cutL P'_1 `g + y P_2 } `r . `s } `a + x \caps<x,`b> & \red 
\\
\exp z { \cutL{ \cutL P'_1 `g + y P_2 } `a + x \caps<x,`b> } `r . `s 
 \end{array} \]
and, by induction $ \derX \exp z { \cutL P'_1 `g + y P_2 }[`b/`a]\, `r . `s : `G |- `D $, so also $ \derX ( \exp z { \cutL P'_1 `g + y P_2 } `r . `s ) [`b/`a] : `G |- `D $.
Now, since $ ( \cutL { \exp z P'_1 `r . `s } `g + y P_2 ) [`b/`a] \red ( \exp z { \cutL P'_1 `g + y P_2 } `r . `s ) [`b/`a] $, by Theorem~\ref{witness expansion}, we also have $ \derX ( \cutL { \exp z P'_1 `r . `s } `g + y P_2 ) [`b/`a] : `G |- `D $.

\newpage
All other cases are dealt with similarly.

 \end{widedescription}

 \item
We distinguish the following cases (the first two consider when $`a$ is introduced in $P$):

 \begin{widedescription}
%
%
 \item[$P = \caps<y,`a>$] Then $\cut \caps<y,`a> `a + x \caps<x,`b> \red \caps<y,`b>$.
Easy, considering the three possibilities for the derivation for the cut.
\Comment{
We consider the three cases:

\begin{wideitemize}
\item [\ref{intersection case}] \correct
\[  
 \Inf	
	{ \Inf	[\intR]
		{ \Inf	{ \derPure \caps<y,`a> : `G\inter y{:}C_i |- `a{:}C_i,`D }
		  \quad 
		  (\forall i \ele \n)
		}
		{ \derX \caps<y,`a> : `G\inter y{:}\CoI{n} |- `a{:}\CoI{n},`D }
	  \Inf	{ \derPure \caps<x,`b> : `G,x{:}\CoI{n} |- `b{:}C_j\union `D }
	}
	{ \derPure \cut \caps<y,`a> `a + x \caps<x,`b> : `G\inter y{:}\CoI{n} |- `b{:}C_j\union `D }
 \]
Notice that $C_i \ele \Tproper$ for $\iotn$; by rule $(\Ax)$, we can derive:
 \[ 
\Inf	{ \derPure \caps<y,`b>  : `G\inter y{:}\CoI{n} |- `b{:}C_j\union `D }
 \]

\item[\ref{union case}] \correct
\[  
 \Inf	
	{ \Inf	{ \derPure \caps<y,`a> : `G\inter y{:}C_j |- `a{:}\CoU{n},`D }
	  \Inf	[\unL]
		{ \Inf	{ \derPure \caps<x,`b> : `G,x{:}C_i |- `b{:}C_i\union `D }
		  \quad 
		  (\forall i \ele \n)
		}
		{ \derX \caps<x,`b> : `G,x{:}\CoU{n} |- `b{:}\CoU{n}\union `D }
	}
	{ \derPure \cut \caps<y,`a> `a + x \caps<x,`b> : `G\inter y{:}C_j |- `b{:}\CoU{n}\union `D }
 \]
Notice that $C_i \ele \Tproper$ for $\iotn$; by rule $(\Ax)$, we can derive:
 \[ 
\Inf	{ \derPure \caps<y,`b>  : `G\inter y{:}C_j |- `b{:}\CoU{n}\union `D }
 \]

\item [\ref{pure case}] \correct
Easy, \Comment{
\[  
 \Inf	
	{ \Inf	{ \derPure \caps<y,`a> : `G\inter y{:}A |- `a{:}A,`D }
	  \quad
	  \Inf	{ \derPure \caps<x,`b> : `G,x{:}A |- `b{:}A\union `D }
	}
	{ \derPure \cut \caps<y,`a> `a + x \caps<x,`b> : `G\inter y{:}A |- `b{:}A\union `D }
 \]
Then $A \ele \Tproper$, and we can derive:
 \[
\Inf	{ \derPure \caps<y,`b> : `G\inter y{:}A |- `b{:}A\union `D }
 \]
}

\end{wideitemize}

}


 \item[$P =  \exp y P' `g . `a $, with $`a$ not free in $P'$] 
 Then $ \cut { \exp y P' `g . `a } `a + x \caps<x,`b> \red \exp y P' `g . `b $; we can assume that $`a$ does not occur in $`D$, but cannot assume that for $`b$.
By Lemma~\ref{pure generation lemma}\ref{generation cut}, we have three cases: 

\begin{wideitemize}

 \item[\ref{intersection case}] \correct
 \[ 
\Inf	
	{ \Inf	[\intR]
		{ \Inf	
			{ \InfBox{\D_i}
				{ \derX P' : `G,y{:}A_i |- `g{:}B_i,`D } 
			}
			{ \derPure \exp y P' `g . `a : `G |- `a{:}A_i\arr B_i,`D }
		  \quad (\forall i \ele \n)
		}
		{ \derX \exp y P' `g . `a : `G |- `a{:}\int{i}(A_i\arr B_i),`D }
	  \Inf	
		{ \derPure \caps<x,`b> : `G,x{:}\int{i}(A_i\arr B_i) |- `b{:}A_j\arr B_j\,\union `D }
	}
	{ \derPure \cut { \exp y P' `g . `a } `a + x \caps<x,`b> : `G |- `b{:}A_j\arr B_j\,\union `D }
 \]
For the right-hand side we can construct:
 \[ 
 \Inf	
	{ \InfBox{\D_j}
		{ \derX P' : `G,y{:}A_j |- `g{:}B_j,`D } 
	}
	{ \derPure \exp y P' `g . `b : `G |- `b{:}(A_j\arr B_j)\,\union `D }
 \]

 \item[\ref{union case}] Let $C = \CoU{n}$.
 \[ \kern-4.7cm 
\Inf	
	{
	  \Inf	[\Weak]
	  	{ \Inf	
			{ \InfBox{\D}
				{ \derX P' : `G,y{:}A |- `g{:}B,`D } 
			}
			{ \derX \exp y P' `g . `a : `G |- `a{:}A\arr B,`D }
		}
		{ \derX \exp y P' `g . `a : `G |- `a{:}(A\arr B)\union C,`D }
\raise1.4\RuleH\hbox to 5cm{\kern-17mm  
	  \Inf	[\unL]
		{ \Inf	{(\forall i \ele \n)}
			{ \derPure \caps<x,`g>  : `G,x{:}C_i |- `b{:}C_i\union `D }
		  \Inf	{\derPure \caps<x,`g>  : `G,x{:}A\arr B |- `b{:}A\arr B\union `D } 
		}
		{ \derX \caps<x,`g> : `G,x{:}(A\arr B)\union C |- `b{:}(A\arr B)\union C\, \union `D }
}
\multiput(-10,-1)(0,7){4}{.}
	}
	{ \derPure \cut {\exp y P' `g . `a } `a + x \caps<x,`b> : `G |-  `b{:}(A\arr B)\union C\, \union `D }
 \]
Notice, since $`a \not\in \fp(P')$, $`a{:}C$ is added by weakening.
Then for the right-hand side we can construct:
 \[ 
\Inf	[\Weak]
	{ \Inf	
		{ \InfBox{\D}
			{ \derX P' : `G,y{:}A |- `g{:}B,`D } 
		}
		{ \derPure \exp y P' `g . `b : `G |- `b{:}(A\arr B)\union `D }
	}
	{ \derPure \exp y P' `g . `b : `G |- `b{:}(A\arr B)\union C\, \union `D }
 \]

\item[\ref{pure case}] \correct
\[ \Inf	
	{ \Inf	
		{ \InfBox{\D}
			{ \derX P' : `G,y{:}A |- `g{:}B,`D } 
		}
		{ \derPure \exp y P' `g . `a : `G |- `a{:}A\arr B,`D }
	  \Inf	
		{ \derPure \caps<x,`b> : `G,x{:}A\arr B |- `b{:}A\arr B\union `D }
	}
	{ \derPure \cut { \exp y P' `g . `a } `a + x \caps<x,`b> : `G |- `b{:}A\arr B\union `D }
 \]
Notice that then
 \[ 
\Inf	[\arrR]
	{ \InfBox{\D}
		{ \derX P' : `G,y{:}A |- `g{:}B,`D } 
	}
	{ \derX \exp y P' `g . `b : `G |- `b{:}A\arr B\union `D }
 \]

\end{wideitemize}


\item[$P$ does not introduce $`a$]
Then $\cut P `a + x \caps<x,`b> \red \cutL P `a + x \caps<x,`b> $, and the proof follows by part \ref{activated part}

\end{widedescription}

\end{enumerate}

 \end{proof}

Similarly (dually), we can prove:

 \begin{lemma} \label{closed for left renaming}
 \begin{enumerate}
 \item 
If $\derX \cutR \caps<y,`a> `a + x Q : `G |- `D $, then $\derX Q[y/x] : `G[y/x] |- `D $.
 \item 
If $\derX \cut \caps<y,`a> `a + x Q : `G |- `D $, then $\derX Q[y/x] : `G[y/x] |- `D $.
 \end{enumerate}
 \end{lemma}

We will now show that we can simulate intersection and union elimination.

 \begin{lemma} \label{intersection elimination}
 \begin{enumerate}
 \item
If $\derX P : `G |- `a{:}A\inter B,`D $ then $\derX P : `G |- `a{:}A,`D $ and $\derX P : `G |- `a{:}B,`D $.
 \item
If $\derX P : `G,x{:}A\union B |- `D $ then $\derX P : `G,x{:}A |- `D $ and $\derX P : `G,x{:}B |- `D $.
 \end{enumerate}
 \end{lemma}
 \begin{proof}
Assume, without loss of generality, that $A,B \ele \Tproper$.

 \begin{enumerate}

 \item
Let $`b$ be fresh. Notice that we can construct: 
\[
\Inf	{ \derX P : `G |- `a{:}A\inter B,`D
	  \quad
	  \Inf { \derX \caps<x,`b> : x{:}A\inter B |- `b{:}A }
	}
	{  \derX \cut P `a + x \caps<x,`b> : `G |- `b{:}A,`D }
\]
Then, by Lemma~\ref{closed for right renaming}, we have $ \derX P[`b/`a] : `G |- `b{:}A,`D $, so also $ \derX P : `G |- `a{:}A,`D $.

 \item
Let $y$ be fresh. Notice that we can construct: 
\[
\Inf	{ \Inf { \derX \caps<y,`a> : y{:}A |- `a{:}A\union B }
	  \quad
	  \derX Q : `G,x{:}A\union B |- `D
	}
	{  \derX \cut \caps<y,`a> `a + x Q : `G,y{:}A |- `D }
\]
Then, by Lemma~\ref{closed for left renaming}, we have $ \derX Q [y/x] : `G,y{:}A |- `D $, so also $ \derX Q : `G,x{:}A |- `D $.

 \end{enumerate}
\end{proof}

\def \Reduct{\cut { \exp y \caps<y,`m> `m . `g } `g + x { \cut {
\cut \caps<x,`g> `g + z { \imp \caps<x,`d> `d [z] w \caps<w,`e> }
} `e + v \caps<v,`a> } }


\Comment{
 \begin{example} \label{example delta i typed}
We will show how the net $\SemL{(`l x.xx)(`l y.y)}{`a}$
can be typed, using the rules above, and type some of the reducts as well.
Notice that
 \[ \begin{array}{rcl}
\SemL{(`l x.xx)(`l y.y)}{`a} &=&
\cut	{ \exp	x { \cut \caps<x,`g> `g + z { \imp \caps<x,`d> `d [z] w \caps<w,`e> } } `e . `b } `b + z { \imp { \exp y \caps<y,`m> `m . `g } `g [z] v \caps<v,`a> }
 \end{array} \]
We start with $ \derX \SemL{`l x.xx}{`b} : {\emptyset} |- {`b{:}((C\arr C)\arr C\arr C) \inter (C\arr C)\arr C\arr C} $ (notice that this net is exactly $\exp x { \cut \caps<x,`g> `g + z { \imp \caps<x,`d> `d [z] w \caps<w,`e> } } `e . `b $).
 \[
\Inf	[\arrR]
	{ \Inf	[\Cut]
		{ \Inf	
			{ \derX \caps<x,`g> : x{:}(C\arr C)\arr C\arr C |- `g{:}(C\arr C)\arr C\arr C }
		 \quad
\raise1.25\RuleH\hbox to 3cm{\kern -5cm
		 \Inf [\arrL]
			{ \Inf	
				{ \derX \caps<x,`d> : x{:}C\arr C |- `d{:}C\arr C }
			 \quad
			 \Inf	
				{ \derX \caps<w,`e> : w{:}C\arr C |- `e{:}C\arr C }
			}
			{ \derX \imp \caps<x,`d> `d [z] w \caps<w,`e> : x{:}C\arr C,z{:}(C\arr C)\arr C\arr C |- `e{:}C\arr C }
}
\multiput(-40,-1)(0,7){4}{.}
		}
		{ \derX \cut \caps<x,`g> `g + z {\imp \caps<x,`d> `d [z] w \caps<w,`e> } : x{:}((C\arr C)\arr C\arr C) \inter (C\arr C) |- `e{:}C\arr C }
	}
	{ \derX \exp x { \cut \caps<x,`g> `g + z { \imp \caps<x,`d> `d [z] w \caps<w,`e> } } `e . `b : {\emptyset} |- { `b{:}((C\arr C)\arr C\arr C) \inter (C\arr C)\rightarrow C\arr C} }
 \]
Now we derive $ \derX \imp { \exp y \caps<y,`m> `m . `g } `g [z] v \caps<v,`a> : z{:}((C\arr C)\arr C\arr C) \inter (C\arr C)\rightarrow C\arr C |- {\emptyset} $.
 \[ \kern1cm
\Inf	[\arrL]
	{
\multiput(60,-1)(0,7){2}{.}
\raise.75\RuleH\hbox to 7cm{\kern -2cm
	 \Inf	[\intR]
		{ \Inf	[\arrR]
			{ \Inf	
				{ \derX \caps<y,`m> : y{:}C\arr C |- `m{:}C\arr C }
			}
			{ \derX \exp y \caps<y,`m> `m . `g : {\emptyset} |- `g{:}(C\arr C)\arr C\arr C }
		 \Inf	[\arrR]
			{ \Inf	
				{ \derX \caps<y,`m> : y{:}C |- `m{:}C }
			}
			{ \derX \exp y \caps<y,`m> `m . `g : {\emptyset} |- `g{:}C\arr C }
		}
		{ \kern-2cm \derX \exp y \caps<y,`m> `m . `g : {\emptyset} |- `g{:}((C\arr C)\arr C\arr C) \inter (C\arr C) } 
}
	 \quad
	 \Inf	
		{\derX \caps<v,`a> : v{:}C\arr C |- `a{:}C\arr C }
	}
	{ \derX \imp { \exp y \caps<y,`m> `m . `g } `g [z] v \caps<v,`a> : z{:}((C\arr C)\arr C\arr C) \inter (C\arr C)\rightarrow C\arr C |- `a{:}C\arr C }
 \]

The result $ \derX \SemL{(`l x.xx)(`l y.y)}{`a} : {\emptyset} |- `a{:}C\arr C $ now follows by applying the rule $(\Cut)$.
Notice that, although the cut is typed using a proper type, in the second sub-derivation the plug $`g$ carries a true intersection type; this is necessary to be able to express that the net that will be connected as a mediator to the socket $z$ has the appropriate type.

To derive $ \derX {\Reduct} : {\emptyset} |- `a{:}C\arr C $, we first derive:
\Comment{
 \[
\Inf	[\intR]
	{ \Inf	
		{ \Inf	
			{ \derX \caps<y,`m> : y{:}C\arr C |- `m{:}C\arr C } }
		{ \derX \exp y \caps<y,`m> `m . `g : {\emptyset} |- `g{:}(C\arr C)\arr C\arr C }
	 \Inf	
		{ \Inf	
			{ \derX \caps<y,`m> : y{:}C |- `m{:}C }
		}
		{ \derX \exp y \caps<y,`m> `m . `g : {\emptyset} |- `g{:}C\arr C }
	}
	{ \derX \exp y \caps<y,`m> `m . `g : {\emptyset} |- `g{:}((C\arr C)\arr C\arr C) \inter C\arr C }
 \]
}
{
 \[
\Inf	
	{ \Inf	
		{ \Inf	
			{ \derX \caps<x,`g> : x{:}(C\arr C)\arr C\arr C |- `g{:}(C\arr C)\arr C\arr C }
\multiput(40,-1)(0,7){4}{\makebox(0,0){.}}
\raise1.25\RuleH\hbox to 3cm{\kern -4cm
		 \Inf [\arrL]
			{ \Inf	
				{ \derX \caps<x,`d> : x{:}C\arr C |- `d{:}C\arr C }
			 \Inf	
				{ \derX \caps<w,`e> : w{:}C\arr C |- `e{:}C\arr C }
			}
			{ \derX \imp \caps<x,`d> `d [z] w \caps<w,`e> : x{:}C\arr C,z{:}(C\arr C)\arr C\arr C |- `e{:}C\arr C }
}
		}
		{ \derX \cut \caps<x,`g> `g + z { \imp \caps<x,`d> `d [z] w \caps<w,`e> } : x{:}((C\arr C)\arr C\arr C) \inter (C\arr C) |- `e{:}C\arr C }
\multiput(40,-1)(0,7){4}{\makebox(0,0){.}}
\raise1.15\RuleH\hbox to 3cm{\kern -1.5cm
	 \Inf	{\derX \caps<v,`a> : v{:}C\arr C |- `a{:}C\arr C }
}
	}
	{ \derX \cut { \cut \caps<x,`g> `g + z { \imp \caps<x,`d> `d [z] w \caps<w,`e> } } `e + v \caps<v,`a> : x{:}((C\arr C)\arr C\arr C) \inter (C\arr C) |- `a{:}C\arr C }
 \]
}

Notice that these derivations are (naturally) composed out of subderivations of the first construction.
Again, applying rule $(\Cut)$ to $\D$, we get the result.

Now the latter net $\cut { \cut \caps<x,`g> `g + z { \imp \caps<x,`d> `d [z] w \caps<w,`e> } } `e + v \caps<v,`a> $ reduces in two (renaming) steps to the net $\imp \caps<x,`d> `d [x] w \caps<w,`a> $, and we obtain:
{
 \[
\Inf	
	{ \Inf	[\intR]
		{ \Inf	
			{ \Inf	
				{ \derX \caps<y,`m> : y{:}C |- `m{:}C }
			}
			{ \derX \exp y \caps<y,`m> `m . `g : {\emptyset} |- `g{:}C\arr C }
		 \kern-2mm
		 \Inf	
			{ \Inf	
				{ \derX \caps<y,`m> : y{:}C\arr C |- `m{:}C\arr C }
			}
			{ \derX \exp y \caps<y,`m> `m . `g : {\emptyset} |- {`g{:}(C\arr C)\arr C\arr C} }
		}
		{ \derX \exp y \caps<y,`m> `m . `g : {\emptyset} |- {`g{:}(C\arr C)\inter ((C\arr C)\arr C\arr C)} }
\multiput(40,-1)(0,7){10}{.}
\raise3.25\RuleH\hbox to 3cm{\kern -7cm
	 \Inf	
		{ \Inf	
			{ \derX \caps<x,`d> : x{:}C\arr C |- `d{:}C\arr C }
		 \Inf	
			{ \derX \caps<w,`a> : w{:}C\arr C |- `a{:}C\arr C }
		}
		{ \derX \imp \caps<x,`d> `d [x] w \caps<w,`a> : x{:}(C\arr C) \inter ((C\arr C)\arr C\arr C) |- `a{:}C\arr C }
}
	}
	{ \derX \cut { \exp y \caps<y,`m> `m . `g } `g + x { \imp \caps<x,`d> `d [x] w \caps<w,`a> } : {\emptyset} |- `a{:}C\arr C }
 \]
}

This now reduces to
 \[
\cut { \exp y \caps<y,`m> `m . `g } `g + z { \imp {
\cutR { \exp y \caps<y,`m> `m . `g } `g + x \caps<x,`d>
 } `d [z] w {
\cutR { \exp y \caps<y,`m> `m . `g } `g + x \caps<w,`a>
} }
\]
which is typeable as follows (to save space, we abbreviate $\exp y \caps<y,`m> `m . `g $ by $\SemL{`l y.y}{`g}$; notice the role of the type $\Top$ here, and how the derivation for the intersection type for $\SemL{`l y.y}{`g}$ from the derivation above distributes)
{
 \[ \kern-5mm
\Inf	
	{
\multiput(20,-1)(0,7){13}{.}
\raise4.25\RuleH\hbox to 1.5cm{
	 \Inf	
		{ \Inf	{ \derX \caps<y,`m> : y{:}C\arr C |- `m{:}C\arr C } }
		{ \derX \SemL{`l y.y}{`g} : {\emptyset} |- `g{:}(C\arr C)\arr C\arr C }
}
	 \Inf	
		{ \Inf	
			{ \Inf	
				{ \Inf	{ \derX \caps<y,`m> : y{:}C |- `m{:}C } }
				{\derX \SemL{`l y.y}{`g} : {\emptyset} |- `g{:}C\arr C }
			 \Inf { \derX \caps<x,`d> : x{:}C\arr C |- `d{:}C\arr C }
			}
			{ \derX \cutR \SemL{`l y.y}{`g} `g + x \caps<x,`d> : {\emptyset} |- `d{:}C\arr C }
\raise2.25\RuleH\hbox to 2cm{\kern-5cm
		 \Inf	
			{ \Inf	{ \derX \SemL{`l y.y}{`g} : {\emptyset} |- {`g{:}\Top} }
			 \Inf	{ \derX \caps<w,`a> : {w{:}C\arr C, x{:}\Top} |- `a{:}C\arr C }
			}
			{ \derX \cutR \SemL{`l y.y}{`g} `g + x \caps<w,`a> : w{:}C\arr C |- `a{: C\arr C } }
}
\multiput(0,-1)(0,7){7}{.}
		}
		{ \derX \imp { \cutR \SemL{`l y.y}{`g} `g + x \caps<x,`d> } `d [z] w { \cutR \SemL{`l y.y}{`g} `g + x \caps<w,`a> } : z{:}(C\arr C)\arr C\arr C |- `a{:}C\arr C }
	}
	{ \derX \cut \SemL{`l y.y}{`g} `g + z { \imp { \cutR \SemL{`l y.y}{`g} `g + x \caps<x,`d> } `d [z] w { \cutR \SemL{`l y.y}{`g} `g + x \caps<w,`a> } } : {\emptyset} |- `a{:}C\arr C }
 \]
}

This reduces to $\cut { \exp y \caps<y,`m> `m . `g } `g + z { \imp { \exp y \caps<y,`m> `m . `d } `d [z] w \caps<w,`a> } $, which we can type as follows:
{
 \[
\Inf	
	{ \Inf	
		{ \Inf	{ \derX \caps<y,`m> : y{:}C\arr C |- {`m{:}C\arr C} } }
		{ \derX \exp y \caps<y,`m> `m . `g : {\emptyset} |- {`g{:}(C\arr C)\arr C\arr C} }	 \Inf	
			{ \Inf	
				{ \Inf	{ \derX \caps<y,`m> : y{:}C |- `m{:}C } }
			{\derX \exp y \caps<y,`m> `m . `d : {\emptyset} |- {`d{:}C\arr C} }
		 \Inf	{ \derX \caps<w,`a> : w{:}C\arr C |- {`a{:}C\arr C} }
		}
		{ \derX \imp { \exp y \caps<y,`m> `m . `d } `d [z] w \caps<w,`a> : z{:}(C\arr C)\arr C\arr C |- {`a{:}C\arr C} }
	}
	{ \derX \cut { \exp y \caps<y,`m> `m . `g } `g + z { \imp { \exp y \caps<y,`m> `m . `d } `d [z] w \caps<w,`a> } : {\emptyset} |- {`a{:}C\arr C} }
 \]
}

This net now reduces to $\cut { \exp y \caps<y,`m> `m . `d } `d + y { \cut \caps<y,`m> `m + w \caps<w,`a> } $, which is typeable as follows:
 \[
\Inf	
	{ \Inf	
		{ \Inf	{ \derX \caps<y,`m> : y{:}C |- `m{:}C } }
		{\derX \exp y \caps<y,`m> `m . `d : {\emptyset} |- {`d{:}C\arr C} }
	 \Inf	
		{ \Inf	{ \derX \caps<y,`m> : y{:}C\arr C |- {`m{:}C\arr C} }
		  \Inf	{ \derX \caps<w,`a> : w{:}C\arr C |- {`a{:}C\arr C} }
		}
		{ \derX \cut \caps<y,`m> `m + w \caps<w,`a> : y{:}C\arr C |- {`a{:}C\arr C} }
	}
	{ \derX \cut { \exp y \caps<y,`m> `m . `d } `d + y { \cut \caps<y,`m> `m + w \caps<w,`a> } : {\emptyset} |- {`a{:}C\arr C} }
 \]

This net now reduces to $\exp y \caps<y,`m> `m . `a $, which is typeable by:
 \[
 \Inf	
	{ \Inf	{ \derX \caps<y,`m> : y{:}C |- `m{:}C } }
	{\derX \exp y \caps<y,`m> `m . `a : {\emptyset} |- {`a{:}C\arr C} }
 \]
which concludes this example.
 \end{example}
Part of the operations performed on derivations in this example will reappear in the proofs below (Theorem~\ref{witness reduction for CBN} and \ref{witness reduction for CBV}).

}

}

We can \Long{now }show that typeability is preserved by $\SemL{`.}{`a}$:

\begin{theorem}
If $\derI `G |- M : A $, then $\derX \SemL{M}{`a} : `G |- `a{:}A $.
\end{theorem}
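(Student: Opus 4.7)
\noindent
The plan is to proceed by induction on the structure of the derivation $\derI `G |- M : A $ in the intersection type assignment system for the $\LC$ of Definition~\ref{BCD}. Since $\SemL{\cdot}{`a}$ is defined compositionally on the structure of $M$ (Definition~\ref{lc to x}), each typing rule for $\LC$ will correspond to a small sub-derivation in $\X$ built from rules of Definition~\ref{intersection and union tas}. In every case, the conclusion will have the form $\derX \SemL{M}{`a} : `G |- `a{:}A $ with a \emph{singleton} context of plugs, which matches the fact that $\SemL{M}{`a}$ has exactly one free plug.

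For the base case $(\Ax)$ we have $M \equiv x$, $A$ arbitrary and $\SemL{x}{`a} = \caps<x,`a>$; the axiom rule of $\X$ yields $\derX \caps<x,`a> : `G, x{:}A |- `a{:}A $ directly. For $(\intI)$ and $(\intE)$, since $\SemL{M}{`a}$ does not depend on the type of $M$, the induction hypothesis combined with rule $(\intR)$ or $(\intE)$ of $\X$ respectively gives the result. For $(\arrI)$, where $M \equiv `lx.N$ and $A \equiv B\arr C$, the induction hypothesis gives $\derX \SemL{N}{`b} : `G, x{:}B |- `b{:}C $ for fresh $`b$; applying $(\arrR)$ produces $\derX \exp x \SemL{N}{`b} `b . `a : `G |- `a{:}B\arr C $, which is $\derX \SemL{`lx.N}{`a} : `G |- `a{:}B\arr C $ by definition.

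The main case is $(\arrE)$, where $M \equiv NL$ and we have derivations of $\derI `G |- N : B\arr A $ and $\derI `G |- L : B $; by induction we obtain $\derX \SemL{N}{`g} : `G |- `g{:}B\arr A $ and $\derX \SemL{L}{`b} : `G |- `b{:}B $ for fresh $`g,`b$. Combining the latter with the axiom $\derX \caps<y,`a> : y{:}A |- `a{:}A $ via rule $(\arrL)$ yields
\[ \derX \imp \SemL{L}{`b} `b [x] y \caps<y,`a> : `G, x{:}B\arr A |- `a{:}A , \]
(using that $x$ is fresh, so $`G \inter x{:}B\arr A = `G,x{:}B\arr A$), and then $(\Cut)$ with the derivation for $\SemL{N}{`g}$ produces $\derX \SemL{NL}{`a} : `G |- `a{:}A $ as required.

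I do not expect any real obstacle: the embedding was designed precisely to mirror lambda-term structure, and the intersection rules of $\X$ of Definition~\ref{intersection and union tas} directly simulate their $\LC$ counterparts. The only bookkeeping concern is the freshness of names introduced by $\SemL{\cdot}{`a}$, which is guaranteed by the definition of the encoding and by Barendregt's convention; in particular, the compatibility requirement implicit in the context combinations $`G \inter x{:}B\arr A$ in $(\arrL)$ is automatically satisfied.
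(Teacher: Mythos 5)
Your proof is correct and follows essentially the same route as the paper's: induction on the $\TurnI$-derivation, with the axiom case, the $(\arrR)$ case for abstraction, and the $(\arrL)$-plus-$(\Cut)$ construction for application matching the paper's case analysis exactly. The only cosmetic differences are that the paper decomposes the axiom type as $\AoI{n}$ and rebuilds it with $(\intR)$ (so that capsules carry proper types), and handles $(\intE)$ via its intersection-elimination lemma (proved through renaming cuts) rather than the $(\intE)$ rule you invoke directly; neither choice affects correctness.
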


\Long{
 \begin{proof}
By induction on the structure of derivations in $\TurnL$.
 \begin{description}

 \item [$\Ax$]
Then $M\equiv x$, and $`G = `G', x{:}A$, so $x \notin `G'$; let $A = \AoIn$.
We can construct
 \[ \begin{array}{c}
\Inf	[\intR]
	{ \Inf	
		{ \derX \SemL{x}{`a} : `G',x{:}A_i |- `a{:}A_i }
	  \quad (\forall i \ele \n)
	}
	{ \derX \SemL{x}{`a} : `G ',x{:}\AoI{n} |- `a{:}\AoI{n} }
 \end{array} \]

 \item [$\arrI$]
Then $M \equiv `lx.N$, $A = C\arr D$, and $\derI `G,x{:}C |- N : D $. Then $\D
\dcol \derX \SemL{N}{`b} : `G,x{:}C |- `b{:}D $ exists by induction, and we can construct:

 \[ \begin{array}{c}
\Inf[\arrR]
	{ \InfDerX  {\D} :: \SemL{N}{`b} : `G,x{:}C |- `b{:}D }
	{ \derX \exp x \SemL{N}{`b} `b . `a : `G |- {`a{:}C\arr D} }
 \end{array} \]
Notice that $\exp x \SemL{N}{`b} `b . `a = \SemL{`lx.N}{`a}$.
 \item [$\arrE$]
Then 
$M \equiv M_1M_2$, and there exists $B$ such that both $\derI `G |- M_1 : B\arr A $ and $\derI `G |- M_2 : B $.
By induction, both
$\D_1 \dcol \derX \SemL{M_1}{`g} : `G |- {`g{:}B\arr A} $, and
$\D_2 \dcol \derX \SemL{M_2}{`b} : `G |- `b{:}B $ exist, and we can construct:
 \[ \begin{array}{c}
 \Inf	[\Cut]
	{ \InfDerX  \D_1 :: \SemL{M_1}{`g} : `G |- {`g{:}B\arr A}
	\quad \quad
	 \Inf	[\arrL]
		{ \InfDerX  \D_2 :: \SemL{M_2}{`b} : `G |- `b{:}B
		\quad
		 \Inf	
			{ \derX \caps<y,`a> : y{:}A |- `a{:}A }
		}
		{ \derX \imp \SemL{M_2}{`b} `b [x] y \caps<y,`a> : `G,x{:}B\arr A |- `a{:}A }
	}
	{ \derX {\cut \SemL{M_1}{`g} `g + x { \imp \SemL{M_2}{`b} `b [x] y \caps<y,`a> } } : `G |- `a{:}A }
 \end{array} \]
Notice that $\SemL{M_1M_2}{`a} = \cut \SemL{M_1}{`g} `g + x { \imp \SemL{M_2}{`b} `b [x] y \caps<y,`a> } $, and that, by construction, $x,y \not\in `G$.

 \item[$\intI$]
Then $A = B\inter C$ (wlog, assume that $B,C$ are proper), and we have $\derI `G |- M : B $ and $\derI `G |- M : C $.
By induction, we have both $\derX \SemL{M}{`a} : `G |- `a{:}B $ and $\derX \SemL{M}{`a} : `G |- `a{:}C $, so, by rule $(\intR)$, also $\derX \SemL{M}{`a} : `G |- `a{:}B\inter C $.

 \item[$\intE$]
Then, wlog, we have $\derI `G |- M : A\inter B $.
By induction, $\derX \SemL{M}{`a} : `G |- `a{:}A\inter B $, so, by Lemma~\ref{intersection elimination}, also $\derX \SemL{M}{`a} : `G |- `a{:}A $.

\item [$\Top$]
Notice that $\derX \SemL{P}{`a} : `G |- {`a{:}\Top} $ by rule $(\Top)$.

 \end{description}
 \end{proof}

}


{ \def \DBwnameAndRule#1[#2]#3{#3}


\Long{\section{Preservance of types under expansion}}
\Short{\section{Witness expansion and reduction}}
 \label{WE}

One of the main properties of the intersection type assignment system is the perseverance of types under both subject reduction and subject expansion.
We will show the same results for our system for $\X$, but with restrictions.
We are able to show the witness expansion result for the notion of context assignment of Definition \ref{context assignment}, but for witness reduction, we will have to limit that notion.

\Long{In this section, we will show that witness expansion follows relatively easily.}

 \begin{theorem}[Witness expansion] \label{witness expansion}
Let $P \red Q$: if $\derPure Q : `G |- `D $ then $\derPure P : `G |- `D $.
 \end{theorem}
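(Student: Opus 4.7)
The plan is to proceed by induction on the reduction relation $P \red Q$, handling the compatibility closure in the usual way and then treating each base rewrite rule separately. The contextual cases reduce immediately to the induction hypothesis together with the generation lemma (Lemma \ref{pure generation lemma}), so the real work lies in analysing the individual reduction rules. For each rule, I will assume a proper derivation $\D \dcol \derPure Q : `G |- `D$ of the right-hand side and exhibit a proper derivation of the left-hand side with the same contexts.

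For the logical renaming rules $(\Cap)$, $(\Exp)$, $(\Imp)$ the analysis is purely syntactic: the right-hand side is a capsule, export or import, and by Lemma \ref{pure generation lemma} its derivation determines types that can be re-read as exactly what is needed to build two pure sub-derivations of the constituent sub-nets of the cut. These are then glued by rule $(\Cut)$, possibly preceded by $(\intR)$ or $(\unL)$ when the cut formula is a proper intersection/union. The activation rules $(\actL)$ and $(\actR)$ are trivial, since active and passive cuts share the same typing rule, and the garbage-style rules $(\Li)$, $(\Ri)$ are handled by Weakening (which holds, and is needed to reinsert the discarded subterm with an appropriate fresh type, using $\Top$ on the right and $\Bottom$ on the left when no actual occurrence survives).

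The interesting cases are the $(\Ins)$ rule and the propagation rules that duplicate a sub-net across an import or a cut, namely $(\Lii)$, $(\Liv)$, $(\Lv)$, $(\Riii)$, $(\Riv)$, $(\Rv)$. Consider $(\Riii)$: $\cutR P `a + x {\imp Q `g [x] y R } \red \cut P `a + v { \imp { \cutR P `a + x Q } `g [v] y { \cutR P `a + x R } }$. A derivation of the right-hand side contains two copies of $P$ typed with possibly different types $A\arr B$, $C$, $D$ on $`a$, together with matching types on $x$ in $Q$ and $R$. The reconstruction collects these typings of $P$ under $(\intR)$ into a single statement $\derX P : `G |- `a{:}(A\arr B)\inter C\inter D,`D$, while the resulting intersection on $x$ in $\imp Q `g [x] y R $ is produced by applying $(\arrL)$ to the unchanged derivations of $Q$ and $R$. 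This was already sketched informally in the motivating discussion preceding Definition \ref{context assignment}; formalising it is a matter of reading off the required premises from $\D$ via Lemma \ref{pure generation lemma} and rearranging. The rule $(\Lii)$ is handled dually, using $(\unL)$ to gather the different typings of the duplicated $Q$ at the socket $x$.

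The main obstacle will be the case $(\Ins)$, because both propagation phenomena occur simultaneously and on both sides of the emerging cut, and because there are two possible results on the right-hand side that must be considered. There, the derivation of the right-hand side is analysed in the three shapes provided by Lemma \ref{pure generation lemma}(\ref{generation cut}) applied to its outer cut, and in each shape one recovers a derivation of $P$ on plug $`b$ and of $R$ on socket $z$ whose common type is the interface between the export and the import on the left. Combining these with the derivation of $Q$ on plug $`g$ and socket $y$ via an application of $(\Cut)$ followed by $(\arrR)$, $(\arrL)$, and a further $(\Cut)$ reconstructs the original $\cut { \exp y P `b . `a } `a + x { \imp Q `g [x] z R } $. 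Throughout, the management of the contexts requires that intersections introduced on plugs via $(\intR)$ and unions introduced on sockets via $(\unL)$ be used in exactly the symmetric pattern that Definition \ref{context assignment} allows, which is why the system was designed with rules of multiplicative flavour and why the proof goes through for the \emph{proper} fragment.
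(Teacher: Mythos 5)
Your proposal follows essentially the same route as the paper's proof: induction on the reduction rules, the generation lemma to decompose the derivation of the reduct, $(\intR)$ and $(\unL)$ to collect the multiple typings of duplicated sub-nets at the cut plug/socket (with the matching intersection or union on the other side arising from the multiplicative context combination in $(\arrL)$ and $(\Cut)$), and $\Top$/$\Bottom$ plus weakening for the erasing rules $(\Li)$ and $(\Ri)$. One small misjudgment rather than a gap: $(\Ins)$ is in fact one of the \emph{easy} cases, since no sub-net is duplicated there --- the three sub-derivations for $P$, $Q$ and $R$ are simply rearranged, and it is irrelevant whether the interface types are intersections or unions; the genuinely delicate cases are exactly the duplicating propagation rules, which you already handle correctly.
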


\Long{
 \begin{proof}

By induction on the definition of $\red$, where we focus on the rules: the proof consists of giving, for each rule $\emph{Lhs} \red \emph{Rhs}$, the 'minimal' derivation for the right-hand side, and that, using the restrictions that poses, we can type the left-hand side.
We will focus on the last derivation rule applied; whenever this is either rule $(\Top)$ or $(\Bottom)$, the result is easy.
If the derivation ends with 
 \[ \Inf	[\intR]
	{ \InfBox{\D_i}{ \derX \emph{Rhs} : `G |- `a{:}A_i,`D } \quad (\forall i \ele \n) }
	{ \derX \emph{Rhs} : `G |- `a{:}\AoI{n},`D }
 \]
each derivation $\D_i$ is pure.  Assuming we have solved the pure case, we obtain derivations $\D'_i$ for the right-hand side, and can then derive:
 \[ \Inf	[\intR]
	{ \InfBox{\D'_i}{ \derX \emph{Lhs} : `G |- `a{:}A_i,`D } \quad (\forall i \ele \n) }
	{ \derX \emph{Lhs} : `G |- `a{:}\AoI{n},`D }
 \]
A similar argument holds for a derivation ending in $(\unL)$.
Therefore, in the remainder of the proof, we focus on the pure case: the derivation for the right-hand side is pure.

\Short{We only show the interesting cases. }%
Notice the use of derivation rules $(\unL)$ in the case for $(\Lii)$ shown, and $(\intR)$ in the one for $(\Riii)$.

 \leftmarginii 0cm
 \leftmarginiii 5mm
 \begin{description} \itemsep5pt
 \item[Logical rules]

 \begin{description}


\Long{

 \item[\Ax] $\cut \caps<y,`a> `a + x \caps<x,`b> \red \caps<y,`b>$.
We can assume that $`a$ not in $`D$, and $x$ not in $`G$.
We have
 \[ 
 \Inf	{ \derPure \caps<y,`b> : `G\inter y{:}A |- `b{:}A\union `D } 
 \]
and we can construct
 \[  
 \Inf	
	{ \Inf	{ \derPure \caps<y,`a> : `G\inter y{:}A |- `a{:}A,`D }
	  \quad
	  \Inf	{ \derPure \caps<x,`b> : `G,x{:}A |- `b{:}A\union `D }
	}
	{ \derPure \cut \caps<y,`a> `a + x \caps<x,`b> : `G\inter y{:}A |- `b{:}A\union `D }
 \]


 \item[\Exp] $ \cut { \exp y P `b . `a } `a + x \caps<x,`g> \red \exp y P `b . `g $, with $`a \not\in \FP{P}$; we can assume that $`a$ does not occur in $`D$, but cannot assume that for $`g$.
We have
 \[ 
\Inf	
	{ \InfBox{\D}
		{ \derX P : `G,y{:}A |- `b{:}B,`D } 
	}
	{ \derX \exp y P `b . `g : `G |- `g{:}A\arr B\union `D }
 \]
and we can construct
 \[  
 \Inf	
	{ \Inf	
		{ \InfBox{\D}
			{ \derX P : `G,y{:}A |- `b{:}B,`D } 
		}
		{ \derX \exp y P `b . `a : `G |- `a{:}A\arr B,`D }
\quad
	  \Inf	{ \derPure \caps<x,`g> : `G,x{:}A\arr B |- `g{:}A\arr B\union `D }
	}
	{ \derPure \cut { \exp y P `b . `a } `a + x \caps<x,`g> : `G |- `g{:}A\arr B\union `D }
 \]


 \item[\Imp] $ \cut \caps<y,`a> `a + x { \imp P `b [x] z Q } \red \imp P `b [y] y Q $, with $x \not\in \FS{P,Q}$; we can assume that $x$ does not occur in $`G$, but cannot assume that for $y$. 
We have
 \[  
 \Inf	
	{ \InfBox{\D_2}
		{ \derPure P : `G |- `b{:}A,`D } 
	  \dquad 
	  \InfBox{\D_3}
		{ \derPure Q : `G,z{:}B |- `D } 
	}
	{ \derPure \imp P `b [y] z Q : `G\inter y{:}A\arr B |- `D }
 \]
and we can construct
 \[ 
\Inf	
	{ \Inf	
		{ \derPure \caps<y,`a> : `G\inter y{:}A\arr B |- `a{:}A\arr B ,`D }
	  \Inf	
		{ \InfBox{\D_2}
			{ \derPure P : `G |- `b{:}A,`D } 
		  \dquad
		  \InfBox{\D_3}
			{ \derPure Q : `G,z{:}B |- `D } 
		}
		{ \derPure \imp P `b [x] z Q : `G, x{:}A\arr B |- `D }
	}
	{ \derPure \cut \caps<y,`a> `a + x { \imp P `b [x] z Q } : `G\inter y{:}A\arr B |- `D }
 \]

}


 \item[\Ins] $ \begin{array}{rcl}
 \cut { \exp y P `b . `a } `a + x { \med Q `g [x] z R }
\red \left \{ \begin{array}{l}
	\cut { \cut Q `g + y P } `b + z R  \\
	\cut Q `g + y {\cut P `b + z R } 
\end{array} \right.
\end{array} $, with $`a \not\in \FP{R}$, $x \not\in \FS{Q,R}$.
Notice that 
 \[
\Inf	
	{ \Inf	
		{ \InfBox{\D_2}
			{\derX Q : `G |- `g{:}A,`D } 
		  \quad
		  \InfBox{\D_1}
			{ \derX P : `G,y{:}A |- `b{:}B,`D }
		}
		{ \derPure \cut Q `g + y P : `G |- `b{:}B,`D }
	  \quad
	  \InfBox{\D_3}
		{\derX R : `G,z{:}B |- `D }
	}
	{ \derPure \cut { \cut Q `g + y P } `b + z R : `G |- `D }
 \] 
and we can construct
 \[ 
\Inf	
	{ \Inf	
		{ \InfBox{\D_1}
			{\derX P : `G,y{:}A |- `b{:}B,`D } 
		}
		{ \derPure \exp y P `b . `a : `G |- `a{:}A\arr B,`D }
	  \quad
	  \Inf	
		{ \InfBox{\D_2}
			{\derX Q : `G |- `g{:}A,`D } 
		  \quad
		  \InfBox{\D_3}
			{\derX R : `G,z{:}B |- `D }
		}
		{ \derPure \imp Q `g [x] z R : `G,x{:}A\arr B |- `D }
	}
	{ \derPure \cut {\exp y P `b . `a } `a + x { \med Q `g [x] z R } : `G |- `D }
 \] 
Notice that it is irrelevant if $A$ or $B$ is an intersection or a union.
Starting from 
 \[
 \Inf	
	{ \InfBox{\D_2}
		{\derX Q : `G |- `g{:}A,`D } 
	  \quad
	  \Inf	
		{ \InfBox{\D_1}
			{ \derX P : `G,y{:}A |- `b{:}B,`D }
		  \quad
		  \InfBox{\D_3}
			{\derX R : `G,z{:}B |- `D }
		}
		{ \derPure \cut P `b + z R : `G,y{:}A |- `D }
	}
	{ \derPure \cut Q `g + y { \cut P `b + z R } : `G |- `D }
 \]
we can construct the same derivation.

 \end{description}


\Long{

 \item[Activating the cuts]
Immediate, since activated cuts are typed with the same rule as deactivated cuts.

}


 \item[Left propagation]

 \begin{description}

\Long{

 \item[\deactL] $\cutL \caps<y,`a> `a + x P \red \cut \caps<y,`a> `a + x P $.
Immediate, since activated cuts are typed with the same rule as deactivated cuts.

}


 \item[\Li]
$ \cutL \caps<y,`b> `a + x P \red \caps<y,`b> $, with $`b \not= `a$.
We have
 \[
\Inf	
	{ \derX \caps<y,`b> : `G\inter y{:}A |- `b{:}A\union `D }
 \]
and we can construct
 \[  
\Inf	
	{ \Inf	
		{ \derX \caps<y,`b> : `G\inter y{:}A |- `a{:}\Bottom,`b{:}A\union `D }
	  \quad
	  \Inf[\Bottom]{ \derX P : `G,x{:}{\Bottom} |- `D }
	}
	{ \derX \cutL \caps<y,`b> `a + x P : `G\inter y{:}A |- `b{:}A\union `D }
 \]


 \item[\Lii] $ \cutL { \exp y P `b . `a } `a + x Q \red \cut { \exp y { \cutL P `a + x Q } `b . `g } `g + x Q $, with $`g$ fresh.
Notice that $`a$ might not be introduced in $ \exp y P `b . `a $, so might appear inside $P$; also, $y,`b \not \in \FC{Q}$.
We have
 \[ 
\Inf	
	{ 
	  \Inf	
		{ \Inf	
			{ \InfBox{\D_1}
				{ \derX P : `G,y{:}A |- `b{:}B,`a{:}\CoU{n},`D } 
			  \dquad
			  \Inf	[\unL]
				{ \InfBox{\D^i_2}	
					{ \derX Q : `G,x{:}C_i |- `D }
				  \quad (\forall i \ele \n)
				}
				{ \derX Q : `G,x{:}\CoU{n} |- `D }
			}
			{ \derPure \cutL P `a + x Q : `G,y{:}A |- `b{:}B,`D }
		}
		{ \derPure \exp y { \cutL P `a + x Q } `b . `g : `G |- `g{:}A\arr B,`D }
	  \kern-2cm
	  \InfBox{\D_3}
		{ \derPure Q : `G,x{:}A\arr B |- `D }
	}
	{ \derPure \cut { \exp y { \cutL P `a + x Q } `b . `g } `g + x Q : `G |- `D }
 \]
and we can construct
 \[  
\Inf	
	{ \Inf	
		{ \InfBox{\D_1}
			{ \derX P : `G,y{:}A |- `b{:}B,`a{:}\CoU{n},`D } 
		}
		{ \derPure \exp y P `b . `a : `G |- `a{:}\CoU{n}\union (A\arr B),`D } 
	  \Inf	[\unL]
		{ \InfBox{\D^i_2}
			{ \derPure Q : `G,x{:}C_i |- `D }
		  ~ (\forall i \ele \n)
		  \quad
		  \InfBox{\D_3}
			{ \derPure Q : `G,x{:}A\arr B |- `D }
		}
		{ \derPure Q : `G,x{:}\CoU{n}\union (A\arr B) |- `D }
	}
	{ \derPure \cutL {\exp y P `b . `a } `a + x Q : `G |- `D }
 \]


\Long{
 \item[\Liii] $ \cutL { \exp y P `b . `g } `a + x Q \red \exp y { \cutL P `a + x Q } `b . `g $, with $`g \not = `a$.
We have
 \[
 \Inf	{ \Inf	{ \InfBox{\D_1}
			{ \derX P : `G,y{:}A |- `a{:}C,`b{:}B,`D } 
		  \dquad
		  \InfBox{\D_2}
			{ \derPure Q : `G,x{:}C |- `D }
		}
		{ \derPure \cutL P `a + x Q : `G,y{:}A |- `b{:}B,`D }
	}
	{ \derPure \exp y { \cutL P `a + x Q } `b . `g : `G |- `g{:}A\arr B,`D } 
 \]
and we can construct
 \[  
\Inf	{ \Inf	{ \InfBox{\D_1}
			{ \derX P : `G,y{:}A |- `a{:}C,`b{:}B,`D } 
		}
		{ \derPure \exp y P `b . `g : `G |- `a{:}C,`g{:}A\arr B\union `D }
	  \quad
	  \InfBox{\D_2}
		{ \derPure Q : `G,x{:}C |- `D }
	}
	{ \derPure \cutL { \exp y P `b . `g } `a + x Q : `G |- `g{:}A\arr B,`D } 
 \]
}


\item[\Liv] $ \cutL { \imp P `b [z] y Q } `a + z R \red \imp { \cutL P `a + x R } `b [z] y { \cutL Q `a + x R } $.
We have
 \[ \kern 2cm
 \Inf	{ 
\multiput(20,0)(0,7){4}{.}
\raise1.25\RuleH\hbox to 2cm {\kern-3cm
	  \Inf	{ \InfBox{\D_1}
			{ \derX P : `G |- `a{:}\CoU{n},`b{:}A,`D } 
		  \Inf	[\unL]
			{ \InfBox{\D^i_3}
				{ \derX R : `G,x{:}C_i |- `D } 
			  ~(\forall i \ele \n)
			}
			{ \derX R : `G,x{:}\CoU{n} |- `D } 
		}
		{ \derPure \cutL P `a + x R : `G |- `b{:}A,`D \hspace*{15mm} } 
}
	  \quad
	  \Inf	{ \InfBox{\D_2}
			{ \derX Q : `G,y{:}B |- `a{:}\un{m}{C_{n{+}j}},`D } 
		  \Inf	[\unL]
			{ \InfBox{\D^{n{+}j}_3}
				{ \derX R : `G,x{:}C_{n{+}j} |- `D } 
			  ~(\forall j \ele \m)
			}
			{ \derX R : `G,x{:}\un{m}{C_{n{+}j}} |- `D } 
		}
		{ \derPure \cutL Q `a + x R : `G,y{:}B |- `D }
	}
	{ \derPure \imp { \cutL P `a + x R } `b [z] y { \cutL Q `a + x R } : `G\inter z{:}A\arr B |- `D }
 \]
and we can construct
 \[  
\Inf	{ \Inf	{ \InfBox{\D_1}
			{ \derX P : `G |- `a{:}\CoU{n},`b{:}A,`D }
		  \quad
		  \InfBox{\D_2}
			{ \derX Q : `G,y{:}B |- `a{:}\un{m}{C_{n{+}j}},`D }
		}
		{ \derPure \imp P `b [z] y Q : `G\inter z{:}A\arr B |- `a{:}\CoU{n+m},`D }
	  \quad
	  \Inf	[\unL]
		{ \InfBox{\D^i_3}
			{ \derX R : `G,x{:}C_i |- `D } 
		  ~(\forall i \ele \underline{n{+}m})
		}
		{ \derX R : `G,x{:}\CoU{n+m} |- `D } 
	}
	{ \derPure \cutL { \imp P `b [z] y Q } `a + x R : `G\inter z{:}A\arr B |- `D }
 \]


\Long{
 \item[\Lv] $ \cutL { \cut P `b + y Q } `a + x R \red \cut { \cutL P `a + z R } `b + y { \cutL Q `a + x R } $.
We have
 \[ \kern 2cm
 \Inf	{ 
\multiput(20,0)(0,7){6}{.}
\raise2\RuleH\hbox to 1.5cm {\kern-3cm
	  \Inf	{ \InfBox{\D_1}
			{ \derX P : `G |- `a{:}\CoU{n},`b{:}A,`D } 
		  \quad
		  \Inf	[\unL]
			{ \InfBox{\D^i_3}
				{ \derX R : `G,x{:}C_i |- `D } 
			  ~(\forall i \ele \n)
			}
			{ \derX R : `G,x{:}\CoU{n} |- `D } 
		}
		{ \derPure \cutL P `a + x R : `G |- `b{:}A,`D } 
}
	  \quad
	  \Inf	{ \InfBox{\D_2}
			{ \derX Q : `G,y{:}A |- `a{:}\un{m}{C_{n{+}j}},`D } 
		  \quad
		  \Inf	[\unL]
			{ \InfBox{\D^{n{+}j}_3}
				{ \derX R : `G,x{:}C_{n{+}j} |- `D } 
			  ~(\forall j \ele \m)
			}
			{ \derX R : `G,x{:}\un{m}{C_{n{+}j}} |- `D } 
		}
		{ \derPure \cutL Q `a + x R : `G,y{:}A |- `D }
	}
	{ \derPure \cut { \cutL P `a + x R } `b + y { \cutL Q `a + x R } : `G |- `D }
 \]
and we can construct
 \[  
\Inf	{ \Inf	{ \InfBox{\D_1}
			{ \derX P : `G |- `a{:}\CoU{n},`b{:}A,`D }
		  \quad
		  \InfBox{\D_2}
			{ \derX Q : `G,y{:}A |- `a{:}\un{m}{C_{n{+}j}},`D }
		}
		{ \derPure \cut P `b + y Q : `G |- `a{:}\CoU{n},`D }
	  \quad
	  \Inf	[\unL]
		{ \InfBox{\D^i_3}
			{ \derX R : `G,x{:}C_i |- `D } 
		  ~(\forall i \ele \underline{n{+}m})
		}
		{ \derX R : `G,x{:}\CoU{n} |- `D } 
	}
	{ \derPure \cutL { \cut P `b + y Q } `a + x R : `G |- `D }
 \]
}

 \end{description}

 \item[Right propagation]


 \begin{description}

\Long{

 \item[\deactR] $ \cutR P `a + x \caps<x,`b> \red \cut P `a + x \caps<x,`b> $.
Immediate, since active cuts and inactive cuts are typed with the same rule.

}


 \item[\Ri] $ \cutR P `a + x \caps<y,`b> \red \caps<y,`b>,\ y \not= x $. 
We have
 \[
\Inf	{ \derPure \caps<y,`b> : `G\inter y{:}B |- `b{:}B\union `D }
 \]
and we can construct
 \[  
\Inf	{ \Inf	[\Top]
		{ \derPure P : `G\inter y{:}B |- `a{:}\Top,`D }
	  \quad
	  \Inf	{ \derPure \caps<y,`b> : `G,y{:}B,x{:}{\Top} |- `b{:}B\union `D }
	}
	{ \derPure \cutR P `a + x \caps<y,`b> : `G\inter y{:}B |- `b{:}B\union `D }
 \]


\Long{
 \item[\Rii] $ \cutR P `a + x { \exp y Q `b . `g } \red \exp y { \cutR P `a + x Q } `b . `g $.
We have
 \[
\Inf	{ \Inf	{ \InfBox{\D_1}
			{ \derPure P : `G |- `a{:}C,`D }
		  \dquad 
		  \InfBox{\D_2}
			{ \derX Q : `G,x{:}C,y{:}A |- `b{:}B,`D }
		}
		{ \derPure \cutR P `a + x Q : `G,y{:}A |- `b{:}B,`D }
	}
	{ \derPure \exp y { \cutR P `a + x Q } `b . `g : `G |- `g{:}A\arr B\union `D }
 \]
and we can construct
 \[  
\Inf	{ \InfBox{\D_1}
		{ \derPure P : `G |- `a{:}C,`D }
	  \quad
	  \Inf	{ \InfBox{\D_2}
			{ \derX Q : `G,x{:}C,y{:}A |- `b{:}B,`D }
		}
		{ \derPure \exp y Q `b . `g : `G,x{:}C |- `g{:}A\arr B\union `D }
	}
	{ \derPure \cutR P `a + x { \exp y Q `b . `g } : `G |- `g{:}A\arr B\union `D }
 \]
}


 \item[\Riii] $ \cutR P `a + x {\imp Q `b [x] y R } \red \cut P `a + v { \imp { \cutR P `a + x Q } `b [v] y { \cutR P `a + x R } } $, with $v$ fresh.
We have
 \[ \kern-3cm
 \Inf	
	{ \InfBox{\D_1}
		{ \derX P : `G |-  `a{:}A\arr B,`D }
	  \kern-3cm
	  \Inf	
		{ \Inf	
			{ \Inf	[\intR]
				{ \InfBox{\D^i_1}
					{ \derX P : `G |- `a{:}C_i,`D }
				  ~(\forall i \ele \n)
				}
				{ \derX P : `G |- `a{:}\CoI{n},`D }
			  ~
			  \InfBox{\D_2}
				{ \derX Q : `G,x{:}\CoI{n} |- `b{:}A,`D }
			}
			{ \derX \cutR P `a + x Q : `G |- `b{:}A,`D }
\raise1.25\RuleH\hbox to 35mm {\kern-4.5cm
		  \Inf	
			{ \Inf	[\intR]
				{ \InfBox{\D^{n{+}j}_1}
					{ \derX P : `G |- `a{:}C_{n{+}j},`D }
				  ~(\forall j \ele \m)
				}
				{ \derX P : `G |- `a{:}\int{m}{C_{n{+}j}},`D }
			  ~ 
			  \InfBox{\D_3}
				{ \derX R : `G,y{:}B,x{:}\int{m}{C_{n{+}j}} |- `D }
			}
			{ \hspace*{2cm} \derX \cutR P `a + x R : `G,y{:}B |- `D }
}
\multiput(-20,0)(0,7){4}{.}
		}
		{ \derX \imp { \cutR P `a + x Q } `b [v] y { \cutR P `a + x R } : `G,v{:}A\arr B |- `D }
	}
	{ \derPure \cut P `a + v { \imp { \cutR P `a + x Q } `b [v] y { \cutR P `a + x R } } : `G |- `D }
 \]
and we can construct
 \[  
 \kern-4cm
\Inf	
	{ 
	  \Inf	[\intR]
		{ \InfBox{\D^i_1}
			{ \derX P : `G |- `a{:}C_i,`D }
		  ~(\forall i \ele \underline{n{+}m})
		\quad
		\InfBox{\D_1}
			{ \derPure P : `G |- `a{:}A\arr B,`D }
		}
		{ \derPure P : `G |- `a{:}\CoI{n{+}m}\inter (A\arr B),`D }
\raise2\RuleH\hbox to 2cm {\kern-4cm
	  \Inf	
		{ \InfBox{\D_2}
			{ \derX Q : `G,x{:}\CoI{n} |- `b{:}A,`D }
		  \dquad 
		  \InfBox{\D_3}
			{ \derX R : `G,y{:}B,x{:}\int{m}{C_{n{+}j}} |- `D }
		}
		{ \derPure \imp Q `b [x] y R : `G,x{:}\CoI{n{+}m}\inter (A\arr B) |- `D }
}
\multiput(-20,0)(0,7){6}{.}
	}
	{ \derPure \cutR P `a + x { \imp Q `b [x] y R } : `G |- `D }
 \]


 \item[\Riv] $ \cutR P `a + x {\imp Q `b [z] y R } \red \imp { \cutR P `a + x Q } `b [z] y { \cutR P `a + z R },\ x \not= z $.  
We have
 \[ \kern-3.5cm
 \Inf	
	{ \Inf	
		{ \Inf	[\intR]
			{ \InfBox{\D^i_1}
				{ \derX P : `G |- `a{:}C_i,`D }
			  ~(\forall i \ele \n)
			}
			{ \derX P : `G |- `a{:}\CoI{n},`D }
		  ~
		  \InfBox{\D_2}
			{ \derX Q : `G,x{:}\CoI{n} |- `b{:}A,`D }
		}
		{ \derX \cutR P `a + x Q : `G |- `b{:}A,`D }
\raise1.25\RuleH\hbox to 3cm {\kern-4cm
	  \Inf	
		{ \Inf	[\intR]
			{ \InfBox{\D^{n{+}j}_1}
				{ \derX P : `G |- `a{:}C_{n{+}j},`D }
			  ~(\forall j \ele \m)
			}
			{ \derX P : `G |- `a{:}\int{m}{C_{n{+}j}},`D }
		  ~ 
		  \InfBox{\D_3}
			{ \derX R : `G,y{:}B,x{:}\int{m}{C_{n{+}j}} |- `D }
		}
		{ \kern2cm \derX \cutR P `a + x R : `G,y{:}B |- `D }
}
\multiput(-20,0)(0,7){4}{.}
	}
	{ \derX \imp { \cutR P `a + x Q } `b [z] y { \cutR P `a + x R } : `G,z{:}A\arr B |- `D }
 \]
and we can construct
 \[  
 \kern-1cm
\Inf	
	{ \Inf	[\intR]
		{ \InfBox{\D^i_1}
			{ \derX P : `G |- `a{:}C_i,`D }
		  ~(\forall i \ele \underline{n{+}m})
		}
		{ \derPure P : `G |- `a{:}\CoI{n{+}m},`D }
	  \quad
	  \Inf	
		{ \InfBox{\D_2}
			{ \derX Q : `G,x{:}\CoI{n} |- `b{:}A,`D }
		  \dquad 
		  \InfBox{\D_3}
			{ \derX R : `G,y{:}B,x{:}\int{m}{C_{n{+}j}} |- `D }
		}
		{ \derPure \imp Q `b [x] y R : `G,x{:}\CoI{n{+}m} |- `D }
	}
	{ \derPure \cutR P `a + x { \imp Q `b [z] y R } : `G,z{:}A\arr B |- `D }
 \]


\Long{
 \item[\Rv] $ \cutR P `a + x { \cut Q `b + y R } \red \cut { \cutR P `a + x Q } `b + y { \cutR P `a + z R } $.
We have
 \[ \kern-4cm
 \Inf	
	{ \Inf	
		{ \Inf	[\intR]
			{ \InfBox{\D^i_1}
				{ \derX P : `G |- `a{:}C_i,`D }
			  ~(\forall i \ele \n)
			}
			{ \derX P : `G |- `a{:}\CoI{n},`D }
		  ~
		  \InfBox{\D_2}
			{ \derX Q : `G,x{:}\CoI{n} |- `b{:}A,`D }
		}
		{ \derX \cutR P `a + x Q : `G |- `b{:}A,`D }
\raise1.25\RuleH\hbox to 3cm {\kern-4cm
	  \Inf	
		{ \Inf	[\intR]
			{ \InfBox{\D^{n{+}j}_1}
				{ \derX P : `G |- `a{:}C_{n{+}j},`D }
			  ~(\forall j \ele \m)
			}
			{ \derX P : `G |- `a{:}\int{m}{C_{n{+}j}},`D }
		  ~ 
		  \InfBox{\D_3}
			{ \derX R : `G,y{:}B,x{:}\int{m}{C_{n{+}j}} |- `D }
		}
		{ \kern2cm \derX \cutR P `a + x R : `G,y{:}B |- `D }
}
\multiput(-20,0)(0,7){4}{.}
	}
	{ \derX \cut { \cutR P `a + x Q } `b + y { \cutR P `a + x R } : `G |- `D }
 \]
and we can construct
 \[  
 \kern-1cm
\Inf	
	{ \Inf	[\intR]
		{ \InfBox{\D^i_1}
			{ \derX P : `G |- `a{:}C_i,`D }
		  ~(\forall i \ele \underline{n{+}m})
		}
		{ \derPure P : `G |- `a{:}\CoI{n{+}m},`D }
	  \quad
	  \Inf	
		{ \InfBox{\D_2}
			{ \derX Q : `G,x{:}\CoI{n} |- `b{:}A,`D }
		  \dquad 
		  \InfBox{\D_3}
			{ \derX R : `G,y{:}B,x{:}\int{m}{C_{n{+}j}} |- `D }
		}
		{ \derPure \cut Q `b + y R : `G,x{:}\CoI{n{+}m} |- `D }
	}
	{ \derPure \cutR P `a + x { \cut Q `b + y R } : `G |- `D }
 \]
}

 \end{description}

 \end{description}

 \end{proof}

}



\Long{\section{The problem with witness reduction}}
 \label{problems}

As in the system of \cite{Barbanera-Dezani-Liguoro-IaC'95} defined for the {\LC}, we suffer loss of the subject reduction property (here called witness reduction).
This problem also appears in other contexts, such as that of {\ML} with side-effects \cite{Harper-Lillibridge'91,Wright'95,Milner-et.al'97}, and that of using intersection and union types in an operational setting \cite{Davies-Pfenning'01,Dunfield-Pfenning'00}, and has also been observed (giving little detail) in \cite{Herbelin'05}.
The advantage of studying this problem in the context of sequent calculi is clearly shown by the examples in this section.
%
These examples will lead to the definition of \Em{two} restrictions on the notions of type assignment, $\TurnN$ and $\TurnV$, that we will show to be closed for reduction for, respectively, call-by-name and call-by-value reduction.

As in \cite{Barbanera-Dezani-Liguoro-IaC'95}, for {\X}, using the (unrestricted) notion of type assignment we gave above, we can show that subject reduction does not hold in general.

\Comment{
 \begin{example}[First counterexample]
Let
 \[ \begin{array}{rcl}
`G &=& x{:}(A\arr A\arr C)\int (B\arr B\arr C), y{:}D\arr A\un B, z{:}D
\\
`D &=& `a{:}C
 \end{array} \]

Without loss of generality, let
 \[ \begin{array}{rcl}
\SemL{xt}{`e} &=& \imp \caps<t,`s> `s [x] e \caps<e,`e> 
\\
\SemL{xtt}{`b} &=& \cut \SemL{xt}{`e} `e + c { \imp \caps<t,`m> `m [c] d \caps<d,`b> } 
 \end{array} \]
We can derive:
 \[
\D^A_1 =
 \Inf	
	{ 
	  \Inf	
		{ \Inf	{ \derX \caps<t,`s> : t{:}A |- `s{:}A }
		  \Inf	{ \derX \caps<e,`e> : e{:}A\arr C |- `e{:}A\arr C }
		}
		{ \derX \SemL{xt}{`e} 
		  : x{:}A\arr A\arr C , t{:}A |- `e{:}A\arr C }
	  \Inf	
		{ \Inf	{ \derX \caps<t,`m> : t{:}A |- `m{:}A }
		  \Inf	{ \derX \caps<d,`b> : d{:}C |- `b{:}C }
		}
		{ \derX \imp \caps<t,`m> `m [c] d \caps<d,`b>  : t{:}A, c{:}A\arr C |- `b{:}C }
	}
	{ \derX \SemL{xtt}{`b} : x{:}A\arr A\arr C ,  t{:}A |- `b{:}C }
 \]
 and, likewise, $\D^B_1 \dcol \derX \SemL{xtt}{`b} : x{:}B\arr B\arr C ,  t{:}B |- `b{:}C $.
Take 
 \[
\D_1 =
  \Inf	[\unL]
	{ \InfDerX \D^A_1 :: \SemL{xtt}{`b} : x{:}A\arr A\arr C , t{:}A |- `b{:}C
	  \quad \quad
	  \InfDerX \D^B_1 :: \SemL{xtt}{`b} : x{:}B\arr B\arr C ,  t{:}B |- `b{:}C
	}
	{ \derX \SemL{xtt}{`b} : x{:}(A\arr A\arr C)\int (B\arr B\arr C ) , t{:}A\un B |- `b{:}C }
 \]

Let
 \[ \begin{array}{rcr}
\SemL{yz}{`d} &=& 
\cut \caps<y,`m> `m + s { \imp \caps<z,`r> `r [s] u \caps<u,`d> } 
 \end{array} \]
We can derive (where $E = {D\arr A\un B}$):
 \[ 
\D_2 =
\Inf	
 { \Inf	
	{ \derX \caps<y,`m> : y{:}E |- `m{:}E }
   \Inf	
	{ \Inf	{ \derX \caps<z,`r>  : z{:}D |- `r{:}D }
	  \Inf	[\unL]
		{ \Inf { \derX \caps<u,`d>  : u{:}A |- `d{:}A }
		  \Inf { \derX \caps<u,`d>  : u{:}B |- `d{:}B }
		}
		{ \derX \caps<u,`d>  : u{:}A\un B |- `d{:}A\un B }
	}
	{ \derX \imp \caps<z,`r> `r [s] u \caps<u,`d> : s{:}D\arr A\un B, z{:}D |- `d{:}A\un B }
 }
 { \derX 
	\SemL{yz}{`d} : y{:}D\arr A\un B,z{:}D |- `d{:}A\un B }
 \]
(Notice that in both derivations we can weaken the left context to $`G$.)

Combining these gives:
 \[
\Inf	
	{ \InfDerX \D_2 :: \SemL{yz} {`d} : `G |- `d{:}A\un B,`a{:}C
	  \dquad
	  \Inf	[\unL]
		{ \InfDerX \D^A_1 :: \SemL{xtt}{`b} : `G , t{:}A |- `a{:}C
		  \dquad
		  \InfDerX \D^B_1 :: \SemL{xtt}{`b} : `G , t{:}B |- `a{:}C }
		{ \derX \SemL{xtt}{`b} : `G , t{:}A\un B |- `a{:}C }
	}
	{ \derX \cut \SemL{yz}{`d}\, `d + t \SemL{xtt}{`a} : `G |- `a{:}C }
\]
and the \CBN-reduction runs as follows:
 \[ \begin{array}{lcl}
\cut \SemL{yz}{`d}\, `d + t \SemL{xtt}{`a} 
&\ByDef& \\
\cut \SemL{yz}{`d}\, `d + t { \cut \SemL{xt}{`e} `e + c { \imp \caps<t,`m> `m [c] d \caps<d,`a> } }
& \red & (\actR), (\Rv) \\ 
\cut { \cutR \SemL{yz}{`d}\, `d + t  \SemL{xt}{`e}  } `e + c {  \cutR \SemL{yz}{`d}\, `d + t  { \imp \caps<t,`m> `m [c] d \caps<d,`a> } }
 \end{array} \]
The last net above is not typeable using the same contexts.


First of all, note that we certainly cannot create a derivation with $`G$ which uses $(\unL)$ as last step for the import $\imp \caps<t,`m> `m [c] d \caps<d,`a> $: this would give 
 \[ \Inf	[\unL]
	{  \Inf	{ \Inf	{ \derX \caps<t,`m>  : t{:}A |- `m{:}A }
		  \Inf	{ \derX \caps<d,`a>  : d{:}C |- `a{:}C }
		}
		{ \derX \imp \caps<t,`m> `m [c] d \caps<d,`a>  : `G,t{:}A,c{:}A\arr C |- `a{:}C }
		  \quad
	  \Inf	{ \Inf	{ \derX \caps<t,`m>  : t{:}B |- `m{:}B }
		  \Inf	{ \derX \caps<d,`a>  : d{:}C |- `a{:}C }
		}
		{ \derX \imp \caps<t,`m> `m [c] d \caps<d,`a>  : `G,t{:}B,c{:}B\arr C |- `a{:}C }
	}
	{ \derX \imp \caps<t,`m> `m [c] d \caps<d,`a>  : `G,t{:}A\un B,c{:}(A\arr C)\int (B\arr C) |- `a{:}C }
\]
We now would need to construct a derivation for 
$ \derX \SemL{xt}{`e} : `G |- `e{:}(A\arr C)\int (B\arr C) $.
This is impossible; we can at most achieve, using $(\intR)$, 
 \[ \Inf	[\intR]
	{ \Inf	{ \Inf	{ \derX \caps<t,`s> : t{:}A |- `s{:}A } 
		  \Inf	{ \derX \caps<e,`e> : e{:}A\arr C |- `e{:}A\arr C  }
		}
		{ \derX \SemL{xt}{`e} : x{:}A\arr A\arr C,t{:}A |- `e{:}A\arr C }
	  \Inf	{ \Inf	{ \derX \caps<t,`s> : t{:}B |- `s{:}B } 
		  \Inf	{ \derX \caps<e,`e> : e{:}B\arr C |- `e{:}B\arr C  }
		}
		{ \derX \SemL{xt}{`e} : x{:}B\arr B\arr C,t{:}B |- `e{:}B\arr C }
	}
	{ \derX \SemL{xt}{`e} : x{:}(A\arr A\arr C)\int (B\arr B\arr C),t{:}A\int B |- `e{:}(A\arr C)\int (B\arr C) }
\]
which has an intersection rather than the required union for $t$, or, using $(\unL)$:
 \[ \Inf	[\unL]
	{ \Inf	{ \Inf	{ \derX \caps<t,`s> : t{:}A |- `s{:}A } 
		  \Inf	{ \derX \caps<e,`e> : e{:}A\arr C |- `e{:}A\arr C  }
		}
		{ \derX \SemL{xt}{`e} : x{:}A\arr A\arr C,t{:}A |- `e{:}A\arr C }
	  \Inf	{ \Inf	{ \derX \caps<t,`s> : t{:}B |- `s{:}B } 
		  \Inf	{ \derX \caps<e,`e> : e{:}B\arr C |- `e{:}B\arr C  }
		}
		{ \derX \SemL{xt}{`e} : x{:}B\arr B\arr C,t{:}B |- `e{:}B\arr C }
	}
	{ \derX \cutR \SemL{yz}{`d}\, `d + t  \SemL{xt}{`e} : x{:}(A\arr A\arr C)\int (B\arr B\arr C),t{:}A\un B |- `e{:}(A\arr C)\un (B\arr C) }
\]
which has a union for $`e$.

So the derivation would need to have the following shape:
 \[ \kern-2.5cm
\Inf	{ \Inf	{ \InfDerX \D_2 :: \SemL{yz}{`d} : `G |- `d{:}A\un B 
		  \quad
		  \InfDerX ? :: \SemL{xt}{`e} : `G,t{:}A\un B |- `e{:}A\un B\arr C 
		}
		{ \derX \cutR \SemL{yz}{`d}\, `d + t  \SemL{xt}{`e} : `G |- `e{:}A\un B\arr C }
\raise3.25\RuleH\hbox to 2cm{\kern-7cm
	  \Inf	{ \InfDerX \D_2 :: \SemL{yz}{`d} : `G |- `d{:}A\un B 
		  \quad
		  \Inf	{  \InfDerX {} :: \caps<t,`m>  : t{:}A\un B |- `m{:}A\un B 
			  \quad
			  \Inf	{ \derX \caps<d,`a>  : d{:}C |- `a{:}C }
			}
			{ \derX \imp \caps<t,`m> `m [c] d \caps<d,`a>  : `G,t{:}A\un B,c{:}A\un B\arr C |- `a{:}C }
		}
		{ \derX \cutR \SemL{yz}{`d}\, `d + t  { \imp \caps<t,`m> `m [c] d \caps<d,`a> }  : `G,c{:}A\un B\arr C |- `a{:}C }
}
\multiput(-10,0)(0,7){10}{.}
	}
	{ \derX \cut P `a + x Q 
	 : `G |- `a{:}C }
\]
and we need a derivation for 
 \[ 
\InfDerX ? :: \SemL{xt}{`e} : `G , t{:}A\un B |- `e{:}A\un B\arr C
\]
We can only derive $t{:}A\un B$ on the left via rule $(\unL)$; let 
 \[ \D^A_3 = 
 \Inf	{ \Inf	{ \derX \caps<t,`s> : t{:}A |- `s{:}A }
		  \Inf	{ \derX \caps<e,`e> : e{:}A\arr C |- `e{:}A\arr C }
		}
		{ \derX \SemL{xt}{`e} : `G , t{:}A |- `e{:}A\arr C }
\]
and $\D^B_3$ similar, then 
 \[ \D_3 = 
 \Inf	[\unL]
	{ \InfDerX \D^A_3 :: \SemL{xt}{`e} : `G , t{:}A |- `e{:}A\arr C  
	  \dquad
	  \InfDerX \D^B_3 :: \SemL{xt}{`e} : `G , t{:}B |- `e{:}B\arr C  
	}
	{ \derX \SemL{xt}{`e} : `G , t{:}A\un B |- `e{:}(A\arr C)\un (B\arr C) }
\]
So, as above, we end up with the statement $`e{:}(A\arr C)\un (B\arr C)$, and not with $A\un B\arr C$.

\end{example}

 \begin{example}
Notice that the right-activation made above 
is not compulsory, and in fact is only so in the {\CBN} reduction strategy.
If we active the cut in the other direction, we get
 $ \begin{array}{rcl}
\cut \SemL{yz}{`d}\, `d + t \SemL{xtt}{`a} &\redCBV& \imp \caps<z,`r> `r [y] t {\imp \caps<t,`s> `s [x] e { \imp \caps<t,`m> `m [e] d \caps<d,`a> } }
 \end{array} $.
 \[ \begin{array}{lcl}
\cut \SemL{yz}{`d}\, `d + t \SemL{xtt}{`a}
& = \\
\cut { \cut \caps<y,`m> `m + s { \imp \caps<z,`r> `r [s] u \caps<u,`d> } } `d + t \SemL{xtt}{`a}
& \red & (\actL), (\Lv), (\Liv) \\
\multicolumn{3}{l}{
\cut { \cutL \caps<y,`m> `d + t \SemL{xtt}{`a} } `m + s { \imp {\cutL \caps<z,`r> `d + t \SemL{xtt}{`a} } `r [s] u {\cutL \caps<u,`d> `d + t \SemL{xtt}{`a} } } 
} \\
& \red & (\Li)\,(2\times), (\deactL), (\renR), (`a) \\
\cut \caps<y,`m> `m + s { \imp \caps<z,`r> `r [s] t \SemL{xtt}{`a} } 
& \red & (\Imp) \\
\imp \caps<z,`r> `r [y] t \SemL{xtt}{`a} 
 \end{array} \]
This last net \textit{is} typeable using the same contexts.
 \[ 
  \Inf	{ \Inf	{ \derX \caps<z,`r> : z{:}D |- `r{:}D }
	  \quad
	  \InfDerX \D_1 ::  \SemL{xtt}{`a} : x{:}(A\arr A\arr C)\int (B\arr B\arr C), z{:}D, t{:}A\un B |- `a{:}C 
	}
	{ \derX \imp \caps<z,`r> `r [y] t \SemL{xtt}{`a} : x{:}(A\arr A\arr C)\int (B\arr B\arr C), z{:}D, y{:}D\arr A\un B |- `a{:}C }
\]
even when we reduce $\SemL{xtt}{`a}$ to its normal form 
$ \imp \caps<t,`s> `s [x] e { \imp \caps<t,`m> `m [e] d \caps<d,`b> } $.
 \[ 
  \Inf	{ \Inf	{ \derX \caps<z,`r> : z{:}D |- `r{:}D }
	  \Inf  {
\multiput(10,0)(0,7){10}{.}
\raise3.5\RuleH\hbox to 1cm{\kern -3cm
		  \Inf	{ \Inf	{ \derX \caps<t,`s> : t{:}A |- `s{:}A }
			  \Inf	{ \Inf	{ \derX \caps<t,`m> : t{:}A |- `m{:}A }
				  \Inf	{ \derX \caps<d,`a> : d{:}C |- `a{:}C }
				}
				{ \derX \imp \caps<t,`m> `m [e] d \caps<d,`a> : t{:}A, e{:}A\arr C |- `a{:}C }
			}
			{ \derX \imp \caps<t,`s> `s [x] e { \imp \caps<t,`m> `m [e] d \caps<d,`a> } : x{:}A\arr A\arr C, t{:}A |- `a{:}C }
}
		  \Inf	{ \Inf	{ \derX \caps<t,`s> : t{:}B |- `s{:}B }
			  \Inf	{ \Inf	{ \derX \caps<t,`m> : t{:}B |- `m{:}B }
				  \Inf	{ \derX \caps<d,`a> : d{:}C |- `a{:}C }
				}
				{ \derX \imp \caps<t,`m> `m [e] d \caps<d,`a> : t{:}B, e{:}B\arr C |- `a{:}C }
			}
			{ \derX \imp \caps<t,`s> `s [x] e { \imp \caps<t,`m> `m [e] d \caps<d,`a> } : x{:}B\arr B\arr C, t{:}B |- `a{:}C }
		}
		{ \derX \imp \caps<t,`s> `s [x] e { \imp \caps<t,`m> `m [e] d \caps<d,`a> } : `G\Except y, t{:}A\un B |- `a{:}C } 
	}
	{ \derX \imp \caps<z,`r> `r [y] t {\imp \caps<t,`s> `s [x] e { \imp \caps<t,`m> `m [e] d \caps<d,`a> } } : `G |- `a{:}C }
\]
 \end{example}
}

 \begin{example}[%
First counterexample] \label{second counterexample}
Take 
 \[ \begin{array}[t]{rcl}
\Long{\cut \SemL{xx}{`a} `a + y \SemL{yy}{`b} & \redCBV &
\cut {\imp \caps<x,`g> `g [x] v \caps<v,`a> } `a + y {\imp \caps<y,`d> `d [y] w \caps<w,`b> } }
\Short{\cut {\imp \caps<x,`g> `g [x] v \caps<v,`a> } `a + y {\imp \caps<y,`d> `d [y] w \caps<w,`b> } & \redCBV & \imp \caps<x,`g> `g [x] v { \imp \caps<v,`d> `d [v] w \caps<w,`b> } }
 \end{array} \]
 
We can type the \Long{last }\Short{first }net as follows:
%
%
 \[ \kern-3mm
\Long{
\Inf	
	{ 
\raise2.25\RuleH\hbox to 6cm{\kern -1.5cm
	\Inf	[\intR]
		{ \Inf	{ \Inf	{ \derX \caps<x,`g>  : x{:}A |- `g{:}A }
			  \Inf	{ \derX \caps<v,`a>  : v{:}C |- `a{:}C }
			}
			{ \derX \imp \caps<x,`g> `g [x] v \caps<v,`a>  : { x{:}A\int (A\arr C)  } |- `a{:}C }
		  \Inf	{ \Inf { \derX \caps<x,`g>  : x{:}A |- `g{:}A }
			  \Inf	{ \derX \caps<v,`a>  : { v{:}C\arr D  } |- { `a{:}C\arr D } }
			}
			{ \derX \imp \caps<x,`g> `g [x] v \caps<v,`a>  : { x{:}A\int (A\arr C\arr D) } |- { `a{:}C\arr D } }
		}
		{ \derX \imp \caps<x,`g> `g [x] v \caps<v,`a>  : { x{:}A\int (A\arr C)\int (A\arr C\arr D) } |- { `a{:}C\int (C\arr D) } }
}
\multiput(-100,0)(0,7){7}{\makebox(0,0){.}}
	  \Inf	{ \Inf	{ \derX \caps<y,`d> : y{:}C |- `d{:}C }
		  \quad
		  \Inf	{ \derX \caps<w,`b> : w{:}D |- `b{:}D }
		}
		{ \derX \imp \caps<y,`d> `d [y] w \caps<w,`b> : { y{:}C\int (C\arr D) } |- `b{:}D }
	}
	{ \derX \cut {\imp \caps<x,`g> `g [x] v \caps<v,`a> } `a + y {\imp \caps<y,`d> `d [y] w \caps<w,`b> } : { x{:}A\int (A\arr C)\int (A\arr C\arr D) } |- `b{:}D }
}
\Short{ \derX \cut {\imp \caps<x,`g> `g [x] v \caps<v,`a> } `a + y {\imp \caps<y,`d> `d [y] w \caps<w,`b> } : { x{:}A\int (A\arr C)\int (A\arr C\arr D) } |- `b{:}D }
\]

\Long{
Now this cut in this net can be activated in \emph{two} directions, both to the left as to the right.
The (\CBV-)reduction 
 \[\begin{array}{rcl}
\cut {\imp \caps<x,`g> `g [x] v \caps<v,`a> } `a + y {\imp \caps<y,`d> `d [y] w \caps<w,`b> } & \redCBV & \imp \caps<x,`g> `g [x] v { \imp \caps<v,`d> `d [v] w \caps<w,`b> } 
\end{array}
\] 
now proceeds:
 \[ \begin{array}[t]{lcl}
\cut {\imp \caps<x,`g> `g [x] v \caps<v,`a> } `a + y {\imp \caps<y,`d> `d [y] w \caps<w,`b> } 
& \red & (\actL) \\
\cutL {\imp \caps<x,`g> `g [x] v \caps<v,`a> } `a + y {\imp \caps<y,`d> `d [y] w \caps<w,`b> } 
& \red & (\Liv) \\
\imp {\cutL \caps<x,`g> `a + y {\imp \caps<y,`d> `d [y] w \caps<w,`b> } } `g [x] v { \cutL \caps<v,`a> `a + y {\imp \caps<y,`d> `d [y] w \caps<w,`b> } } 
& \red & (\Li) \\
\imp \caps<x,`g> `g [x] v { \cutL \caps<v,`a> `a + y {\imp \caps<y,`d> `d [y] w \caps<w,`b> } } 
& \red & 
\\ \multicolumn{3}{r}{
(\deactL), (\actR), (\Riv), (\deactR), (\Cap), (\Ri), (\Med) 
} \\
\imp \caps<x,`g> `g [x] v { \imp \caps<v,`d> `d [v] w \caps<w,`b> } 
 \end{array} \]

Subject reduction collapses in the $(\Liv)$ step (notice that the next step, $(\Li)$, does not change this fact): we cannot derive 
 \[ 
\derX \imp \caps<x,`g> `g [x] v { \cutL \caps<v,`a> `a + y {\imp \caps<y,`d> `d [y] w \caps<w,`b> } }  : { x{:}A\int (A\arr C)\int (A\arr C\arr D) } |- `b{:}D 
 \]
}

\Short{We cannot use these contexts to type the right-hand side net.}
\Long{
We can only derive this result via a derivation with last step either $(\arrL)$, $(\intR)$ or $(\unL)$.
Since neither the type associated to $x$ is a union or that to $`b$ an intersection, the only possibility is:
 \[ \kern-5mm
  \Inf	[\arrL]
 	{ \Inf	{ \derX \caps<x,`g>  : x{:}A |- `g{:}A }
 	  \quad
 	  \Inf	
 		{ \Inf	{ \derX \caps<y,`d>  : y{:}C |- `d{:}C }
 		  \Inf	{ \derX \caps<w,`b>  : w{:}D |- `b{:}D }
 		}
 		{ \derX \imp \caps<y,`d> `d [y] w \caps<w,`b>  : y{:}C\int (C\arr D) |- `b{:}D }
 	}
 	{ \derX \imp \caps<x,`g> `g [x] v { \imp \caps<y,`d> `d [v] w \caps<w,`b> }  : x{:}A\int (A\arr C\int (C\arr D) ) |- `b{:}D }
 \]
Notice that the last rule forces the construction of the type $A\arr (C\int (C\arr D))$, which should be amongst the types $A$, $A\arr C$ or $A\arr C\arr D$; this is not the case.

As can be expected from the witness-expansion theorem, we can derive

 \[ \kern5mm
 \Inf{
\multiput(10,0)(0,7){7}{.}
\raise2.25\RuleH\hbox to 6cm{\kern -10mm 
 	\Inf{\Inf{\derX \caps<x,`g> : x{:}A |- x{:}A } 
		  \Inf[\intR]
 			{\Inf{ \derX \caps<v,`a>  : v{:}C |- `a{:}C }
 			 \Inf{ \derX \caps<v,`a>  : v{:}C\arr D |- `a{:}C\arr D }
 			}
			{ \derX \caps<v,`a>  : v{:}C\int (C\arr D) |- `a{:}C\int (C\arr D) }
		}
		{ \derX \imp \caps<x,`g> `g [x] v \caps<v,`a>  : x{:}A\int (A\arr C\int (C\arr D) ) |- `a{:}C\int (C\arr D)  }
}
	  \quad
	  \Inf{\Inf { \derX \caps<y,`d>  : y{:}C |- `d{:}C }
 		   \Inf { \derX \caps<w,`b>  : w{:}D |- `b{:}D }
		}
		{ \derX \imp \caps<y,`d> `d [y] w \caps<w,`b>    : y{:}C\int (C\arr D) |- `b{:}D }
	}
 	{ \derX \cut {\imp \caps<x,`g> `g [x] v \caps<v,`a> } `a + y {\imp \caps<y,`d> `d [y] w \caps<w,`b> }   : x{:}A\int (A\arr C\int (C\arr D) ) |- `b{:}D }
 \]
}
The {\CBN} reduction
\Long{%
 \[\begin{array}{rcl}
\Long{\cut  \SemL{xx}{`a} `a + y \SemL{yy}{`b} }
\Short{\cut {\imp \caps<x,`g> `g [x] v \caps<v,`a> } `a + y {\imp \caps<y,`d> `d [y] w \caps<w,`b> } }
& \redCBN & \imp \caps<x,`g> `g [x] v { \imp {\imp \caps<x,`g> `g [x] u \caps<u,`d> } `d [v] w \caps<w,`b> } 
\end{array}
\] }
however, does not pose any problems\Short{.}\Long{
:
 \[ \begin{array}[t]{lcl}
\cut  \SemL{xx}{`a} `a + y \SemL{yy}{`b} 
& \red & (\actR) 
, (\Riii) \hspace*{16mm} \\
\cut \SemL{xx}{`a} `a + z { \imp {\cutR \SemL{xx}{`a} `a + y \caps<y,`d> } `d [z] w {\cutR \SemL{xx}{`a} `a + y \caps<w,`b> } }
 \end{array} \]
In this step, the derivation for $ \SemL{xx}{`a} $ splits into $ \derX \SemL{xx}{`a}  : x{:}A\int (A\arr C\int (C\arr D) ) |- `a{:}C $ and $ \derX \SemL{xx}{`a}  : x{:}A\int (A\arr C\int (C\arr D) ) |- `a{:}C\arr D $; since these do not exist as subderivations, we need $(\intE)$ here.
We then continue:
 \[ \begin{array}[t]{lcl}
\cut \SemL{xx}{`a} `a + z { \imp {\cutR \SemL{xx}{`a} `a + y \caps<y,`d> } `d [z] w {\cutR \SemL{xx}{`a} `a + y \caps<w,`b> } }
& \red & (\deactR), (\actL), (\Ri) \\
\cut \SemL{xx}{`a} `a + z { \imp {\cutL \SemL{xx}{`a} `a + y \caps<y,`d> } `d [z] w \caps<w,`b> }
& \red & (\renR) \\
\cut \SemL{xx}{`a} `a + z { \imp \SemL{xx}{`d} `d [z] w \caps<w,`b> } 
& \red & (\actL) 
, (\Imp) \\
\cutL {\imp \caps<x,`g> `g [x] v \caps<v,`a> } `a + z { \imp \SemL{xx}{`d} `d [z] w \caps<w,`b> } 
& \red & (\Liv), (\Li), (\deactL), (\Med) \\
\imp \caps<x,`g> `g [x] v { \imp \SemL{xx}{`d} `d [v] w \caps<w,`b> } 
& \red & (\Imp) \\
\imp \caps<x,`g> `g [x] v { \imp {\imp \caps<x,`g> `g [x] u \caps<u,`d> } `d [v] w \caps<w,`b> }
 \end {array} \]
Notice that the last net is in normal form (and corresponds to $\SemL{xx(xx)}{`b}$), and that we can derive:
 %
 \[
  \Inf	
	{ \Inf	{ \derX \caps<x,`g>  : y{:}A |- `a{:}A }
	  \quad
	  \Inf	
		{ \Inf	{ \Inf	{ \derX \caps<x,`g>  : x{:}A |- `g{:}A }
			  \Inf	{ \derX \caps<u,`d>  : u{:}C |- `d{:}C }
			}
			{ \derX \imp \caps<x,`g> `g [x] u \caps<u,`d>  : x{:}A\int (A\arr C) |- `d{:}C }
		  \Inf	{ \derX \caps<w,`b>  : w{:}D |- `b{:}D }
		}
		{ \derX \imp {\imp \caps<x,`g> `g [x] u \caps<u,`d> } `d [v] w \caps<w,`b> : x{:}A\int (A\arr C), v{:}C\arr D |- `b{:}D }
	}
	{ \derX \imp \caps<x,`g> `g [x] v { \imp {\imp \caps<x,`g> `g [x] u \caps<u,`d> } `d [v] w \caps<w,`b> }  : x{:}A\int (A\arr C)\int (A\arr C\arr D) |- `b{:}D }
 \]
}

 \end{example}

 \begin{example}[%
Second counterexample] \label{third counterexample}
Similarly, we can derive:
 \[ \kern-2.25cm
\Long{
\Inf	
	{  \Inf	
		{ \Inf	
			{ \derX \caps<x,`d> : x{:}A |- `d{:}A,`b{:}B } 
		}
		{ \derX \exp x \caps<x,`d> `b . `d : {} |- `d{:}A\un (A\arr B) }
\raise1.25\RuleH\hbox to 85mm{\kern -17mm
	  \Inf	[\unL]
		{ \Inf	
			{ \Inf	
				{ \derX \caps<z,`a> : z{:}A,v{:}C |- `a{:}A } }
			{ \derX \exp v \caps<z,`a> `a . `g  : z{:}A |- `g{:}C\arr A }
		  \quad
		  \Inf	
			{ \Inf	
				{ \derX \caps<z,`a> : z{:}A\arr B, v{:}C |- `a{:}A\arr B } }
			{ \derX \exp v \caps<z,`a> `a . `g  : z{:}A\arr B |- `g{:}C\arr A\arr B }
		}
		{ \derX \exp v \caps<z,`a> `a . `g  : z{:}A\un (A\arr B) |-  `g{:}(C\arr A)\un (C\arr A\arr B) }
}
\multiput(-40,0)(0,7){4}{.}
	}
	{ \derX \cut { \exp x \caps<x,`d> `b . `d } `d + z {\exp v \caps<z,`a> `a . `g } : {} |- `g{:}(C\arr A)\un (C\arr A\arr B) }
}
\Short{ \derX \cut { \exp x \caps<x,`d> `b . `d } `d + z {\exp v \caps<z,`a> `a . `g } : {} |- `g{:}(C\arr A)\un (C\arr A\arr B) }
\]
\Long{This net has no obvious counterpart in the {\LC}.}

This net reduces  $ \begin{array}{lcl}
\cut {\exp x \caps<x,`d> `b . `d } `d + z {\exp v \caps<z,`a> `a . `g } 
& \redCBN & \exp v { \exp x \caps<x,`a> `b . `a } `a . `g 
 \end{array} $%
\Short{; we cannot derive the same type for the latter term.}
\Long{:
 \[ \begin{array}{lcl}
\cut {\exp x \caps<x,`d> `b . `d } `d + z {\exp v \caps<z,`a> `a . `g } 
& \red & (\actR) 
, (\Rii) \\
\exp v { \cutR {\exp x \caps<x,`d> `b . `d } `d + z \caps<z,`a> } `a  . `g 
& \red & (\deactR), (\actL) \\
\exp v { \cutL {\exp x \caps<x,`d> `b . `d } `d + z \caps<z,`a> } `a . `g 
& \red & (\Lii) \\§
\exp v { \cut {\exp x { \cutL \caps<x,`d> `d + z \caps<z,`a> } `b . `e } `e + z \caps<z,`a> } `a . `g 
& \red & (\deactL), (\Cap), (\Exp) \\
\exp v { \exp x \caps<x,`a> `b . `a } `a . `g 
 \end{array} \]

Here the subject reduction property collapses in the $(\Rii)$ step.
We cannot type the net $\exp v { \cutR {\exp x \caps<x,`d> `b . `d } `d + z \caps<z,`a> } `a  . `g $ with the contexts above: the only way we can manage is:
 \[
 \Inf	
	{ \Inf	
		{ \Inf	
			{ \Inf	
				{ \derX \caps<x,`d> : v{:}C,x{:}A |- `d{:}A,`b{:}B }
			}
			{ \derX \exp x \caps<x,`d> `b . `d : v{:}C |- `d{:}A\un (A\arr B) }
		  \quad
		  \Inf	[\unL]
		  	{ \Inf	{ \derX \caps<z,`a> : z{:}A |- `a{:}A } 
			  \Inf	{ \derX \caps<z,`a> : z{:}A\arr B |- `a{:}A\arr B } 
			}
			{ \derX \caps<z,`a> : z{:}A\un (A\arr B) |- `a{:}A\un (A\arr B) }
		}
		{ \derX \cutR {\exp x \caps<x,`d> `b . `d } `d + z \caps<z,`a> : v{:}C |- `a{:}A\un (A\arr B) }
	}
	{ \derX \exp v { \cutR {\exp x \caps<x,`d> `b . `d } `d + z \caps<z,`a> } `a  . `g 
 : {} |- `g{:}C\arr A\un (A\arr B) }	
 \]

}
The {\CBV} reduction, on the other hand,  
\Long{
 \[ \begin{array}{lcl}
\cut {\exp x \caps<x,`d> `b . `d } `d + z {\exp v \caps<z,`a> `a . `g } 
& \redCBV & 
\exp v { \exp x { \exp y \caps<x,`e> `e . `g } `b . `a } `a . `g 
 \end{array} \]
}
does not create a problem\Short{.}%
\Long{:

 %
 \[ \begin{array}{lcl}
\cut {\exp x \caps<x,`d> `b . `d } `d + z {\exp v \caps<z,`a> `a . `g } 
& \red & (\actL)
,
(\Lii) \\
\cut {\exp x { \cutL \caps<x,`d> `d + z {\exp v \caps<z,`a> `a . `g } } `b . `s } `s + z {\exp v \caps<z,`a> `a . `g } 
& \red & (\deactL), (\actR), (\Rii), (\deactR), (\Cap) \\
\cut {\exp x { \exp v \caps<x,`a> `a . `g } `b . `s } `s + z {\exp v \caps<z,`a> `a . `g } 
& \red & (\actR) 
,
(\Rii), (`a)  \\
\exp v { \cutR {\exp x { \exp y \caps<x,`e> `e . `g } `b . `s } `s + z \caps<z,`a> } `a . `g 
& \red & (\deactR), (\Exp) \\
\exp v { \exp x { \exp y \caps<x,`e> `e . `g } `b . `a } `a . `g 
 \end{array} \]
and we can type the last net as follows:

 \[ 
\Inf	
	{ \Inf	
		{ \Inf	
			{ \Inf	
				{ \derX \caps<x,`e> : v{:}C,x{:}A,y{:}C |- `e{:}A,`b{:}B }
			}
			{ \derX \exp y \caps<x,`e> `e . `g  : v{:}C,x{:}A |- `g{:}C\arr A,`b{:}B } 
		}
		{ \derX \exp x { \exp y \caps<x,`e> `e . `g } `b . `a : v{:}C |- `a{:}A\arr B,`g{:}C\arr A }
	}
	{ \derX \exp v { \exp x { \exp y \caps<x,`e> `e . `g } `b . `a } `a . `g : {} |- `g{:}(C\arr A)\un (C\arr A\arr B) }
\]
So {\CBV}-reduction poses no problem in this case.
}
 \end{example}

\Long{
These \Short{two }\Long{three }examples show that it is not possible to obtain a general witness-reduction result; we can (partially) recover from that, as we will show in the next section.
}

\Long{
In the remainder of this section we will show eight cases in which the proof of the witness-reduction property collapses\Comment{, that directly correspond to the noted problem cases}.
Each of these cases involve a propagation rule: the first four are cases in which left-propagation goes wrong, the last four when this happens when right-propagating.

 \begin{example}[\Lii] 
$ \cutL { \exp y P `b . `a } `a + x Q \red \cut { \exp y { \cutL P `a + x Q } `b . `g } `g + x Q $, with $`g$ fresh.
Notice that $`a$ is not introduced in $ \exp y P `b . `a $, so might appear inside $P$; also, $y,`b \not \in \FC{Q}$.
%
Suppose the derivation is shaped like: 
\Comment{
 \[ 
 \Inf	
	{ \Inf	[\intR]
		{ \Inf	
			{ \InfBox{\D^i_1}
				{ \derX P : `G,y{:}A_i |- `b{:}B_i,`a{:}C_i,`D } 
			}
			{ \derPure \exp y P `b . `a : `G |- `a{:}(A_i\arr B_i)\union C_i,`D } 
		  ~ (\forall i \ele \n)
		}
		{ \derX \exp y P `b . `a : `G |- `a{:}\int{n}({(A_i\arr B_i)\union C_i}),`D } 
	  \quad
	  \InfBox{\D_2}
		{ \derPure Q : `G,x{:}\int{n}({(A_i\arr B_i)\union C_i}) |- `D }
	}
	{ \derPure \cutL {\exp y P `b . `a } `a + x Q : `G |- `D }
 \]
Notice that then $\D_2$ derives a union type for $x$, which is in conflict with the fact that the derivation is pure.
So, in fact, $C_i = A_i\arr B_i$, and we get 
}
 \[ 
 \Inf	
	{
	 \Inf	[\intR]
		{ \Inf	
			{ \InfBox{\D^i_1}
				{ \derX P : `G,y{:}A_i |- `b{:}B_i,`a{:}A_i\arr B_i,`D } 
			}
			{ \derPure \exp y P `b . `a : `G |- `a{:}A_i\arr B_i,`D } 
		  ~ (\forall i \ele \n)
		}
		{ \derX \exp y P `b . `a : `G |- `a{:}\int{n}(A_i\arr B_i),`D } 
	  \quad
	  \InfBox{\D_2}
		{ \derPure Q : `G,x{:}\int{n}(A_i\arr B_i) |- `D }
	}
	{ \derPure \cutL {\exp y P `b . `a } `a + x Q : `G |- `D }
 \]

Now, for the right-hand side, we can at most derive:
 \[
 \Inf	[?]
 	{ \Inf	
		{ \Inf	
			{ \Inf	[\intR]
				{ \InfBox{\D^1_j}
					{ \derX P : `G,y{:}A_j |- `b{:}B_j,`a{:}A_j\arr B_j,`D } 
				  \quad (\forall j \ele \n)
				}
				{ \derX P : `G,y{:}\int{n}{A_i} |- `b{:}\un{n}{B_i},`a{:}\int{n}(A_i\arr B_i),`D }
		  \quad
			  \InfBox{\D_2}
				{ \derPure Q : `G,x{:}\int{n}(A_i\arr B_i) |- `D }
			}
			{ \derPure \cutL P `a + x Q  : `G,\int{n}{A_i} |- `b{:}\un{n}{B_i},`D } 
		}
		{ \derX \exp y { \cutL P `a + x Q } `b . `g : `G |- `g{:}\int{n}{A_i}\arrow\un{n}{ B_i},`D }
	  \kern-2cm 
	  \InfBox{\D_2}
		{ \derPure Q : `G,x{:}\int{n}(A_i\arr B_i) |- `D }
	}
	{ \derPure \cut { \exp y {\cutL P `a + x Q } `b . `g } `g + x Q : `G |- `D }
\]
\Comment{
If $`a$ is introduced, we can derive:
 \[
 \Inf	
 	{ \Inf [\intR]
		{ \Inf	
			{ \Inf	
				{ \Inf	[\Weak]
					{ \InfBox{\D^1_j}
						{ \derX P : `G,y{:}A_j |- `b{:}B_j,`D } 
					}
					{ \derX P : `G,y{:}A_j |- `b{:}B_j,`a{:}\int{n}(A_i\arr B_i),`D } 
				  \quad
					  \InfBox{\D_2}
						{ \derPure Q : `G,x{:}\int{n}(A_i\arr B_i) |- `D }
				}
				{ \derPure \cutL P `a + x Q  : `G,A_j |- `b{:}B_j,`D } 
			}
			{ \derX \exp y { \cutL P `a + x Q } `b . `g : `G |- `g{:}A_j\arrow B_j,`D }
			\kern -3cm (\forall j \ele \n)
		}
		{ \derX \exp y { \cutL P `a + x Q } `b . `g : `G |- `g{:}\int{n}(A_i\arrow B_i),`D }
	  \quad
	  \InfBox{\D_2}
		{ \derPure Q : `G,x{:}\int{n}(A_i\arr B_i) |- `D }
	}
	{ \derPure \cut { \exp y {\cutL P `a + x Q } `b . `g } `g + x Q : `G |- `D }
\]
for every $j \ele \n$.
}


Notice that $\int{n}{A_i}\arrow\un{n}{ B_i} \not= \int{n}(A_i\arrow B_i)$, so we can't close the $(?)$ gap.

\end{example}

\begin{example}[\Liii] $ \cutL { \exp y P `b . `g } `a + x Q \red \exp y { \cutL P `a + x Q } `b . `g $, with $`g \not = `a$.
%
Suppose the derivation is shaped like: 
 \[ 
\Inf	
	{ \Inf	[\intR]
		{ \Inf	
			{ \InfBox{\D^i_1}{ \derX P : `G ,y{:}A_i |- `b{:}B_i,`a{:}C_i,`D } }
			{ \derPure \exp y P `b . `g : `G |- `g{:}A_i\arr B_i,`a{:}C_i,`D }
		  \quad (\forall i \ele \n)
		}
		{ \derX \exp y P `b . `g : `G |- `g{:}{\un{n}( A_i\arr B_i )},`a{:}\CoI{n},`D }
	  \quad
	  \InfBox{\D_2}{ \derPure Q : `G, x{:}\CoI{n} |- `D }
	}
	{ \derPure \cutL {\exp y P `b . `g } `a + x Q : `G |- `g{:}\un{n}(A_i\arr B_i),`D }
 \]
Based on the sub-derivations, the only thing we can derive for the right-hand side is: 
 \[ \Inf	
	{ \Inf	
		{ \Inf	[\intR]
			{ \InfBox{\D^i_1}{ \derPure P : `G ,y{:}A_i |- `b{:}B_i,`a{:}C_i,`D } 
			  \quad (\forall i \ele \n)
			}
			{ \derX P : `G ,y{:}\AoI{n} |- `b{:}\BoU{n},`a{:}\CoI{n},`D }
		  \quad
		  \InfBox{\D_2}{ \derPure Q : `G, x{:}\CoI{n} |- `D }
		}
		{ \derPure \cutL P `a + x Q : `G ,y{:}\AoI{n} |- `b{:}\BoU{n},`D }
	}
	{ \derPure \exp y { \cutL P `a + x Q } `b . `g : `G |- `g{:}\AoI{n}\arr\BoU{n},`D }
\]
Notice that $\un{n}( A_i\arr B_i ) \not= \AoI{n}\arr\BoU{n}$.

\end{example}

\begin{example}[\Liv] $ \cutL { \imp P `b [z] y Q } `a + z P \red \imp { \cutL P `a + x R } `b [z] y { \cutL Q `a + x R } $.
%
Suppose the derivation is shaped like: 
 \[ \kern-1cm
 \Inf	{ \Inf	[\intR]
		{ \Inf	{ \InfBox{\D^i_1}
				{ \derX P : `G |- `a{:}C_i,`b{:}A_i,`D }
			  \quad
			  \InfBox{\D^i_2}
				{ \derX Q : `G,y{:}B_i |- `a{:}C_i,`D }
			}
			{ \derPure \imp P `b [z] y Q : `G\inter z{:}A_i\arr B_i |- `a{:}C_i,`D }
		(\forall i \ele \n)
		}
		{ \derX \imp P `b [z] y Q : `G\inter\, z{:}\int{n}{(A_i\arr B_i)} |- `a{:}\CoI{n},`D }
	  \quad
	  \InfBox{\D_3}
		{ \derPure R : `G,x{:}\CoI{n} |- `D }
	}
	{ \derPure \cutL { \imp P `b [z] y Q } `a + x R : `G\inter\, z{:}\int{n}{(A_i\arr B_i)} |- `D }
 \]
Again, we can only derive:
 \[ \kern-3cm
 \Inf	{ \Inf	{ \Inf	[\intR]
			{ \InfBox{\D^i_1}
				{ \derPure P : `G |- `a{:}C_i,`b{:}A_i,`D } 
			  ~(\forall i \ele \n)
			}
			{ \derX P : `G |- `a{:}\CoI{n},`b{:}\AoU{n},`D } 
		  ~
		  \InfBox{\D_3}
			{ \derPure R : `G,x{:}\CoI{n} |- `D } 
		}
		{ \derPure \cutL P `a + x R : `G |- `b{:}\AoU{n},`D } 
\raise1.25\RuleH\hbox to 3cm {\kern-3.5cm
	  \Inf	{ \Inf	[\intR]
			{ \InfBox{\D^i_2}
				{ \derPure Q : `G,y{:}B_i |- `a{:}C_i,`D } 
			  ~(\forall i \ele \n)
			}
			{ \derX Q : `G,y{:}\BoI{n} |- `a{:}\CoI{n},`D } 
		  ~
		  \InfBox{\D_3}
			{ \derPure R : `G,x{:}\CoI{n} |- `D } 
		}
		{ \hspace*{15mm}\derPure \cutL Q `a + x R : `G,y{:}\BoI{n} |- `D }
}
 \multiput(-10,0)(0,7){4}{.}
	}
	{ \derPure \imp { \cutL P `a + x R } `b [z] y { \cutL Q `a + x R } : `G,z{:}\AoU{n}\arrow \BoI{n} |- `D }
 \]
Notice that $\AoU{n}\arr \BoI{n} \not= \int{n}{(A_i\arr B_i)}$.

 \end{example}

\begin{example}[\Lv] $ \cutL { \cut P `b + y Q } `a + x R \red \cut { \cutL P `a + z P } `b + y { \cutL Q `a + x R } $.
%
Suppose the derivation is shaped like: 
 \[ 
 \Inf	{ \Inf	[\intR]
		{ \Inf	{ \InfBox{\D^i_1}
				{ \derX P : `G |- `a{:}C_i,`b{:}A_i,`D }
			  \quad
			  \InfBox{\D^i_2}
				{ \derX Q : `G,y{:}A_i |- `a{:}C_i,`D }
			}
			{ \derPure \cut P `b + y Q : `G |- `a{:}C_i,`D }
		  (\forall i \ele \n)
		}
		{ \derX \cut P `b + y Q : `G |- `a{:}\CoI{n},`D }
	  \quad
	  \InfBox{\D_3}
		{ \derPure R : `G,x{:}\CoI{n} |- `D }
	}
	{ \derPure \cutL { \cut P `b + y Q } `a + x R : `G |- `D }
 \]
 \[ \kern-2.5cm
 \Inf	[?]
	{ \Inf	{ \Inf	[\intR]
			{ \InfBox{\D^i_1}
				{ \derPure P : `G |- `a{:}C_i,`b{:}A_i,`D } 
			  ~(\forall i \ele \n)
			}
			{ \derX P : `G |- `a{:}\CoI{n},`b{:}\AoU{n},`D } 
		  ~
		  \InfBox{\D_3}
			{ \derPure R : `G,x{:}\CoI{n} |- `D } 
		}
		{ \derPure \cutL P `a + x R : `G |- `b{:}\AoU{n},`D } 
\raise1.25\RuleH\hbox to 2.5cm {\kern-3.75cm
	  \Inf	{ \Inf	[\intR]
			{ \InfBox{\D^i_2}
				{ \derPure Q : `G,y{:}A_i |- `a{:}C_i,`D } 
			  ~(\forall i \ele \n) 
			}
			{ \derX Q : `G,y{:}\AoI{n} |- `a{:}\CoI{n},`D } 
		  ~
		  \InfBox{\D_3}
			{ \derPure R : `G,x{:}\CoI{n} |- `D } 
		}
		{ \hspace*{2cm} \derPure \cutL Q `a + x R : `G,y{:}\AoI{n} |- `D }
}
 \multiput(-10,0)(0,7){4}{.}
	}
	{ \derPure \cut { \cutL P `a + x R } `b + y { \cutL Q `a + x R } : `G |- `D }
 \]
Notice that $\AoU{n} \not= \AoI{n}$, so we can't close the $(?)$ gap.

 \end{example}

\begin{example}[\Rii] $ \cutR P `a + x { \exp y Q `b . `g } \red \exp y { \cutR P `a + x Q } `b . `g $.
%
Suppose the derivation is shaped like: 
 \[
\Inf	{
	 \InfBox{\D_1}
		{ \derPure P : `G |- `a{:}\CoU{n},`D }
	  \quad
	  \Inf	[\unL]
		{  \Inf	{ \InfBox{\D^i_2}
				{ \derX Q : `G,x{:}C_i,y{:}A_i |- `b{:}B_i,`D }
			}
			{ \derPure \exp y Q `b . `g : `G,x{:}C_i |- `g{:}A_i\arr B_i\union `D }
		  ~(\forall i \ele \n)
		}
		{ \derX \exp y Q `b . `g : `G,x{:}\CoU{n} |- `g{:}\un{n}(A_i\arr B_i)\union `D }
	}
	{ \derPure \cutR P `a + x { \exp y Q `b . `g } : `G |- `g{:}\un{n}(A_i\arr B_i)\union `D }
 \]
We can now only derive:
 \[
\Inf	{ \Inf	{ \InfBox{\D_1}
			{ \derPure P : `G |- `a{:}\CoU{n},`D }
		  \quad 
		  \Inf	[\unL]
			{ 
			\InfBox{\D^1_2}
				{ \derX Q : `G,x{:}C_i,y{:}A_i |- `b{:}B_i,`D }
			  ~~(\forall i \ele \n)
			}
			{ \derX Q : `G,x{:}\CoU{n},y{:}\AoI{n} |- `b{:}\BoU{n},`D }
		}
		{ \derPure \cutR P `a + x Q : `G,y{:}\AoI{n} |- `b{:}\BoU{n},`D }
	}
	{ \derPure \exp y { \cutR P `a + x Q } `b . `g : `G |- `g{:}\AoI{n}\arr \BoU{n}\union `D }
 \]
Notice that $\un{n}{(A_i\arr B_i)} \not= \AoI{n}\arr \BoU{n}$.

 \end{example}

\begin{example}[\Riii] $ \cutR P `a + x {\imp Q `b [x] y R } \red \cut P `a + v { \imp { \cutR P `a + x Q } `b [v] y { \cutR P `a + x R } } $, with $v$ fresh.
%
Suppose the derivation is shaped like: 
 \[ 
\Inf	
	{ \InfBox{\D_1}
		{ \derPure P : `G |- `a{:}\un{n}({(A_i\arr B_i)\inter C_i}),`D }
	  \Inf	[\unL]
		{ \Inf	
			{ \InfBox{\D^i_2}
				{ \derX Q : `G,x{:}C_i |- `b{:}A_i,`D }
			  \quad 
			  \InfBox{\D^i_3}
				{ \derX R : `G,x{:}C_i,y{:}B_i |- `D }
			}
			{ \derPure \imp Q `b [x] y R : `G,x{:}(A_i\arr B_i)\inter C_i |- `D }
		  (\forall i \ele \n)
		}
		{ \derX \imp Q `b [x] y R : `G,x{:}\un{n}({(A_i\arr B_i)\inter C_i}) |- `D }
	}
	{ \derPure \cutR P `a + x { \imp Q `b [x] y R } : `G |- `D }
 \]
Now, since $\D_1$ is pure, the intersections in the type for $`a$ are not really there, so $C_i = A_i\arrow B_i$, and we get:
 \[ 
\Inf	
	{ \InfBox{\D_1}
		{ \derPure P : `G |- `a{:}\un{n}(A_i\arr B_i),`D }
	  \kern-1cm
	  \Inf	[\unL]
		{ \Inf	
			{ \InfBox{\D^i_2}
				{ \derX Q : `G,x{:}A_i\arr B_i |- `b{:}A_i,`D }
			  \quad 
			  \InfBox{\D^i_3}
				{ \derX R : `G,x{:}A_i\arr B_i,y{:}B_i |- `D }
			}
			{ \derPure \imp Q `b [x] y R : `G,x{:}(A_i\arr B_i) |- `D }
		  (\forall i \ele \n)
		}
		{ \derX \imp Q `b [x] y R : `G,x{:}\un{n}(A_i\arr B_i) |- `D }
	}
	{ \derPure \cutR P `a + x { \imp Q `b [x] y R } : `G |- `D }
 \]
Now we can at best construct:
 \[ \kern-10mm
 \Inf	[?]
	{ \InfBox{\D_1}
		{ \derX P : `G |-  `a{:}\un{n}{(A_i\arr B_i)},`D }
	  ~
	  \Inf	
		{%
\multiput(10,0)(0,7){10}{.}
\raise3\RuleH\hbox to 2cm {\kern-5cm
		  \Inf	
			{ \InfBox{\D_1}
				{ \derX P : `G |-  `a{:}\un{n}{(A_i\arr B_i)},`D }
			  \Inf	[\unL]
				{ \InfBox{\D^i_2}
					{ \derX Q : `G,x{:}A_i\arr B_i |- `b{:}A_i,`D }
				  ~(\forall i \ele \n)
				}
				{ \derX Q : `G,x{:}\un{n}{(A_i\arr B_i)} |- `b{:}\un{n}{A_i},`D }
			}
			{ \derX \cutR P `a + x Q : `G |- `b{:}\un{n}{A_i},`D }
}
\kern-8mm
		  \Inf	
			{ \InfBox{\D_1}
				{ \derX P : `G |-  `a{:}\un{n}{(A_i\arr B_i)},`D }
\raise.9\RuleH\hbox to 3cm {\kern-3cm
			  \Inf	[\unL]
				{ \InfBox{\D^i_3}
					{ \derX R : `G,y{:}B_i,x{:}A_i\arr B_i |- `b{:}A_i,`D }
				  ~(\forall i \ele \n)
				}
				{ \derX R : `G,y{:}\int{n}{B_i},x{:}\un{n}{(A_i\arr B_i)} |- `b{:}\un{n}{A_i},`D }
}
\multiput(-10,0)(0,7){3}{.}
			}
			{ \derX \cutR P `a + x R : `G,y{:}\int{n}{B_i} |- `D }
		}
		{ \derX \imp { \cutR P `a + x Q } `b [v] y { \cutR P `a + x R } : `G,v{:}\un{n}{A_i}\arr \int{n}{B_i} |- `D }
	}
	{ \derPure \cut P `a + v { \imp { \cutR P `a + x Q } `b [v] y { \cutR P `a + x R } } : `G |- `D }
 \]
Notice that $\un{n}(A_i\arr B_i) \not= \un{n}{A_i}\arr \int{n}{B_i}$, so we cannot fill the $(?)$ gap.

 \end{example}

\begin{example}[\Riv] $ \cutR P `a + x {\imp Q `b [z] y R } \red \imp { \cutR P `a + x Q } `b [z] y { \cutR P `a + z R },\ x \not= z $.  
%
Let $C = \CoU{n}$.
Suppose the derivation is shaped like: 
 \[
\Inf	{ \InfBox{\D_1}
		{ \derX P : `G |- `a{:}C,`D }
	\kern-1cm
	  \Inf	[\unL]
		{ \Inf	{ \InfBox{\D^i_2}
				{ \derX Q : `G,x{:}C_i |- `b{:}A_i,`D }
			  \dquad
			  \InfBox{\D^i_3}
				{ \derX R : `G,x{:}C_i,y{:}B_i |- `D }
			}
			{ \derX \imp Q `b [z] y R : `G\inter\, z{:}A_i\arr B_i,x{:}C_i |- `D }
		  ~(\forall i \ele \n)
		}
		{ \derX \imp Q `b [z] y R : `G\inter\, z{:}\int{n}(A_i\arr B_i),x{:}C |- `D }
	}
	{ \derX \cutR P `a + x { \imp Q `b [z] y R } : `G\inter\, z{:}\int{n}(A_i\arr B_i) |- `D }
 \]
Then the only derivation we can construct is:
 \[ 
\Inf	{ \Inf	{ \InfBox{\D_1}
			{ \derX P : `G |- `a{:}C,`D }
		  ~
		  \Inf	[\unL]
			{ \InfBox{\D^i_2}
				{ \derX Q : `G,x{:}C_i |- `b{:}A_i,`D }
			}
			{ \derX Q : `G,x{:}C |- `b{:}\AoU{n},`D }
		}
		{ \derX \cutR P `a + x Q : `G |- `b{:}\AoU{n},`D }
	  \Inf	{ \InfBox{\D_1}
			{ \derX P : `G |- `a{:}C,`D }
		  ~
		  \Inf	[\unL]
			{ \InfBox{\D^i_3}
				{ \derX R : `G,x{:}C_i,y{:}B_i |- `D }
			}
			{ \derX R : `G,x{:}C,y{:}\int{n}{B_i} |- `D }
		}
		{ \derX \cutR P `a + z R : `G ,y{:}\int{n}{B_i} |- `D }
	}
	{ \derX \imp { \cutR P `a + x Q } `b [z] y { \cutR P `a + z R } : `G\inter\, z{:}\AoU{n}\arr \int{n}{B_i} |- `D }
 \]
Notice that $\int{n}(A_i\arr B_i) \not= \AoU{n}\arr \int{n}{B_i}$.
 \end{example}

\begin{example}[\Rv] $ \cutR P `a + x { \cut Q `b + y R } \red \cut { \cutR P `a + x Q } `b + y { \cutR P `a + z R } $.
%
Let $C = \CoU{n}$.
Suppose the derivation is shaped like: 
 \[
\Inf	{ \InfBox{\D_1}
		{ \derX P : `G |- `a{:}C,`D }
\kern-1cm
	  \Inf	[\unL]
		{ \Inf	{ \InfBox{\D^i_2}
				{ \derX Q : `G,x{:}C_i |- `b{:}A_i,`D }
			  \dquad
			  \InfBox{\D^i_3}
				{ \derX R : `G,x{:}C_i,y{:}A_i |- `D }
			}
			{ \derX \cut Q `b + y R : `G,x{:}C_i |- `D }
		  ~(\forall i \ele \n)
		}
		{ \derX \cut Q `b + y R : `G,x{:}C |- `D }
	}
	{ \derX \cutR P `a + x { \cut Q `b + y R } : `G |- `D }
 \]
 \[ 
 \Inf	[?]
	{ \Inf	{ \InfBox{\D_1}
			{ \derX P : `G |- `a{:}C,`D }
		  ~
		  \Inf	[\unL]
			{ \InfBox{\D^i_2}
				{ \derX Q : `G,x{:}C_i |- `b{:}A_i,`D }
			}
			{ \derX Q : `G,x{:}C |- `b{:}\AoU{n},`D }
		}
		{ \derX \cutR P `a + x Q : `G |- `b{:}\AoU{n},`D }
	  \Inf	{ \InfBox{\D_1}
			{ \derX P : `G |- `a{:}C,`D }
		  ~
		  \Inf	[\unL]
			{ \InfBox{\D^i_3}
				{ \derX R : `G,x{:}C_i,y{:}A_i |- `D }
			}
			{ \derX R : `G,x{:}C,y{:}\AoI{n} |- `D }
		}
		{ \derX \cutR P `a + z R : `G ,y{:}\AoI{n} |- `D }
	}
	{ \derX \cut { \cutR P `a + x Q } `b + y { \cutR P `a + z R } : `G |- `D }
 \]
Notice that $\AoU{n} \not= \AoI{n}$, so we cannot fill the $(?)$ gap.

 \end{example}

}
So, this notion of type assignment is too liberal to obtain preservation of types under conversion: analysing the problems above, we can summarise them by: both
 \begin{flatenumerate}
 \item \label{no right into union}
right-propagation into union, and
 \item \label{no left into intersection}
left-propagation into intersection
 \end{flatenumerate}
break the witness-reduction property.

\Long{
In this paper, the solution to the above problem is to `flatten' the above context assignment rules in Section~\ref{CBN logical} by defining {\em two} separate systems, $\derXRN P : `G |- `D $ and $\derXRV P : `G |- `D $, as in Definition \ref{CBN logical type assignment} and 
\ref{CBV logical type assignment}, that treat either intersection or union more as their logical counterpart.
The first system will in fact give a notion of context assignment for the {\CBN} restriction which already solves problem \ref{no left into intersection}, and solves problem \ref{no right into union} by restricting the union type on plugs to those obtained by weakening, and not allowing union types on the right of the arrow type constructor.
After that, we will define two restrictions in Section~\ref{CBN}, $\derXN P : `G |- `D $ and $\derXV P : `G |- `D $, that either limit intersections to nets that introduce plugs, or union to nets that introduce sockets.
 }

\Long{ \include{WRrestricted}}
}
{ \def \DBwnameAndRule#1[#2]#3{#3}


\Long{ \section{A system with preservance of types under {\CBN} reduction} \label{CBN} }
\Short{ \section{Systems with preservance of types under {\CBN} or {\CBV} reduction} \label{CBN} }

In this section, we will try and retrieve the witness-reduction property using a 
restriction of the system
proposed in the previous section.
The approach we choose here is, in fact, partially inspired by \cite{Barbanera-Dezani-Liguoro-IaC'95}, where union types can only be assigned to values.
The solutions we present here are, however, very different: we do not need to limit the structure of types, and, for {\CBN}, limit union types to \emph{names}, i.e.~nets that introduce sockets.
For {\CBV}, we limit intersection types to \emph{values}, i.e.~nets that introduce plugs; this is reminiscent of the limitation in {\ML} of quantification of types to terms that are values \cite{Harper-Lillibridge'91,Milner-et.al'97}, and is used also in \cite{Summers'08}.

We define $\derXN P : `G |- `D $ as in Definition~\ref{intersection and union tas}, where we change the applicability of rule $(\unL)$, and add a rule to treat left-activated cuts:


 \begin{definition}\label{CBN type assignment}

\Long{ \begin{enumerate}

 \firstitem }
The context assignment rules for $\TurnN$ are\Short{ (we only show the changed rules)}:
 \[ \def\arraystretch{2.75} 
 \begin{array}{rl}
 \\[-12mm] 
\Long{
 (\Ax): &
 \Inf	
	{ \derXN \caps<y,`b> : `G ,y{:}A |- `b{:}A,`D }
 \\
}
 (\Cut):&
 \Inf	[\emph{for inactive and right-activated cuts}]
	{ \derXN P : `G |- `a{:}A,`D \quad \derXN Q : `G, \stat{x}{A} |- `D }
	{ \derXN \cut P `a + x Q : `G |- `D }
 \\
(\DaggerL):&
 \Inf	[ A\emph{ not an intersection type},\textit{$x$ introduced}]
	{ \derXN P : `G |- `a{:}A,`D \quad \derXN Q : `G, \stat{x}{A} |- `D }
	{ \derXN \cutL P `a + x Q : `G |- `D }
 \\
\Long{
 (\arrL):&
 \Inf	{ \derXN P : `G |- `a{:}A,`D \quad \derXN Q : `G,\stat{x}{B} |- `D }
	{ \derXN \imp P `a [y] x Q : `G \inter \stat{y}{A\arr B} |- `D }
 \dquad
 (\arrR):
 \Inf	{ \derXN P : `G,\stat{x}{A} |- `a{:}B,`D }
	{ \derXN \exp x P `a . `b : `G |- \stat{`b}{A\arr B}\union `D }
 \\
(\unL): &
 \Inf	[n \geq 0,x\emph{ introduced in }P]
	{ \derXN P : `G,x{:}A_i |- `D 
		\quad (\forall i \ele \n) }
	{ \derXN P : `G, x{:}\AoU{n} |- `D }
\\
(\intR): &
 \Inf	[n \geq 0]
	{ \derXN P : `G |- `a{:}A_i,`D \quad (\forall i \ele \n) }
	{ \derXN P : `G |- `a{:}\AoIn,`D }
}
\Short{
(\unL): &
 \Inf	[n \geq 0,x\emph{ introduced in }P]
	{ \derXN P : `G,x{:}A_i |- `D 
	  \quad (\forall i\ele \n)
	  }
	{ \derXN P : `G, x{:}\AoU{n} |- `D }
}
 \end{array} \]

\Long{ \item
Pure derivations are those that end with either rule $(\Ax)$, $(\Cut)$, $(\DaggerL)$, $(\arrL)$, and $(\arrR)$.

 \end{enumerate}}

 \end{definition}

The {\CBN} reduction
\Long{%
 \[\begin{array}{rcl}
\Long{\cut  \SemL{xx}{`a} `a + y \SemL{yy}{`b} }
\Short{\cut {\imp \caps<x,`g> `g [x] v \caps<v,`a> } `a + y {\imp \caps<y,`d> `d [y] w \caps<w,`b> } }
& \redCBN & \imp \caps<x,`g> `g [x] v { \imp {\imp \caps<x,`g> `g [x] u \caps<u,`d> } `d [v] w \caps<w,`b> } 
\end{array}
\] }
however, does not pose any problems\Short{.}\Long{
:
 \[ \begin{array}[t]{lcl}
\cut  \SemL{xx}{`a} `a + y \SemL{yy}{`b} 
& \red & (\actR) 
, (\Riii) \\
\cut \SemL{xx}{`a} `a + z { \imp {\cutR \SemL{xx}{`a} `a + y \caps<y,`d> } `d [z] w {\cutR \SemL{xx}{`a} `a + y \caps<w,`b> } }
& \red & (\deactR), (\actL), (\Ri) \\
\cut \SemL{xx}{`a} `a + z { \imp {\cutL \SemL{xx}{`a} `a + y \caps<y,`d> } `d [z] w \caps<w,`b> }
& \red & (\renR) \\
\cut \SemL{xx}{`a} `a + z { \imp \SemL{xx}{`d} `d [z] w \caps<w,`b> } 
& \red & (\actL) 
, (\Imp) \\
\cutL {\imp \caps<x,`g> `g [x] v \caps<v,`a> } `a + z { \imp \SemL{xx}{`d} `d [z] w \caps<w,`b> } 
& \red & (\Liv), (\Li), (\deactL), (\Med) \\
\imp \caps<x,`g> `g [x] v { \imp \SemL{xx}{`d} `d [v] w \caps<w,`b> } 
& \red & (\Imp) \\
\imp \caps<x,`g> `g [x] v { \imp {\imp \caps<x,`g> `g [x] u \caps<u,`d> } `d [v] w \caps<w,`b> }
 \end {array} \]
}


\Long{ \begin{remark}
Before coming to the proof that the system is closed for reduction, w}%
\Short{W}e can easily verify that this notion of type assignment is \emph{not closed} for witness expansion.
This is clear from the fact that the side-condition of rule $(\unL)$ is not preserved by witness expansion: take $\cut {\exp y P `b . `a } `a + x \caps<x,`g> $ such that $`a$ is introduced, and $`g$ does not appear in $P$, then $`g$ is introduced in the net that is the result of contracting this cut, $\exp y P `b . `g $, but not in the net $\cut {\exp y P `b . `a } `a + x \caps<x,`g> $ itself.
\Long{ \end{remark} }

\Long{We will see that t}%
\Short{T}he addition of rule $(\DaggerL)$ solves the problem of `left propagation into intersection' in the context of {\CBN} reduction, and that the restriction on rule $(\unL)$ solves `right propagation into union'.

 \begin{theorem}[Witness reduction for $\TurnN$ wrt \CBN] \label{witness reduction for CBN}
If $\derXN P : `G |- `D $, and $P \redCBN Q$, then $\derXN Q : `G |- `D $.
 \end{theorem}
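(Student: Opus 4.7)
The plan is to proceed by induction on the definition of $\redCBN$, focusing on the reduction rules themselves (the compatible and transitive closure cases are routine). For each rule $\textit{Lhs} \redCBN \textit{Rhs}$, I would assume a derivation for the left-hand side and construct one for the right-hand side, essentially dualising the construction carried out in the proof of Theorem~\ref{witness expansion}. As in that proof, I would first reduce to the case of a pure derivation ending with $(\Ax)$, $(\Cut)$, $(\DaggerL)$, $(\arrL)$ or $(\arrR)$, since the cases ending in $(\intR)$ and $(\unL)$ distribute over the premises; the Generation Lemma (Lemma~\ref{pure Gen lemma}) would be the main tool to invert such derivations.

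The logical rules $(\Cap)$, $(\Exp)$, $(\Imp)$, and the CBN variant of $(\Ins)$ are handled as in the standard cut-elimination arguments; in particular, the CBN restriction of $(\Ins)$ to $\cut {\cut Q `g + y P } `b + z R$ matches exactly the shape obtained by inverting the typing of $\cut{\exp y P `b . `a} `a + x {\imp Q `g [x] z R}$, so no duplication issue arises. The activation rules $(\actL)$ and $(\actR)$ are immediate: in the CBN formulation activation can only be applied when the relevant side of the cut actually introduces its connector, and active cuts are typed by the same rules $(\Cut)$/$(\DaggerL)$ as their inactive counterparts, so no new obligation is incurred.

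The heart of the argument lies in the propagation rules, and this will be the main obstacle. For right-propagation $(\Ri)$--$(\Rv)$, the witness-expansion proof built an intersection on the plug $`a$ by splitting the derivation for $P$ into $n$ copies; for subject reduction we must do the reverse --- merge $n$ copies of the derivation for $P$ into a single one. This is sound because $(\intR)$ remains unrestricted, so the merged derivation can always be rebuilt. The dangerous cases are precisely those flagged in Section~\ref{problems} as `right propagation into union': these would require splitting $Q$ under an unrestricted $(\unL)$ on the socket $x$ being cut against. Here the side condition of the restricted $(\unL)$ intervenes --- it forces the union on $x$ to have been introduced at $x$, and since the cut is right-active this is exactly the case $(\Riii)$ where $Q$ introduces $x$, which can be handled by recursion as in the $(\Ins)$ case. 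For left-propagation $(\Li)$--$(\Lv)$, the cut is left-active, so by the new rule $(\DaggerL)$ the type on $`a$ cannot be an intersection and $x$ is introduced; this blocks precisely the `left propagation into intersection' counterexamples, and each case is then discharged by a symmetric merging argument on union types of $x$ in $P$.

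Finally, I would use Lemma~\ref{closed for right renaming} (closure under renaming cuts) to cover the degenerate subcases where a renaming cut collapses, and Lemma~\ref{intersection elimination} to extract the appropriate components of intersection/union types when matching the types of the two premises after propagation. The hardest step to write carefully will be $(\Liv)$ (and dually $(\Riii)$), because there one must simultaneously (i) use that the left-active cut's type is proper (non-intersection) to avoid the witness-reduction break, and (ii) reconcile the union/intersection structure on $R$ across its two occurrences in the reduct --- this is the exact site of the failure in Example~\ref{second counterexample} under CBV, and it is only the CBN-specific side condition on $(\DaggerL)$ that makes it go through.
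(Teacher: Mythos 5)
Your overall strategy is the paper's: induction on the reduction step, inverting the derivation for the left-hand side into the intersection, union and pure cases via Lemma~\ref{pure Gen lemma}, and playing the restriction on $(\unL)$ against ``right propagation into union'' and the side-condition of $(\DaggerL)$ against ``left propagation into intersection''. The gap is in your claim that the activation rules are ``immediate'' because ``active cuts are typed by the same rules as their inactive counterparts''. That is true only of right-activated cuts, which $\TurnN$ types with the unrestricted $(\Cut)$. A left-activated cut must be typed with $(\DaggerL)$, which carries an \emph{extra} side-condition --- the cut-formula is not an intersection --- that the $(\Cut)$ typing the inactive cut does not. So $(\actL)$ is precisely where the real obligation arises: given a derivation ending in $(\Cut)$ whose cut-formula may be an intersection $\CoI{n}$ built by $(\intR)$ on the left premise, you must produce a derivation ending in $(\DaggerL)$ with a non-intersection cut-formula. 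The paper discharges this using exactly the hypothesis you state but do not use: in {\CBN}, $(\actL)$ fires only when the right sub-net introduces $x$, hence $x$ occurs exactly once there, in a capsule or as the middle connector of an import, where its type is necessarily proper; all but one component $C_j$ of the intersection assigned to $x$ must therefore have been added by weakening, so by Thinning one keeps only $C_j$ on the right premise and selects the matching single premise of the $(\intR)$ on the left. Without this step the side-condition of $(\DaggerL)$ is never established, and your treatment of the left-propagation rules --- which correctly \emph{assumes} that a left-activated cut carries a non-intersection type --- has nothing to stand on.

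A smaller imprecision: in the union case of $(\Riii)$ the argument is not a ``recursion as in the $(\Ins)$ case''. The restricted $(\unL)$ forces $x$ to be introduced in the import, i.e.\ $x$ is not free in its two sub-nets, so the inner copies of the cut created by the rewrite are garbage and are typed by Thinning and weakening with $\Top$. This is a matter of execution rather than of plan, but the mechanism is the vacuity of $x$, not an inductive appeal.
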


\Long{

 \begin{proof}
By induction on the definition of $\red$, where we focus on the rules: the proof consists of showing, for each rule, the 'minimal' derivation for the left-hand side, and that, using the restrictions that poses, we can type the right-hand side.
We only show the interesting cases.

In general, by Lemma~\ref{pure Gen lemma}, the derivation for the left-hand side is shaped like 

 \begin{enumerate}

 \item \label{intersection case} \correct
\[ \begin{array}[t]{c} 
\Inf	[\Cut]
	{ \Inf	[\intR]
		{ \InfBox{ \derPure P : `G |- `a{:}C_i,`D }
		  \quad (\forall i \ele \n)
		}
		{ \derXN P : `G |- `a{:}\CoI{n},`D }
	  \quad
	  \InfBox{ \derPure Q : `G, x{:}\CoI{n} |- `D }
	}
	{ \derPure \cut P `a + x Q : `G |- `D }
\end{array} \]
 
 \item \label{union case} \correct
\[ \begin{array}[t]{c} 
\Inf	[\Cut]
	{ \InfBox{ \derPure P : `G |- `a{:}\CoU{n},`D }
	  \quad
	  \Inf	[\unL]
		{ \InfBox{ \derPure Q : `G, x{:}C_i |- `D }
		  \quad (\forall i \ele \n)
		}
		{ \derXN Q : `G, x{:}\CoU{n} |- `D }
	}
	{ \derPure \cut \caps<y,`a> `a + x Q : `G |- `D }
\end{array} \]

 \item \label{pure case} \correct
\[ \begin{array}[t]{c} 
\Inf	[\Cut]
	{ \InfBox{ \derPure P : `G |- `a{:}C,`D }
	  \dquad
	  \InfBox{ \derPure Q : `G, x{:}C |- `D }
	}
	{ \derPure \cut P `a + x Q : `G |- `D }
\end{array} \]

\end{enumerate}

We will refer to these respectively as the \emph{intersection case} (case \ref{intersection case}), the \emph{union case} (case \ref{union case}) or the \emph{pure case} (case \ref{pure case}). 

 \begin{description}\itemsep5pt
\Comment{

 \item[Logical rules]

 \begin{description}


 \item[\Ax] $\cut \caps<y,`a> `a + x \caps<x,`b> \red \caps<y,`b>$.
We consider the three cases:

 \itemindent-.7cm 
 \begin{wideitemize}
 
\item[\ref{intersection case}] \correct
\[ 
 \Inf	
	{ \Inf	[\intR]
		{ \Inf	{ \derPure \caps<y,`a> : `G\inter y{:}C_i |- `a{:}C_i,`D }
		  \quad 
		  (\forall i \ele \n)
		}
		{ \derXN \caps<y,`a> : `G\inter y{:}\CoI{n} |- `a{:}\CoI{n},`D }
	  \Inf	{ \derPure \caps<x,`b> : `G, x{:}\CoI{n} |- `b{:}C_j\union `D }
	}
	{ \derPure \cut \caps<y,`a> `a + x \caps<x,`b> : `G\inter y{:}\CoI{n} |- `b{:}C_j\union `D }
 \]
Notice that $C_i \ele \Tstrict$ for $\iotn$; by rule $(\Ax)$, we can derive:
 \[ 
\Inf	{ \derPure \caps<y,`b> : `G\inter y{:}\CoI{n} |- `b{:}C_j\union `D }
 \]

\item[\ref{union case}] \correct
\[ 
 \Inf	
	{ \Inf	{ \derPure \caps<y,`a> : `G\inter y{:}C_j |- `a{:}\CoU{n},`D }
	  \Inf	[\unL]
		{ \Inf	{ \derPure \caps<x,`b> : `G, x{:}C_i |- `b{:}C_i\union `D }
		  \quad 
		  (\forall i \ele \n)
		}
		{ \derXN \caps<x,`b> : `G, x{:}\CoU{n} |- `b{:}\CoU{n}\union `D }
	}
	{ \derPure \cut \caps<y,`a> `a + x \caps<x,`b> : `G\inter y{:}C_j |- `b{:}\CoU{n}\union `D }
 \]
Notice that $\caps<x,`b>$ introduces $x$ and that $C_i \ele \Tstrict$ for $\iotn$; by rule $(\Ax)$, we can derive:
 \[ 
\Inf	{ \derPure \caps<y,`b> : `G\inter y{:}C_j |- `b{:}\CoU{n}\union `D }
 \]

\item[\ref{pure case}] \correct
\[ 
 \Inf	
	{ \Inf	{ \derPure \caps<y,`a> : `G\inter y{:}A |- `a{:}A,`D }
	  \quad
	  \Inf	{ \derPure \caps<x,`b> : `G, x{:}A |- `b{:}A\union `D }
	}
	{ \derPure \cut \caps<y,`a> `a + x \caps<x,`b> : `G\inter y{:}A |- `b{:}A\union `D }
 \]
Then $A \ele \Tstrict$, and we can derive:
 \[
\Inf	{ \derPure \caps<y,`b> : `G\inter y{:}A |- `b{:}A\union `D }
 \]

 \end{wideitemize}
 

 \item[\Exp] $ \cut { \exp y P `b . `a } `a + x \caps<x,`g> \red \exp y P `b . `g $, with $`a \not\in \FP{P}$; we can assume that $`a$ does not occur in $`D$, but cannot assume that for $`g$.
Again, we have three cases: 

 \begin{wideitemize}

\item[\ref{intersection case}] \correct
\[ 
 \Inf	
	{ \Inf	[\intR]
		{ \Inf	
			{ \InfBox{\D_i}
				{ \derXN P : `G,y{:}A_i |- `b{:}B_i,`D } 
			}
			{ \derPure \exp y P `b . `a : `G |- `a{:}A_i\arr B_i,`D }
		  \quad (\forall i \ele \n)
		}
		{ \derXN \exp y P `b . `a : `G |- `a{:}\int{i}(A_i\arr B_i),`D }
	  \Inf	
		{ \derPure \caps<x,`g> : `G,x{:}\int{i}(A_i\arr B_i) |- `g{:}A_j\arr B_j\,\union `D }
	}
	{ \derPure \cut { \exp y P `b . `a } `a + x \caps<x,`g> : `G |- `g{:}A_j\arr B_j\,\union `D }
 \]
For the right-hand side we can construct:
 \[ 
 \Inf	
	{ \InfBox{\D_j}
		{ \derXN P : `G,y{:}A_j |- `b{:}B_j,`D } 
	}
	{ \derPure \exp y P `b . `g : `G |- `g{:}(A_j\arr B_j)\,\union `D }
 \]

\item[\ref{union case}] \correct 

\[ \kern-4.9cm 
 \Inf	
	{
	  \Inf	[\Weak]
		{ \Inf	
			{ \InfBox{\D}
				{ \derXN P : `G,y{:}A |- `b{:}B,`D } 
			}
			{ \derXN \exp y P `b . `a : `G |- `a{:}A\arr B,`D }
		}
		{ \derXN \exp y P `b . `a : `G |- `a{:}(A\arr B)\union \CoU{n},`D }
\raise.75\RuleH\hbox to 5cm{\kern-13mm 
	  \Inf	[\unL]
		{ \Inf	{(\forall i \ele \n)}
			{ \derPure \caps<x,`g> : `G, x{:}C_i |- `g{:}C_i\union `D }
		  \Inf	{\derPure \caps<x,`g> : `G, x{:}A\arr B |- `g{:}A\arr B\union `D } 
		}
		{ \derXN \caps<x,`g> : `G, x{:}(A\arr B)\union \CoU{n} |- `g{:}(A\arr B)\union \CoU{n}\, \union `D }
}
	}
	{ \derPure \cut {\exp y P `b . `a } `a + x \caps<x,`g> : `G |- `g{:}(A\arr B)\union \CoU{n}\, \union `D }
 \]
Notice that $\caps<x,`g>$ introduces $x$ and that, since $`a \not\in \fp(P)$, $`a{:}\CoU{n}$ is added by weakening.
Then for the right-hand side we can construct:
 \[ 
\Inf	[\Weak]
	{ \Inf	
		{ \InfBox{\D}
			{ \derXN P : `G,y{:}A |- `b{:}B,`D } 
		}
		{ \derPure \exp y P `b . `g : `G |- `g{:}(A\arr B)\union `D }
	}
	{ \derPure \exp y P `b . `g : `G |- `g{:}(A\arr B)\union \CoU{n}\, \union `D }
 \]

\item[\ref{pure case}] \correct
\[ \Inf	
	{ \Inf	
		{ \InfBox{\D}
			{ \derXN P : `G,y{:}A |- `b{:}B,`D } 
		}
		{ \derPure \exp y P `b . `a : `G |- `a{:}A\arr B,`D }
	  \Inf	
		{ \derPure \caps<x,`g> : `G,x{:}A\arr B |- `g{:}A\arr B\union `D }
	}
	{ \derPure \cut { \exp y P `b . `a } `a + x \caps<x,`g> : `G |- `g{:}A\arr B\union `D }
 \]
Notice that then
 \[ 
\Inf	[\arrR]
	{ \InfBox{\D}
		{ \derXN P : `G,y{:}A |- `b{:}B,`D } 
	}
	{ \derXN \exp y P `b . `g : `G |- `g{:}A\arr B\union `D }
 \]

 \end{wideitemize}


 \item[\Imp] $ \cut \caps<y,`a> `a + x { \imp P `b [x] z Q } \red \imp P `b [y] y Q $, with $x \not\in \FS{P,Q}$; we can assume that $x$ does not occur in $`G$, but cannot assume that for $y$. 
The three cases are:

 \begin{wideitemize}

\item[\ref{intersection case}] Let $ C = \CoI{n}$.\\[-12pt]
 \[ \kern4.25cm
\Inf	
	{
\raise3.5\RuleH \hbox to 5cm{\kern -4.5cm
	  \Inf	[\intR]
		{ \Inf	
			{ \derPure \caps<y,`a> : `G\inter y{:}C_i |- `a{:}C_i,`D }
			(\forall i \ele \n)~
		  \Inf	
			{ \derPure \caps<y,`a> : `G\inter y{:}A\arr B |- `a{:}A\arr B,`D }
		}
		{ \derXN \caps<y,`a> : `G\inter y{:}(A\arr B)\inter C |- `a{:}(A\arr B)\inter C,`D }
}
	  \Inf	[\Weak]
		{ \Inf	
			{ \InfBox{\D_2}
				{ \derPure P : `G |- `b{:}A,`D } 
			  \quad
			  \InfBox{\D_3}
				{ \derPure Q : `G,z{:}B |- `D } 
			}
			{ \derPure \imp P `b [x] z Q : `G,x{:}A\arr B |- `D }
		}
		{\derPure \imp P `b [x] z Q : `G,x{:}(A\arr B)\inter C |- `D }
	}
	{ \derPure \cut \caps<y,`a> `a + x { \imp P `b [x] z Q } : `G\inter y{:}(A\arr B)\inter C |- `D }
 \]
Since $x \notele P,Q$, $x{:}C$ has been added by weakening; we can derive:
 \[
 \Inf	[\Weak]
	{ \Inf	
		{ \InfBox{\D_2}
			{ \derPure P : `G |- `b{:}A,`D } 
		  \dquad 
		  \InfBox{\D_3}
			{ \derPure Q : `G,z{:}B |- `D } 
		}
		{ \derPure \imp P `b [y] z Q : `G\inter y{:}A\arr B |- `D }
	}
	{ \derPure \imp P `b [y] z Q : `G\inter y{:}(A\arr B)\inter C |- `D }
 \]

\item[\ref{union case}] Let $(A_j\arr B_j)\union C = \un{n}( A_i\arr B_i )$. 
\[ 
 \Inf	
	{ \Inf	
		{ \derPure \caps<y,`a> : `G\inter y{:}A_j\arr B_j |- `a{:}(A_j\arr B_j)\union C ,`D }
	  \Inf	[\unL]
		{ \Inf	
			{ \InfBox{\D_2^i}
				{ \derXN P : `G |- `b{:}A_i,`D } 
			  \quad
			  \InfBox{\D_3^i}
				{ \derXN Q : `G, z{:}B_i |- `D } 
			}
			{ \derPure \imp P `b [x] z Q : `G , x{:}A_i\arr B_i |- `D }
		  ~ (\forall i \ele n)
		}
		{ \derXN \imp P `b [x] z Q : `G , x{:}\un{n}( A_i\arr B_i ) |- `D }
	}
	{ \derPure \cut \caps<y,`a> `a + x { \imp P `b [x] z Q } : `G\inter y{:}A_j\arr B_j |- `D }
 \]
Since $x$ does not appear free in $P$ or $Q$, it is introduced in $ \imp P `b [x] z Q $, so rule $(\unL)$ is applied correctly.
 \[
 \Inf	[\arrL]
	{ \InfBox{\D_2^j}
		{ \derXN P : `G |- `b{:}A_j,`D } 
	  \dquad 
	  \InfBox{\D_3^j}
		{ \derXN Q : `G, z{:}B_j |- `D } 
	}
	{ \derXN \imp P `b [y] z Q : `G\inter y{:}A_j\arr B_j |- `D }
 \] 

\item[\ref{pure case}] \correct
 \[ 
\Inf	
	{ \Inf	
		{ \derPure \caps<y,`a> : `G\inter y{:}A\arr B |- `a{:}A\arr B ,`D }
	  \Inf	[\arrL]
		{ \InfBox{\D_2}
			{ \derPure P : `G |- `b{:}A,`D } 
		  \quad
		  \InfBox{\D_3}
			{ \derPure Q : `G, z{:}B |- `D } 
		}
		{ \derPure \imp P `b [x] z Q : `G , x{:}A\arr B |- `D }
	}
	{ \derPure \cut \caps<y,`a> `a + x { \imp P `b [x] z Q } : `G\inter y{:}A\arr B |- `D }
 \]
 \[
 \Inf	[\arrL]
	{ \InfBox{\D_2}
		{ \derPure P : `G |- `b{:}A,`D } 
	  \dquad 
	  \InfBox{\D_3}
		{ \derPure Q : `G, z{:}B |- `D } 
	}
	{ \derPure \imp P `b [y] z Q : `G\inter y{:}A\arr B |- `D }
 \]

 \end{wideitemize}


 \item[\Ins] $ \begin{array}{rcl}
 \cut { \exp y P `b . `a } `a + x { \imp Q `g [x] z R }
\red \left \{ \begin{array}{l}
	\cut { \cut Q `g + y P } `b + z R \\
	\cut Q `g + y {\cut P `b + z R } 
\end{array} \right.
\end{array} $, with $`a \not\in \FP{R}, x \not\in \FS{Q,R}$.

 \begin{wideitemize}

\item[\ref{intersection case}]
Let $(A_j \arr B_j) \inter C = \int{n}{(A_i \arr B_i)}$.
 \[ 
 \Inf	
	{
	  \Inf	[\intR]
		{ \Inf	
			{ \InfBox{\D_1^i}
				{\derXN P : `G,y{:}A_i |- `b{:}B_i,`D } 
			}
			{ \derPure \exp y P `b . `a : `G |- `a{:}A_i\arr B_i,`D }
		  ~ (\forall i \ele \n)
		}
		{ \derXN \exp y P `b . `a : `G |- `a{:}\int{n}{(A_i \arr B_i)},`D }
	  \Inf	[\Weak]
		{ \Inf	
			{ \InfBox{\D_2}
				{\derXN Q : `G |- `g{:}A_j,`D } 
			  \quad
			  \InfBox{\D_3}
				{\derXN R : `G,z{:}B_j |- `D }
			}
			{ \derPure \imp Q `g [x] z R : `G,x{:}A_j\arr B_j |- `D }
		}
		{ \derPure \imp Q `g [x] z R : `G,x{:}(A_j\arr B_j)\inter C |- `D }
	}
	{ \derPure \cut {\exp y P `b . `a } `a + x { \imp Q `g [x] z R } : `G |- `D }
 \] 
Notice that, since $x$ is unique, $C$ is added by weakening.
We can construct
 \[
\Inf	
	{ \Inf	
		{ \InfBox{\D_2}
			{\derXN Q : `G |- `g{:}A_j,`D } 
		  \quad
		  \InfBox{\D_1^j}
			{ \derPure P : `G,y{:}A_j |- `b{:}B_j,`D }
		}
		{ \derPure \cut Q `g + y P : `G |- `b{:}B_j,`D }
	  \quad
	  \InfBox{\D_3}
		{\derXN R : `G,z{:}B_j |- `D }
	}
	{ \derPure \cut { \cut Q `g + y P } `b + z R : `G |- `D }
 \] 
and
 \[
 \Inf	
	{ \InfBox{\D_2}
		{\derXN Q : `G |- `g{:}A_j,`D } 
	  \quad
	  \Inf	
		{ \InfBox{\D_1^j}
			{ \derPure P : `G,y{:}A_j |- `b{:}B_j,`D }
		  \quad
		  \InfBox{\D_3}
			{\derXN R : `G,z{:}B_j |- `D }
		}
		{ \derXN \cut P `b + z R : `G,y{:}A_j |- `D }
	}
	{ \derXN \cut Q `g + y { \cut P `b + z R } : `G |- `D }
 \]

\item[\ref{union case}] Let $\un{n}( A_i\arr B_i ) = (A_j \arr B_j) \union C $
 \[ 
 \Inf	
	{ \Inf	[\Weak]
		{ \Inf	
			{ \InfBox{\D_1}
				{\derXN P : `G,y{:}A_j |- `b{:}B_j,`D } 
			}
			{ \derPure \exp y P `b . `a : `G |- `a{:}A_j \arr B_j,`D }
		}
		{ \derPure \exp y P `b . `a : `G |- `a{:}(A_j \arr B_j) \union C,`D }
	  \Inf	[\unL]
		{ \Inf	
			{ \InfBox{\D_2^i}
				{\derXN Q : `G |- `g{:}A_i,`D } 
			  \quad
			  \InfBox{\D_3^i}
				{\derXN R : `G,z{:}B_i |- `D }
			}
			{ \derPure \imp Q `g [x] z R : `G,x{:}A_i\arr B_i |- `D }
		  ~ (\forall i \ele \n)
		}
		{ \derXN \imp Q `g [x] z R : `G,x{:}\un{n}( A_i\arr B_i ) |- `D }
	}
	{ \derPure \cut {\exp y P `b . `a } `a + x { \imp Q `g [x] z R } : `G |- `D }
 \] 
Since $x$ does not appear free in $Q$ or $R$, it is introduced in $\imp Q `g [x] z R $, so rule $(\unL)$ is applied correctly.
Notice that $`a{:}C$ in $\D_1$ is added by weakening.
We can construct
 \[
\Inf	
	{ \Inf	
		{ \InfBox{\D_2^j}
			{\derXN Q : `G |- `g{:}A_j,`D } 
		  \dquad
		  \InfBox{\D_1}
			{ \derXN P : `G,y{:}A_j |- `b{:}B_j,`D }
		}
		{ \derPure \cut Q `g + y P : `G |- `b{:}B_j,`D }
	  \dquad
	  \InfBox{\D_3^j}
		{\derXN R : `G,z{:}B_j |- `D }
	}
	{ \derPure \cut { \cut Q `g + y P } `b + z R : `G |- `D }
 \] 
and
 \[
 \Inf	
	{ \InfBox{\D_2^j}
		{\derXN Q : `G |- `g{:}A_j,`D } 
	  \dquad
	  \Inf	
		{ \InfBox{\D_1}
			{ \derXN P : `G,y{:}A_j |- `b{:}B_j,`D }
		  \dquad
		  \InfBox{\D_3^j}
			{\derXN R : `G,z{:}B_j |- `D }
		}
		{ \derXN \cut P `b + z R : `G,y{:}A_j |- `D }
	}
	{ \derXN \cut Q `g + y { \cut P `b + z R } : `G |- `D }
 \]

\item[\ref{pure case}] \correct
 \[ 
\Inf	
	{ \Inf	
		{ \InfBox{\D_1}
			{\derXN P : `G,y{:}A |- `b{:}B,`D } 
		}
		{ \derPure \exp y P `b . `a : `G |- `a{:}A\arr B,`D }
	  \quad
	  \Inf	
		{ \InfBox{\D_2}
			{\derXN Q : `G |- `g{:}A,`D } 
		  \quad
		  \InfBox{\D_3}
			{\derXN R : `G,z{:}B |- `D }
		}
		{ \derPure \imp Q `g [x] z R : `G,x{:}A\arr B |- `D }
	}
	{ \derPure \cut {\exp y P `b . `a } `a + x { \imp Q `g [x] z R } : `G |- `D }
 \] 
Then we can construct
 \[
\Inf	
	{ \Inf	
		{ \InfBox{\D_2}
			{\derXN Q : `G |- `g{:}A,`D } 
		  \quad
		  \InfBox{\D_1}
			{ \derXN P : `G,y{:}A |- `b{:}B,`D }
		}
		{ \derPure \cut Q `g + y P : `G |- `b{:}B,`D }
	  \quad
	  \InfBox{\D_3}
		{\derXN R : `G,z{:}B |- `D }
	}
	{ \derPure \cut { \cut Q `g + y P } `b + z R : `G |- `D }
 \] 
and
 \[
 \Inf	
	{ \InfBox{\D_2}
		{\derXN Q : `G |- `g{:}A,`D } 
	  \quad
	  \Inf	
		{ \InfBox{\D_1}
			{ \derXN P : `G,y{:}A |- `b{:}B,`D }
		  \quad
		  \InfBox{\D_3}
			{\derXN R : `G,z{:}B |- `D }
		}
		{ \derPure \cut P `b + z R : `G,y{:}A |- `D }
	}
	{ \derPure \cut Q `g + y { \cut P `b + z R } : `G |- `D }
 \]

 \end{wideitemize}

 \end{description}

}

 \item[Activating the cuts]

 \begin{description}  
 \itemsep10pt

 \item[\actL] $\cut P `a + x Q \redX \cutL P `a + x Q $, if $P$ does not introduce $`a$, and $Q$ introduces $x$. 

 \itemindent-.7cm 

 \begin{wideitemize}

 \item[\ref{intersection case}] \correct
 \[ 
 \Inf	[\Cut]
	{ \Inf	[\intR]
		{ \InfBox{ \derXN P : `G |- `a{:}C_i,`D }
		  \quad (\forall i \ele \n)
		}
		{ \derXN P : `G |- `a{:}\CoI{n},`D }
	  \quad
	  \InfBox{ \derXN Q : `G, x{:}\CoI{n} |- `D }
	}
	{ \derXN \cut P `a + x Q : `G |- `D }
 \]
Since $x$ is introduced in $Q$, there is only \emph{one} occurrence of $x$, either in a capsule, or in an import; in both cases the assigned type $A$ is strict, so $A$ is not an intersection type.
Then $C_j = A$, for some $j \ele \n$, and the other $C_i$ are added by weakening.
So we have, by thinning:
 \[ 
 \InfBox{ \derXN Q : `G,x{:}A |- `D } \]
Notice that 
 \[ 
 \InfBox{ \derXN P : `G |- `a{:}A,`D } \]
is part of the derivation on the left-hand side, so, in particular, we can derive:
 \[ 
 \Inf	[\DaggerL]
	{\derXN P : `G |- `a{:}A,`D \quad \derXN Q : `G,x{:}A |- `D }
	{\derXN \cutL P `a + x Q : `G |- `D } 
 \]

 \item[\ref{union case}] \correct
 \[ 
 \Inf	[\Cut]
	{ \InfBox{ \derXN P : `G |- `a{:}\CoU{n},`D }
	  \quad
	  \Inf	[\unL]
		{ \InfBox{ \derXN Q : `G, x{:}C_i |- `D }
		  \quad (\forall i \ele \n)
		}
		{ \derXN Q : `G, x{:}\CoU{n} |- `D }
	}
	{ \derXN \cut P `a + x Q : `G |- `D }
 \]
Since $Q$ introduces $x$, rule $(\unL)$ is correctly applied.
Notice that $\CoU{n}$ is not an intersection type; we have:
 \[ 
 \Inf	[\DaggerL]
	{ \InfBox{ \derXN P : `G |- `a{:}\CoU{n},`D }
	  \quad
	  \Inf	[\unL]
		{ \InfBox{ \derXN Q : `G, x{:}C_i |- `D }
		  \quad (\forall i \ele \n)
		}
		{ \derXN Q : `G, x{:}\CoU{n} |- `D }
	}
	{ \derXN \cut P `a + x Q : `G |- `D }
 \]

 \item[\ref{pure case}] \correct
 \[ 
 \Inf	[\Cut]
	{ \InfBox{ \derXN P : `G |- `a{:}C,`D }
	  \dquad
	  \InfBox{ \derXN Q : `G, x{:}C |- `D }
	}
	{ \derXN \cut P `a + x Q : `G |- `D }
 \]
Since $C$ is not an intersection (and not a union), and we have:
 \[ 
 \Inf	[\DaggerL]
	{\derXN P : `G |- `a{:}C,`D \quad \derXN Q : `G,x{:}C |- `D }
	{\derXN \cutL P `a + x Q : `G |- `D } 
 \]

 \end{wideitemize}

\Comment{
 \item[\actR] $ \cut P `a + x Q \redX \cutR P `a + x Q $, if $Q$ does not introduce $x$. 
Immediate, since inactive cuts and right-activated cuts are typed with the same rule.
}

 \end{description}


 \item[Left propagation]

 \begin{description} 

\Comment{

 \item[\deactL] $\cutL \caps<y,`a> `a + x P \red \cut \caps<y,`a> `a + x P $.
Trivial, since the side-condition is dropped.

 \itemindent-.7cm 

 \item[\Li]
$ \cutL \caps<y,`b> `a + x P \red \caps<y,`b> $, with $`b \not= `a$.

 \begin{wideitemize}

\item[\ref{intersection case}]
Excluded by rule $(\DaggerL$).

\item[\ref{union case}] \correct
 \[
\Inf	[\DaggerL]
	{ \Inf	
		{ \derXN \caps<y,`b> : `G\inter y{:}A |- `a{:}\CoUn,`b{:}A\union `D }
	  \quad
	  \Inf	[\unL]
		{ \InfBox{ \derXN P : `G,x{:}C_i |- `D }
		  \quad (\forall i \ele \n)
		}
		{ \derXN P : `G,x{:}{\CoUn} |- `D }
	}
	{ \derXN \cutL \caps<y,`b> `a + x P : `G\inter y{:}A |- `b{:}A\union `D }
 \]
Notice that $P$ introduces $z$.
 \[
\Inf	
	{ \derXN \caps<y,`b> : `G\inter y{:}A |- `b{:}A\union `D }
 \]

\item[\ref{pure case}] \correct
 \[
\Inf	[\DaggerL]
	{ \Inf	
		{ \derXN \caps<y,`b> : `G\inter y{:}A |- `a{:}C,`b{:}A\union `D }
	  \quad
	  \InfBox{ \derXN P : `G,z{:}C |- `D }
	}
	{ \derXN \cutL \caps<y,`b> `a + z P : `G\inter y{:}A |- `b{:}A\union `D }
 \]
 \[
\Inf	
	{ \derXN \caps<y,`b> : `G\inter y{:}A |- `b{:}A\union `D }
 \]

 \end{wideitemize}

}

 \item[\Lii] $ \cutL { \exp y P `b . `a } `a + x Q \red \cut { \exp y { \cutL P `a + x Q } `b . `g } `g + x Q $, with $`g$ fresh.
Notice that $`a$ might appear inside $P$ -- wlog, we will assume that it does; also, $y,`b \not \in \FC{Q}$.

\Short{ \itemindent-.7cm }

 \begin{wideitemize}

\item[\ref{intersection case}]
Excluded by rule $(\DaggerL$).

\item[\ref{union case}] \correct
 \[ 
 \Inf	[\DaggerL]
	{ \Inf	[\arrR]
		{ \InfBox{\D_1}
			{ \derXN P : `G,y{:}A |- `b{:}B,`a{:}\CoU{n},`D } 
		}
		{ \derPure \exp y P `b . `a : `G |- `a{:}\CoU{n}\union (A\arr B),`D } 
	  \Inf	[\unL]
		{ \InfBox{\D^i_2}
			{ \derPure Q : `G,x{:}C_i |- `D }
		  ~(\forall i \ele \n)
		  \quad
		  \InfBox{\D_2}
			{ \derPure Q : `G,x{:}A\arr B |- `D }
		}
		{ \derXN Q : `G,x{:}\CoU{n}\union (A\arr B) |- `D }
	}
	{ \derPure \cutL {\exp y P `b . `a } `a + x Q : `G |- `D }
 \]
Notice that then $x$ is introduced in $Q$, although this plays no role here.
 \[ 
 \Inf	
	{ \Inf	[\arrR]
		{ \Inf	[\DaggerL]
			{ \InfBox{\D_1}
				{ \derPure P : `G,y{:}A |- `b{:}B,`a{:}\CoU{n},`D } 
			  \dquad
			  \Inf	[\unL]
				{ \InfBox{\D^i_2}
					{ \derPure Q : `G,x{:}C_i |- `D }
				  ~(\forall i \ele \n)
				}
				{ \derXN Q : `G,x{:}\CoU{n} |- `D }
			}
			{ \derPure \cutL P `a + x Q : `G,y{:}A |- `b{:}B,`D }
		}
		{ \derPure \exp y { \cutL P `a + x Q } `b . `g : `G |- `g{:}A\arr B,`D }
	  \kern-5mm
	  \InfBox{\D_2}
		{ \derPure Q : `G,x{:}A\arr B |- `D }
	}
	{ \derPure \cut { \exp y { \cutL P `a + x Q } `b . `g } `g + x Q : `G |- `D }
 \]

\item[\ref{pure case}] Then the $`a$ inside $P$ also has type $A\arr B$.
 \[
\Inf	[\DaggerL]
	{ \Inf	[\arrR]
		{ \InfBox{\D_1}
			{ \derXN P : `G,y{:}A |- `b{:}B,`a{:}A\arr B,`D } 
		}
		{ \derPure \exp y P `b . `a : `G |- `a{:}A\arr B,`D } 
	  \InfBox{\D_2}
		{ \derPure Q : `G,x{:}A\arr B |- `D }
	}
	{ \derPure \cutL {\exp y P `b . `a } `a + x Q : `G |- `D }
 \]
 \[ 
\Inf	
	{ \Inf	[\arrR]
		{ \Inf	[\DaggerL]
			{ \InfBox{\D_1}
				{ \derXN P : `G,y{:}A |- `b{:}B,`a{:}A\arr B,`D } 
			  \dquad
			  \InfBox{\D_2}	
				{ \derXN Q : `G,x{:}C |- `a{:}A\arr B }
			}
			{ \derPure \cutL P `a + x Q : `G,y{:}A |- `b{:}B,`D }
		}
		{ \derPure \exp y { \cutL P `a + x Q } `b . `g : `G |- `g{:}A\arr B,`D }
	  \quad
	  \InfBox{\D_2}
		{ \derPure Q : `G,x{:}A\arr B |- `D }
	}
	{ \derPure \cut { \exp y { \cutL P `a + x Q } `b . `g } `g + x Q : `G |- `D }
 \]

 \end{wideitemize}


\Comment{

 \item[\Liii] $ \cutL { \exp y P `b . `g } `a + x Q \red \exp y { \cutL P `a + x Q } `b . `g $, with $`g \not = `a$.

 \begin{wideitemize}

\item[\ref{intersection case}]
Excluded by rule $(\DaggerL$).

\item[\ref{union case}] \correct
 \[
\Inf	[\DaggerL]
	{ \Inf	
		{ \InfBox{\D_1}
			{ \derXN P : `G,y{:}A |- `b{:}B,`a{:}\CoU{n},`D } 
		}
		{ \derPure \exp y P `b . `g : `G |- `a{:}\CoU{n},`g{:}A\arr B\union `D } 
	  \quad
	  \Inf	[\unL]
		{ \InfBox{\D^i_2}
			{ \derPure Q : `G,x{:}C_i |- `D }
		  \quad (\forall i \ele \n)
		}
		{ \derXN Q : `G,x{:}\CoU{n} |- `D }
	}
	{ \derPure \cutL {\exp y P `b . `g } `a + x Q : `G |- `g{:}A\arr B\union `D }
 \]
 \[ 
 \Inf	
	{ \Inf	[\DaggerL]
		{ \InfBox{\D_1}
			{ \derPure P : `G,y{:}A |- `b{:}B,`a{:}\CoU{n},`D } 
		  \dquad
		  \Inf	[\unL]
			{ \InfBox{\D^i_2}
				{ \derPure Q : `G,x{:}C_i |- `D }
			  \quad (\forall i \ele \n)
			}
			{ \derXN Q : `G, x{:}\CoU{n} |- `D }
		}
		{ \derPure \cutL P `a + x Q : `G,y{:}A |- `b{:}B,`D }
	}
	{ \derPure \exp y { \cutL P `a + x Q } `b . `g : `G |- `g{:}A\arr B\union `D } 
 \]

\item[\ref{pure case}] \correct
 \[
\Inf	[\DaggerL]
	{ \Inf	{ \InfBox{\D_1}
			{ \derXN P : `G,y{:}A |- `a{:}C,`b{:}B,`D } 
		}
		{ \derPure \exp y P `b . `g : `G |- `a{:}C,`g{:}A\arr B\union `D }
	  \quad
	  \InfBox{\D_2}
		{ \derPure Q : `G,x{:}C |- `D }
	}
	{ \derPure \cutL { \exp y P `b . `g } `a + x Q : `G |- `g{:}A\arr B,`D } 
 \]
 \[
\Inf	{ \Inf	[\DaggerL]
		{ \InfBox{\D_1}
			{ \derXN P : `G,y{:}A |- `a{:}C,`b{:}B,`D } 
		  \dquad
		  \InfBox{\D_2}
			{ \derPure Q : `G,x{:}C |- `D }
		}
		{ \derPure \cutL P `a + x Q : `G,y{:}A |- `b{:}B,`D }
	}
	{ \derPure \exp y { \cutL P `a + x Q } `b . `g : `G |- `g{:}A\arr B,`D } 
 \]

 \end{wideitemize}

\item[\Liv] $ \cutL { \imp P `b [z] y Q } `a + z R \red \imp { \cutL P `a + x R } `b [z] y { \cutL Q `a + x R } $.

 \begin{wideitemize}

\item[\ref{intersection case}]
Excluded by rule $(\DaggerL$).

\item[\ref{union case}] \correct 
 \[
\Inf	[\DaggerL]
	{ \Inf	{ \InfBox{\D_1}
			{ \derXN P : `G |- `a{:}\CoU{n},`b{:}A,`D }
		  \quad
		  \InfBox{\D_2}
			{ \derXN Q : `G,y{:}B |- `a{:}\CoU{n},`D }
		}
		{ \derPure \imp P `b [z] y Q : `G\inter z{:}A\arr B |- `a{:}\CoU{n},`D }
	  \quad
	  \Inf	[\unL]
		{ \InfBox{\D^i_3}
			{ \derPure R : `G,x{:}C_i |- `D }
		  ~(\forall i \ele \n)
		}
		{ \derXN R : `G,x{:}\CoU{n} |- `D }
	}
	{ \derPure \cutL { \imp P `b [z] y Q } `a + x R : `G\inter z{:}A\arr B |- `D }
 \]
 \[ \kern 35mm
 \Inf	{
\multiput(10,-1)(0,7){4}{.}
 \raise1.25\RuleH\hbox to 2cm{\kern-4cm
	  \Inf	[\DaggerL]
		{ \InfBox{\D_1}
			{ \derPure P : `G |- `a{:}\CoU{n},`b{:}A,`D } 
		  \Inf	[\unL]
			{ \InfBox{\D^i_3}
				{ \derPure R : `G,x{:}C_i |- `D }
			  ~(\forall i \ele \n)
			}
			{ \derXN R : `G,x{:}\CoU{n} |- `D }
		}
		{ \derPure \cutL P `a + x R : `G |- `b{:}A,`D } 
}
	  \Inf	[\DaggerL]
		{ \InfBox{\D_2}
			{ \derPure Q : `G,y{:}B |- `a{:}\CoU{n},`D } 
		  \Inf	[\unL]
			{ \InfBox{\D^i_3}
				{ \derPure R : `G,x{:}C_i |- `D }
			  ~(\forall i \ele \n)
			}
			{ \derXN R : `G,x{:}\CoU{n} |- `D }
		}
		{ \derPure \cutL Q `a + x R : `G,y{:}B |- `D }
	}
	{ \derPure \imp { \cutL P `a + x R } `b [z] y { \cutL Q `a + x R } : `G\inter z{:}A\arr B |- `D }
 \]

\item[\ref{pure case}] \correct
 \[
\Inf	[\DaggerL]
	{ \Inf	{ \InfBox{\D_1}
			{ \derXN P : `G |- `a{:}C,`b{:}A,`D }
		  \quad
		  \InfBox{\D_2}
			{ \derXN Q : `G,y{:}B |- `a{:}C,`D }
		}
		{ \derPure \imp P `b [z] y Q : `G\inter z{:}A\arr B |- `a{:}C,`D }
	  \quad
	  \InfBox{\D_3}
		{ \derPure R : `G,x{:}C |- `D }
	}
	{ \derPure \cutL { \imp P `b [z] y Q } `a + x R : `G\inter z{:}A\arr B |- `D }
 \]
 \[ 
 \Inf	{ \Inf	[\DaggerL]
		{ \InfBox{\D_1}
			{ \derXN P : `G |- `a{:}C,`b{:}A,`D } 
		  \quad
		  \InfBox{\D_3}
			{ \derXN R : `G,x{:}C |- `D } 
		}
		{ \derPure \cutL P `a + x R : `G |- `b{:}A,`D } 
	  \quad
	  \Inf	[\DaggerL]
		{ \InfBox{\D_2}
			{ \derXN Q : `G,y{:}B |- `a{:}C,`D } 
		  \quad
		  \InfBox{\D_3}
			{ \derXN R : `G,x{:}C |- `D } 
		}
		{ \derPure \cutL Q `a + x R : `G,y{:}B |- `D }
	}
	{ \derPure \imp { \cutL P `a + x R } `b [z] y { \cutL Q `a + x R } : `G\inter z{:}A\arr B |- `D }
 \]

 \end{wideitemize}

 \item[\Lv] $ \cutL { \cut P `b + y Q } `a + x R \red \cut { \cutL P `a + z R } `b + y { \cutL Q `a + x R } $.

 \begin{wideitemize}

\item[\ref{intersection case}]
Excluded by rule $(\DaggerL$).

\item[\ref{union case}] \correct 
 \[ 
\Inf	[\DaggerL]
	{ \Inf	{ \InfBox{\D_1}
			{ \derXN P : `G |- `a{:}\CoU{n},`b{:}A,`D }
		  \quad
		  \InfBox{\D_2}
			{ \derXN Q : `G,y{:}A |- `a{:}\CoU{n},`D }
		}
		{ \derPure \cut P `b + y Q : `G |- `a{:}\CoU{n},`D }
	  \quad
	  \Inf	[\unL]
		{ \InfBox{\D^i_3}{ \derPure R : `G, x{:}C_i |- `D }
		  \quad (\forall i \ele \n)
		}
		{ \derXN R : `G, x{:}\CoU{n} |- `D }
	}
	{ \derPure \cutL { \cut P `b + y Q } `a + x R : `G |- `D }
 \]
 \[ \kern2.5cm \begin{array}[t]{c}
 \Inf	
	{
\multiput(10,-1)(0,7){4}{.}
\raise1.25\RuleH\hbox to 3.5cm{\kern-3cm
	  \Inf	[\DaggerL]
		{ \InfBox{\D_1}{ \derXN P : `G |- `a{:}\CoU{n},`b{:}A,`D }
		  \Inf	[\unL]
			{ \InfBox{\D^i_3}{ \derPure R : `G, x{:}C_i |- `D }
			  ~(\forall i \ele \n)
			}
			{ \derXN R : `G, x{:}\CoU{n} |- `D }
		}
		{ \derPure \cutL P `a + x R : `G |- `b{:}A,`D }
}
	  \Inf	[\DaggerL]
		{ \InfBox{\D_2}{ \derPure Q : `G,y{:}A |- `a{:}\CoU{n},`D }
		  \Inf	[\unL]
			{ \InfBox{\D^i_3}{ \derPure R : `G, x{:}C_i |- `D }
			  ~(\forall i \ele \n)
			}
			{ \derXN R : `G, x{:}\CoU{n} |- `D }
		}
		{ \derPure \cutL Q `a + x R : `G,y{:}A |- `D }
	}
	{ \derPure \cut { \cutL P `a + x R } `b + y { \cutL Q `a + x R } : `G |- `D }
 \end{array} \]

\item[\ref{pure case}] \correct
 \[
\Inf	[\DaggerL]
	{ \Inf	
		{ \InfBox{\D_1}
			{ \derXN P : `G |- `a{:}C,`b{:}B,`D }
		  \quad
		  \InfBox{\D_2}
			{ \derXN Q : `G,y{:}B |- `a{:}C,`D }
		}
		{ \derPure \cut P `b + y Q : `G |- `a{:}C,`D }
	  \quad
	  \InfBox{\D_3}
		{ \derPure R : `G,x{:}C |- `D }
	}
	{ \derPure \cutL { \cut P `b + y Q } `a + x R : `G |- `D }
\]
\[ 
\Inf	
	{ \Inf	[\DaggerL]
		{ \InfBox{\D_1}
{ \derXN P : `G |- `a{:}C,`b{:}B,`D }
		  \quad
		  \InfBox{\D_3}
			{ \derXN R : `G,x{:}C |- `D }
		}
		{ \derPure \cutL P `a + x R : `G |- `b{:}B,`D }
	  \Inf	[\DaggerL]
		{ \InfBox{\D_2}
			{ \derXN Q : `G,y{:}B |- `a{:}C,`D }
		  \quad
		  \InfBox{\D_3}
			{ \derXN R : `G,x{:}C |- `D }
		}
		{ \derPure \cutL Q `a + x R : `G,y{:}B |- `D }
	}
	{ \derPure \cut{ \cutL P `a + x R } `b + y { \cutL Q `a + x R } : `G |- `D }
 \]

 \end{wideitemize}
}

 \end{description}

 \item[Right propagation]


 \begin{description} 

\Comment{

 \item[\deactR] $ \cutR P `a + x \caps<x,`b> \red \cut P `a + x \caps<x,`b> $
trivial.

\itemindent-.7cm 


 \item[\Ri] $ \cutR P `a + x \caps<y,`b> \red \caps<y,`b>,\ y \not= x $. Easy.

 \begin{wideitemize}

\item[\ref{intersection case}] \correct
 \[
\Inf	{ \Inf	[\intR]
		{ \InfBox{\D_i}
			{ \derPure P : `G |- `a{:}A_i,`D }
		  \quad (\forall i \ele \n)		}
		{ \derXN P : `G |- `a{:}\AoI{n},`D }
	  \quad
	  \Inf	{ \derPure \caps<y,`b> : `G\inter y{:}B,x{:}\AoI{n} |- `b{:}B\union `D }
	}
	{ \derPure \cutR P `a + x \caps<y,`b> : `G\inter y{:}B |- `b{:}B\union `D }
 \]
 \[
\Inf	{ \derPure \caps<y,`b> : `G\inter y{:}B |- `b{:}B\union `D }
 \]

\item[\ref{union case}] \correct
 \[
\Inf	{ \InfBox{}{ \derPure P : `G |- `a{:}\AoU{n},`D }
	  \quad
	  \Inf	[\unL]
		{ \Inf	{ \derPure \caps<y,`b> : `G\inter y{:}B,x{:}A_1 |- `b{:}B\union `D }
		  ~\dots~
		  \Inf	{ \derPure \caps<y,`b> : `G\inter y{:}B,x{:}A_n |- `b{:}B\union `D }
		}
		{ \derXN \caps<y,`b> : `G\inter y{:}B,x{:}\AoU{n} |- `b{:}B\union `D }
	}
	{ \derPure \cutR P `a + x \caps<y,`b> : `G\inter y{:}B |- `b{:}B\union `D }
 \]
Then 
 \[
\Inf	{ \derPure \caps<y,`b> : `G\inter y{:}B |- `b{:}B\union `D }
 \]

\item[\ref{pure case}] \correct
 \[
\Inf	{ \InfBox{}
		{ \derPure P : `G\inter y{:}B |- `a{:}A,`D }
	  \quad
	  \Inf	{ \derPure \caps<y,`b> : `G,y{:}B,x{:}A |- `b{:}B\union `D }
	}
	{ \derPure \cutR P `a + x \caps<y,`b> : `G\inter y{:}B |- `b{:}B\union `D }
 \Quad
\Inf	{ \derPure \caps<y,`b> : `G\inter y{:}B |- `b{:}B\union `D }
 \]

 \end{wideitemize}


 \item[\Rii] $ \cutR P `a + x { \exp y Q `b . `g } \red \exp y { \cutR P `a + x Q } `b . `g $.
 
\Short{ \itemindent-.7cm }

 \begin{wideitemize}

\item[\ref{intersection case}] \correct
 \[
\Inf	{ \Inf	[\intR]
		{ \InfBox{\D^i_1}
			{ \derPure P : `G |- `a{:}C_i,`D }
		  \quad (\forall i \ele \n)
		}
		{ \derXN P : `G |- `a{:}\CoI{n},`D }
	  \quad
	  \Inf	{ \InfBox{\D_2}
			{ \derXN Q : `G,x{:}\CoI{n},y{:}A |- `b{:}B,`D }
		}
		{ \derPure \exp y Q `b . `g : `G,x{:}\CoI{n} |- `g{:}A\arr B\union `D }
	}
	{ \derPure \cutR P `a + x { \exp y Q `b . `g } : `G |- `g{:}A\arr B\union `D }
 \]
 \[
\Inf	{ \Inf	{ \Inf	[\intR]
			{ \InfBox{\D^i_1}
				{ \derPure P : `G |- `a{:}C_i,`D }
			  \quad (\forall i \ele \n)
			}
			{ \derXN P : `G |- `a{:}\CoI{n},`D }
		  \quad 
		  \InfBox{\D_2}
			{ \derXN Q : `G,x{:}\CoI{n},y{:}A |- `b{:}B,`D }
		}
		{ \derPure \cutR P `a + x Q : `G,y{:}A |- `b{:}B,`D }
	}
	{ \derPure \exp y { \cutR P `a + x Q } `b . `g : `G |- `g{:}A\arr B\union `D }
 \]

\ref {union case}
Impossible, since $ \exp y Q `b . `g $ does not introduce $x$.

\item[\ref{pure case}] \correct
 \[
\Inf	{ \InfBox{\D_1}
		{ \derPure P : `G |- `a{:}C,`D }
	  \quad
	  \Inf	{ \InfBox{\D_2}
			{ \derXN Q : `G,x{:}C,y{:}A |- `b{:}B,`D }
		}
		{ \derPure \exp y Q `b . `g : `G,x{:}C |- `g{:}A\arr B\union `D }
	}
	{ \derPure \cutR P `a + x { \exp y Q `b . `g } : `G |- `g{:}A\arr B\union `D }
 \]
 \[
\Inf	{ \Inf	{ \InfBox{\D_1}
			{ \derPure P : `G |- `a{:}C,`D }
		  \dquad 
		  \InfBox{\D_2}
			{ \derXN Q : `G,x{:}C,y{:}A |- `b{:}B,`D }
		}
		{ \derPure \cutR P `a + x Q : `G,y{:}A |- `b{:}B,`D }
	}
	{ \derPure \exp y { \cutR P `a + x Q } `b . `g : `G |- `g{:}A\arr B\union `D }
 \]

 \end{wideitemize}

}


 \item[$\Riii$] $ \cutR P `a + x {\imp Q `b [x] y R } \red \cut P `a + v { \imp {\cutR P `a + x Q } `b [v] y {\cutR P `a + x R } } $, with $v$ fresh.

 \begin{wideitemize}

\item[\ref{intersection case}] \correct 
 \[ \kern-25mm
\Inf	
	{
	  \Inf	[\intR]
		{ \InfBox{\D^i_1}
			{ \derPure P : `G |- `a{:}C_i,`D }
		  ~(\forall i \ele \n) 
		  \quad
		  \InfBox{\D_1}
			{ \derPure P : `G |- `a{:}A\arr B,`D }
		}
		{ \derXN P : `G |- `a{:}\CoI{n}\inter A\arr B,`D }
	  \quad
\raise1\RuleH\hbox to 4cm { \kern-20mm
	  \Inf	
		{ \InfBox{\D_2}
			{ \derXN Q : `G,x{:}\CoI{n} |- `b{:}A,`D }
		  \dquad 
		  \InfBox{\D_3}
			{ \derXN R : `G,y{:}B,x{:}\CoI{n} |- `D }
		}
		{ \hspace*{1cm} \derPure \imp Q `b [x] y R : `G,x{:}\CoI{n}\inter A\arr B |- `D }
}
\multiput(-10,-1)(0,7){2}{.}
	}
	{ \derPure \cutR P `a + x { \imp Q `b [x] y R } : `G |- `D }
 \]
 \[ \kern-25mm
 \Inf	
	{ \InfBox{\D_1}
		{ \derXN P : `G |- `a{:}A\arr B,`D }
	  \kern-2.75cm
	  \Inf	
		{
		  \Inf	
			{ \Inf	[\intR]
				{ \InfBox{\D^i_1}
					{ \derPure P : `G |- `a{:}C_i,`D }
				  ~(\forall i \ele \n) 
				}
				{ \derXN P : `G |- `a{:}\CoI{n},`D }
			  ~
			  \InfBox{\D_2}
				{ \derXN Q : `G,x{:}\CoI{n} |- `b{:}A,`D }
			}
			{ \derXN \cutR P `a + x Q : `G |- `b{:}A,`D }
\raise1.25\RuleH\hbox to 3cm {\kern-40mm 
		  \Inf	
			{ \Inf	[\intR]
				{ \InfBox{\D^i_1}
					{ \derPure P : `G |- `a{:}C_i,`D }
				  ~(\forall i \ele \n) 
				}
				{ \derXN P : `G |- `a{:}\CoI{n},`D }
			  ~ 
			  \InfBox{\D_3}
				{ \derXN R : `G,y{:}B,x{:}\CoI{n} |- `D }
			}
			{ \hspace*{2cm} \derXN \cutR P `a + x R : `G,y{:}B |- `D }
}
\multiput(-10,-1)(0,7){4}{.}
		}
		{ \derXN \imp { \cutR P `a + x Q } `b [v] y { \cutR P `a + x R } : `G,v{:}A\arr B |- `D }
	}
	{ \derPure \cut P `a + v { \imp { \cutR P `a + x Q } `b [v] y { \cutR P `a + x R } } : `G |- `D }
 \]


\item[\ref{union case}] \correct
 \[ \kern1cm
\Inf	
	{ \InfBox{\D_1}
		{ \derPure P : `G |- `a{:}\un{n}(A_i\arr B_i),`D }
	  \kern-1cm
	  \Inf	[\unL]
		{ \Inf	
			{ \InfBox{\D^i_2}
				{ \derX Q : `G,x{:}A_i\arr B_i |- `b{:}A_i,`D }
			  \quad 
			  \InfBox{\D^i_3}
				{ \derX R : `G,x{:}A_i\arr B_i,y{:}B_i |- `D }
			}
			{ \derPure \imp Q `b [x] y R : `G,x{:}A_i\arr B_i |- `D }
		  (\forall i \ele \n)
		}
		{ \derX \imp Q `b [x] y R : `G,x{:}\un{n}(A_i\arr B_i) |- `D }
	}
	{ \derPure \cutR P `a + x { \imp Q `b [x] y R } : `G |- `D }
 \]
The $x$ is introduced, so does not appear in either $Q$ or $R$, so, by the thinning lemma, we have $\derX Q : `G |- `b{:}A_i,`D $ and $\derX R : `G,y{:}B_i |- `b{:}A_i,`D $, so, by weakening, also $\derX Q : `G,x{:}{\Top} |- `b{:}A_i,`D $ and $\derX R : `G,x{:}{\Top},y{:}B_i |- `b{:}A_i,`D $.
We can then construct:
  \[ 
 \Inf	
	{ \InfBox{\D_1}
		{ \derX P : `G |-  `a{:}\un{n}(A_i\arr B_i),`D }
	  \quad
	  \Inf	[\unL]
		{ 
\hbox to 7cm{\kern-6cm
		  \Inf	
			{ 
			  \Inf	
				{ \Inf	[\Top]
					{ \derX P : `G |-  `a{:}\Top,`D }
				  \InfBox
					{\derX Q : `G,x{:}{\Top} |- `b{:}A_i,`D }
				}
				{ \derX \cutR P `a + x Q : `G |- `b{:}A_i,`D }
			  \Inf	
				{ \Inf	[\Top]
					{ \derX P : `G |-  `a{:}\Top,`D }
				  \InfBox
					{ \derX R : `G,x{:}{\Top},y{:}B_i |- `b{:}A_i,`D  }
				}
				{ \derX \cutR P `a + x R : `G,y{:}B_i |- `D }
			}
			{ \kern3.5cm\derX \imp { \cutR P `a + x Q } `b [v] y { \cutR P `a + x R } : `G,v{:}A_i\arr B_i |- `D }
}
		}
		{ \derX \imp { \cutR P `a + x Q } `b [v] y { \cutR P `a + x R } : `G,v{:}\un{n}(A_i\arr B_i) |- `D }
	}
	{ \derPure \cut P `a + v { \imp { \cutR P `a + x Q } `b [v] y { \cutR P `a + x R } } : `G |- `D }
 \]

\item[\ref{pure case}] 
Then also the $x$ inside $\imp Q `b [x] y R $ is typed with $A \arr B$.
 \[ \kern.5cm
\Inf	
	{ \InfBox{\D_1}
		{ \derPure P : `G |- `a{:}A\arr B,`D }
	  \quad
	  \Inf	
		{ \InfBox{\D_2}
			{ \derXN Q : `G,x{:}A\arr B |- `b{:}A,`D }
		  \dquad 
		  \InfBox{\D_3}
			{ \derXN R : `G,y{:}B,x{:}A\arr B |- `D }
		}
		{ \derPure \imp Q `b [x] y R : `G,x{:}A\arr B |- `D }
	}
	{ \derPure \cutR P `a + x { \imp Q `b [x] y R } : `G |- `D }
 \]
 \[ 
 \Inf	
	{ \InfBox{\D_1}
		{ \derXN P : `G |- `a{:}A\arr B,`D }
	  \quad
	  \Inf	
		{
\multiput(10,-1)(0,7){4}{.}
\raise1.25\RuleH\hbox to 2cm {\kern-45mm
		  \Inf	
			{ \InfBox{\D_1}
				{ \derXN P : `G |- `a{:}A\arr B,`D }
			  \quad
			  \InfBox{\D_2}
				{ \derXN Q : `G,x{:}A\arr B |- `b{:}A,`D }
			}
			{ \derXN \cutR P `a + x Q : `G |- `b{:}A,`D }
}
		  \Inf	
			{ \InfBox{\D_1}
				{ \derXN P : `G |- `a{:}A\arr B,`D }
			  \quad 
			  \InfBox{\D_3}
				{ \derXN R : `G,y{:}B,x{:}A\arr B |- `D }
			}
			{ \derXN \cutR P `a + x R : `G,y{:}B |- `D }
		}
		{ \derXN \imp { \cutR P `a + x Q } `b [v] y { \cutR P `a + x R } : `G,v{:}A\arr B |- `D }
	}
	{ \derPure \cut P `a + v { \imp { \cutR P `a + x Q } `b [v] y { \cutR P `a + x R } } : `G |- `D }
 \]

 \end{wideitemize}



 \item[\Riv] $ \cutR P `a + x {\imp Q `b [z] y R } \red \imp { \cutR P `a + x Q } `b [z] y { \cutR P `a + z R },\ x \not= z $. 
 
 \begin{wideitemize}

\item[\ref{intersection case}] \correct 
 \[
\Inf	{ \Inf	[\intR]
		{ \InfBox{\D^i_1}
			{ \derXN P : `G |- `a{:}C_i,`D }
		  \quad (\forall i \ele \n)
		}
		{ \derXN P : `G |- `a{:}\CoI{n},`D }
	  \Inf	{ \InfBox{\D_2}
			{ \derXN Q : `G,x{:}\CoI{n} |- `b{:}A,`D }
		  \quad
		  \InfBox{\D_3}
			{ \derXN R : `G,x{:}\CoI{n},y{:}B |- `D }
		}
		{ \derXN \imp Q `b [z] y P : `G\inter z{:}A\arr B,x{:}\CoI{n} |- `D }
	}
	{ \derXN \cutR P `a + x { \imp Q `b [z] y R } : `G\inter z{:}A\arr B |- `D }
 \]
 \[ \kern -3cm
\Inf	{ \Inf	{ \Inf	[\intR]
			{ \InfBox{\D^i_1}
				{ \derXN P : `G |- `a{:}C_i,`D }
			  ~ (\forall i \ele \n)
			}
			{ \derXN P : `G |- `a{:}\CoI{n},`D }
		  ~
		  \InfBox{\D_2}
			{ \derXN Q : `G,x{:}\CoI{n} |- `b{:}A,`D }
		}
		{ \derXN \cutR P `a + x Q : `G |- `b{:}A,`D }
\raise1.25\RuleH\hbox to 3cm {\kern-3cm
	  \Inf	{ \Inf	[\intR]
			{ \InfBox{\D^i_1}
				{ \derXN P : `G |- `a{:}C_i,`D }
			  ~ (\forall i \ele \n)
			}
			{ \derXN P : `G |- `a{:}\CoI{n},`D }
		  ~
		  \InfBox{\D_3}
			{ \derXN R : `G,x{:}\CoI{n},y{:}B |- `D }
		}
		{ \derXN \cutR P `a + z R : `G ,y{:}B |- `D }
}
\multiput(-10,-1)(0,7){4}{.}
	}
	{ \derXN \imp { \cutR P `a + x Q } `b [z] y { \cutR P `a + z R } : `G\inter z{:}A\arr B |- `D }
 \]

\item[\ref{union case}] 
Impossible, since $ \imp Q `b [z] y R $ does not introduce $x$.

\item[\ref{pure case}] \correct
 \[
\Inf	{ \InfBox{\D_1}
		{ \derXN P : `G |- `a{:}C,`D }
	  \quad
	  \Inf	{ \InfBox{\D_2}
			{ \derXN Q : `G,x{:}C |- `b{:}A,`D }
		  \dquad
		  \InfBox{\D_3}
			{ \derXN R : `G,x{:}C,y{:}B |- `D }
		}
		{ \derXN \imp Q `b [z] y R : `G\inter z{:}A\arr B,x{:}C |- `D }
	}
	{ \derXN \cutR P `a + x { \imp Q `b [z] y R } : `G\inter z{:}A\arr B |- `D }
 \]
 \[ 
\Inf	{ \Inf	{ \InfBox{\D_1}
			{ \derXN P : `G |- `a{:}C,`D }
		  \dquad
		  \InfBox{\D_2}
			{ \derXN Q : `G,x{:}C |- `b{:}A,`D }
		}
		{ \derXN \cutR P `a + x Q : `G |- `b{:}A,`D }
	  \quad
	  \Inf	{ \InfBox{\D_1}
			{ \derXN P : `G |- `a{:}C,`D }
		  \dquad
		  \InfBox{\D_3}
			{ \derXN R : `G,x{:}C,y{:}B |- `D }
		}
		{ \derXN \cutR P `a + z R : `G ,y{:}B |- `D }
	}
	{ \derXN \imp { \cutR P `a + x Q } `b [z] y { \cutR P `a + z R } : `G\inter z{:}A\arr B |- `D }
 \]

 \end{wideitemize}

\Comment{

 \item[\Rv] $ \cutR P `a + x { \cut Q `b + y R } \red \cut { \cutR P `a + x Q } `b + y { \cutR P `a + z R } $.
 
 \begin{wideitemize}

\item[\ref{intersection case}] \correct 
 \[
\Inf	{ \Inf	[\intR]
		{ \InfBox{\D^i_1}
			{ \derXN P : `G |- `a{:}C_i,`D }
		  \quad (\forall i \ele \n)
		}
		{ \derXN P : `G |- `a{:}\CoI{n},`D }
	  \Inf	{ \InfBox{\D_2}
			{ \derXN Q : `G,x{:}\CoI{n} |- `b{:}A,`D }
		  \quad
		  \InfBox{\D_3}
			{ \derXN R : `G,x{:}\CoI{n},y{:}A |- `D }
		}
		{ \derXN \cut Q `b + y R : `G,x{:}\CoI{n} |- `D }
	}
	{ \derXN \cutR P `a + x { \cut Q `b + y R } : `G |- `D }
 \]
 \[ \kern -3cm
\Inf	{ \Inf	{ \Inf	[\intR]
			{ \InfBox{\D^i_1}
				{ \derXN P : `G |- `a{:}C_i,`D }
			  ~ (\forall i \ele \n)
			}
			{ \derXN P : `G |- `a{:}\CoI{n},`D }
		  ~
		  \InfBox{\D_2}
			{ \derXN Q : `G,x{:}\CoI{n} |- `b{:}A,`D }
		}
		{ \derXN \cutR P `a + x Q : `G |- `b{:}A,`D }
\raise1.25\RuleH\hbox to 3cm {\kern-3cm
	  \Inf	{ \Inf	[\intR]
			{ \InfBox{\D^i_1}
				{ \derXN P : `G |- `a{:}C_i,`D }
			  ~ (\forall i \ele \n)
			}
			{ \derXN P : `G |- `a{:}\CoI{n},`D }
		  ~
		  \InfBox{\D_3}
			{ \derXN R : `G,x{:}\CoI{n},y{:}A |- `D }
		}
		{ \derXN \cutR P `a + z R : `G ,y{:}A |- `D }
}
\multiput(-10,-1)(0,7){4}{.}
	}
	{ \derXN \cut { \cutR P `a + x Q } `b + y { \cutR P `a + z R } : `G |- `D }
 \]

\item[\ref{union case}] 
Impossible, since $\cut Q `b + y R $ does not introduce $x$.

\item[\ref{pure case}] \correct
 \[
\Inf	{ \InfBox{\D_1}
		{ \derXN P : `G |- `a{:}C,`D }
	  \quad
	  \Inf	{ \InfBox{\D_2}
			{ \derXN Q : `G,x{:}C |- `b{:}A,`D }
		  \quad
		  \InfBox{\D_3}
			{ \derXN R : `G,x{:}C,y{:}A |- `D }
		}
		{ \derXN \cut Q `b + y R : `G,x{:}C |- `D }
	}
	{ \derXN \cutR P `a + x { \cut Q `b + y R } : `G |- `D }
 \]
 \[ 
\Inf	{ \Inf	{ \InfBox{\D_1}
			{ \derXN P : `G |- `a{:}C,`D }
		  \quad
		  \InfBox{\D_2}
			{ \derXN Q : `G,x{:}C |- `b{:}A,`D }
		}
		{ \derXN \cutR P `a + x Q : `G |- `b{:}A,`D }
	  \quad
	  \Inf	{ \InfBox{\D_1}
			{ \derXN P : `G |- `a{:}C,`D }
		  \quad
		  \InfBox{\D_3}
			{ \derXN R : `G,x{:}C,y{:}A |- `D }
		}
		{ \derXN \cutR P `a + z R : `G ,y{:}A |- `D }
	}
	{ \derXN \cut { \cutR P `a + x Q } `b + y { \cutR P `a + z R } : `G |- `D }
 \]

 \end{wideitemize}

}

 \end{description}

 \end{description}

 \end{proof}

}



\Long{ \section{A system with preservance of types under {\CBV} reduction} \label{CBV} }

We \Long{will now }\Short{also }define a notion of context assignment that will prove to be closed to reduction with respect to {\CBV} reduction.
Since the definition is in idea and concept entirely dual to the restriction for {\CBN} defined above, we will just focus on the differences.

 \begin{definition}\label{CBV type assignment}

\Long{
 \begin{enumerate}

 \item }
The context assignment rules for $\TurnV$ are\Short{ the same as those for $\TurnN$, except for}:
 \[ \kern-5mm \def\arraystretch{3} 
 \begin{array}{rl}
 \\[-18mm]  
\Long{
(\Ax): &
 \Inf	
	{ \derXV \caps<y,`b> : `G,\stat{y}{A} |- `b{:}A,`D }
 \\ }
(\Cut):&
 \Inf	[\emph{for inactive and left-activated cuts}]
	{ \derXV P : `G |- `a{:}A,`D \quad \derXV Q : `G, \stat{x}{A} |- `D }
	{ \derXV \cut P `a + x Q : `G |- `D }
 \\
(\DaggerR):&
 \Inf	[ A\emph{ not a union type},`a\textit{ introduced}]
	{ \derXV P : `G |- `a{:}A,`D \quad \derXV Q : `G, \stat{x}{A} |- `D }
	{ \derXV \cutR P `a + x Q : `G |- `D }
 \\
\Long{
(\arrL):&
 \Inf	{ \derXV P : `G |- `a{:}A,`D \quad \derXV Q : `G,\stat{x}{B} |- `D }
	{ \derXV \imp P `a [y] x Q : `G\inter \stat{y}{A\arr B} |- `D }
 \dquad
(\arrR):
 \Inf	{ \derXV P : `G,\stat{x}{A} |- `a{:}B\union `D }
	{ \derXV \exp x P `a . `b : `G |- \stat{`b}{A\arr B}\union `D }
 \\ 
(\unL): &
 \Inf	[n \geq 0]
	{ \derXV P : `G,x{:}A_i |- `D \quad  (\forall i \ele \n) }
	{ \derXV P : `G,x{:}{\AoUn} |- `D }
 \\
(\intR): &
 \Inf	[n \geq 0, `a\textit{ introduced in }P]
	{ \derXV P : `G |- `a{:}A_i,`D \quad (\forall i \ele \n) }
	{ \derXV P : `G |- `a{:}\AoIn,`D }
}
\Short{
(\intR): &
 \Inf	[n \geq 0, `a\textit{ introduced in }P]
	{ \derXV P : `G |- `a{:}A_i,`D \quad (\forall i \ele \n) }
	{ \derXV P : `G |- `a{:}\AoIn,`D }
}
 \end{array} \]

\Long{
 \item
Pure derivations are those that end with either rule $(\Ax)$, $(\Cut)$, $(\DaggerR)$, $(\arrL)$, and $(\arrR)$.

 \end{enumerate}
}

 \end{definition}

We can easily verify that this notion of type assignment is not closed for witness expansion.
This is clear from the fact that the side-condition of rule $(\intR)$ is not preserved
.

 \begin{theorem}[Witness reduction for $\TurnV$ wrt \CBV] \label{witness reduction for CBV}
If $\derXV P : `G |- `D $, and $P \redCBV Q$, then $\derXV Q : `G |- `D $.
 \end{theorem}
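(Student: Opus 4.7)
The plan is to prove this by induction on the definition of $\redCBV$, closely mirroring the proof of Theorem~\ref{witness reduction for CBN} but with intersection and union swapped, and with the activation direction reversed. By Lemma~\ref{pure Gen lemma} (the analogue of the Generation Lemma, adapted to $\TurnV$), the derivation for a cut $\cut P `a + x Q$ must take one of three shapes: an \emph{intersection case} where $`a$ carries an intersection type built via $(\intR)$ from pure sub-derivations for $P$; a \emph{union case} where $x$ carries a union type built via $(\unL)$ from pure sub-derivations for $Q$; or a \emph{pure case} with a non-intersection, non-union middle type. As in the CBN proof, the strategy is to show that each reduction rule can be typed starting from each of these three shapes.

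First I would dispatch the logical rules $(\Cap)$, $(\Exp)$, $(\Imp)$ and the \CBV\ variant of $(\Ins)$, namely $\cut {\exp y P `b.`a} `a + x {\imp Q `g [x] z R} \redCBV \cut Q `g + y {\cut P `b + z R}$. These follow by exactly the same cut-and-paste manipulation used in the CBN case, and the $(\intR)/(\unL)$ restrictions do not interfere because the relevant introduction conditions on the outermost connectors are preserved. The activation rule $(\actR)$ is the crucial dual of $(\actL)$: starting from a derivation of $\cut P `a + x Q$, the side condition for the new rule $(\DaggerR)$ requires that $P$ introduces $`a$ and that the active type is not a union. In the \emph{union case} this is forbidden by the direction of activation in \CBV, so we are left with the intersection and pure cases; in the intersection case the fact that $`a$ is introduced in $P$ ensures that the type is strict (so not a union), licensing the use of $(\DaggerR)$.

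The real work, and the dual of the hard cases of the CBN proof, lies in the right-propagation rules $(\Rii)$--$(\Rv)$, which must now be checked for right-activated cuts $\cutR P `a + x Q$. The key fact is that $(\DaggerR)$ forbids the active type from being a union; hence the \emph{union case} cannot occur on right-activated cuts, eliminating exactly the problematic configuration exhibited in Example~\ref{third counterexample}. In the intersection case the derivation for $P$ must use $(\intR)$ with $`a$ introduced in $P$, so the components can be distributed across the copies of $P$ produced by propagation (e.g.\ in $(\Riii)$ where a fresh inner cut appears) without having to commute $(\intR)$ past $(\arrR)$ — which is exactly what broke witness reduction in the unrestricted system. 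Dually, the left-propagation rules $(\Lii)$--$(\Lv)$ are now handled trivially, since the cut is inactive or left-activated and the symmetric restriction on $(\unL)$ (with $x$ introduced in $Q$) blocks nothing new: left-activation is freely permitted, and the derivations commute without constraint.

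The main obstacle will be checking the cases $(\Rii)$ and $(\Riii)$ in the intersection case: one must verify that, when $P$ is duplicated by propagation, the $(\intR)$ above $P$ can be replayed on each copy so that the types match on the newly created cuts. This requires exploiting the fact that $`a$ is introduced in $P$, so each pure sub-derivation $\D^i_1$ of $P$ genuinely produces $`a{:}C_i$ at the root of an $(\arrR)$ or $(\Cap)$, and can therefore be substituted through a thinning/weakening argument rather than threaded through an illegal $(\intR)/(\arrR)$ swap. Once this bookkeeping is carried out, the remaining obligations reduce to applying the (dual) admissibility results on renaming cuts (Lemmas~\ref{closed for right renaming} and~\ref{closed for left renaming}), which go through verbatim for $\TurnV$.
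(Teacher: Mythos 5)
Your overall architecture is the right one --- induction over the reduction rules, the three-way split (intersection / union / pure) on the shape of the derivation for a cut, and the observation that $(\DaggerR)$ excludes the union case for right-activated cuts, which is indeed how the paper kills ``right-propagation into union'' for $\TurnV$. But two of your claims are wrong in ways that leave real holes.

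First, in the case $(\actR)$ you assert that the union case ``is forbidden by the direction of activation''. It is not: in $\TurnV$ rule $(\unL)$ carries no introduction side-condition, so the derivation of $\cut P `a + x Q $ may perfectly well end with the right premise built by $(\unL)$, putting a union type $\CoU{n}$ on $x$ and hence on $`a$. The paper handles this case rather than excluding it: since $P$ introduces $`a$, the plug $`a$ occurs exactly once in $P$ (in a capsule or at the top of an export), so the type it genuinely uses is a single strict $C_j$ and the remaining components of $\CoU{n}$ come from weakening; one then re-derives $P$ with $`a{:}C_j$ alone and applies $(\DaggerR)$ to that. This collapse argument is a necessary step of the proof, not a vacuous case.

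Second, and more seriously, left propagation is \emph{not} trivial in $\TurnV$. Left-activated cuts are typed by the unrestricted rule $(\Cut)$, which admits an intersection type in cut position, so the failure mode ``left-propagation into intersection'' identified in Section~\ref{problems} is still on the table; in $\TurnN$ it was blocked by $(\DaggerL)$, but $\TurnV$ has no such rule. What saves $\TurnV$ is the side-condition on $(\intR)$: if, say, in case $(\Lii)$ the left premise ends with $(\intR)$ applied to $\exp y P `b . `a $, then $`a$ must be introduced there, hence $`a \notin \fp(P)$, and the duplicated inner cuts $\cutL P `a + x Q $ in the reduct can be typed by giving $`a$ and $x$ the type $\Bottom$ via thinning and weakening. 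Your proposal instead appeals to ``the symmetric restriction on $(\unL)$ (with $x$ introduced)'' --- but that restriction belongs to $\TurnN$, not $\TurnV$, where $(\unL)$ is unrestricted --- and concludes that the derivations ``commute without constraint'', which is exactly the reasoning that fails in the unrestricted system. Without the $(\intR)$-introduction argument the cases $(\Lii)$--$(\Lv)$ do not go through.
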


\Long{

 \begin{proof}
We only show the interesting cases.

 \begin{description} \itemsep5pt
\Long{
 \item[Logical rules]

 \begin{description}


 \item[\Ax] $\cut \caps<y,`a> `a + x \caps<x,`b> \red \caps<y,`b>$.
We consider the three cases:

 \itemindent-.7cm 

\begin{wideitemize}

\item[\ref{intersection case}] \correct
\[ \kern-12.5mm  
 \Inf	
	{ \Inf	[\intR]
		{ \Inf	{ \derPure \caps<y,`a> : `G\inter y{:}C_i |- `a{:}C_i,`D }
		  \quad 
		  (\forall i \ele \n)
		}
		{ \derXV \caps<y,`a> : `G\inter y{:}\CoI{n} |- `a{:}\CoI{n},`D }
	  \Inf	{ \derPure \caps<x,`b> : `G, x{:}\CoI{n} |- `b{:}C_j\union `D }
	}
	{ \derPure \cut \caps<y,`a> `a + x \caps<x,`b> : `G\inter y{:}\CoI{n} |- `b{:}C_j\union `D }
 \]
Notice that $\caps<y,`a>$ introduces $`a$ and that $C_i \ele \Tstrict$ for $\iotn$; by rule $(\Ax)$, we can derive:
 \[ 
\Inf	{ \derPure \caps<y,`b>  : `G\inter y{:}\CoI{n} |- `b{:}C_j\union `D }
 \]

\item[\ref{union case}] \correct
\[ \kern-12.5mm  
 \Inf	
	{ \Inf	{ \derPure \caps<y,`a> : `G\inter y{:}C_j |- `a{:}\CoU{n},`D }
	  \Inf	[\unL]
		{ \Inf	{ \derPure \caps<x,`b> : `G, x{:}C_i |- `b{:}C_i\union `D }
		  \quad 
		  (\forall i \ele \n)
		}
		{ \derXV \caps<x,`b> : `G, x{:}\CoU{n} |- `b{:}\CoU{n}\union `D }
	}
	{ \derPure \cut \caps<y,`a> `a + x \caps<x,`b> : `G\inter y{:}C_j |- `b{:}\CoU{n}\union `D }
 \]
Notice that $C_i \ele \Tstrict$ for $\iotn$; by rule $(\Ax)$, we can derive:
 \[ 
\Inf	{ \derPure \caps<y,`b>  : `G\inter y{:}C_j |- `b{:}\CoU{n}\union `D }
 \]

\item[\ref{pure case}] \correct
\[  
 \Inf	
	{ \Inf	{ \derPure \caps<y,`a> : `G\inter y{:}A |- `a{:}A,`D }
	  \quad
	  \Inf	{ \derPure \caps<x,`b> : `G, x{:}A |- `b{:}A\union `D }
	}
	{ \derPure \cut \caps<y,`a> `a + x \caps<x,`b> : `G\inter y{:}A |- `b{:}A\union `D }
 \]
Then $A \ele \Tstrict$, and we can derive:
 \[
\Inf	{ \derPure \caps<y,`b> : `G\inter y{:}A |- `b{:}A\union `D }
 \]

 \end{wideitemize}


 \item[\Exp] $ \cut { \exp y P `b . `a } `a + x \caps<x,`g> \red \exp y P `b . `g $, with $`a \not\in \FP{P}$; we can assume that $`a$ does not occur in $`D$, but cannot assume that for $`g$.
Again, we have three cases:

\begin{wideitemize}

\item[\ref{intersection case}] \correct
\[ 
 \Inf	
	{ \Inf	[\intR]
		{ \Inf	
			{ \InfBox{\D_i}
				{ \derXV P : `G,y{:}A_i |- `b{:}B_i,`D } 
			}
			{ \derPure \exp y P `b . `a : `G |- `a{:}A_i\arr B_i,`D }
		  \quad (\forall i \ele \n)
		}
		{ \derXV \exp y P `b . `a : `G |- `a{:}\int{i}(A_i\arr B_i),`D }
	  \Inf	
		{ \derPure \caps<x,`g> : `G,x{:}\int{i}(A_i\arr B_i) |- `g{:}A_j\arr B_j\,\union `D }
	}
	{ \derPure \cut { \exp y P `b . `a } `a + x \caps<x,`g> : `G |- `g{:}A_j\arr B_j\,\union `D }
 \]
Notice that $\exp y P `b . `a $ introduces $`a$. 
For the right-hand side we can construct:
 \[ 
 \Inf	
	{ \InfBox{\D_j}
		{ \derXV P : `G,y{:}A_j |- `b{:}B_j,`D } 
	}
	{ \derPure \exp y P `b . `g : `G |- `g{:}(A_j\arr B_j)\,\union `D }
 \]

\item[\ref{union case}] Let $C = \CoU{n}$.

\[ \kern-4.8cm 
 \Inf	
	{
	  \Inf	[\Weak]
	  	{ \Inf	
			{ \InfBox{\D}
				{ \derXV P : `G,y{:}A |- `b{:}B,`D } 
			}
			{ \derXV \exp y P `b . `a : `G |- `a{:}A\arr B,`D }
		}
		{ \derXV \exp y P `b . `a : `G |- `a{:}(A\arr B)\union C,`D }
\raise.75\RuleH\hbox to 5cm{\kern-13mm  
	  \Inf	[\unL]
		{ \Inf	{(\forall i \ele \n)}
			{ \derPure \caps<x,`g>  : `G, x{:}C_i |- `g{:}C_i\union `D }
		  \Inf	{\derPure \caps<x,`g>  : `G, x{:}A\arr B |- `g{:}A\arr B\union `D } 
		}
		{ \derXV \caps<x,`g> : `G, x{:}(A\arr B)\union C |- `g{:}(A\arr B)\union C\, \union `D }
}
	}
	{ \derPure \cut {\exp y P `b . `a } `a + x \caps<x,`g> : `G |-  `g{:}(A\arr B)\union C\, \union `D }
 \]
Notice, since $`a \not\in \fp(P)$, $`a{:}C$ is added by weakening.
Then for the right-hand side we can construct:
 \[ 
\Inf	[\Weak]
	{ \Inf	
		{ \InfBox{\D}
			{ \derXV P : `G,y{:}A |- `b{:}B,`D } 
		}
		{ \derPure \exp y P `b . `g : `G |- `g{:}(A\arr B)\union `D }
	}
	{ \derPure \exp y P `b . `g : `G |- `g{:}(A\arr B)\union C\, \union `D }
 \]

\item[\ref{pure case}] \correct
\[ \Inf	
	{ \Inf	
		{ \InfBox{\D}
			{ \derXV P : `G,y{:}A |- `b{:}B,`D } 
		}
		{ \derPure \exp y P `b . `a : `G |- `a{:}A\arr B,`D }
	  \Inf	
		{ \derPure \caps<x,`g> : `G,x{:}A\arr B |- `g{:}A\arr B\union `D }
	}
	{ \derPure \cut { \exp y P `b . `a } `a + x \caps<x,`g> : `G |- `g{:}A\arr B\union `D }
 \]
Notice that then
 \[ 
\Inf	[\arrR]
	{ \InfBox{\D}
		{ \derXV P : `G,y{:}A |- `b{:}B,`D } 
	}
	{ \derXV \exp y P `b . `g : `G |- `g{:}A\arr B\union `D }
 \]

 \end{wideitemize}


 \item[\Imp] $ \cut \caps<y,`a> `a + x { \imp P `b [x] z Q } \red \imp P `b [y] y Q $, with $x \not\in \FS{P,Q}$; we can assume that $x$ does not occur in $`G$, but cannot assume that for $y$. 
The three cases are:

\begin{wideitemize}

\item[\ref{intersection case}]  \correct
 \[ \kern4.25cm
\Inf	
	{
\raise3.5\RuleH \hbox to 5cm{\kern -4.5cm
	  \Inf	[\intR]
		{ \Inf	
			{ \derPure \caps<y,`a> : `G\inter y{:}C_i |- `a{:}C_i,`D }
			(\forall i \ele \n)
			~
		  \Inf	
			{ \derPure \caps<y,`a> : `G\inter y{:}A\arr B |- `a{:}A\arr B,`D }
		}
		{ \derXV \caps<y,`a> : `G\inter y{:}\CoI{n}\inter (A\arr B) |- `a{:}\CoI{n}\inter (A\arr B),`D }
}
	  \Inf	[\Weak]
		{ \Inf	
			{ \InfBox{\D_2}
				{ \derPure P : `G |- `b{:}A,`D } 
			  \quad
			  \InfBox{\D_3}
				{ \derPure Q : `G,z{:}B |- `D } 
			}
			{ \derPure \imp P `b [x] z Q : `G,x{:}A\arr B |- `D }
		}
		{\derPure \imp P `b [x] z Q : `G,x{:}\CoI{n}\inter (A\arr B) |- `D }
	}
	{ \derPure \cut \caps<y,`a> `a + x { \imp P `b [x] z Q } : `G\inter y{:}\CoI{n}\inter (A\arr B) |- `D }
 \]
Notice that $\caps<y,`a>$ introduces $`a$.
Since $x \notele P,Q$, $x{:}C$ has been added by weakening; we can derive:
 \[
 \Inf	[\Weak]
	{ \Inf	
		{ \InfBox{\D_2}
			{ \derPure P : `G |- `b{:}A,`D } 
		  \dquad 
		  \InfBox{\D_3}
			{ \derPure Q : `G,z{:}B |- `D } 
		}
		{ \derPure \imp P `b [y] z Q : `G\inter y{:}A\arr B |- `D }
	}
	{ \derPure \imp P `b [y] z Q : `G\inter y{:}\CoI{n}\inter (A\arr B) |- `D }
 \]

\item[\ref{union case}] Let $(A_j\arr B_j)\union C = \un{n}( A_i\arr B_i )$. 
\[ 
 \Inf	
	{ \Inf	
		{ \derPure \caps<y,`a> : `G\inter y{:}A_j\arr B_j |- `a{:}(A_j\arr B_j)\union C ,`D }
	  \Inf	[\unL]
		{ \Inf	
			{ \InfBox{\D_2^i}
				{ \derXV P : `G |- `b{:}A_i,`D } 
			  \quad
			  \InfBox{\D_3^i}
				{ \derXV Q : `G, z{:}B_i |- `D } 
			}
			{ \derPure \imp P `b [x] z Q : `G , x{:}A_i\arr B_i |- `D }
		  ~ (\forall i \ele n)
		}
		{ \derXV \imp P `b [x] z Q : `G , x{:}\un{n}( A_i\arr B_i ) |- `D }
	}
	{ \derPure \cut \caps<y,`a> `a + x { \imp P `b [x] z Q } : `G\inter y{:}A_j\arr B_j |- `D }
 \]
 \[
 \Inf	[\arrL]
	{ \InfBox{\D_2^j}
		{ \derXV P : `G |- `b{:}A_j,`D } 
	  \dquad 
	  \InfBox{\D_3^j}
		{ \derXV Q : `G, z{:}B_j |- `D } 
	}
	{ \derXV \imp P `b [y] z Q : `G\inter y{:}A_j\arr B_j |- `D }
 \] 

\item[\ref{pure case}] \correct
 \[ 
\Inf	
	{ \Inf	
		{ \derPure \caps<y,`a> : `G\inter y{:}A\arr B |- `a{:}A\arr B ,`D }
	  \Inf	[\arrL]
		{ \InfBox{\D_2}
			{ \derPure P : `G |- `b{:}A,`D } 
		  \quad
		  \InfBox{\D_3}
			{ \derPure Q : `G, z{:}B |- `D } 
		}
		{ \derPure \imp P `b [x] z Q : `G , x{:}A\arr B |- `D }
	}
	{ \derPure \cut \caps<y,`a> `a + x { \imp P `b [x] z Q } : `G\inter y{:}A\arr B |- `D }
 \]
 \[
 \Inf	[\arrL]
	{ \InfBox{\D_2}
		{ \derPure P : `G |- `b{:}A,`D } 
	  \dquad 
	  \InfBox{\D_3}
		{ \derPure Q : `G, z{:}B |- `D } 
	}
	{ \derPure \imp P `b [y] z Q : `G\inter y{:}A\arr B |- `D }
 \]

 \end{wideitemize}


 \item[\Ins] $ \begin{array}{rcl}
 \cut { \exp y P `b . `a } `a + x { \med Q `g [x] z R }
\red \left \{ \begin{array}{l}
	\cut { \cut Q `g + y P } `b + z R  \\
	\cut Q `g + y {\cut P `b + z R } 
\end{array} \right.
\end{array} $, with $`a \not\in \FP{R}, x \not\in \FS{Q,R}$.

\begin{wideitemize}

\item[\ref{intersection case}]
Let $(A_j \arr B_j) \inter C = \int{n}{(A_i \arr B_i)}$.
 \[ 
 \Inf	
	{
	  \Inf	[\intR]
		{ \Inf	
			{ \InfBox{\D_1^i}
				{\derXV P : `G,y{:}A_i |- `b{:}B_i,`D } 
			}
			{ \derPure \exp y P `b . `a : `G |- `a{:}A_i\arr B_i,`D }
		  ~ (\forall i \ele \n)
		}
		{ \derXV \exp y P `b . `a : `G |- `a{:}\int{n}{(A_i \arr B_i)},`D }
	  \Inf	[\Weak]
	  	{ \Inf	
			{ \InfBox{\D_2}
				{\derXV Q : `G |- `g{:}A_j,`D } 
			  \quad
			  \InfBox{\D_3}
				{\derXV R : `G,z{:}B_j |- `D }
			}
			{ \derPure \imp Q `g [x] z R : `G,x{:}A_j\arr B_j |- `D }
		}
		{ \derPure \imp Q `g [x] z R : `G,x{:}(A_j\arr B_j)\inter C |- `D }
	}
	{ \derPure \cut {\exp y P `b . `a } `a + x { \med Q `g [x] z R } : `G |- `D }
 \] 
Notice that $x$ is introduced.
Since $x$ is unique, $C$ is added by weakening.
We can construct
 \[
\Inf	
	{ \Inf	
		{ \InfBox{\D_2}
			{\derXV Q : `G |- `g{:}A_j,`D } 
		  \quad
		  \InfBox{\D_1^j}
			{ \derPure P : `G,y{:}A_j |- `b{:}B_j,`D }
		}
		{ \derPure \cut Q `g + y P : `G |- `b{:}B_j,`D }
	  \quad
	  \InfBox{\D_3}
		{\derXV R : `G,z{:}B_j |- `D }
	}
	{ \derPure \cut { \cut Q `g + y P } `b + z R : `G |- `D }
 \] 
and
 \[
 \Inf	
	{ \InfBox{\D_2}
		{\derXV Q : `G |- `g{:}A_j,`D } 
	  \quad
	  \Inf	
		{ \InfBox{\D_1^j}
			{ \derPure P : `G,y{:}A_j |- `b{:}B_j,`D }
		  \quad
		  \InfBox{\D_3}
			{\derXV R : `G,z{:}B_j |- `D }
		}
		{ \derXV \cut P `b + z R : `G,y{:}A_j |- `D }
	}
	{ \derXV \cut Q `g + y { \cut P `b + z R } : `G |- `D }
 \]

\item[\ref{union case}] Let $\un{n}( A_i\arr B_i ) = (A_j \arr B_j) \union C $
 \[ 
 \Inf	
	{ \Inf	[\Weak]
		{ \Inf	
			{ \InfBox{\D_1}
				{\derXV P : `G,y{:}A_j |- `b{:}B_j,`D } 
			}
			{ \derPure \exp y P `b . `a : `G |- `a{:}A_j \arr B_j,`D }
		}
		{ \derPure \exp y P `b . `a : `G |- `a{:}(A_j \arr B_j) \union C,`D }
	  \Inf	[\unL]
		{ \Inf	
			{ \InfBox{\D_2^i}
				{\derXV Q : `G |- `g{:}A_i,`D } 
			  \quad
			  \InfBox{\D_3^i}
				{\derXV R : `G,z{:}B_i |- `D }
			}
			{ \derPure \imp Q `g [x] z R : `G,x{:}A_i\arr B_i |- `D }
		  ~ (\forall i \ele \n)
		}
		{ \derXV \imp Q `g [x] z R : `G,x{:}\un{n}( A_i\arr B_i ) |- `D }
	}
	{ \derPure \cut {\exp y P `b . `a } `a + x { \med Q `g [x] z R } : `G |- `D }
 \] 
Notice that $`a{:}C$ in $\D_1$ is added by weakening.
We can construct
 \[
\Inf	
	{ \Inf	
		{ \InfBox{\D_2^j}
			{\derXV Q : `G |- `g{:}A_j,`D } 
		  \dquad
		  \InfBox{\D_1}
			{ \derXV P : `G,y{:}A_j |- `b{:}B_j,`D }
		}
		{ \derPure \cut Q `g + y P : `G |- `b{:}B_j,`D }
	  \dquad
	  \InfBox{\D_3^j}
		{\derXV R : `G,z{:}B_j |- `D }
	}
	{ \derPure \cut { \cut Q `g + y P } `b + z R : `G |- `D }
 \] 
and
 \[
 \Inf	
	{ \InfBox{\D_2^j}
		{\derXV Q : `G |- `g{:}A_j,`D } 
	  \dquad
	  \Inf	
		{ \InfBox{\D_1}
			{ \derXV P : `G,y{:}A_j |- `b{:}B_j,`D }
		  \dquad
		  \InfBox{\D_3^j}
			{\derXV R : `G,z{:}B_j |- `D }
		}
		{ \derXV \cut P `b + z R : `G,y{:}A_j |- `D }
	}
	{ \derXV \cut Q `g + y { \cut P `b + z R } : `G |- `D }
 \]

\item[\ref{pure case}] \correct
 \[ \kern-5mm
\Inf	
	{ \Inf	
		{ \InfBox{\D_1}
			{\derXV P : `G,y{:}A |- `b{:}B,`D } 
		}
		{ \derPure \exp y P `b . `a : `G |- `a{:}A\arr B,`D }
	  \quad
	  \Inf	
		{ \InfBox{\D_2}
			{\derXV Q : `G |- `g{:}A,`D } 
		  \quad
		  \InfBox{\D_3}
			{\derXV R : `G,z{:}B |- `D }
		}
		{ \derPure \imp Q `g [x] z R : `G,x{:}A\arr B |- `D }
	}
	{ \derPure \cut {\exp y P `b . `a } `a + x { \med Q `g [x] z R } : `G |- `D }
 \] 
Then we can construct
 \[
\Inf	
	{ \Inf	
		{ \InfBox{\D_2}
			{\derXV Q : `G |- `g{:}A,`D } 
		  \quad
		  \InfBox{\D_1}
			{ \derXV P : `G,y{:}A |- `b{:}B,`D }
		}
		{ \derPure \cut Q `g + y P : `G |- `b{:}B,`D }
	  \quad
	  \InfBox{\D_3}
		{\derXV R : `G,z{:}B |- `D }
	}
	{ \derPure \cut { \cut Q `g + y P } `b + z R : `G |- `D }
 \] 
and
 \[
 \Inf	
	{ \InfBox{\D_2}
		{\derXV Q : `G |- `g{:}A,`D } 
	  \quad
	  \Inf	
		{ \InfBox{\D_1}
			{ \derXV P : `G,y{:}A |- `b{:}B,`D }
		  \quad
		  \InfBox{\D_3}
			{\derXV R : `G,z{:}B |- `D }
		}
		{ \derPure \cut P `b + z R : `G,y{:}A |- `D }
	}
	{ \derPure \cut Q `g + y { \cut P `b + z R } : `G |- `D }
 \]

 \end{wideitemize}

 \end{description}

}

 \item[Activating the cuts]

 \begin{description}
 
 \item[\actL] $\cut P `a + x Q \red \cutL P `a + x Q $, if $`a$ not introduced in $P$.
Immediate, since left-activated cuts are typed with the same rule as inactive cuts.


 \item[\actR] $\cut P `a + x Q \red \cutR P `a + x Q $, if $`a$ introduced in $P$, and $x$ not introduced in $Q$.

\begin{wideitemize}

\item[\ref{intersection case}] \correct
\[
 \Inf	
	{  \Inf	[\intR]
		{ \InfBox{ \derXN P : `G |- `a{:}C_i,`D }
		  \quad (\forall i \ele \n)
		}
		{ \derXN P : `G |- `a{:}\CoI{n},`D }
	  \quad
	  \InfBox{ \derXN Q : `G, x{:}\CoI{n} |- `D }
	}
	{ \derXV \cut P `a + x Q : `G |- `D }
 \]
Notice that $`a$ is introduced in $P$ and that $\CoI{n}$ is not a union type, so we can construct: 
\[
 \Inf	
	{ \Inf	[\intR]
		{ \InfBox{ \derXN P : `G |- `a{:}C_i,`D }
		  \quad (\forall i \ele \n)
		}
		{ \derXN P : `G |- `a{:}\CoI{n},`D }
	  \quad
	  \InfBox{ \derXN Q : `G, x{:}\CoI{n} |- `D }
	}
	{ \derXV \cutR P `a + x Q : `G |- `D }
 \]

 \item[\ref{union case}] \correct
\[
 \Inf	
	{ \InfBox{\D_1}{ \derXV P : `G  |- `a{:}\CoU{n},`D }
	  \quad
	  \Inf	[\unL]
		{ \InfBox{\D^i_2}{ \derXV Q : `G, x{:}C_i |- `D }
		  \quad (\forall i \ele \n)
		}
		{ \derXV Q : `G, x{:}\CoU{n} |- `D }
	}
	{ \derXV \cut P `a + x Q : `G |- `D }
 \]
Since $`a$ is introduced in $P$, there is only \emph{one} occurrence of $`a$, either in a capsule, or in an export; in both cases the assigned type $A$ is strict, so $A$ is not an intersection type.
Then $C_j = A$, for some $j \ele \n$, and the other $C_i$ are added by weakening, so we have:
 \[ \InfBox{\D'_1}{ \derXV P : `G |- `a{:}C_j,`D } \]
so, in particular, we can derive:
 \[ 
 \Inf	[\DaggerR]
	{ \InfBox{\D'_1}{ \derXV P : `G |- `a{:}C_j,`D }
	  \quad 
	  \InfBox{\D^j_2}{ \derXV Q : `G,x{:}C_j |- `D } 
	}
	{ \derXV \cutR P `a + x Q : `G |- `D } 
 \]

 \item[\ref{pure case}] \correct
\[
 \Inf	
	{ \InfBox{\D_1}{ \derXV P : `G  |- `a{:}A,`D }
	  \quad
	  \InfBox{\D_2}{ \derXV Q : `G, x{:}A |- `D }
	}
	{ \derXV \cut P `a + x Q : `G |- `D }
 \]
Notice that $A$ is not an intersection type, so we have
 \[ 
 \Inf	[\DaggerR]
	{ \InfBox{\D'_1}{ \derXV P : `G |- `a{:}A,`D }
	  \quad 
	  \InfBox{\D^j_2}{ \derXV Q : `G,x{:}A |- `D } 
	}
	{ \derXV \cutR P `a + x Q : `G |- `D } 
 \]
 
 \end{wideitemize}

 \end{description}


 \item[Left propagation]

 \begin{description}

\Long{

 \item[\deactL] $\cutL \caps<y,`a> `a + z P \red \cut \caps<y,`a> `a + z P $.
Immediate, since left-activated cuts are typed with the same rule as deactivated cuts.

 \itemindent-.7cm 

 \item[\Li]
$ \cutL \caps<y,`b> `a + z P \red \caps<y,`b> $, with $`b \not= `a$.
Easy.

\Comment{
\begin{wideitemize}

\item[\ref{intersection case}] \correct
 \[ 
 \Inf	
	{ \Inf	[\intI]
		{ \Inf	
			{ \derPure \caps<y,`b> : `G\inter y{:}A |- `a{:}C_i,`b{:}A\union `D }
		  \quad (\forall i \ele \n)
		}
		{ \derXV \caps<y,`b> : `G\inter y{:}A |- `a{:}\CoI{n},`b{:}A\union `D }
	  \quad
	  \InfBox{ \derXV P : `G,x{:}{\CoIn} |- `D }
	}
	{ \derPure \cutL \caps<y,`b> `a + z P : `G\inter y{:}A |- `b{:}A\union `D }
 \]
 \[
\Inf	
	{ \derPure \caps<y,`b> : `G\inter y{:}A |- `b{:}A\union `D }
 \]

\item[\ref{union case}] \correct
 \[
\Inf	
	{ \Inf	
		{ \derXV \caps<y,`b> : `G\inter y{:}A |- `a{:}\CoUn,`b{:}A\union `D }
	  \quad
	  \Inf	[\unL]
		{ \InfBox{ \derXV P : `G,x{:}C_i |- `D }
		  \quad (\forall i \ele \n)
		}
		{ \derXV P : `G,x{:}{\CoUn} |- `D }
	}
	{ \derXV \cutL \caps<y,`b> `a + z P : `G\inter y{:}A |- `b{:}A\union `D }
 \]
 \[
\Inf	
	{ \derXV \caps<y,`b> : `G\inter y{:}A |- `b{:}A\union `D }
 \]

\item[\ref{pure case}] \correct
 \[
\Inf	
	{ \Inf	
		{ \derXV \caps<y,`b> : `G\inter y{:}A |- `a{:}C,`b{:}A\union `D }
	  \quad
	  \InfBox{ \derXV P : `G,x{:}C |- `D }
	}
	{ \derXV \cutL \caps<y,`b> `a + z P : `G\inter y{:}A |- `b{:}A\union `D }
 \]
 \[
\Inf	
	{ \derXV \caps<y,`b> : `G\inter y{:}A |- `b{:}A\union `D }
 \]

 \end{wideitemize}

}

}

 \item[$\Lii$] $ \cutL { \exp y P `b . `a } `a + x Q \red \cut { \exp y { \cutL P `a + x Q } `b . `g } `g + x Q $, with $`g$ fresh.
Notice that $y,`b \not \in \FC{Q}$.

\begin{wideitemize}

\item[\ref{intersection case}] \correct
 \[ 
 \Inf	
	{
	 \Inf	[\intR]
		{ \Inf	
			{ \InfBox{\D^i_1}
				{ \derX P : `G,y{:}A_i |- `b{:}B_i,`a{:}A_i\arr B_i,`D } 
			}
			{ \derPure \exp y P `b . `a : `G |- `a{:}A_i\arr B_i,`D } 
		  ~ (\forall i \ele \n)
		}
		{ \derX \exp y P `b . `a : `G |- `a{:}\int{n}(A_i\arr B_i),`D } 
	  \InfBox{\D_2}
		{ \derPure Q : `G,x{:}\int{n}(A_i\arr B_i) |- `D }
	}
	{ \derPure \cutL {\exp y P `b . `a } `a + x Q : `G |- `D }
 \]
Notice that then $`a$ is introduced, so, in particular, does not occur in $P$.
Then, by thinning and weakening we can derive $ \derX P : `G,y{:}A_i |- `b{:}B_i,`a{:}\Bottom,`D $, and we can construct:
\[
 \Inf	{ \Inf	[\intR]
 		{  \Inf	
			{ \InfBox{}
				{ \derX P : `G,y{:}A_i |- `b{:}B_i,`a{:}\Bottom,`D  } 
			  \quad
			  \Inf	[\Bottom]
				{ \derPure Q : `G,x{:}{\Bottom} |- `D }
			}	
			{ \derPure \cutL P `a + x Q  : `G,y{:}A_i |- `b{:}B_i,`D } 
		  \kern-5mm (\forall i \ele \n)
	 	}
		{ \derX \exp y { \cutL P `a + x Q } `b . `g : `G |- `g{:}\int{n}(A_i\arrow B_i),`D }
	  \InfBox{\D_2}
		{ \derPure Q : `G,x{:}\int{n}(A_i\arr B_i) |- `D }
	}
	{ \derPure \cut { \exp y {\cutL P `a + x Q } `b . `g } `g + x Q : `G |- `D }
\]

The union and pure cases are straightforward.

\Comment{
\item[\ref{union case}] \correct
 \[ 
 \Inf	
	{ \Inf	[\arrR]
		{ \InfBox{\D_1}
			{ \derXV P : `G,y{:}A |- `b{:}B,`a{:}\CoU{n},`D } 
		}
		{ \derPure \exp y P `b . `a : `G |- `a{:}\CoU{n}\union (A\arr B),`D } 
	  \Inf	[\unL]
		{ \InfBox{\D^i_2}
			{ \derPure Q : `G,x{:}C_i |- `D }
		  ~(\forall i \ele \n)
		  \quad
		  \InfBox{\D_2}
			{ \derPure Q : `G,x{:}A\arr B |- `D }
		}
		{ \derXV Q : `G,x{:}\CoU{n}\union (A\arr B) |- `D }
	}
	{ \derPure \cutL {\exp y P `b . `a } `a + x Q : `G |- `D }
 \]
 \[ 
 \Inf	
	{ \Inf	[\arrR]
		{ \Inf	
			{ \InfBox{\D_1}
				{ \derPure P : `G,y{:}A |- `b{:}B,`a{:}\CoU{n},`D } 
			  \dquad
			  \Inf	[\unL]
				{ \InfBox{\D^i_2}
					{ \derPure Q : `G,x{:}C_i |- `D }
				  ~(\forall i \ele \n)
				}
				{ \derXV Q : `G,x{:}\CoU{n} |- `D }
			}
			{ \derPure \cutL P `a + x Q : `G,y{:}A |- `b{:}B,`D }
		}
		{ \derPure \exp y { \cutL P `a + x Q } `b . `g : `G |- `g{:}A\arr B,`D }
	  \kern-5mm
	  \InfBox{\D_2}
		{ \derPure Q : `G,x{:}A\arr B |- `D }
	}
	{ \derPure \cut { \exp y { \cutL P `a + x Q } `b . `g } `g + x Q : `G |- `D }
 \]

\item[\ref{pure case}] Then the $`a$ inside $\D_1$ also has type $A\arr B$.
 \[
\Inf	
	{ \Inf	[\arrR]
		{ \InfBox{\D_1}
			{ \derXV P : `G,y{:}A |- `b{:}B,`a{:}A\arr B,`D } 
		}
		{ \derPure \exp y P `b . `a : `G |- `a{:}A\arr B,`D } 
	  \InfBox{\D_2}
		{ \derPure Q : `G,x{:}A\arr B |- `D }
	}
	{ \derPure \cutL {\exp y P `b . `a } `a + x Q : `G |- `D }
 \]
 \[ 
\Inf	
	{ \Inf	[\arrR]
		{ \Inf	
			{ \InfBox{\D_1}
				{ \derXV P : `G,y{:}A |- `b{:}B,`a{:}A\arr B,`D } 
			  \dquad
			  \InfBox{\D_2}	
				{ \derXV Q : `G,x{:}C |- `a{:}A\arr B }
			}
			{ \derPure \cutL P `a + x Q : `G,y{:}A |- `b{:}B,`D }
		}
		{ \derPure \exp y { \cutL P `a + x Q } `b . `g : `G |- `g{:}A\arr B,`D }
	  \quad
	  \InfBox{\D_2}
		{ \derPure Q : `G,x{:}A\arr B |- `D }
	}
	{ \derPure \cut { \exp y { \cutL P `a + x Q } `b . `g } `g + x Q : `G |- `D }
 \]

}

 \end{wideitemize}


\Long{

 \item[\Liii] $ \cutL { \exp y P `b . `g } `a + x Q \red \exp y { \cutL P `a + x Q } `b . `g $, with $`g \not = `a$.

\begin{wideitemize}

\item[\ref{intersection case}] 
Impossible, since $\exp y P `b . `g $ does not introduce $`a$.

\item[\ref{union case}] \correct
\[
\Inf	
	{ \Inf	
		{ \InfBox{\D_1}
			{ \derXV P : `G,y{:}A |- `b{:}B,`a{:}\CoU{n},`D } 
		}
		{ \derPure \exp y P `b . `g : `G |- `a{:}\CoU{n},`g{:}A\arr B\union `D } 
	  \quad
	  \Inf	[\unL]
		{ \InfBox{\D^i_2}
			{ \derPure Q : `G,x{:}C_i |- `D }
		  \quad (\forall i \ele \n)
		}
		{ \derXV Q : `G,x{:}\CoU{n} |- `D }
	}
	{ \derPure \cutL {\exp y P `b . `g } `a + x Q : `G |- `g{:}A\arr B\union `D }
 \]
 \[ \kern-1cm
 \Inf	
	{ \Inf	
		{ \InfBox{\D_1}
			{ \derPure P : `G,y{:}A |- `b{:}B,`a{:}\CoU{n},`D } 
		  \dquad
		  \Inf	[\unL]
			{ \InfBox{\D^i_2}
				{ \derPure Q : `G,x{:}C_i |- `D }
			  \quad (\forall i \ele \n)
			}
			{ \derXV Q : `G, x{:}\CoU{n} |- `D }
		}
		{ \derPure \cutL P `a + x Q : `G,y{:}A |- `b{:}B,`D }
	}
	{ \derPure \exp y { \cutL P `a + x Q } `b . `g : `G |- `g{:}A\arr B\union `D } 
 \]

\item[\ref{pure case}] \correct
 \[
\Inf	{ \Inf	{ \InfBox{\D_1}
			{ \derXV P : `G,y{:}A |- `a{:}C,`b{:}B,`D } 
		}
		{ \derPure \exp y P `b . `g : `G |- `a{:}C,`g{:}A\arr B\union `D }
	  \quad
	  \InfBox{\D_2}
		{ \derPure Q : `G,x{:}C |- `D }
	}
	{ \derPure \cutL { \exp y P `b . `g } `a + x Q : `G |- `g{:}A\arr B,`D } 
 \]
 \[
\Inf	{ \Inf	{ \InfBox{\D_1}
			{ \derXV P : `G,y{:}A |- `a{:}C,`b{:}B,`D } 
		  \dquad
		  \InfBox{\D_2}
			{ \derPure Q : `G,x{:}C |- `D }
		}
		{ \derPure \cutL P `a + x Q : `G,y{:}A |- `b{:}B,`D }
	}
	{ \derPure \exp y { \cutL P `a + x Q } `b . `g : `G |- `g{:}A\arr B,`D } 
 \]

 \end{wideitemize}

\item[\Liv] $ \cutL { \imp P `b [z] y Q } `a + z R \red \imp { \cutL P `a + x R } `b [z] y { \cutL Q `a + x R } $.

\begin{wideitemize}

\item[\ref{intersection case}]
Impossible, since $ \imp P `b [z] y Q $ does not introduce $`a$.

\item[\ref{union case}] Let $C = \CoU{n}$.
 \[
\Inf	{ \Inf	{ \InfBox{\D_1}
			{ \derXV P : `G |- `a{:}C,`b{:}A,`D }
		  \quad
		  \InfBox{\D_2}
			{ \derXV Q : `G,y{:}B |- `a{:}C,`D }
		}
		{ \derPure \imp P `b [z] y Q : `G\inter z{:}A\arr B |- `a{:}C,`D }
	  \quad
	  \Inf	[\unL]
		{ \InfBox{\D^i_3}
			{ \derPure R : `G,x{:}C_i |- `D }
		  ~(\forall i \ele \n)
		}
		{ \derXV R : `G,x{:}C |- `D }
	}
	{ \derPure \cutL { \imp P `b [z] y Q } `a + x R : `G\inter z{:}A\arr B |- `D }
 \]
 \[ \kern 35mm
 \Inf	{
 \raise2.25\RuleH\hbox to 2cm{\kern-4cm
	  \Inf	{ \InfBox{\D_1}
			{ \derPure P : `G |- `a{:}C,`b{:}A,`D } 
		  \quad
		  \Inf	[\unL]
			{ \InfBox{\D^i_3}
				{ \derPure R : `G,x{:}C_i |- `D }
			  ~(\forall i \ele \n)
			}
			{ \derXV R : `G,x{:}C |- `D }
		}
		{ \derPure \cutL P `a + x R : `G |- `b{:}A,`D } 
}
	  \Inf	{ \InfBox{\D_2}
			{ \derPure Q : `G,y{:}B |- `a{:}C,`D } 
		  \quad
		  \Inf	[\unL]
			{ \InfBox{\D^i_3}
				{ \derPure R : `G,x{:}C_i |- `D }
			  ~(\forall i \ele \n)
			}
			{ \derXV R : `G,x{:}C |- `D }
		}
		{ \derPure \cutL Q `a + x R : `G,y{:}B |- `D }
	}
	{ \derPure \imp { \cutL P `a + x R } `b [z] y { \cutL Q `a + x R } : `G\inter z{:}A\arr B |- `D }
 \]

\item[\ref{pure case}] \correct
 \[
\Inf	{ \Inf	{ \InfBox{\D_1}
			{ \derXV P : `G |- `a{:}C,`b{:}A,`D }
		  \quad
		  \InfBox{\D_2}
			{ \derXV Q : `G,y{:}B |- `a{:}C,`D }
		}
		{ \derPure \imp P `b [z] y Q : `G\inter z{:}A\arr B |- `a{:}C,`D }
	  \quad
	  \InfBox{\D_3}
		{ \derPure R : `G,x{:}C |- `D }
	}
	{ \derPure \cutL { \imp P `b [z] y Q } `a + x R : `G\inter z{:}A\arr B |- `D }
 \]
 \[ 
 \Inf	{ \Inf	{ \InfBox{\D_1}
			{ \derXV P : `G |- `a{:}C,`b{:}A,`D } 
		  \quad
		  \InfBox{\D_3}
			{ \derXV R : `G,x{:}C |- `D } 
		}
		{ \derPure \cutL P `a + x R : `G |- `b{:}A,`D } 
	  \quad
	  \Inf	{ \InfBox{\D_2}
			{ \derXV Q : `G,y{:}B |- `a{:}C,`D } 
		  \quad
		  \InfBox{\D_3}
			{ \derXV R : `G,x{:}C |- `D } 
		}
		{ \derPure \cutL Q `a + x R : `G,y{:}B |- `D }
	}
	{ \derPure \imp { \cutL P `a + x R } `b [z] y { \cutL Q `a + x R } : `G\inter z{:}A\arr B |- `D }
 \]

 \end{wideitemize}

 \item[\Lv] $ \cutL { \cut P `b + y Q } `a + x R \red \cut { \cutL P `a + z R } `b + y { \cutL Q `a + x R } $.

\begin{wideitemize}

\item[\ref{intersection case}]
Impossible, since $ \cut P `b + y Q $ does not introduce $`a$.

\item[\ref{union case}] Let $C = \CoU{n}$.
 \[ 
\Inf	
	{ \Inf	{ \InfBox{\D_1}
			{ \derXV P : `G |- `a{:}C,`b{:}A,`D }
		  \quad
		  \InfBox{\D_2}
			{ \derXV Q : `G,y{:}A |- `a{:}C,`D }
		}
		{ \derPure \cut P `b + y Q : `G |- `a{:}C,`D }
	  \quad
	  \Inf	[\unL]
		{ \InfBox{\D^i_3}{ \derPure R : `G, x{:}C_i |- `D }
		  ~(\forall i \ele \n)
		}
		{ \derXV R : `G, x{:}C |- `D }
	}
	{ \derPure \cutL { \cut P `b + y Q } `a + x R : `G |- `D }
 \]
 \[ \kern2.5cm 
 \Inf	{
\raise2.25\RuleH\hbox to 3.5cm{\kern-2.5cm
	  \Inf	
		{ \InfBox{\D_1}{ \derXV P : `G |- `a{:}C,`b{:}A,`D }
		  \Inf	[\unL]
			{ \InfBox{\D^i_3}{ \derPure R : `G, x{:}C_i |- `D }
			  ~(\forall i \ele \n)
			}
			{ \derXV R : `G, x{:}C |- `D }
		}
		{ \derPure \cutL P  `a + x R : `G |- `b{:}A,`D }
}
	  \Inf	
		{  \InfBox{\D_2}{ \derPure Q : `G,y{:}A |- `a{:}C,`D }
		  \Inf	[\unL]
			{ \InfBox{\D^i_3}{ \derPure R : `G, x{:}C_i |- `D }
			  ~(\forall i \ele \n)
			}
			{ \derXV R : `G, x{:}C |- `D }
		}
		{ \derPure \cutL Q `a + x R : `G,y{:}A |- `D }
	}
	{ \derPure \cut { \cutL P  `a + x R } `b + y { \cutL Q `a + x R } : `G |- `D }
\]

\item[\ref{pure case}] \correct
 \[
\Inf	
	{ \Inf	
		{ \InfBox{\D_1}
			{ \derXV P : `G |- `a{:}C,`b{:}B,`D }
		  \quad
		  \InfBox{\D_2}
			{ \derXV Q : `G,y{:}B |- `a{:}C,`D }
		}
		{ \derPure \cut P `b + y Q : `G |- `a{:}C,`D }
	  \quad
	  \InfBox{\D_3}
		{ \derPure R : `G,x{:}C |- `D }
	}
	{ \derPure \cutL { \cut P `b + y Q } `a + x R : `G |- `D }
\]
\[ 
\Inf	{ \Inf	{ \InfBox{\D_1}
{ \derXV P : `G |- `a{:}C,`b{:}B,`D }
		  \quad
		  \InfBox{\D_3}
			{ \derXV R : `G,x{:}C |- `D }
		}
		{ \derPure \cutL P `a + x R : `G |- `b{:}B,`D }
	  \Inf	{ \InfBox{\D_2}
			{ \derXV Q : `G,y{:}B |- `a{:}C,`D }
		  \quad
		  \InfBox{\D_3}
			{ \derXV R : `G,x{:}C |- `D }
		}
		{ \derPure \cutL Q `a + x R : `G,y{:}B |- `D }
	}
	{ \derPure \cut{ \cutL P `a + x R } `b + y { \cutL Q `a + x R } : `G |- `D }
 \]

 \end{wideitemize}

}

 \end{description}

 \item[Right propagation]


 \begin{description}

\Long{

 \item[\deactR] $ \cutR P `a + x \caps<x,`b> \red \cut P `a + x \caps<x,`b> $
trivial.

 \itemindent-.7cm 


 \item[\Ri] $ \cutR P `a + x \caps<y,`b> \red \caps<y,`b>,\ y \not= x $. Easy.

\Comment{

\begin{wideitemize}

\item[\ref{intersection case}] \correct
 \[
\Inf	{ \Inf	[\intR]
		{ \InfBox{\D_i}
			{ \derPure P : `G |- `a{:}A_i,`D }
		  \quad (\forall i \ele \n)		}
		{ \derXV P : `G |- `a{:}\AoI{n},`D }
	  \quad
	  \Inf	{ \derPure \caps<y,`b> : `G\inter y{:}B,x{:}\AoI{n} |- `b{:}B\union `D }
	}
	{ \derPure \cutR P `a + x \caps<y,`b> : `G\inter y{:}B |- `b{:}B\union `D }
 \]
 \[
\Inf	{ \derPure \caps<y,`b> : `G\inter y{:}B |- `b{:}B\union `D }
 \]

\item[\ref{union case}] \correct
 \[
\Inf	{ \InfBox{}{ \derPure P : `G |- `a{:}\AoU{n},`D }
	  \quad
	  \Inf	[\unL]
		{ \Inf	{ \derPure \caps<y,`b> : `G\inter y{:}B,x{:}A_1 |- `b{:}B\union `D }
		  ~\dots~
		  \Inf	{ \derPure \caps<y,`b> : `G\inter y{:}B,x{:}A_n |- `b{:}B\union `D }
		}
		{ \derXV \caps<y,`b> : `G\inter y{:}B,x{:}\AoU{n} |- `b{:}B\union `D }
	}
	{ \derPure \cutR P `a + x \caps<y,`b> : `G\inter y{:}B |- `b{:}B\union `D }
 \]
 \[
\Inf	{ \derPure \caps<y,`b> : `G\inter y{:}B |- `b{:}B\union `D }
 \]

\item[\ref{pure case}] \correct
 \[
\Inf	{ \InfBox{}
		{ \derPure P : `G\inter y{:}B |- `a{:}A,`D }
	  \quad
	  \Inf	{ \derPure \caps<y,`b> : `G,y{:}B,x{:}A |- `b{:}B\union `D }
	}
	{ \derPure \cutR P `a + x \caps<y,`b> : `G\inter y{:}B |- `b{:}B\union `D }
 \Quad
\Inf	{ \derPure \caps<y,`b> : `G\inter y{:}B |- `b{:}B\union `D }
 \]

 \end{wideitemize}

}


 \item[\Rii] $ \cutR P `a + x { \exp y Q `b . `g } \red \exp y { \cutR P `a + x Q } `b . `g $.
 
\begin{wideitemize}

\item[\ref{intersection case}] \correct
 \[
\Inf	{ \Inf	[\intR]
		{ \InfBox{\D^i_1}
			{ \derPure P : `G |- `a{:}C_i,`D }
		  \quad (\forall i \ele \n)
		}
		{ \derXV P : `G |- `a{:}\CoI{n},`D }
	  \quad
	  \Inf	{ \InfBox{\D_2}
			{ \derXV Q : `G,x{:}\CoI{n},y{:}A |- `b{:}B,`D }
		}
		{ \derPure \exp y Q `b . `g : `G,x{:}\CoI{n} |- `g{:}A\arr B\union `D }
	}
	{ \derPure \cutR P `a + x { \exp y Q `b . `g } : `G |- `g{:}A\arr B\union `D }
 \]
Then $P$ introduces $`a$.
 \[
\Inf	{ \Inf	{ \Inf	[\intR]
			{ \InfBox{\D^i_1}
				{ \derPure P : `G |- `a{:}C_i,`D }
			  \quad (\forall i \ele \n)
			}
			{ \derXV P : `G |- `a{:}\CoI{n},`D }
		  \quad 
		  \InfBox{\D_2}
			{ \derXV Q : `G,x{:}\CoI{n},y{:}A |- `b{:}B,`D }
		}
		{ \derPure \cutR P `a + x Q : `G,y{:}A |- `b{:}B,`D }
	}
	{ \derPure \exp y { \cutR P `a + x Q } `b . `g : `G |- `g{:}A\arr B\union `D }
 \]

\ref {union case}
Impossible; excluded by $(\DaggerR)$.

\item[\ref{pure case}] \correct
 \[
\Inf	{ \InfBox{\D_1}
		{ \derPure P : `G |- `a{:}C,`D }
	  \quad
	  \Inf	{ \InfBox{\D_2}
			{ \derXV Q : `G,x{:}C,y{:}A |- `b{:}B,`D }
		}
		{ \derPure \exp y Q `b . `g : `G,x{:}C |- `g{:}A\arr B\union `D }
	}
	{ \derPure \cutR P `a + x { \exp y Q `b . `g } : `G |- `g{:}A\arr B\union `D }
 \]
 \[
\Inf	{ \Inf	{ \InfBox{\D_1}
			{ \derPure P : `G |- `a{:}C,`D }
		  \dquad 
		  \InfBox{\D_2}
			{ \derXV Q : `G,x{:}C,y{:}A |- `b{:}B,`D }
		}
		{ \derPure \cutR P `a + x Q : `G,y{:}A |- `b{:}B,`D }
	}
	{ \derPure \exp y { \cutR P `a + x Q } `b . `g : `G |- `g{:}A\arr B\union `D }
 \]

 \end{wideitemize}

}


 \item[\Riii] $ \cutR P `a + x {\imp Q `b [x] y R } \red \cut P `a + v { \imp { \cutR P `a + x Q } `b [v] y { \cutR P `a + x R } } $, with $v$ fresh.

\begin{wideitemize}

\item[\ref{intersection case}] Let $C = \CoI{n}$.
 \[ \kern-25mm
\Inf	
	{
	  \Inf	[\intR]
		{ \InfBox{\D^i_1}
			{ \derPure P : `G |- `a{:}C_i,`D }
		  ~(\forall i \ele \n) 
		  \quad
		  \InfBox{\D_1}
			{ \derPure P : `G |- `a{:}A\arr B,`D }
		}
		{ \derXV P : `G |- `a{:}A\arr B\inter C,`D }
	  \quad
\raise.7\RuleH\hbox to 4cm { \kern-1.5cm
	  \Inf	
		{ \InfBox{\D_2}
			{ \derXV Q : `G,x{:}C |- `b{:}A,`D }
		  \dquad 
		  \InfBox{\D_3}
			{ \derXV R : `G,y{:}B,x{:}C |- `D }
		}
		{ \derPure \imp Q `b [x] y R : `G,x{:}A\arr B\inter C |- `D }
}
\multiput(-10,-1)(0,7){2}{.}
	}
	{ \derPure \cutR P `a + x { \imp Q `b [x] y R } : `G |- `D }
 \]
Then $P$ introduces $`a$.
 \[ \kern-3cm
 \Inf	
	{ \InfBox{\D_1}
		{ \derXV P : `G |-  `a{:}A\arr B,`D }
	  \kern-2.5cm
	  \Inf	
		{
		  \Inf	
			{ \Inf	[\intR]
				{ \InfBox{\D^i_1}
					{ \derPure P : `G |- `a{:}C_i,`D }
				  ~(\forall i \ele \n) 
				}
				{ \derXV P : `G |- `a{:}C,`D }
			  ~
			  \InfBox{\D_2}
				{ \derXV Q : `G,x{:}C |- `b{:}A,`D }
			}
			{ \derXV \cutR P `a + x Q : `G |- `b{:}A,`D }
\raise1.25\RuleH\hbox to 3cm {\kern-3cm 
		  \Inf	
			{ \Inf	[\intR]
				{ \InfBox{\D^i_1}
					{ \derPure P : `G |- `a{:}C_i,`D }
				  ~(\forall i \ele \n) 
				}
				{ \derXV P : `G |- `a{:}C,`D }
			  ~ 
			  \InfBox{\D_3}
				{ \derXV R : `G,y{:}B,x{:}C |- `D }
			}
			{ \derXV \cutR P `a + x R : `G,y{:}B |- `D }
}
\multiput(-10,-1)(0,7){4}{.}
		}
		{ \derXV \imp { \cutR P `a + x Q } `b [v] y { \cutR P `a + x R } : `G,v{:}A\arr B |- `D }
	}
	{ \derPure \cut P `a + v { \imp { \cutR P `a + x Q } `b [v] y { \cutR P `a + x R } } : `G |- `D }
 \]


\item[\ref{union case}] 
Impossible; excluded by $(\DaggerR)$.

\item[\ref{pure case}] Easy.
\Comment{

Then also the $x$ inside $\imp Q `b [x] y R $ is typed with $A \arr B$.
 \[ \kern.5cm
\Inf	
	{ \InfBox{\D_1}
		{ \derPure P : `G |- `a{:}A\arr B,`D }
	  \quad
	  \Inf	[\Imp]
		{ \InfBox{\D_2}
			{ \derXV Q : `G,x{:}A\arr B |- `b{:}A,`D }
		  \dquad 
		  \InfBox{\D_3}
			{ \derXV R : `G,y{:}B,x{:}A\arr B |- `D }
		}
		{ \derPure \imp Q `b [x] y R : `G,x{:}A\arr B |- `D }
	}
	{ \derPure \cutR P `a + x { \imp Q `b [x] y R } : `G |- `D }
 \]
  \[ 
 \Inf	
	{ \InfBox{\D_1}
		{ \derXV P : `G |-  `a{:}A\arr B,`D }
	  \quad
	  \Inf	
		{
\multiput(10,-1)(0,7){4}{.}
\raise1.25\RuleH\hbox to 2cm {\kern-4cm
		  \Inf	
			{ \InfBox{\D_1}
				{ \derXV P : `G |- `a{:}A\arr B,`D }
			  \quad
			  \InfBox{\D_2}
				{ \derXV Q : `G,x{:}A\arr B |- `b{:}A,`D }
			}
			{ \derXV \cutR P `a + x Q : `G |- `b{:}A,`D }
}
		  \Inf	
			{ \InfBox{\D_1}
				{ \derXV P : `G |- `a{:}A\arr B,`D }
			  \quad 
			  \InfBox{\D_3}
				{ \derXV R : `G,y{:}B,x{:}A\arr B |- `D }
			}
			{ \derXV \cutR P `a + x R : `G,y{:}B |- `D }
		}
		{ \derXV \imp { \cutR P `a + x Q } `b [v] y { \cutR P `a + x R } : `G,v{:}A\arr B |- `D }
	}
	{ \derPure \cut P `a + v { \imp { \cutR P `a + x Q } `b [v] y { \cutR P `a + x R } } : `G |- `D }
 \]

}

 \end{wideitemize}


\Long{

 \item[\Riv] $ \cutR P `a + x {\imp Q `b [z] y R } \red \imp { \cutR P `a + x Q } `b [z] y { \cutR P `a + z R },\ x \not= z $.  
 
\begin{wideitemize}

\item[\ref{intersection case}] Let $C = \CoI{n}$.
 \[
\Inf	{ \Inf	[\intR]
		{ \InfBox{\D^i_1}
			{ \derXV P : `G |- `a{:}C_i,`D }
		  \quad (\forall i \ele \n)
		}
		{ \derXV P : `G |- `a{:}C,`D }
	  \Inf	{ \InfBox{\D_2}
			{ \derXV Q : `G,x{:}C |- `b{:}A,`D }
		  \quad
		  \InfBox{\D_3}
			{ \derXV P : `G,x{:}C,y{:}B |- `D }
		}
		{ \derXV \imp Q `b [z] y P : `G\inter z{:}A\arr B,x{:}C |- `D }
	}
	{ \derXV \cutR P `a + x { \imp Q `b [z] y R } : `G\inter z{:}A\arr B |- `D }
 \]
 \[ \kern -3cm
\Inf	{ \Inf	{ \Inf	[\intR]
			{ \InfBox{\D^i_1}
				{ \derXV P : `G |- `a{:}C_i,`D }
			  ~ (\forall i \ele \n)
			}
			{ \derXV P : `G |- `a{:}C,`D }
		  ~
		  \InfBox{\D_2}
			{ \derXV Q : `G,x{:}C |- `b{:}A,`D }
		}
		{ \derXV \cutR P `a + x Q : `G |- `b{:}A,`D }
\raise2.25\RuleH\hbox to 3cm {\kern-3cm
	  \Inf	{ \Inf	[\intR]
			{ \InfBox{\D^i_1}
				{ \derXV P : `G |- `a{:}C_i,`D }
			  ~ (\forall i \ele \n)
			}
			{ \derXV P : `G |- `a{:}C,`D }
		  ~
		  \InfBox{\D_3}
			{ \derXV R : `G,x{:}C,y{:}B |- `D }
		}
		{ \derXV \cutR P `a + z R : `G ,y{:}B |- `D }
}
	}
	{ \derXV \imp { \cutR P `a + x Q } `b [z] y { \cutR P `a + z R } : `G\inter z{:}A\arr B |- `D }
 \]

\item[\ref{union case}] 
Impossible; excluded by $(\DaggerR)$.

\item[\ref{pure case}] \correct
 \[
\Inf	{ \InfBox{\D_1}
		{ \derXV P : `G |- `a{:}C,`D }
	  \quad
	  \Inf	{ \InfBox{\D_2}
			{ \derXV Q : `G,x{:}C |- `b{:}A,`D }
		  \dquad
		  \InfBox{\D_3}
			{ \derXV R : `G,x{:}C,y{:}B |- `D }
		}
		{ \derXV \imp Q `b [z] y R : `G\inter z{:}A\arr B,x{:}C |- `D }
	}
	{ \derXV \cutR P `a + x { \imp Q `b [z] y R } : `G\inter z{:}A\arr B |- `D }
 \]
 \[ 
\Inf	{ \Inf	{ \InfBox{\D_1}
			{ \derXV P : `G |- `a{:}C,`D }
		  \dquad
		  \InfBox{\D_2}
			{ \derXV Q : `G,x{:}C |- `b{:}A,`D }
		}
		{ \derXV \cutR P `a + x Q : `G |- `b{:}A,`D }
	  \quad
	  \Inf	{ \InfBox{\D_1}
			{ \derXV P : `G |- `a{:}C,`D }
		  \dquad
		  \InfBox{\D_3}
			{ \derXV R : `G,x{:}C,y{:}B |- `D }
		}
		{ \derXV \cutR P `a + z R : `G ,y{:}B |- `D }
	}
	{ \derXV \imp { \cutR P `a + x Q } `b [z] y { \cutR P `a + z R } : `G\inter z{:}A\arr B |- `D }
 \]

 \end{wideitemize}


 \item[\Rv] $ \cutR P `a + x { \cut Q `b + y R } \red \cut { \cutR P `a + x Q } `b + y { \cutR P `a + z R } $.
 
\begin{wideitemize}

\item[\ref{intersection case}] Let $C = \CoI{n}$.
 \[
\Inf	{ \Inf	[\intR]
		{ \InfBox{\D^i_1}
			{ \derXV P : `G |- `a{:}C_i,`D }
		  \quad (\forall i \ele \n)
		}
		{ \derXV P : `G |- `a{:}C,`D }
	  \Inf	{ \InfBox{\D_2}
			{ \derXV Q : `G,x{:}C |- `b{:}A,`D }
		  \quad
		  \InfBox{\D_3}
			{ \derXV R : `G,x{:}C,y{:}A |- `D }
		}
		{ \derXV \cut Q `b + y R : `G,x{:}C |- `D }
	}
	{ \derXV \cutR P `a + x { \cut Q `b + y R } : `G |- `D }
 \]
 \[ \kern -3cm
\Inf	{ \Inf	{ \Inf	[\intR]
			{ \InfBox{\D^i_1}
				{ \derXV P : `G |- `a{:}C_i,`D }
			  ~ (\forall i \ele \n)
			}
			{ \derXV P : `G |- `a{:}C,`D }
		  ~
		  \InfBox{\D_2}
			{ \derXV Q : `G,x{:}C |- `b{:}A,`D }
		}
		{ \derXV \cutR P `a + x Q : `G |- `b{:}A,`D }
\raise2.25\RuleH\hbox to 3cm {\kern-3cm
	  \Inf	{ \Inf	[\intR]
			{ \InfBox{\D^i_1}
				{ \derXV P : `G |- `a{:}C_i,`D }
			  ~ (\forall i \ele \n)
			}
			{ \derXV P : `G |- `a{:}C,`D }
		  ~
		  \InfBox{\D_3}
			{ \derXV R : `G,x{:}C,y{:}A |- `D }
		}
		{ \derXV \cutR P `a + z R : `G ,y{:}A |- `D }
}
	}
	{ \derXV \cut { \cutR P `a + x Q } `b + y { \cutR P `a + z R } : `G |- `D }
 \]

\item[\ref{union case}] 
Impossible; excluded by $(\DaggerR)$.

\item[\ref{pure case}] \correct
 \[
\Inf	{ \InfBox{\D_1}
		{ \derXV P : `G |- `a{:}C,`D }
	  \quad
	  \Inf	{ \InfBox{\D_2}
			{ \derXV Q : `G,x{:}C |- `b{:}A,`D }
		  \quad
		  \InfBox{\D_3}
			{ \derXV R : `G,x{:}C,y{:}A |- `D }
		}
		{ \derXV \cut Q `b + y R : `G,x{:}C |- `D }
	}
	{ \derXV \cutR P `a + x { \cut Q `b + y R } : `G |- `D }
 \]
 \[ 
\Inf	{ \Inf	{ \InfBox{\D_1}
			{ \derXV P : `G |- `a{:}C,`D }
		  \quad
		  \InfBox{\D_2}
			{ \derXV Q : `G,x{:}C |- `b{:}A,`D }
		}
		{ \derXV \cutR P `a + x Q : `G |- `b{:}A,`D }
	  \quad
	  \Inf	{ \InfBox{\D_1}
			{ \derXV P : `G |- `a{:}C,`D }
		  \quad
		  \InfBox{\D_3}
			{ \derXV R : `G,x{:}C,y{:}A |- `D }
		}
		{ \derXV \cutR P `a + z R : `G ,y{:}A |- `D }
	}
	{ \derXV \cut { \cutR P `a + x Q } `b + y { \cutR P `a + z R } : `G |- `D }
 \]

 \end{wideitemize}

}

 \end{description}

 \end{description}

 \end{proof}

}


}

 \section{Conclusions} \label{Conclusions}
We have seen that it is straightforward to define a natural notion of context assignment to the sequent calculus $\X$ that uses intersection and union types.
\Long{This system was shown natural in that we were able to show that witness expansion - the main reason to use either intersection or union - follows easily, and that both intersection and union play their natural and crucial role in that proof.}

However, \Long{this ease is not paired with soundness with respect to witness reduction. A}%
\Short{a}%
s in similar notions for the {\LC}, combining union and intersection types breaks the soundness of the system.
We have isolated the problem cases, and seen that it is exactly the non-logical behaviour of both type constructors that causes the problem.
We have looked at a number of restrictions for either {\CBN} or {\CBV} reduction that overcome this defect, but all with the loss of the witness expansion result.

This implies that it is impossible to define a semantics using types for $\X$, even for the two confluent sub-reduction systems.

 \subsection*{Acknowledgement}
I would like to thank Philippe Audebaud, Mariangiola Dezani and Alexander Summers for fruitfull discussions, and especially thank Vanessa Loprete for valuable support.

{\small \Long{\bibliography {../bieb/references} }
\Short{ 
 }
}

\newpage
\appendix

\section{A}

 \end{document}